%% file: main.tex
\documentclass[11pt]{article}

\usepackage[tt=false]{libertine}

\newif\ifhidecomments
\hidecommentstrue

\input{tex-assets/macros}

\input{components/ToC}

\title{\textbf{
High-Dimensional Robust Change-Point Detection\\ via Angular Kernel Statistics
}}

\input{components/author-info}

\usepackage{setspace}
\begin{document}

\begin{bibunit}

\maketitle
\vspace{1em}

\begingroup
\setstretch{1.15}
\input{abstract}
\endgroup

\makeatletter
\begingroup
\renewcommand{\thefootnote}{}
\renewcommand{\@makefntext}[1]{\noindent #1}
\footnotetext{%
\makebox[1.5em][r]{\(\S\)}\,
\textrm{JRC :} \texttt{\href{mailto:jchoudhury3@gatech.edu}{jchoudhury3@gatech.edu}}\\
\makebox[1.5em][r]{\(\star\)}\,
\textrm{YX :} \texttt{\href{mailto:yao.xie@isye.gatech.edu}{yao.xie@isye.gatech.edu}}%
}
\endgroup
\makeatother

\newpage

\begingroup
\setstretch{1.15}
\setcounter{tocdepth}{3}
\tableofcontents
\endgroup



\section{Introduction}
\label{sec:intro}

Change-point detection is a classical problem in statistics and machine learning, which aims to determine whether, and if so where, the distribution underlying a sequence of observations changes abruptly. Such problems arise naturally in a wide range of applications across domains, including monitoring tasks (e.g., manufacturing, quality control, security systems), signal processing, finance, genomics, and more. There is a rich literature in change-point detection in classical (low-dimensional) settings, with roots in quality control and sequential analysis going back to the works of ~\citet{Shewhart1930}, \citet{page1954continuous}, \citet{Lorden1971optimal}, among others. In many modern applications, however, one often encounters high-dimensional, low-sample-size (HDLSS) data, in which the dimensionality $d$ is large while the sequence length $N$ is small and effectively fixed. This setting is increasingly common in several disciplines, yet poses challenges that are fundamentally different from those encountered in classical asymptotic regimes. For instance, genetic microarray studies routinely measure tens of thousands of features from only a few dozen samples, and in medical imaging, high-resolution scans are sometimes available for only a small group of patients with extremely rare conditions.

The HDLSS regime is statistically delicate for two main reasons. First, when the sequence length is small and effectively fixed, estimation of means, covariances, likelihood scores, and other moment-based quantities becomes inherently unstable. Second, standard Euclidean distance-based discrepancies may behave poorly because of \emph{distance concentration} (\citet{aggarwal2001distconc}, \citet{hall2005hdlss}) in high dimensions. As a result, classical methods developed under low-dimensional or large-sample asymptotics become unreliable in this regime. Early foundational work of \citet{hall2005hdlss} revealed that HDLSS data often exhibit rigid geometric structure, fundamentally altering notions of distance and separation. This insight has motivated a growing body of HDLSS-specific methods for problems such as two-sample testing \citep{BiswasGhosh2014test}, classification (\citet{RoyEtAl2022hdlss}, \citet{ShenErYin2022hdlss}, \citet{raychoudhury2023robust}), and clustering (\citet{LiuEtAl2008ClusteringHDLSS}, \citet{SarkarGhosh2020ClusteringHDLSS}).

In this paper, we primarily study offline single-change-point detection in the HDLSS regime within a nonparametric, moment-free, tuning-free framework. Our theory is developed under a single-change model in which the pre- and post-change distributions differ through their coordinate-wise marginals. The method is designed to detect distributional shifts that are reflected in aggregated one-dimensional marginal behavior across coordinates. Pure dependence, with unchanged one-dimensional marginals, lies outside its target. We later extend the same construction to a sequential (online) setting with streaming high-dimensional data, essentially via a sliding-window scan obtained by repeatedly applying the main offline procedure, but with its own theoretical analysis.

Our starting point is a discrepancy functional built from one-dimensional angular kernels, then averaged over coordinates. The construction is motivated by \emph{generalized energy distance} (\citet{szekely2013energy}, \citet{sejdinovic2013energy}) equipped with an angular kernel, developed in the projection-averaging framework of \citet{kim2020robust}. The specific discrepancy we employ is based on a dimension-averaged angular kernel introduced for HDLSS classification by \citet{raychoudhury2023robust}. Operationally, for each candidate split, we compare the observations before and after the split using a coordinate-aggregated angular kernel statistic, referred to as the \underline{d}imension-averaged \underline{a}ngular \underline{k}ernel (DAK) scan statistic, and then scan over all admissible split locations to identify the maximizing split location. The discrepancy also admits a pair-dependent square-MMD representation, with a distribution-adaptive kernel rather than a single fixed kernel chosen a priori.

The angular kernel has several useful properties. First, the underlying one-dimensional sign kernel is bounded in $[0,1]$, so the resulting discrepancy is well-defined without any moment assumptions and remains valid even for heavy-tailed or mixture distributions whose moments might not even exist. Second, the procedure is invariant under coordinate-wise rescaling, which is helpful when different features operate on different scales. Third, unlike many kernel-based methods, the DAK scan itself does not require hyperparameter tuning. A notable feature of our approach is that it admits a very explicit HDLSS theory. Under a single-change alternative, we show that the population scan curve factorizes into a deterministic (distribution-free) shape function that depends only on the split geometry, and a scalar signal term capturing the aggregated marginal discrepancy. This yields a universal unimodal population form with its unique maximum at the true change-point. Under the null, we derive an exact covariance structure for the scan vector up to a single long-run variance factor, leading to an HDLSS multivariate central limit theorem under suitable mixing conditions, and asymptotically valid calibration.

Our work is related to several strands of the existing change-point literature. A large body of work studies high-dimensional change-point problems for structured parameters such as means and covariances (\citet{wang2018inspect}, \cite{AvanesovBuzun2018}, \citet{EnikeevaHarchaoui2019SparseCPD}), often under sparsity, moment, or dependence assumptions. A separate nonparametric line uses kernel-based change-point methods (\citet{HarchaouiEtAl2008KCPA}, \citet{GarreauArlot2018}, \citet{TruongEtAl2019GreedyKernelCPD}, \citet{ArlotEtAL2019KCPA}, \citet{SongChen2024KernelCPD}) including MMD-based scanners (\citet{LiXieDaiSong2015MStatCPD}, \citet{li2019scanbstat}, \citet{WeiXie2026OnlineKernelCUSUM}), to detect more general distributional shifts. A complementary family uses energy distances and other distance-based two-sample functionals based on the $\ell_2$-norm (\citet{MattesonJames2014}, \citet{BonieceEtAl2025energy}) for nonparametric change-point detection. By contrast, the change-point literature within a truly HDLSS setting remains relatively underexplored. Very recently, there have been some studies, particularly on HDLSS change-point detection using pairwise or interpoint distances (\citet{JunLi2020CPDinterpointCLT}, \citet{Drikvandi2025changepoint}, \citet{ghoshal2025highdimCPD}) and clustering-based (\citet{Dawn2025changepoint}) approaches.

Despite these advancements, the existing literature still leaves a gap for truly HDLSS, nonparametric change-point detection. Many available methods target structured changes in means or covariances, or rely on moment-based separation conditions, multivariate estimation, or hyperparameter tuning choices that become unreliable when $N$ is fixed and $d \to \infty$. This is especially problematic when the relevant moment information is unknown, difficult to estimate from small batches, or does not exist because of heavy tails.
This motivates the search for a robust, tuning-free, moment-agnostic procedure for the HDLSS regime in the offline setting, and high-dimensional streaming data in the online setting. We aim to address the following fundamental question:

\vspace{2mm}
\begin{minipage}{0.95\textwidth}
\begin{center}
\emph{Can we design a robust, nonparametric, hyperparameter-free, moment-agnostic change-point detection procedure for HDLSS data that admits principled theoretical guarantees in high dimensions?}
\end{center}
\end{minipage}
\vspace{2mm}

%

To the best of our knowledge, our work is among the first nonparametric change-point frameworks designed for detecting HDLSS marginal distributional shifts that accommodate distributions without relying on their moments and without hyperparameter tuning, while admitting explicit fixed-\(N\), large-\(d\) calibration theory. The sequential extension further broadens the scope of the method.
The inherent generality of our approach allows it to accommodate a broad class of pre- and post-change distributions, including heavy-tailed distributions and distributions without finite moments. By leveraging a high-dimensional CLT to derive a tractable limiting null distribution, we bypass computationally intensive permutation-based or bootstrap-based threshold calibration. Our main contributions are briefly summarized below.
\begin{itemize}
    \item \textit{Methodology.} We introduce a tuning-free angular kernel scan for detecting marginal distribution shifts in HDLSS data that compares the two sides of a candidate split by averaging one-dimensional angular comparisons across coordinates. The procedure is nonparametric, well-defined without moment assumptions, and invariant under coordinate-wise rescaling.

    \item \textit{Exact (non-asymptotic) structural results.} Under a single-change model with any $d$ and $N$, we derive an exact factorization of the population scan curve into a deterministic (distribution-free) shape term and a scalar distributional signal term. Under the null, we also obtain an exact covariance characterization of the scan vector up to a single long-run variance factor.

    \item \textit{HDLSS asymptotic theory.} Under standard cross-coordinate weak-dependence and mixing conditions, we establish consistency of the scan statistic and the estimated change-point location, prove an HDLSS multivariate CLT under $\alpha$-mixing across coordinates, devise an asymptotically distribution-free test and type-I error control, and derive power and localization guarantees.

    \item \textit{Sequential extension for high-dimensional data.} We extend the offline method to an online change-point monitoring procedure based on sliding-window scans, thereby transferring the proposed discrepancy to a high-dimensional streaming data setting (without a low sample size requirement), with controlled ARL scaling and conditional detection delay bounds.
\end{itemize}

The remainder of the paper is organized as follows. Section~\ref{sec:angular-kernel} introduces the population angular kernel discrepancy together with its generalized energy and pair-dependent MMD representations. Section~\ref{sec:methodology} develops the offline scan procedure. Section~\ref{sec:theory} presents the theoretical analysis for the offline method, including the exact mean and covariance structure, high-dimensional consistency, level-\(\alpha\) calibration, and power analysis. Section~\ref{sec:online} describes the online detection method along with its theoretical guarantees. Section~\ref{sec:experiments} reports simulation studies and real-data experiments. We conclude in Section~\ref{sec:discussion}. All technical proofs and some additional simulations are deferred to the appendix.


\section{Angular kernel discrepancy and its two representations}
\label{sec:angular-kernel}

In this section, we formally introduce the population discrepancy underlying our procedure. We first define a dimension-averaged angular kernel (DAK) discrepancy that compares the one-dimensional marginals of two $d$-dimensional distributions $F_d$ and $G_d$, and then averages across coordinates. The discrepancy is a generalized energy distance generated by a bounded angular kernel, and it also admits an exact \emph{pair-dependent} square-MMD representation. The point is to isolate the population object that is later estimated and scanned over candidate change-point locations.

\subsection{A generalized energy distance via dimension-averaged angular kernel}

For scalars $p,q,r\in\mathbb R$, define the one-dimensional angular kernel:
\begin{equation}
\label{eq:one-dim-angular-kernel}
\rho_0(p,q;r) := \frac{1}{\pi} \cos^{-1} \left\{\frac{(p-r)(q-r)}{|p-r|\cdot|q-r|}\right\} = \mathbf 1\{(p-r)(q-r)<0\},
\end{equation}
with the convention that $\rho_0(p,q;r)=0$ whenever $p=r$ or $q=r$. Thus, $\rho_0(p,q;r)$ is essentially a sign kernel which records $1$ if the anchor $r$ lies strictly between $p$ and $q$; $0$ otherwise.

Fix $\alpha\in[0,1]$. Let $F_d$ and $G_d$ be probability distributions on $\mathbb R^d$, and for each $k \in [d] := \{1,2,\ldots,d\}$, let $F_d^{(k)}$ and $G_d^{(k)}$ denote their $k$-th marginals. For each coordinate $k\in[d]$, let
\(
M_k^{*} \sim \alpha F_d^{(k)} + (1-\alpha)G_d^{(k)} := H_{\alpha,d}^{(k)},
\)
and define the corresponding anchor-averaged coordinate-wise pseudometric
\[
\rho_{F_d,G_d}^{(k)}(u,v)
:=
\mathbb E_{M_k^{*} \sim H_{\alpha,d}^{(k)}}\!\left[\rho_0(u,v;M_k^{*})\right],
\qquad u,v\in\mathbb R.
\]
\indent
We then aggregate across coordinates via averaging over the dimensions:
\begin{equation}
\label{eq:bar-rho-main}
\overline{\rho}_{F_d,G_d}(x,y)
:=
\frac{1}{d}\sum_{k=1}^d \rho_{F_d,G_d}^{(k)}(x_k,y_k),
\qquad x,y\in\mathbb R^d.
\end{equation}
\indent
This quantity is bounded, depends only on the one-dimensional marginals of $(F_d,G_d)$, and is invariant under coordinate-wise rescaling.
Let $X,X'\stackrel{\mathrm{i.i.d.}}{\sim}F_d$ and $Y,Y'\stackrel{\mathrm{i.i.d.}}{\sim}G_d$, all mutually independent. We define the population dimension-averaged angular discrepancy by
\begin{align}
\Delta_{\alpha,d}(F_d,G_d)
&:=
2\,\mathbb{E}\big[\overline{\rho}_{F_d,G_d}(X,Y)\big]
-\mathbb{E}\big[\overline{\rho}_{F_d,G_d}(X,X')\big]
-\mathbb{E}\big[\overline{\rho}_{F_d,G_d}(Y,Y')\big] \label{eq:Delta-main}\\[2mm]
&=
\frac{1}{d}\sum_{k=1}^d
\bigg[
2\,\mathbb{E}\!\left[\rho_{F_d,G_d}^{(k)}(X_k,Y_k)\right]
-\mathbb{E}\!\left[\rho_{F_d,G_d}^{(k)}(X_k,X_k')\right]
-\mathbb{E}\!\left[\rho_{F_d,G_d}^{(k)}(Y_k,Y_k')\right]
\bigg]. \label{eq:Delta-coordinatewise}
\end{align}

Thus, the discrepancy \eqref{eq:Delta-main} is obtained by first comparing the one-dimensional marginals coordinate-by-coordinate, and then averaging the resulting discrepancies \eqref{eq:Delta-coordinatewise} over $k \in [d]$. This marginal aggregation is the key structural feature exploited in the HDLSS regime. 

\begin{proposition}[Properties of the population discrepancy]
\label{lem:signal_factor}
For any \(d\in\bbN\), the following statements hold.
\begin{enumerate}[label=(\alph*)]
\item The function \(\overline\rho_{F_d,G_d}\) is a bounded pseudometric of negative type on \(\bbR^d\); and thus \(\Delta_{\alpha,d}(F_d,G_d)\) is a well-defined generalized energy distance induced by \(\overline\rho_{F_d,G_d}\).

\item Suppose that, for every \(k\in[d]\), the probability measures associated with the coordinate marginals \(F_d^{(k)}\) and \(G_d^{(k)}\) have no atoms; equivalently, the marginal CDFs are continuous. Then \(\Delta_{\alpha,d}(F_d,G_d)\) is independent of \(\alpha\in[0,1]\). That is,
\(
~\Delta_{\alpha,d}(F_d,G_d)=\delta_d^\ast~
\)
for all $\alpha\in[0,1]$, where $\delta_d^{*}$ is given by
\begin{equation}
\label{eq:delta-simplified-cvm}
\delta^{*}_d
=
\frac{2}{d}
\sum_{k=1}^d
\int_{\mathbb R}
\bigl\{F_d^{(k)}(z)-G_d^{(k)}(z)\bigr\}^2 \, dF_d^{(k)}(z)\,.
\end{equation}

\item Under the same atom-free condition as in (b), 
\(
\delta_d^{*} \in \big[0,\nicefrac{2}{3}\big]
\), and the upper bound is sharp.
Moreover,
\[
\delta_d^\ast=0
\quad\Longleftrightarrow\quad
F_d^{(k)}=G_d^{(k)}~~ \text{for every}\,\, k\in[d].
\]
\end{enumerate}
\end{proposition}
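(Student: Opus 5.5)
The plan is to work coordinate by coordinate throughout. Both $\overline\rho_{F_d,G_d}$ and $\Delta_{\alpha,d}$ are averages over $k\in[d]$ (see \eqref{eq:bar-rho-main} and \eqref{eq:Delta-coordinatewise}). So every claim reduces to a one-dimensional statement about $F:=F_d^{(k)}$, $G:=G_d^{(k)}$ and $H:=H^{(k)}_{\alpha,d}$, and then passes to the average. The only exceptions are sharpness and the equivalence in (c), which use the per-coordinate nonnegativity $\int(F-G)^2\,dF\ge 0$.

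\textbf{Part (a).} Boundedness is immediate from $\rho_0\in\{0,1\}$. Symmetry and $\rho^{(k)}(u,u)=0$ are clear. For the triangle inequality, note that $r$ strictly between $u$ and $w$ forces $r$ to lie strictly between $u$ and $v$, or strictly between $v$ and $w$, or $r=v$.
\begin{itemize}
\item When $H$ has no atom at $v$, this gives $\rho^{(k)}(u,w)\le\rho^{(k)}(u,v)+\rho^{(k)}(v,w)$ directly.
\item With atoms I must account for the event $\{r=v\}$ separately.
\end{itemize}
For negative type, I want a Hilbert embedding $u\mapsto\phi_u\in L^2(H)$ with $\rho^{(k)}(u,v)=\|\phi_u-\phi_v\|^2$. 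The natural choice is $\phi_u(r)=\mathbf 1\{r<u\}$. Then $|\phi_u-\phi_v|^2=|\phi_u-\phi_v|$ is the indicator of $r\in[\min,\max)$, which matches $\rho_0$ except on $\{r=\min(u,v)\}$.

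Handling this boundary set, which matters only when $H$ has atoms, is the one delicate point in (a). If no simple embedding works in that case, my fallback is to verify conditional negative definiteness directly. Fix $r$ and write $\rho_0(u,v;r)=a_u b_v+b_u a_v$ with $a_u=\mathbf 1\{u>r\}$ and $b_u=\mathbf 1\{u<r\}$. For $\sum_i c_i=0$ this gives
\[
\sum_{i,j}c_ic_j\rho_0=2\Bigl(\sum_i c_i a_i\Bigr)\Bigl(\sum_j c_j b_j\Bigr),
\]
and I would then try to control the sign after integrating over $r$. Negative type of the average follows because cones of conditionally negative definite kernels are closed under averaging. The generalized energy distance interpretation is then the standard Sejdinovic et al.\ correspondence.

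\textbf{Part (b).} With continuous marginals, ties occur with probability zero. Conditioning on the anchor $M=m$ gives
\[
\mathbb E\rho^{(k)}(X_k,Y_k)=\int\bigl[F(1-G)+G(1-F)\bigr]\,dH,
\]
with analogous formulas for the $XX'$ and $YY'$ terms. Combining the three,
\[
2F(1-G)+2G(1-F)-2F(1-F)-2G(1-G)=2(F-G)^2,
\]
so the $k$-th summand equals $2\int(F-G)^2\,dH$. Since $dH=\alpha\,dF+(1-\alpha)\,dG$, independence of $\alpha$ reduces to showing $\int(F-G)^2\,d(F-G)=0$. This holds because $F-G$ is continuous, of bounded variation, and vanishes at $\pm\infty$, so the integral equals $\bigl[(F-G)^3/3\bigr]_{-\infty}^{\infty}=0$. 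Taking $\alpha=1$ yields \eqref{eq:delta-simplified-cvm}.

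\textbf{Part (c).} Nonnegativity is clear. For the upper bound, substitute $U=F(Z)$ with $Z\sim F$, so that $U$ is uniform on $[0,1]$. Then
\[
\int(F-G)^2\,dF=\int_0^1\bigl(u-g(u)\bigr)^2\,du,
\qquad g:=G\circ F^{-1},
\]
where $g$ is nondecreasing with values in $[0,1]$. The functional $g\mapsto\int_0^1(u-g)^2\,du$ is convex on the convex set of such $g$. The extreme points of that set are the step functions $\mathbf 1\{u>c\}$, possibly with right-continuity variants, so the supremum is attained along them. It equals
\[
\max_{c\in[0,1]}\frac{c^3+(1-c)^3}{3}=\frac13.
\]
Hence $\delta_d^\ast\le 2/3$. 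Making this extreme-point argument rigorous, for example via Choquet or by approximating $g$ by finite mixtures of steps, is the main obstacle in (c). Sharpness comes from taking $G^{(k)}$ supported entirely to the left of $F^{(k)}$ in every coordinate, which gives $c=0$.

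For the equivalence, $\delta_d^\ast=0$ forces each coordinate term to vanish, so $F=G$ holds $F$-a.e. By the $\alpha=0$ form, $F=G$ also holds $G$-a.e. Continuity and monotonicity then give $F=G$ on $\operatorname{supp}F\cup\operatorname{supp}G$. Both CDFs are constant on each gap of that union, and they agree at its endpoints, so $F=G$ everywhere. The converse is trivial.
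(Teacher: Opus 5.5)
\textbf{Gap in part (a).} The gap is in part (a), exactly where you hesitated. Your fallback for an anchor law $H=H^{(k)}_{\alpha,d}$ with atoms is to control the sign of $2\bigl(\sum_i c_ia_i\bigr)\bigl(\sum_j c_jb_j\bigr)$ after integrating over $r$, and this cannot work. With $\sum_i c_i=0$ you only have $\sum_i c_ia_i+\sum_i c_ib_i=-\sum_i c_i\mathbf{1}\{z_i=r\}$. So the integrand equals $-2\bigl(\sum_i c_ia_i\bigr)^2\le 0$ only when no $z_i$ equals $r$, and anchors landing on an atom at some $z_i$ can make it positive.

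In fact the claim is false there. Take $d=1$ and $F_1=G_1=\delta_0$, so that $\overline\rho_{F_1,G_1}(u,v)=\mathbf{1}\{uv<0\}$.
\begin{itemize}
\item The triangle inequality fails: $\overline\rho(-1,1)=1>0=\overline\rho(-1,0)+\overline\rho(0,1)$.
\item Negative type fails: for the points $(-1,0,1)$ with weights $c=(1,-2,1)$, one gets $\sum_{i,j}c_ic_j\,\overline\rho(z_i,z_j)=2>0$.
\end{itemize}
More generally, suppose some $H^{(k)}$ has an atom of mass $m>0$ at $v$. Take $z_1,z_2,z_3$ equal in every coordinate except $k$, where they equal $u<v<w$. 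Then
\[
\sum_{i,j}c_ic_j\,\overline\rho(z_i,z_j)=\frac{2}{d}\Bigl[m-H^{(k)}\bigl((u,v)\bigr)-H^{(k)}\bigl((v,w)\bigr)\Bigr],
\]
which is positive once $u,w$ are close enough to $v$. The triangle inequality fails on the same triple.

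So (a) needs the atom-free hypothesis of (b). That is all that is used downstream: the pair-dependent MMD representation already assumes atom-free marginals. The paper's proof has the same hole, since it invokes Kim et al.\ (2020, Lemma D.2) coordinatewise with no condition on $H^{(k)}$. Under that hypothesis your argument is complete, and it shortens to one line. The identity $\rho^{(k)}(u,v)=|H^{(k)}(u)-H^{(k)}(v)|$, which the paper itself uses in (b), is the pullback of $|\cdot|$ on $\bbR$. Hence $\rho^{(k)}$ is a bounded pseudometric of negative type.

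\textbf{Parts (b) and (c).} Your part (b) is the paper's argument: the pointwise identity $2F(1-G)+2G(1-F)-2F(1-F)-2G(1-G)=2(F-G)^2$, followed by $\int h^2\,dh=0$. Your proof of the equivalence in (c) also matches the paper.

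For the bound $\delta_d^\ast\le 2/3$ you take a different route. You use convexity of $g\mapsto\int_0^1(u-g(u))^2\,du$ over nondecreasing $g=G\circ F^{-1}$ and its extreme points. The paper instead uses $\mathbb{E}V^2\le\mathbb{E}V$ and the association inequality $\mathbb{E}(UV)\ge\mathbb{E}U\,\mathbb{E}V$ for the comonotone pair $U=F(X)$, $V=G(X)$, which gives $\mathbb{E}(U-V)^2\le\mathbb{E}U^2=1/3$ in two lines.

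The step you call the main obstacle closes without Choquet theory. Use the layer-cake formula $g(u)=\int_0^1\mathbf{1}\{g(u)>s\}\,ds$ and Jensen:
\[
\int_0^1\bigl(u-g(u)\bigr)^2du\le\int_0^1\!\!\int_0^1\bigl(u-\mathbf{1}\{g(u)>s\}\bigr)^2du\,ds\le\max_{c\in[0,1]}\frac{c^3+(1-c)^3}{3}=\frac13 .
\]
The last inequality holds because each $\{g>s\}$ is an up-interval of $[0,1]$. The paper's route is shorter; yours makes the extremal configurations (separated supports) explicit.
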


An equivalent construction of $\delta^{*}_d$ can be found in \Cref{app-subsec:delta_d_construction} through the expectations of the form $\{\E[\rho_0(U_k,V_k;W_k)]\}_{k=1}^d$ where $U_k,V_k,W_k$ are distributed as $F_d^{(k)}$ or $G_d^{(k)}$ which is key to deriving a closed-form product decomposition of $\E[\mathfrak{W}_d(t)]$ (see \Cref{prop:E[D-star]}). The following two examples illustrate that \(\delta_d^\ast\) captures both classical moment-based changes and genuinely moment-free changes.

\begin{example}[Gaussian mean shift]
\label{ex:ex1-gaussian-mean-shift}
Suppose $F_d = \mcN(0,I_d)~$ and $~G_d = \mcN(\mu,I_d)$, where $\mu=(\mu_1,\dots,\mu_d)^\top$ $\in \bbR^d$. Then,
\begin{equation}
\label{eq:delta-gaussian-shift}
\delta^{*}_d
=
\frac{2}{d}\sum_{k=1}^d
\E_{Z\sim \mcN(0,1)}\,\big(\Phi(Z)-\Phi(Z-\mu_k)\bigr)^2,
\end{equation}
where \(\Phi\) denotes the standard Gaussian CDF.
Moreover, if
\(\|\mu\|_\infty\to0\), then
\(
\delta^{*}_d = \frac{1}{\pi\sqrt3}\frac{\|\mu\|_2^2}{d} + o\!\left(\frac{\|\mu\|_2^2}{d}\right).
\)
\end{example}

\begin{example}[Cauchy scale change]
\label{ex:ex2-cauchy-scale}
Suppose \(F_d=\mathrm{Cauchy}(0,1)^{\otimes d}\) and \(G_d=\mathrm{Cauchy}(0,\lambda)^{\otimes d}\), for some \(\lambda>0\). Then
\begin{equation}
\label{eq:delta-cauchy-scale-integral}
\delta^{*}_d
=
\frac{2}{\pi^3}
\int_{\mathbb R}
\frac{\left[
\tan^{-1}(z)-\tan^{-1}(z/\lambda)
\right]^2}{1+z^2}\,dz 
=
\frac{1}{\pi^2}
\operatorname{Li}_2\!\left[
\left(\frac{\lambda-1}{\lambda+1}\right)^2
\right],
\end{equation}
where \(\operatorname{Li}_2(x)=\sum_{m=1}^\infty x^m/m^2\) denotes the dilogarithm function. Although the Cauchy marginals have no finite moments, \(\delta^{*}_d\) is still well-defined. In particular, for every fixed \(\lambda\neq1\), \(\delta^{*}_d\) is a positive constant.
\end{example}

\begin{remark}
    Since \(\delta_d^{*}\) is obtained by averaging coordinate-wise one-dimensional discrepancies, it is sensitive to changes in the collection of one-dimensional marginals. Under atom-free marginal distributions, $\delta_d^{*} \geq 0$, with equality iff \(F_d^{(k)}=G_d^{(k)}\) for all \(k\in[d]\). Thus, the discrepancy vanishes when all one-dimensional marginals agree, even if the joint dependence structures differ.

    This restriction is deliberate. In HDLSS settings, fully multivariate discrepancies can be difficult to estimate and calibrate when the number of observations is small. The present construction gains stability by averaging the one-dimensional signals across coordinates.
\end{remark}

\subsection{Representation as pair-dependent square-MMD}

The discrepancy in \eqref{eq:Delta-main} admits an exact square-MMD representation, but in a pair-dependent sense. Fix any non-zero reference point $\beta\in\mathbb R^d \setminus \{0\}$ and define the kernel
\begin{equation}
\label{eq:kernel-main}
\mathbf{k}_{F_d,G_d}^{(\beta)}(x,y)
:=
\frac{1}{2}\Big\{
\overline{\rho}_{F_d,G_d}(x,\beta)
+
\overline{\rho}_{F_d,G_d}(y,\beta)
-
\overline{\rho}_{F_d,G_d}(x,y)
\Big\},
\qquad x,y\in\mathbb R^d.
\end{equation}


\begin{proposition}[Exact pair-dependent MMD representation of $\delta_d^{*}$]
\label{prop:pair-dependent-mmd-main}
Recall the dimension-averaged angular kernel discrepancy $\delta_d^{*}$ defined in \eqref{eq:delta-simplified-cvm}. For any pair $(F_d,G_d)$ with atom-free marginal laws, it admits the following square-MMD representation with the pair-dependent kernel $\mathbf{k}_{F_d,G_d}^{(\beta)}$:
\begin{equation}
\label{eq:Delta-mmd-main}
\delta_d^{*}
\,=\,
2\,\mathrm{MMD}^2 (F_d,G_d ; \mathbf{k}_{F_d,G_d}^{(\beta)}),
\end{equation}
where 
\(
\displaystyle
~\mathrm{MMD}^2 (F_d,G_d ; \mathbf{k}_{F_d,G_d}^{(\beta)})
:=
\mathbb{E}\!\left[\mathbf{k}_{F_d,G_d}^{(\beta)}(X,X')\right]
+ \mathbb{E}\!\left[\mathbf{k}_{F_d,G_d}^{(\beta)}(Y,Y')\right]
- 2\,\mathbb{E}\!\left[\mathbf{k}_{F_d,G_d}^{(\beta)}(X,Y)\right].
\)

\noindent
Moreover, the value of the right-hand side does not depend on the choice of the reference point $\beta$.
\end{proposition}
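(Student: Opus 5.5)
The plan is to expand the MMD expression with the definition of $\mathbf{k}^{(\beta)}_{F_d,G_d}$ and check that the $\beta$-dependent terms cancel exactly. Write $\bar\rho:=\overline{\rho}_{F_d,G_d}$ and $k:=\mathbf{k}^{(\beta)}_{F_d,G_d}$. Boundedness of $\bar\rho$ (Proposition~\ref{lem:signal_factor}(a)) makes every expectation below finite, so linearity of expectation can be used freely. Take $X,X'\stackrel{\mathrm{i.i.d.}}{\sim}F_d$ and $Y,Y'\stackrel{\mathrm{i.i.d.}}{\sim}G_d$, all mutually independent. Then
\begin{align*}
\mathbb E[k(X,X')] &= \mathbb E[\bar\rho(X,\beta)] - \tfrac12\mathbb E[\bar\rho(X,X')],\\
\mathbb E[k(Y,Y')] &= \mathbb E[\bar\rho(Y,\beta)] - \tfrac12\mathbb E[\bar\rho(Y,Y')],\\
\mathbb E[k(X,Y)] &= \tfrac12\mathbb E[\bar\rho(X,\beta)] + \tfrac12\mathbb E[\bar\rho(Y,\beta)] - \tfrac12\mathbb E[\bar\rho(X,Y)].
\end{align*}
The first two lines use that $X,X'$ have the same law, so $\mathbb E[\bar\rho(X',\beta)]=\mathbb E[\bar\rho(X,\beta)]$, and likewise for $Y,Y'$.

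Forming $\mathbb E[k(X,X')]+\mathbb E[k(Y,Y')]-2\,\mathbb E[k(X,Y)]$, the anchor terms $\mathbb E[\bar\rho(X,\beta)]$ and $\mathbb E[\bar\rho(Y,\beta)]$ each enter once with a plus sign and once with a minus sign, so they cancel. What remains is
\[
\mathrm{MMD}^2(F_d,G_d;k)=\tfrac12\Bigl\{2\,\mathbb E[\bar\rho(X,Y)]-\mathbb E[\bar\rho(X,X')]-\mathbb E[\bar\rho(Y,Y')]\Bigr\}=\tfrac12\,\Delta_{\alpha,d}(F_d,G_d),
\]
by the definition \eqref{eq:Delta-main}. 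This cancellation also proves the final claim of the proposition: $\beta$ no longer appears, so the right-hand side of \eqref{eq:Delta-mmd-main} does not depend on the reference point. The pseudometric itself still depends on $\alpha$ through the anchor law $H^{(k)}_{\alpha,d}$, but only as a fixed function. Under the atom-free assumption, Proposition~\ref{lem:signal_factor}(b) gives $\Delta_{\alpha,d}(F_d,G_d)=\delta_d^\ast$ for every $\alpha$, hence $\delta_d^\ast=2\,\mathrm{MMD}^2$.

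The only step that needs care is that the identity legitimately deserves the name ``MMD'', i.e.\ that $k$ is a positive semidefinite kernel. This follows from Proposition~\ref{lem:signal_factor}(a): a pseudometric of negative type yields, through the standard distance-induced kernel construction of \citet{sejdinovic2013energy}, a positive definite kernel $\tfrac12\{\bar\rho(x,\beta)+\bar\rho(y,\beta)-\bar\rho(x,y)\}$ for any base point $\beta$. The restriction $\beta\neq0$ is harmless and plays no role in the algebra. Boundedness of $\bar\rho$ ensures that the RKHS mean embeddings of $F_d$ and $G_d$ exist, so $\mathrm{MMD}^2$ is indeed the squared RKHS distance between them. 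Beyond this, the argument is pure bookkeeping, and I expect no genuine obstacle.
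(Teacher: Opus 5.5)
Your proposal is correct and follows the paper's proof essentially line for line. You expand the three kernel expectations, note that the $\beta$-anchor terms cancel to leave $\tfrac12\Delta_{\alpha,d}(F_d,G_d)$, invoke Proposition~\ref{lem:signal_factor}(b) to identify this with $\tfrac12\delta_d^{*}$, and take positive semidefiniteness of the centered kernel from the negative-type property in part (a), exactly as the paper does.
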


Notably, the representation of $\delta_d^{*}$ in \eqref{eq:Delta-mmd-main} is pair-dependent, i.e., the kernel \(\mathbf{k}_{F_d,G_d}^{(\beta)}\) depends on \((F_d,G_d)\) through the anchor laws \(\alpha F_d^{(k)}+(1-\alpha)G_d^{(k)}\) for all \(k\in[d]\). 
This viewpoint is complementary to the generalized energy representation \eqref{eq:Delta-main} in the previous subsection.



\section{Methodology: Offline change-point detection}
\label{sec:methodology}

We work under a \emph{nonparametric, offline, single change-point model with focus on marginal distribution shifts}, i.e., in which the pre-change and post-change distributions differ in at least one one-dimensional marginal. Formally, for fixed dimension $d$ we consider the following change-point detection or testing problem:
\begin{align}
\label{eq:single-cp}
\mathbf{H}_{0,d} &: \quad Z_1, Z_2, \dots, Z_N \overset{\text{i.i.d.}}{\sim} F_d, \quad \text{vs.} \notag \\
\mathbf{H}_{1,d} &: \quad \exists\, \tau \in \mcT := \{2,3,\dots,N-2\} \ \text{such that}\ 
\begin{cases}
Z_1, Z_2, \dots, Z_\tau \overset{\text{i.i.d.}}{\sim} F_d, \\
Z_{\tau+1}, Z_{\tau+2}, \dots, Z_N \overset{\text{i.i.d.}}{\sim} G_d,
\end{cases}
\end{align}
where $F_d$ and $G_d$ are probability distributions on $\mathbb{R}^d$ such that $F_d^{(k)} \neq G_d^{(k)}$ for at least one $k \in [d]$. No parametric or moment assumptions are imposed on \(F_d\) or \(G_d\) beyond those stated explicitly later for theoretical analysis. The distributions may differ in arbitrary ways, including in higher-order structure or tail behavior. Our asymptotic results study this sequence of testing problems as \(d\to\infty\), with \(N\) fixed. 

\paragraph{Empirical DAK discrepancy.}
We first define an empirical analogue of the population discrepancy, following \citet{raychoudhury2023robust}. Let
$\mcX = \{X_1, \ldots, X_{n_1}\}$ and $\mcY = \{Y_1, \ldots, Y_{n_2}\}$. Write $\widehat{\alpha} = n_1 / n$.
For any pair of vectors \( \mathbf{u} = (u_k)_{k \in [d]}\) and \(\mathbf{v} = (v_k)_{k \in [d]}\), we define the pooled-anchor empirical kernel:

\begingroup
\setlength\abovedisplayskip{-2pt}
\setlength\belowdisplayskip{10pt}
\begin{equation*}
\widehat{\rho}_{\mcX,\mcY}^{(k)}(u_k, v_k)
:=
\frac{1}{n_1+n_2}
\left[
\sum_{i=1}^{n_1}\rho_0(u_k,v_k;X_{i,k})
+
\sum_{j=1}^{n_2}\rho_0(u_k,v_k;Y_{j,k})
\right],
\quad\text{for all}~~ k\in[d],
\end{equation*}
\endgroup
where \(\rho_0\) is the one-dimensional angular kernel introduced in \eqref{eq:one-dim-angular-kernel}. Averaging across coordinates gives
\begin{equation}
\label{eq:empirical-bar-rho}
\widehat{\overline\rho}_{\mcX,\mcY}(\mathbf{u}, \mathbf{v})
:=
\frac{1}{d} \sum_{k=1}^d \widehat{\rho}_{\mcX,\mcY}^{(k)}(u_k,v_k).
\end{equation}
\indent
This is the natural empirical counterpart of \(\overline{\rho}_{F_d,G_d}\), obtained by replacing the mixture anchor law \(\alpha F_d^{(k)}+(1-\alpha)G_d^{(k)}\) with the pooled empirical anchor distribution on the \(k\)-th coordinate.
%
We use the shorthand $\widehat{\overline{\rho}}(\cdot, \cdot)$ for $\widehat{\overline\rho}_{\mcX,\mcY}(\cdot,\cdot)$, and define the following empirical quantities:
\[
\widehat{T}_{\mcX\mcX} = \frac{1}{n_1(n_1 - 1)} \sum_{\substack{X, X' \in \mcX \\ [X \neq X']}} \widehat{\overline{\rho}}(X, X'), \quad
\widehat{T}_{\mcY\mcY} = \frac{1}{n_2(n_2 - 1)} \sum_{\substack{Y, Y' \in \mcY \\ [Y \neq Y']}} \widehat{\overline{\rho}}(Y, Y'), \quad
\widehat{T}_{\mcX\mcY} = \frac{1}{n_1 n_2} \sum_{\substack{X \in \mcX, \\ Y \in \mcY}} \widehat{\overline{\rho}}(X, Y).
\]
\indent
Finally, define the empirical dimension-averaged angular kernel discrepancy (between $\mcX$ and $\mcY$) by
\begin{equation}
\label{eq:Delta_alpha}
\widehat{\Delta}_{\widehat{\alpha},d}(\mcX,\mcY) := 2\, \widehat{T}_{\mcX\mcY} - \widehat{T}_{\mcX\mcX} - \widehat{T}_{\mcY\mcY}\,,
\end{equation}
which serves as a sample analogue of the population quantity \(\Delta_{\alpha,d}(F_d,G_d)\), with \(\alpha\) replaced by the empirical proportion \(\widehat\alpha=n_1/n\), and measures the separation between the samples in \( \mcX \) and \( \mcY \).

\paragraph{Detection statistic and decision rule.}
Consider the following detection statistic.   
For each candidate change-point location \( t \in \mcT \), we split the data into two parts: $\mcX_t = \{Z_1, \ldots, Z_t\}$ and $\mcY_t = \{Z_{t+1}, \ldots, Z_N\}$, and compute the statistic with $n_1 := t$, $n_2 := N-t$, and $\widehat{\alpha}_t = t/N$: 
\begin{equation}
\label{eq:scan-stat}
\mathfrak{W}_d(t) 
\, := \, \widehat{\Delta}_{\widehat{\alpha}_t,d}(\mcX_t,\mcY_t) 
\, = \, \widehat{\Delta}_{\widehat{\alpha}_t,d}\Big( \{Z_1, \ldots, Z_t\} ,  \{Z_{t+1}, \ldots, Z_N\} \Big).
\end{equation}
\indent
For testing at level $\alpha$, we use a variance-calibrated version of the scan. Let \(\widehat\sigma_{\mathrm{long}}(N)\) denote the plug-in estimator of a long-run variance factor, constructed in \Cref{subsec:level-alpha-calibration} below. Define:
\begin{equation}
\label{eq:studentized-max}
S_d
\;:=\;
\max_{t\in\mcT}\frac{\sqrt d\,\mathfrak W_d(t)}{\widehat{\sigma}_{\mathrm{long}}(N)}.
\end{equation}
\indent
We declare a change if \(\,S_d > c_\alpha\), where \(c_\alpha\) is a fixed threshold that can be computed independently of the observed data. Its definition, based on the asymptotic null distribution of $S_d$, is given in \Cref{subsubsec:adf-test}.



Finally, once a change-point has been detected, we can estimate the actual change-point location by the value of $t \in \mcT$ where $\{\mathfrak{W}_d(t)\}_{t \in \mcT}$ is maximized, i.e.,
\begin{equation}
\label{eq:cp-estimate}
\widehat{\tau}_{d} 
\,\in\, \underset{t \in \mcT}{\argmax} \,\, \mathfrak{W}_{d}(t).
\end{equation}

\begin{remark}[Discussion on an alternative approach]
A similar estimator can be constructed based on the multivariate angular kernel in \citet{kim2020robust}, which avoids the dimension-averaging step in \eqref{eq:empirical-bar-rho} and tests whether $F_d \neq G_d$ in a fully multivariate way. However, its HDLSS behavior would still be governed by certain first- and second-moment-type quantities; see, e.g., \citet[Lemmas A.1 and A.2]{raychoudhury2023robust}, and thus would still require asymptotic separation in those moments of $F_d$ and $G_d$. 

By contrast, our DAK statistic \(~\mathfrak W_d\), due to the dimension-averaging, makes our method applicable to heavy-tailed or contaminated distributions, and to settings in which low-order moments of pre- and post-change distributions may be identical, infinite, or even fail to exist. This robustness comes with a clear trade-off: our method is sensitive to changes in the collection of one-dimensional marginals, but not to changes in pure dependence with completely unchanged marginals. 
This is a deliberate compromise that makes possible a tuning-free, moment-agnostic method with explicit finite-$N$, large-$d$ theory.
%

\end{remark}

\vspace{-2mm}
\paragraph{Sequential extension.} 
While the offline DAK scan is designed for HDLSS data, it also provides the basic ingredient for a simple, sequential monitoring procedure for streaming high-dimensional data. By repeatedly applying the offline statistic over small rolling windows, we exploit the low-sample regime in which the offline method is most effective while extending the approach to a high-dimensional online setting. We defer the methodology and theoretical analysis of this sequential extension to Section~\ref{sec:online}.


\section{Theoretical properties of the offline scan}
\label{sec:theory}


\subsection{Exact structural decomposition of the expectation of $\mathfrak{W}_d(t)$}

\begin{proposition}[Exact closed-form expression of mean discrepancy]
\label{prop:E[D-star]}
    Suppose $Z_1, Z_2, \dots, Z_N$ are such that $~\exists\, \tau \in \mcT$ such that $Z_1, \dots, Z_\tau \overset{\text{i.i.d.}}{\sim} F_d$, and $Z_{\tau+1}, \dots, Z_N \overset{\text{i.i.d.}}{\sim} G_d$ for some $d$-dimensional distributions $F_d$ and $G_d$ with atom-free marginal laws. Define the sample size-adjusted signal factor 
    \begin{equation}
        \delta_d := \frac{N-1}{N} \cdot \delta^{*}_d~.
    \end{equation}
    Then, the (pointwise) expected value of $\mathfrak{W}_{d}(t)$ admits a closed-form expression, given by 
    \begin{align}\label{eq:E[D-star]}
        \mu_d(t) := \E\big[\mathfrak{W}_{d}(t)\big] =
        \begin{cases}
            \, f_d(t) = \dfrac{(N - \tau)(N - \tau - 1)}{(N - t)(N - t - 1)} \cdot \delta_d & \text{~for~} t \in [1,\tau] \cap \bbN \,, \\[1.2em]
            \, g_d(t) = \dfrac{\tau(\tau - 1)}{t(t - 1)} \cdot \delta_d & \text{~for~} t \in [\tau,N] \cap \bbN \,.
        \end{cases}
    \end{align}
    %
    Moreover, $\mu_d(t) \in \left[0,\nicefrac{2}{3}\right)$ for all $t \in [N]$.
\end{proposition}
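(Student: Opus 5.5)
The plan is to compute $\E[\mathfrak W_d(t)]$ directly, reducing every term to a coordinatewise "betweenness probability". The structure of the answer suggests a counting argument: among all pairs compared by the scan, only those in which one point is pre-change and the other post-change (or, more precisely, the ones whose anchor types create imbalance) contribute a nonzero expected discrepancy, and the number of such pairs produces the ratio $\frac{(N-\tau)(N-\tau-1)}{(N-t)(N-t-1)}$.

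\textbf{Step 1: reduce to triples and an elementary function.} Fix $t\le\tau$; the case $t\ge\tau$ follows from the same computation after reversing the time order. Reversal exchanges the roles of $F_d$ and $G_d$ and maps $\tau$ to $N-\tau$. Since $\Delta_{\alpha,d}$ is symmetric under swapping $(F_d,\alpha)\leftrightarrow(G_d,1-\alpha)$ and is $\alpha$-free by Proposition~\ref{lem:signal_factor}(b), $\delta_d^*$ is symmetric in $(F_d,G_d)$, so the reversed case gives $g_d$.

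By linearity in $k$, it suffices to treat one coordinate and average at the end. For a pair of sample indices $(i,j)$, the pooled anchor runs over all $N$ points. Anchors $l\in\{i,j\}$ give $\rho_0=0$, so
\[
\E\widehat\rho^{(k)}(Z_{i,k},Z_{j,k})=\frac1N\sum_{l\ne i,j}p_k(a_i,a_j;a_l),
\]
where $a\in\{F,G\}$ denotes the law type of each point and
\[
p_k(a,b;c)=\P\big(W\text{ strictly between }U,V\big),\qquad U\sim a^{(k)},\; V\sim b^{(k)},\; W\sim c^{(k)}\text{ independent}.
\]
Atom-freeness gives
\[
p_k(a,b;c)=\int\{A(1-B)+B(1-A)\}\,dC .
\]
This expression is symmetric in $(a,b)$, and $p_k(a,a;a)=1/3$.

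\textbf{Step 2: cancellation and reduction to $\delta_d^*$.} With $t\le\tau$, the set $\mcX_t$ is entirely of type $F$. The set $\mcY_t$ contains $\tau-t$ points of type $F$ and $N-\tau$ points of type $G$, and the anchors contribute $\tau-2$, $\tau-1$ or $\tau$ copies of $F$ depending on the pair, with the remaining anchors of type $G$.

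I will expand $2\widehat T_{\mcX\mcY}-\widehat T_{\mcX\mcX}-\widehat T_{\mcY\mcY}$ as an explicit linear combination of the six quantities
\[
p(F,F;F),\; p(F,F;G),\; p(F,G;F),\; p(F,G;G),\; p(G,G;F),\; p(G,G;G).
\]
First I check that the coefficients sum to zero, which is equivalent to the null giving zero mean. Then I use the integration-by-parts identities $\int G\,dF+\int F\,dG=1$ and $\int F\,dF=\tfrac12$ to reduce these quantities to $\tfrac13$ plus multiples of $\int(F-G)^2dF$ and $\int(F-G)^2dG$. These two integrals are equal for continuous marginals, as a consequence of Proposition~\ref{lem:signal_factor}(b) applied at $\alpha=0$ and $\alpha=1$.

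The effective signal should come only from pairs inside $\mcY_t$ that mix types, weighed against the cross pairs. After normalization by $(N-t)(N-t-1)$ and by $N$, the counts should collapse to
\[
\frac{(N-\tau)(N-\tau-1)}{(N-t)(N-t-1)}\cdot\frac{N-1}{N}\,\delta_d^* .
\]
Here the factor $(N-1)/N$ arises from the $1/N$ anchor normalization together with the anchors that are excluded. This bookkeeping is the main obstacle. I would organize it by first computing the case $t=\tau$, where the target is simply $\frac{N-1}{N}\delta_d^*$. I would then track how moving $F$-points from $\mcX$ into $\mcY$ changes each block, checking the result against the boundary value $t=\tau$ and against the null.

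\textbf{Step 3: bounds.} Since $t\le\tau$, the ratio in $f_d$ lies in $[0,1]$, and likewise for $g_d$. Combined with $0\le\delta_d^*\le2/3$ from Proposition~\ref{lem:signal_factor}(c) and the factor $(N-1)/N<1$, this gives $\mu_d(t)\in[0,2/3)$.
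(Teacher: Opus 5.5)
Your Step 1 is correct. The reversal $t\mapsto N-t$ uses that $\widehat\Delta$ is symmetric in its two samples and that $\delta_d^*(F_d,G_d)=\delta_d^*(G_d,F_d)$, and it is a valid shortcut for the paper's second direct computation. Step 3 is also fine. The gap is Step 2, which carries the whole content of the proposition and which you only conjecture (``the counts should collapse to \ldots'').

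Moreover, the reduction you describe fails as stated: the mixed triple probabilities are \emph{not} $\tfrac13$ plus multiples of $\int(F-G)^2\,dF$. Write $h=F-G$, $\mathfrak M=\int h^2\,dF$ and $L=\int(2F-1)h\,dF$. Then
\[
\begin{aligned}
p(F,G;F)&=\tfrac13+L, & p(F,G;G)&=\tfrac13+\mathfrak M-L,\\
p(F,F;G)&=\tfrac13-2L, & p(G,G;F)&=\tfrac13+2L-2\mathfrak M,
\end{aligned}
\]
and $L$ is in general first order in the perturbation. For $F=\mathrm{Unif}(0,1)$ and $G=\mathrm{Unif}(0,1+s)$ one gets $p(F,F;G)=1/\{3(1+s)\}$, whereas $\mathfrak M=s^2/\{3(1+s)^2\}$. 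A proof therefore has to do two things your plan does not. It must show that the coefficient of $L$ vanishes in the specific combination the scan produces. It must also turn the excluded-endpoint-anchor correction into exactly $-2\mathfrak M/N$ per coordinate, which is where $(N-1)/N$ comes from. The two identities you cite, $\int G\,dF+\int F\,dG=1$ and $\int F\,dF=\tfrac12$, are not enough. The quantities also involve cubic integrals such as $\int FG\,dF$ and $\int F^2\,dG$, which need relations like $\int F^2\,dG+2\int FG\,dF=1$.

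The organization that closes the gap, and the one the paper uses, is by pair type. The anchor composition depends only on the types of the two endpoints, so $\mathbb E\,\widehat{\overline\rho}(Z_i,Z_j)$ takes only three values: $A$ for a mixed pair, $B$ for an $F$--$F$ pair, and $C$ for a $G$--$G$ pair. Set $n=N-t$, $\tau_\ell=\tau-t$, $\tau_r=N-\tau$. Counting the three blocks gives coefficients $2\tau_r(\tau_r-1)$, $-\tau_r(\tau_r-1)$, $-\tau_r(\tau_r-1)$ over $n(n-1)$, so $\mathbb E\,\mathfrak W_d(t)=\frac{\tau_r(\tau_r-1)}{n(n-1)}(2A-B-C)$. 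The shape factor is thus the fraction of ordered pairs within $\mathcal Y_t$ that are both post-change, not a count of mixed pairs. Coordinatewise (suppressing $k$),
\[
\begin{aligned}
N(2A-B-C)&=\tau\bigl[2p(F,G;F)-p(F,F;F)-p(G,G;F)\bigr]+(N-\tau)\bigl[2p(F,G;G)-p(F,F;G)-p(G,G;G)\bigr]\\
&\quad-2\bigl[p(F,G;F)+p(F,G;G)-p(F,F;F)-p(G,G;G)\bigr].
\end{aligned}
\]
The ingredient you are missing is the mixed identity $p(F,G;F)+p(F,G;G)-p(F,F;F)-p(G,G;G)=\mathfrak M$. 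It follows by integration by parts, using $\int h^2\,dh=\int h\,dh=0$ and the product rule for $Gh^2$. You also need $2p(F,G;c)-p(F,F;c)-p(G,G;c)=2\mathfrak M$ for $c\in\{F,G\}$, since the integrands combine to $2h^2$ and $\int h^2\,dF=\int h^2\,dG$. Together these give $N(2A-B-C)=2(N-1)\mathfrak M$ per coordinate, i.e.\ $2A-B-C=\frac{N-1}{N}\delta_d^*$, with $L$ cancelling automatically.
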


\begin{remark}[$\mu_d(t)$ is maximized at the true change-point $\tau$]
\label{rem:E[D]-inc-dec}
Under $\mathbf{H}_{0,d}$, we always have \(\mu_d\equiv0\). Under $\mathbf{H}_{1,d}$, $\exists\,k\in[d]\ \text{such that}\ F_d^{(k)}\neq G_d^{(k)}$; thus \Cref{lem:signal_factor}(c) ensures that $\delta_d > 0$, provided the coordinate marginals are atom-free. Hence, under $\mathbf{H}_{1,d}$, the mean discrepancy function $\mu_d(\cdot)$, defined in \eqref{eq:E[D-star]}, strictly increases on $[1,\tau) \cap \bbN$, and strictly decreases on $(\tau,N] \cap \bbN$. At~ $t = \tau$, we have $\mu_d(\tau) = f_d(\tau) = g_d(\tau) = \delta_d$. 

Thus, \(\mu_d(t)\) attains its unique maximum at \(t=\tau\), the true change-point location, whenever \(F_d\) and \(G_d\) differ in at least one coordinate marginal. The magnitude of the peak is governed by \(\delta_d>0\).
\end{remark}

\begin{figure}[t]
    \centering
    \includegraphics[width=\linewidth]{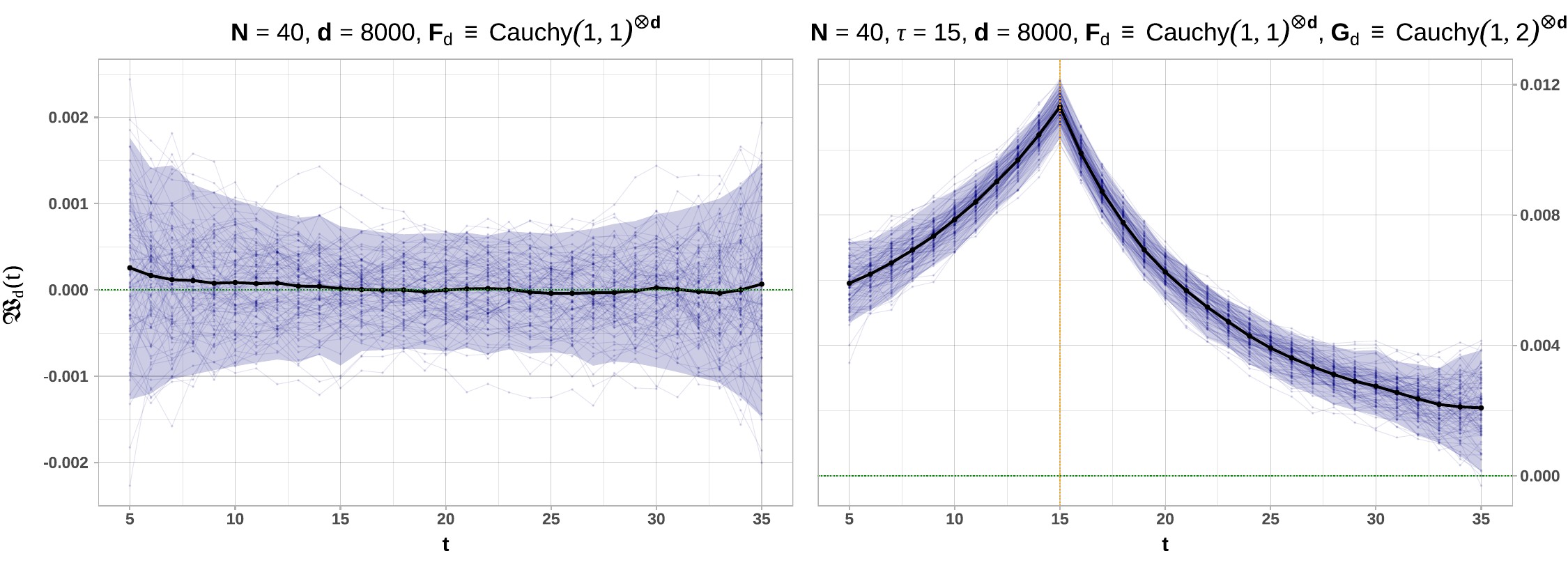}
    \captionsetup{font=footnotesize,justification=centering}
    \caption{
    Empirical trajectories and pointwise $95\%$ quantile envelopes of $\mathfrak{W}_d(t)$ with $N=40$ and $d=8000$, based on $100$ independent replications. The black curve denotes the pointwise empirical mean across replications.\\[0.8mm]
    \textbf{Left ($\mathbf{H}_{0,d}$):} $Z_1, \ldots, Z_N \sim F_d \equiv \mathrm{Cauchy}(1,1)^{\otimes d}$. 
    The process fluctuates around zero without a systematic structure.\\[0.6mm]
    \textbf{Right ($\mathbf{H}_{1,d}$):} $Z_1, \ldots, Z_\tau \sim F_d \equiv \mathrm{Cauchy}(1,1)^{\otimes d}$ and $Z_{\tau+1}, \ldots, Z_N \sim G_d \equiv \mathrm{Cauchy}(1,2)^{\otimes d}$. 
    The vertical dashed line marks the true change-point, at $\tau = 15$. 
    Under the alternative, the DAK discrepancies concentrate around a strictly unimodal signal with the mode at $\tau$, exhibiting clear distinction from the null fluctuation envelope.
    }
    \label{fig:E[D]-inc-dec}
\end{figure}

\begin{remark}[Product decomposition of the expected DAK statistic]
\label{rem:geometric_factorization}
Proposition~\ref{prop:E[D-star]}, paired with \Cref{lem:signal_factor} and \Cref{rem:E[D]-inc-dec}, reveals a remarkable structural property of the DAK discrepancy.
The population mean
$\mu_d(t)=\mathbb{E}[\mathfrak{W}_d(t)]$ factorizes as
\begin{equation}
\label{eq:structural-decomp}
\mu_d(t)
= \Lambda_{\tau,N}(t)\cdot \delta_d \, ,
\end{equation}
where the \emph{deterministic} (increasing till $\tau$, decreasing thereafter) shape function
\begin{equation}
\label{eq:det-shape-func}
\Lambda_{\tau,N}(t)
=
\begin{cases}
\, \dfrac{(N-\tau)(N-\tau-1)}{(N-t)(N-t-1)}~, &  \text{for } t \in [1,\tau] \cap \bbN \,, \\[1em]
\, \dfrac{\tau(\tau-1)}{t(t-1)}~, &  \text{for } t \in [\tau,N] \cap \bbN \,,
\end{cases}
\end{equation}
depends only on the combinatorial geometry of the split $(t,\tau,N)$, i.e., independent of $F_d$ and $G_d$. The \emph{signal factor} $\delta_d$ is a finite-$N$ rescaling of the dimension-averaged, Cramér--von Mises--type pseudometric $\delta^{*}_d$ between $F_d$ and $G_d$, that gives a measure of separation between the marginals of $(F_d,G_d)$ without any dependence on $(t,\tau)$. By \Cref{lem:signal_factor}(c), $\delta_d > 0$ under $\mathbf{H}_{1,d}$, provided the coordinate marginals are atom-free.

Therefore, under $\mathbf{H}_{1,d}$, the population DAK discrepancy curve $\mu_d(t)$ takes a universal, deterministic unimodal shape entirely determined by $(t,\tau,N)$, while its magnitude is solely governed by the scalar distributional discrepancy factor $\delta_d$ independent of $(t,\tau)$. Thus, \eqref{eq:structural-decomp} can be viewed as a product decomposition of the DAK discrepancy into \emph{shape} and \emph{signal}. The above results hold for all fixed $N$ and $d$. 
\end{remark}



\subsection{Exact covariance structure of $\{\mathfrak{W}_d(t)\}_{t \in \mcT}$, and high-dimensional consistency}

Recall the definition of $\mathfrak{W}_d(t)$ in \eqref{eq:scan-stat} which is a sum of terms of the form $\rho_0(\cdot, \cdot; Z_k)$ for all $k \in [N]$, with the anchors being $\{Z_1, \ldots, Z_N\}$. Consequently, each $\mathfrak{W}_d(t)$ is a function of the entire sample $\{Z_k\}_{k \in [N]}$, and hence it is natural to expect that they are heavily correlated. Interestingly, we show that under $\mathbf{H}_{0,d}$, for all $t, t' \in \mcT$, $\cov\!\big(\mathfrak{W}_d(t),\mathfrak{W}_d(t')\big)$ has a closed form product decomposition into a time component and a distributional component; similar to the decomposition \eqref{eq:structural-decomp} of $\mu_d(\cdot)$ as in \Cref{rem:geometric_factorization}.

\begin{proposition}[Covariance structure of $\mathfrak{W}_d(t)$ under $\mathbf{H}_{0,d}$]
\label{prop:var-cov-null}
Let $N$ be fixed and consider the offline discrepancy statistic $\mathfrak{W}_d(t)$ computed from $N$ observations in ambient dimension $d$. Under $\mathbf{H}_{0,d}$, the covariance of $\mathfrak{W}_d$ admits the following closed-form expression:
\begin{align}
\cov\!\big(\mathfrak{W}_d(t),\mathfrak{W}_d(t')\big)
&= \frac{2(N-1)(N-2)}{\,t'(t'-1)(N-t)(N-t-1)} \, \mathcal{V}_{d,N}~,
\qquad \text{for~} t' \geq t~;
\label{eq:covDt}
\end{align}
where
\(
\mathcal{V}_{d,N} := \sigma_d^2 - 2\kappa_d + \eta_d,
\)
with
$\sigma_d^2 := \var\big(\widehat{\overline{\rho}}(Z_i,Z_j)\big)$, $\kappa_d := \cov\!\big(\widehat{\overline{\rho}}(Z_i,Z_j), \widehat{\overline{\rho}}(Z_i,Z_k)\big)$, and $\eta_d := \cov\!\big(\widehat{\overline{\rho}}(Z_i,Z_j), \widehat{\overline{\rho}}(Z_k,Z_l)\big)$,
for distinct indices \(i,j,k,l \in [N]\).
Also, \(\mathcal{V}_{d,N} \in [0,1]\) is a deterministic scalar depending on the sample size $N$ and the null law $F_d$; it is independent of $(t,t')$.
\end{proposition}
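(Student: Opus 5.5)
The key observation is that under $\mathbf{H}_{0,d}$ the pooled anchor set used in $\widehat{\overline\rho}_{\mcX_t,\mcY_t}$ is always the full sample $\{Z_1,\dots,Z_N\}$, whatever the split $t$. Hence $a_{ij}:=\widehat{\overline\rho}(Z_i,Z_j)=\frac1{dN}\sum_{k=1}^d\sum_{m=1}^N\rho_0(Z_{i,k},Z_{j,k};Z_{m,k})$ does not depend on $t$. Every $\mathfrak W_d(t)$ is therefore a linear form $\mathfrak W_d(t)=\sum_{\{i,j\}}c^{(t)}_{ij}a_{ij}$ in the \emph{same} symmetric array $(a_{ij})$. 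Over unordered pairs the coefficients are:
\begin{itemize}
\item $c^{(t)}_{ij}=-2/(t(t-1))$ if $i,j\le t$;
\item $c^{(t)}_{ij}=-2/((N-t)(N-t-1))$ if $i,j>t$;
\item $c^{(t)}_{ij}=2/(t(N-t))$ if $i\le t<j$.
\end{itemize}
Under the null the $Z_m$ are i.i.d., so the array $(a_{ij})$ is jointly exchangeable. Consequently $\cov(a_{ij},a_{kl})$ depends only on the overlap pattern of $\{i,j\}$ and $\{k,l\}$: it equals $\sigma_d^2$ if the pairs coincide, $\kappa_d$ if they share exactly one index, and $\eta_d$ if they are disjoint. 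These are exactly the quantities in the statement. They depend on $N$ because the anchors run over all $N$ points, but not on $t$.

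Writing $\mathbf 1\{\cdot\}$ for the overlap indicators, I will decompose
\[
\cov\big(\mathfrak W_d(t),\mathfrak W_d(t')\big)=(\sigma_d^2-2\kappa_d+\eta_d)\sum_{p}c^{(t)}_pc^{(t')}_p+(\kappa_d-\eta_d)\sum_{i=1}^N r^{(t)}_i r^{(t')}_i+\eta_d\Big(\sum_p c^{(t)}_p\Big)\Big(\sum_p c^{(t')}_p\Big),
\]
where $r^{(t)}_i:=\sum_{j\ne i}c^{(t)}_{ij}$ is the row sum. This identity is checked by counting how often each overlap type appears in the three sums: a coinciding pair contributes $1-2+1$ times to the first term, $2$ times to the row term and $1$ time to the last; a one-shared-index pair contributes $1$ time each to the row term and the last term; a disjoint pair appears only in the last term. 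Both correction terms vanish. First, $\sum_pc^{(t)}_p=2-1-1=0$. Second, for $i\le t$ we have $r^{(t)}_i=(t-1)\cdot\frac{-2}{t(t-1)}+(N-t)\cdot\frac{2}{t(N-t)}=0$, and symmetrically $r^{(t)}_i=0$ for $i>t$. Hence the covariance equals $\mathcal V_{d,N}\sum_p c^{(t)}_pc^{(t')}_p$, which already gives the product structure.

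It then remains to compute $\sum_pc^{(t)}_pc^{(t')}_p$ for $t\le t'$ and show it equals $2(N-1)(N-2)/\{t'(t'-1)(N-t)(N-t-1)\}$. I will split the pairs into six classes according to the positions of the two indices relative to $t\le t'$: both in $[1,t]$; one in $[1,t]$ and one in $(t,t']$; both in $(t,t']$; one in $[1,t]$ and one in $(t',N]$; one in $(t,t']$ and one in $(t',N]$; and both in $(t',N]$. On each class both coefficients are constant, so each contribution is a count times a product. Summing and factoring the resulting rational function should produce the stated closed form. A sanity check is $t=t'$, where one needs $\sum_p (c^{(t)}_p)^2=2(N-1)(N-2)/\{t(t-1)(N-t)(N-t-1)\}$. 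This algebraic simplification is the step I expect to be the main (purely computational) obstacle; I would organise it by first summing the cross-class terms and then verifying the identity via the $t=t'$ case together with a symmetry argument under $t\mapsto N-t$.

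Finally, for the range of $\mathcal V_{d,N}$: setting $t=t'$ gives $\var(\mathfrak W_d(t))=(\text{positive constant})\cdot\mathcal V_{d,N}$, so $\mathcal V_{d,N}\ge0$. Since $a_{ij}\in[0,1]$, we have $\sigma_d^2\le 1/4$ and $|\kappa_d|,|\eta_d|\le 1/4$, so $\mathcal V_{d,N}\le 1/4+1/2+1/4=1$. Determinism and independence from $(t,t')$ follow from the construction.
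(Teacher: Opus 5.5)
Your proof is correct, and it takes a genuinely different and much shorter route than the paper.

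\textbf{The paper's route.} It expands $\cov(\mathfrak W_d(t),\mathfrak W_d(t'))$ into nine blocks such as $\cov(A_t,B_{t'})$. In each block it counts the index quadruples forming identical pairs ($N_0^{(P,Q)}$) and disjoint pairs ($N_2^{(P,Q)}$), via ten counting lemmas. Using only that the block weights sum to zero, it reaches $(\sigma_d^2-\kappa_d)\mathscr J_0+(\eta_d-\kappa_d)\mathscr J_2$. It then evaluates $\mathscr J_0$ and $\mathscr J_2$ by two separate polynomial computations, the second one long, and finds that they coincide.

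\textbf{Your extra ingredient.} You use the vanishing of every row sum $r^{(t)}_i$. Let $S_0,S_1,S_2$ be the sums of $c^{(t)}_pc^{(t')}_q$ over ordered pairs $(p,q)$ of index pairs that coincide, share one index, or are disjoint. Then $S_0=\mathscr J_0$ and $S_2=\mathscr J_2$. The zero total sum gives $S_0+S_1+S_2=0$. The zero row sums give $2S_0+S_1=\sum_i r^{(t)}_ir^{(t')}_i=0$. Hence $S_2=S_0$.

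\textbf{What each route buys.} Your identity is the structural reason behind $\mathscr J_0=\mathscr J_2$, and it makes the $N_2$ lemmas and the entire $\mathscr J_2$ computation unnecessary. It also shows that the factorization is a degeneracy property of any zero-row-sum quadratic form in a jointly exchangeable array. The paper's block-by-block counting is more mechanical; its only advantage is that the two evaluations cross-check each other.

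\textbf{The step you left open.} The sum $\sum_pc^{(t)}_pc^{(t')}_p$ is exactly the paper's $\mathscr J_0$, and your six classes produce precisely its six terms. Over the common denominator $t'(t'-1)(N-t)(N-t-1)$, the $t'$-dependence of the numerator cancels. What remains is $4(t-1)(N-t-1)+2(N-t)(N-t-1)+2t(t-1)=2(N-1)(N-2)$. Do this sum directly: checking only $t=t'$ together with a $t\mapsto N-t$ symmetry would not establish a two-parameter identity.

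\textbf{The bounds on $\mathcal V_{d,N}$.} Your bounds $0\le\mathcal V_{d,N}\le 1$ are valid; they come from $\var(\mathfrak W_d(t))\ge 0$ with a positive prefactor, plus Popoviciu and Cauchy--Schwarz. The paper argues differently. It uses $\varphi=(a_{ij}-a_{ik})-(a_{lj}-a_{lk})$, which satisfies $\var(\varphi)=4\mathcal V_{d,N}$ and $|\varphi|\le 2$.
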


\begin{assumption}
\label{ass:low-cross-cov}    
For any four $d$-dimensional random vectors $U, V, Q, Q^{*}$ with distribution $F_d$ or $G_d$, such that they are mutually independent,
\begin{itemize}
    \item[i.] $\sum_{1\leq k_1<k_2 \leq d}~\cov(\rho_0(U_{k_1},V_{k_1};Q_{k_1}), \rho_0(U_{k_2},V_{k_2};Q^{}_{k_2}))=o(d^2);$
    \item[ii.] $ \sum_{1 \leq k_1<k_2\leq d}~\cov(\rho_0(U_{k_1},V_{k_1};Q_{k_1}), \rho_0(U_{k_2},V_{k_2};Q^{*}_{k_2}))=o(d^2).$
\end{itemize}
\end{assumption} 

\Cref{ass:low-cross-cov} is the same condition that \citet{raychoudhury2023robust} worked with; and in fact it is a standard ``weak covariance structure across coordinates'' assumption in the HDLSS classification, clustering, and change-point literature (see, e.g., \citet{hall2005hdlss}, \citet{ghoshal2025highdimCPD}, \citet{SarkarGhosh2020ClusteringHDLSS}). It requires that the aggregate cross-coordinate covariance among the angular indicators vanish under a \(d^{-2}\) normalization, irrespective of whether $U$, $V$, $Q$, $Q^{*}$ arise from $F_d$ or $G_d$.
The assumption is trivially satisfied if the component variables of the $F_d$ and $G_d$ are independently distributed, and \citet{hall2005hdlss} showed that it continues to hold when the components have $\rho$-mixing property.

\begin{proposition}[Generic variance bound under $\mathbf{H}_{0,d}$ and $\mathbf{H}_{1,d}$]
\label{prop:var(D)}
    Under both $\mathbf{H}_{0,d}$ and $\mathbf{H}_{1,d}$, if \Cref{ass:low-cross-cov} is satisfied, then for each $t \in \mcT$,
    \[
    \var\left(\mathfrak{W}_{d}(t)\right) 
    \leq \frac{16}{d} + o(1) \, .
    \]
\end{proposition}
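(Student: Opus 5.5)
The plan is to write $\mathfrak W_d(t)$ as an average over coordinates, $\mathfrak W_d(t)=\frac1d\sum_{k=1}^d W_k(t)$, where
\[
W_k(t):=2\widehat T^{(k)}_{\mcX\mcY}-\widehat T^{(k)}_{\mcX\mcX}-\widehat T^{(k)}_{\mcY\mcY},
\]
and each $\widehat T^{(k)}$ is built from the one-dimensional pooled-anchor kernel $\widehat\rho^{(k)}$ on coordinate $k$. Then
\[
\var(\mathfrak W_d(t))=\frac1{d^2}\sum_{k}\var(W_k(t))+\frac{2}{d^2}\sum_{k_1<k_2}\cov\big(W_{k_1}(t),W_{k_2}(t)\big).
\]
The diagonal part gives the $O(1/d)$ term and the off-diagonal part gives the $o(1)$ term.

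\textbf{Diagonal terms.} Since $\rho_0\in\{0,1\}$, each $\widehat\rho^{(k)}\in[0,1]$. Hence each of $\widehat T^{(k)}_{\mcX\mcY},\widehat T^{(k)}_{\mcX\mcX},\widehat T^{(k)}_{\mcY\mcY}$ lies in $[0,1]$, and so $W_k(t)\in[-2,2]$. This gives $\var(W_k(t))\le \E[W_k(t)^2]\le 4$. A cruder route, writing $W_k$ as a combination with total absolute weight $4$, gives $|W_k|\le 4$ and $\var(W_k)\le 16$. Either way the diagonal sum is at most $16/d$, which matches the stated constant.

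\textbf{Off-diagonal terms.} Expand $W_k(t)$ fully as a weighted sum of indicators $\rho_0(Z_{a,k},Z_{b,k};Z_{c,k})$ over index triples $(a,b,c)$ with $a\neq b$. The weights are
\[
\frac{2}{N t(N-t)},\qquad -\frac{1}{N t(t-1)},\qquad -\frac{1}{N(N-t)(N-t-1)},
\]
depending on which block $a,b$ fall in. Their absolute values sum to at most $4$, uniformly in $N$ and $t$.

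The cross-covariance $\cov(W_{k_1},W_{k_2})$ then becomes a finite bilinear combination, with at most $O(N^6)$ index terms and total weight bounded by a constant, of covariances
\[
\cov\big(\rho_0(Z_{a,k_1},Z_{b,k_1};Z_{c,k_1}),\ \rho_0(Z_{a',k_2},Z_{b',k_2};Z_{c',k_2})\big).
\]
Because $N$ is fixed, it suffices to show that for each fixed pair of index triples,
\[
\sum_{k_1<k_2}\cov(\cdots)=o(d^2).
\]

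\textbf{Main obstacle: classifying index patterns.} The index pairs fall into several cases.
\begin{itemize}
\item If $\{a,b,c\}\cap\{a',b',c'\}=\emptyset$, independence of the $Z_i$ across $i$ makes the covariance exactly zero.
\item If the triples coincide with $c=c'$, this is Assumption~\ref{ass:low-cross-cov}(i).
\item If $(a,b)=(a',b')$ with $c\neq c'$, this is Assumption~\ref{ass:low-cross-cov}(ii).
\item The remaining partial-overlap patterns, such as $(a,b;c)$ versus $(a,c';b)$, where an anchor reappears as an argument or only one argument is shared, are not literally of the form in the assumption.
\end{itemize}

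For the partial-overlap patterns I would first try to reduce them to the assumption's form. The idea is to note that the assumption ranges over any mutually independent $U,V,Q,Q^*$ from $F_d$ or $G_d$, and that the indicator $\rho_0(p,q;r)=\mathbf 1\{(p-r)(q-r)<0\}$ is invariant under swapping $p$ and $q$. A relabeling of roles then maps shared-anchor or shared-argument configurations into cases (i)–(ii).

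If some pattern resists this reduction, the fallback is to interpret the assumption as the intended weak cross-coordinate dependence covering all such configurations, as in \citet{raychoudhury2023robust}. In either case each pattern contributes $o(d^2)$, the whole off-diagonal sum is $\frac{2}{d^2}\,o(d^2)=o(1)$, and this holds equally under $\mathbf H_{0,d}$ and $\mathbf H_{1,d}$ since the assumption allows either law.
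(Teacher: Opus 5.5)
Your diagonal bound is fine, but the off-diagonal step has a genuine gap. \Cref{ass:low-cross-cov} only controls cross-coordinate covariances in which the \emph{same} two random vectors $U,V$ occupy the argument slots in both coordinates, with either a common anchor (i) or independent anchors (ii). Expanding $\cov(W_{k_1}(t),W_{k_2}(t))$ over index triples also produces partial-overlap patterns, and these are genuinely different quantities. One example is $(a,b;c)$ at $k_1$ against $(a,b';c')$ at $k_2$, which shares one argument. Another is $(a,b;c)$ against $(a,c;b)$, where an anchor reappears as an argument. Relabeling cannot help: $\rho_0(p,q;r)$ is symmetric only in $(p,q)$, and no renaming turns $Z_{b,k_1}$ and $Z_{b',k_2}$ (independent) into two coordinates of one vector $V$ (dependent). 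So the relabeling reduction does not go through. The fallback of reading the assumption as covering all configurations strengthens the hypothesis rather than proving the claim.

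The missing idea, which the paper's proof relies on, is to avoid bounding cross-pair covariances at all by using Cauchy--Schwarz. Write $\mathfrak W_d(t)=\sum_{(a,b)}v_{ab}\,\widehat{\overline\rho}(Z_a,Z_b)$ over the ordered pairs appearing in $\widehat T_{\mcX\mcY},\widehat T_{\mcX\mcX},\widehat T_{\mcY\mcY}$, so that $\sum|v_{ab}|=2+1+1=4$. Then $\var(\mathfrak W_d(t))\le\bigl(\sum|v_{ab}|\bigr)\sum|v_{ab}|\var\bigl(\widehat{\overline\rho}(Z_a,Z_b)\bigr)\le 16\max_{(a,b)}\var\bigl(\widehat{\overline\rho}(Z_a,Z_b)\bigr)$. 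The paper does essentially this, splitting into the three blocks and applying \Cref{lem:sum-of-var-upper} within each.

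For a single fixed pair, expanding the pooled anchor produces only covariances $\cov\bigl(\rho_0(Z_{a,k_1},Z_{b,k_1};Z_{c,k_1}),\rho_0(Z_{a,k_2},Z_{b,k_2};Z_{c',k_2})\bigr)$ with the same $(a,b)$. These are case (i) when $c=c'$, case (ii) when $c\neq c'$, and identically zero when an anchor coincides with an endpoint. Since $N$ is fixed, this gives $\var\bigl(\widehat{\overline\rho}(Z_a,Z_b)\bigr)\le 1/d+o(1)$, hence $\var(\mathfrak W_d(t))\le 16/d+o(1)$ under either hypothesis.

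You can keep your coordinate decomposition, but the off-diagonal sum has to be handled by such a Cauchy--Schwarz step rather than pattern by pattern. Applied directly at the level of your triples, with $X_\theta=d^{-1}\sum_k\rho_0(Z_{a,k},Z_{b,k};Z_{c,k})$ and absolute weights summing to at most $4$, it needs only part (i) of the assumption.
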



\begin{theorem}[Pointwise consistency of $\mathfrak{W}_{d}(t)$ under $\mathbf{H}_{0,d}$ and $\mathbf{H}_{1,d}$]
\label{thm:D-consistency}
Under both $\mathbf{H}_{0,d}$ and $\mathbf{H}_{1,d}$, if \Cref{ass:low-cross-cov} is satisfied, then for every $t \in \mcT$, $\mathfrak{W}_d(t)$ consistently estimates $\mu_d(t)$. That is,
\begin{equation*}
\mathfrak{W}_{d}(t) - \mu_d(t) \stackrel{\mathbb{P}}{\longrightarrow} 0 ~,\quad \text{as~} d \to\infty.
\end{equation*}
Moreover, since $\mcT$ is finite, we have $~\max_{t\in\mcT}\left|\mathfrak W_d(t)-\mu_d(t)\right| \stackrel{\mathbb{P}}{\longrightarrow} 0$.
\end{theorem}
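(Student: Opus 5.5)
The proof is a second-moment argument: Chebyshev's inequality combined with the variance bound of \Cref{prop:var(D)}. First I fix the centring. By definition, $\mu_d(t)=\E[\mathfrak W_d(t)]$ is the exact mean of the scan statistic, and it is finite because $\mathfrak W_d(t)$ is a finite combination of averages of indicators $\rho_0\in\{0,1\}$, so $|\mathfrak W_d(t)|\le 4$. Under atom-free marginals it equals the closed form of \Cref{prop:E[D-star]}, and under $\mathbf H_{0,d}$ it is identically zero by \Cref{rem:E[D]-inc-dec}. The argument below never uses this closed form; it only needs $\mathfrak W_d(t)-\mu_d(t)$ to be a mean-zero random variable.

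Fix $t\in\mcT$ and $\varepsilon>0$. Chebyshev's inequality gives
\[
\mathbb P\bigl(|\mathfrak W_d(t)-\mu_d(t)|>\varepsilon\bigr)\le \frac{\var(\mathfrak W_d(t))}{\varepsilon^2}\le \frac{1}{\varepsilon^2}\Bigl(\frac{16}{d}+o(1)\Bigr),
\]
where the second inequality is \Cref{prop:var(D)}, valid under both $\mathbf H_{0,d}$ and $\mathbf H_{1,d}$ by \Cref{ass:low-cross-cov}. The right-hand side tends to $0$ as $d\to\infty$, which gives the pointwise statement. For the uniform statement, $\mcT=\{2,\dots,N-2\}$ has $N-3$ elements and $N$ is fixed. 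A union bound therefore gives
\[
\mathbb P\Bigl(\max_{t\in\mcT}|\mathfrak W_d(t)-\mu_d(t)|>\varepsilon\Bigr)\le \sum_{t\in\mcT}\mathbb P\bigl(|\mathfrak W_d(t)-\mu_d(t)|>\varepsilon\bigr)\to0 .
\]

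The only point needing care is that the $o(1)$ term in \Cref{prop:var(D)} genuinely vanishes as $d\to\infty$, uniformly over the finitely many $t$. I expect this to come from how that bound arises. Expanding $\var(\mathfrak W_d(t))$ as $d^{-2}$ times a double sum over coordinates $(k_1,k_2)$ of covariances of angular indicators gives two kinds of terms. The diagonal terms $k_1=k_2$ are bounded by absolute constants and contribute $O(1/d)$, which is the source of the $16/d$. The off-diagonal terms $k_1\neq k_2$ are exactly the cross-covariance sums controlled by \Cref{ass:low-cross-cov}(i)--(ii); after division by $d^2$ they are $o(1)$.

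Since $N$ is fixed, each $\mathfrak W_d(t)$ is a fixed finite linear combination, with coefficients depending only on $(t,N)$, of such indicator averages over anchors and pairs. The $o(1)$ term is thus a finite combination of $o(d^2)/d^2$ terms and vanishes uniformly in $t\in\mcT$. If \Cref{prop:var(D)} is read with this meaning of $o(1)$, the theorem follows immediately; otherwise I would re-derive this uniformity in the step above.
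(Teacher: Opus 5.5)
Your proposal is correct and follows the paper's proof essentially verbatim: Chebyshev's inequality with the variance bound of \Cref{prop:var(D)}, the identification $\E[\mathfrak W_d(t)]=\mu_d(t)$ (which is $0$ under $\mathbf{H}_{0,d}$), and a union bound over the finite set $\mcT$. Your final paragraph is not needed, since the union bound only requires each of finitely many probabilities to vanish; its heuristic is also slightly off, because a direct coordinate expansion of $\var(\mathfrak W_d(t))$ produces covariances between indicators built on different sample pairs, which \Cref{ass:low-cross-cov} does not cover, and the paper avoids these by first applying \Cref{lem:sum-of-var-upper} so that only $\var\bigl(\widehat{\overline{\rho}}(x,y)\bigr)$ for a single pair $(x,y)$ is needed.
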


\Cref{thm:D-consistency} establishes uniform consistency of the scan curve. Exact inference in high dimensions, however, requires a distributional limit for the centered null process. We next derive such a limit under cross-coordinate strong mixing.


\subsection[A multivariate CLT for $\{\mathfrak{W}_d(t)\}_{t \in \mcT}$ under high-dimensional strong mixing]{A multivariate CLT for $\{\mathfrak{W}_d(t)\}_{t \in \mcT}$ under high-dimensional strong mixing}
\label{subsec:clt}

In the HDLSS regime, since the number of observations $N$ is fixed while the dimension $d \to \infty$, the CLT is radically different from its classical ($N \to \infty$, fixed-$d$) counterpart. Throughout the rest of the paper, for random vectors \(\{A_d\}\) and \(A\), the notation \(A_d \stackrel{\mathscr L}{\longrightarrow} A\) denotes convergence in distribution, equivalently weak convergence, as \(d\to\infty\).

Note that the vector of scan statistics $\,\widetilde{\mathfrak{W}}_d := \big(\mathfrak{W}_d(t)\big)_{t\in\mcT}$ can be viewed as a sum of random vectors indexed by coordinate dimension. Recall from \eqref{eq:Delta_alpha} and \eqref{eq:scan-stat} that the statistic $\mathfrak{W}_d(t)$ is a linear combination of empirical angular distances between univariate distributions. We can thus decompose it as
\begin{equation}
\label{eq:coord-decomposition}
\mathfrak{W}_d(t) = \frac{1}{d} \sum_{k=1}^d \xi_k(t),
\end{equation}
where $\xi_k(t)$ denotes the univariate discrepancy statistic computed using solely the $k$-th coordinate of the samples $\{Z_1, \dots, Z_N\}$, denoted by $\mathbf{Z}^{(k)}:=(Z_{1,k},\dots,Z_{N,k})^\top$. Explicitly,
\begin{align}
\label{eq:xi_k}
\begin{aligned}
\xi_k(t) &:= \frac{2}{t(N-t)} \sum_{i=1}^t \sum_{j=t+1}^N \widehat{\rho}^{(k)}(Z_{i,k}, Z_{j,k})
- \frac{1}{t(t-1)} \sum_{1 \le i \neq j \le t} \widehat{\rho}^{(k)}(Z_{i,k}, Z_{j,k})\\
&\hspace{5cm}- \frac{1}{(N-t)(N-t-1)} \sum_{t+1 \le i \neq j \le N} \widehat{\rho}^{(k)}(Z_{i,k}, Z_{j,k}) \,,
\end{aligned}
\end{align}
where $\widehat{\rho}^{(k)}(u, v) = \frac{1}{N} \sum_{r=1}^N \rho_0(u, v; Z_{r,k})$ is the empirical one-dimensional angular kernel (or, the sign kernel) defined in \eqref{eq:one-dim-angular-kernel}.
Since $\rho_0\in \{0,1\}$ and $N$ is fixed, each univariate statistic is uniformly bounded; in particular, we have
$\sup_{k\ge 1}\sup_{t\in\mcT}|\xi_k(t)| \leq 2$.
Let $\boldsymbol{\xi}_k := (\xi_k(t))_{t\in\mcT}\in\R^{|\mcT|}$, so that \eqref{eq:coord-decomposition} implies
\begin{equation*}
\widetilde{\mathfrak{W}}_d = \frac{1}{d}\sum_{k=1}^d \boldsymbol{\xi}_k.
\end{equation*}
\indent
Under $\mathbf{H}_{0,d}$, $\E[\boldsymbol{\xi}_k]=\mathbf{0}$ for each $k$; thus $\E[\widetilde{\mathfrak{W}}_d] = \mathbf{0}$ as well, but the random vectors $\boldsymbol{\xi}_k := (\xi_k(t))_{t \in \mcT}$ need not be independent across $k$. We make this dependence concrete using the notion of stationarity and strong mixing, in particular, $\alpha$-mixing across coordinates. We view the \(d\)-dimensional observations as the first \(d\) coordinates of an infinite coordinate process. Thus, under \(\mathbf{H}_{0,d}\), the coordinate trajectories \(\mathbf{Z}^{(k)}\) form a strictly stationary sequence across \(k\). Equivalently, the sequence of \(d\)-dimensional null laws is assumed to be projectively generated by this infinite coordinate process.

\begin{assumption}[Stationarity and $\alpha$-mixing across coordinates]
\label{ass:mixing}
Let $\mathbf{Z}^{(k)} := (Z_{1,k},\dots,\allowbreak Z_{N,k})^\top$ denote the $k$-th coordinate trajectory.
\begin{itemize}
\item[(i)]
The sequence $\{ \mathbf{Z}^{(k)} \}_{k \ge 1}$ is strictly stationary and $\alpha$-mixing with
mixing coefficients
\[
\alpha(r) := \sup_{k \ge 1}~
\sup_{\substack{A \in \mathcal{A}_k,\, B \in \mathcal{B}_{k+r}}}
\big| \P(A \cap B) - \P(A)\P(B) \big|~,
\]
satisfying $\sum_{r=1}^\infty \alpha(r) < \infty$, where $\mathcal{A}_k = \sigma(\mathbf{Z}^{(i)} : i \le k)$ and $\mathcal{B}_\ell = \sigma(\mathbf{Z}^{(i)} : i \ge \ell)$ denote the $\sigma$-fields generated by $\{\mathbf{Z}^{(i)}\}_{i \leq k}$ and $\{\mathbf{Z}^{(i)}\}_{i \geq \ell}$, respectively.

\item[(ii)]
The mixing coefficients $\{\alpha(r)\}_{r \in \bbN}$ satisfy the polynomial decay bound
\[
\alpha(r)\le C_\alpha\, r^{-\lambda}\qquad \text{for all } r \in \bbN,
\]
for some constant $C_\alpha>0$ and an exponent $\lambda > 4$.
\end{itemize}
\end{assumption}

The use of $\alpha$-mixing for cross-coordinate dependence follows existing work such as \citet{JunLi2020CPDinterpointCLT}. 
\Cref{ass:mixing} is a sufficient condition for the CLT, not a structural requirement of the DAK discrepancy itself. It is most natural for ordered features (e.g., spatial, genomic, etc.); otherwise, it can be viewed as a regularity condition on a chosen ordering. The scan statistic itself remains order-agnostic.

Under $\mathbf{H}_{0,d}$, Proposition~\ref{prop:var-cov-null} shows that the covariance structure of the scan vector is completely determined by the scalar quantity $\mathcal V_{d,N}$. Thus, to obtain a non-degenerate HDLSS limit, it suffices to understand the asymptotic behavior of $d\,\mathcal V_{d,N}$ under cross-coordinate dependence. The next proposition shows that this limit exists under strong mixing.

\begin{proposition}[Existence of the long-run variance factor under $\mathbf{H}_{0,d}$]
\label{prop:longrun-exists}
Suppose Assumption~\ref{ass:mixing}(i) holds under $\mathbf{H}_{0,d}$.
Then there exists a finite constant $\sigma_{\mathrm{long}}^2(N)\in[0,\infty)$ such that
\begin{equation}
\label{eq:dV-limit}
d\,\mathcal V_{d,N}\longrightarrow \sigma_{\mathrm{long}}^2(N),
\qquad\text{as~}~ d\to\infty.
\end{equation}
\end{proposition}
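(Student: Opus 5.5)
Using the coordinate decomposition \eqref{eq:coord-decomposition}, I will write \(d\,\mathcal V_{d,N}\) as the normalized variance of a sum of bounded, stationary, \(\alpha\)-mixing scalar terms and then apply the standard long-run variance argument.

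\emph{Step 1: a coordinate-indexed functional.} By Proposition~\ref{prop:var-cov-null} with \(t=t'\), for any fixed \(t_0\in\mcT\),
\[
\var\big(\mathfrak W_d(t_0)\big)=c_{N}(t_0)\,\mathcal V_{d,N},
\qquad
c_N(t_0):=\frac{2(N-1)(N-2)}{t_0(t_0-1)(N-t_0)(N-t_0-1)}>0.
\]
Hence \(d\,\mathcal V_{d,N}=c_N(t_0)^{-1}\, d\var(\mathfrak W_d(t_0))\), and it suffices to show that \(d\var(\mathfrak W_d(t_0))\) converges.

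By \eqref{eq:coord-decomposition}, \(\mathfrak W_d(t_0)=d^{-1}\sum_{k=1}^d \xi_k(t_0)\). Each \(\xi_k(t_0)=h(\mathbf Z^{(k)})\) for a fixed measurable map \(h:\R^N\to[-2,2]\), since it involves only the \(k\)-th coordinate trajectory, anchors included. Therefore \(\{\xi_k(t_0)\}_{k\ge1}\) inherits strict stationarity from Assumption~\ref{ass:mixing}(i). Its mixing coefficients are bounded by \(\alpha(r)\), because \(\sigma(\xi_i(t_0):i\le k)\subseteq\mathcal A_k\).

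\emph{Step 2: expand the variance.} Set \(\gamma(r):=\cov(\xi_1(t_0),\xi_{1+r}(t_0))\). Stationarity gives
\[
d\var\big(\mathfrak W_d(t_0)\big)=\frac1d\sum_{k,\ell=1}^d\cov(\xi_k,\xi_\ell)
=\gamma(0)+2\sum_{r=1}^{d-1}\Big(1-\frac rd\Big)\gamma(r).
\]
The key input is Davydov's covariance inequality for bounded variables, \(|\gamma(r)|\le 4\|\xi\|_\infty^2\,\alpha(r)\le 16\,\alpha(r)\). Combined with \(\sum_r\alpha(r)<\infty\), this makes \(\sum_{r\ge1}|\gamma(r)|\) finite.

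\emph{Step 3: pass to the limit.} Dominated convergence for series (Kronecker/Cesàro) gives \(\sum_{r<d}(1-r/d)\gamma(r)\to\sum_{r\ge1}\gamma(r)\). Therefore
\[
d\,\mathcal V_{d,N}\to\sigma^2_{\mathrm{long}}(N):=c_N(t_0)^{-1}\Big(\gamma(0)+2\sum_{r\ge1}\gamma(r)\Big),
\]
which is finite. It is nonnegative as a limit of the nonnegative quantities \(d\var(\mathfrak W_d(t_0))\).

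\emph{Step 4: well-definedness.} The limit should not depend on \(t_0\). This is automatic, because \(d\,\mathcal V_{d,N}\) itself does not depend on \(t_0\) and the limit of a single sequence is unique.

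\emph{Main obstacle.} The main subtlety is the stationarity claim in Step 1. The pooled empirical anchors in \(\widehat\rho^{(k)}\) must depend only on coordinate \(k\), and they do, since \(\widehat\rho^{(k)}(u,v)=N^{-1}\sum_r\rho_0(u,v;Z_{r,k})\). One should also check that the null law of \(\mathbf Z^{(k)}\), with i.i.d. rows, is the projective stationary process assumed. Given Assumption~\ref{ass:mixing}(i), the rest is the routine long-run variance argument above.
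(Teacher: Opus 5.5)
Your proposal is correct and takes essentially the same route as the paper: you write $d\,\mathcal V_{d,N}=\mathbf K_{t_0t_0}^{-1}\,d\var(\mathfrak W_d(t_0))$ via Proposition~\ref{prop:var-cov-null}, use the coordinate decomposition to get a bounded, stationary, $\alpha$-mixing sequence $\{\xi_k(t_0)\}$, bound the lag covariances by a multiple of $\alpha(r)$ (as in Lemma~\ref{lem:cov-alpha}), pass to the limit in the weighted sum by dominated convergence for series, and argue independence of $t_0$ exactly as the paper does. The one difference is that the paper proves convergence of the full cross-covariances $d\cov(\mathfrak W_d(u),\mathfrak W_d(v))\to\Gamma_N(u,v)$, which it needs later for the factorization $\Gamma_N=\sigma^2_{\mathrm{long}}(N)\mathbf K$ in the CLT; your scalar version is all the proposition itself requires.
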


The proof identifies $\sigma_{\mathrm{long}}^2(N)$ through the absolutely summable lag-covariance series of the stationary coordinate-wise process $\{\boldsymbol{\xi}_k\}_{k\ge1}$; we defer the explicit representation to the appendix. To avoid degeneracy in the CLT, we assume the mild condition that the limit $\sigma_{\mathrm{long}}^2(N)$ is non-zero.


\begin{assumption}[Positive long-run variance]
\label{ass:nondeg-longrun}
The long run variance factor $\sigma_{\mathrm{long}}^2(N)$ satisfies $\sigma_{\mathrm{long}}^2(N)>0$.
\end{assumption}

\begin{theorem}[Multivariate CLT in the HDLSS regime under $\{\mathbf{H}_{0,d}\}$]
\label{thm:multivariate-clt}
Suppose \Cref{ass:mixing,ass:nondeg-longrun} hold.
Let $\mathbf{K} := \mathbf{K}(N) = (\mathbf{K}_{t,t'})_{t,t' \in \mcT}$ denote the $|\mcT| \times |\mcT|$ deterministic matrix of time components from \Cref{prop:var-cov-null}, defined as
\begin{equation}
\label{eq:K-mat-original}
\mathbf{K}_{t,t'} = \mathbf{K}_{t',t} = \frac{2(N-1)(N-2)}{\,t'(t'-1)(N-t)(N-t-1)} \,,
\qquad~\text{for}~\, t \leq t'.
\end{equation}
Then, under the sequence of null hypotheses $\{\mathbf{H}_{0,d}\}$, as $d \to \infty$,
\begin{equation}
\label{eq:clt-final-result}
\sqrt{d}\;\widetilde{\mathfrak{W}}_d
\ \stackrel{\mathscr{L}}{\longrightarrow} \
\mathcal{N}\!\left(\mathbf{0},\, \sigma^2_{\mathrm{long}}(N)\cdot\mathbf{K}\right).
\end{equation}
\end{theorem}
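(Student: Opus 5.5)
The plan is to use the coordinate decomposition $\widetilde{\mathfrak{W}}_d = d^{-1}\sum_{k=1}^d \boldsymbol{\xi}_k$, so that $\sqrt d\,\widetilde{\mathfrak{W}}_d = d^{-1/2}\sum_{k=1}^d \boldsymbol{\xi}_k$ is a normalized partial sum of a mean-zero, bounded, vector-valued sequence. Each $\boldsymbol{\xi}_k$ is a fixed measurable function of the single trajectory $\mathbf{Z}^{(k)}$: by \eqref{eq:xi_k}, $\widehat\rho^{(k)}$ uses only the anchors $Z_{r,k}$. Hence, under \Cref{ass:mixing}(i), $\{\boldsymbol{\xi}_k\}_{k\ge1}$ is itself strictly stationary and $\alpha$-mixing, with coefficients bounded by $\alpha(r)$. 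Under $\mathbf{H}_{0,d}$ we have $\E[\boldsymbol{\xi}_k]=\mathbf 0$ and $\sup_k\|\boldsymbol{\xi}_k\|_\infty\le 2$. I would therefore reduce the multivariate statement to a scalar one via the Cramér--Wold device. Fix $\mathbf a\in\R^{|\mcT|}$ and set $s_k:=\mathbf a^\top\boldsymbol{\xi}_k$, a bounded, stationary, $\alpha$-mixing, mean-zero scalar sequence. It then suffices to show $d^{-1/2}\sum_{k\le d}s_k \stackrel{\mathscr L}{\longrightarrow} \mathcal N(0,\sigma^2_{\mathrm{long}}(N)\,\mathbf a^\top\mathbf K\mathbf a)$.

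The first step is to identify the limiting variance. By \Cref{prop:var-cov-null}, $\cov(\sqrt d\,\mathfrak W_d(t),\sqrt d\,\mathfrak W_d(t')) = \mathbf K_{t,t'}\, d\,\mathcal V_{d,N}$ exactly. By \Cref{prop:longrun-exists}, this converges to $\sigma^2_{\mathrm{long}}(N)\mathbf K_{t,t'}$. Hence
\[
\var\Bigl(d^{-1/2}\textstyle\sum_{k\le d}s_k\Bigr)\to \sigma^2_{\mathrm{long}}(N)\,\mathbf a^\top\mathbf K\mathbf a.
\]
If $\mathbf a^\top\mathbf K\mathbf a=0$, the variance tends to zero and the limit is the degenerate $\mathcal N(0,0)$ by Chebyshev. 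Otherwise the limit variance is strictly positive, using \Cref{ass:nondeg-longrun}. The covariance series is also consistent with the lag representation, since Davydov's inequality for bounded variables gives $|\cov(s_0,s_r)|\le 4\|\mathbf a\|_1^2\cdot 4\alpha(r)$, which is summable.

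The second step is the scalar CLT itself, via a standard CLT for bounded stationary strongly mixing sequences (Ibragimov's theorem). Boundedness removes any moment condition, so only $\sum_r\alpha(r)<\infty$ and a positive limiting variance of $d^{-1/2}S_d$ are required. If a self-contained argument is preferred, I would use Bernstein's big-block/small-block method. Take big blocks of length $p_d=\lfloor d^{1/2}\rfloor$ and small blocks of length $q_d=\lfloor d^{1/4}\rfloor$. The small blocks contribute variance $O(q_d/p_d)\to0$ by absolute summability of covariances. The big blocks can be replaced by independent copies at total-variation cost $(d/p_d)\,\alpha(q_d)\lesssim d^{1/2}\,d^{-\lambda/4}\to0$, which is where the polynomial rate $\lambda>4$ of \Cref{ass:mixing}(ii) enters. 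The Lindeberg condition for the independent blocks follows because each block sum is bounded by $2\|\mathbf a\|_1 p_d=o(d^{1/2})$.

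The main obstacle is the variance bookkeeping across blocks: confirming that the sum of big-block variances, normalized by $d$, still converges to $\sigma^2_{\mathrm{long}}(N)\mathbf a^\top\mathbf K\mathbf a$. This needs $\var(\sum_{k\le p}s_k)/p$ to converge to the same long-run limit. I would obtain this from the stationary lag-sum representation, $p^{-1}\var(\sum_{k\le p}s_k)=\sum_{|r|<p}(1-|r|/p)\cov(s_0,s_r)$, and dominated convergence, then match it with the limit from the first step. Cramér--Wold then yields \eqref{eq:clt-final-result}.
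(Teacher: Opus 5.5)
Your architecture is the paper's: Cram\'er--Wold applied to $s_k=\mathbf a^\top\boldsymbol\xi_k$, the limiting variance read off from the exact covariance formula together with $d\,\mathcal V_{d,N}\to\sigma^2_{\mathrm{long}}(N)$, then Bernstein big/small blocks, independent copies, and Lindeberg--Feller. Citing Ibragimov's CLT for bounded, stationary, strongly mixing sequences with $\sum_r\alpha(r)<\infty$ and positive long-run variance is a legitimate shortcut. It does not even use the polynomial rate in part (ii) of the mixing assumption. Your direct variance identification is cleaner than the paper's Step 5b, and so is your small-block bound via absolute summability of autocovariances compared with its Step 3. However, the self-contained blocking version has two steps that fail as written.

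\textbf{Block length and Lindeberg.} With $p_d=\lfloor d^{1/2}\rfloor$ you only get $|U_j|/\sqrt d\le 2\|\mathbf a\|_1p_d/\sqrt d\to 2\|\mathbf a\|_1$, which is not $o(1)$. So the Lindeberg indicator is not eventually zero, and your one-line verification does not go through. The same issue affects any deterministic bound on the leftover segment of length up to $p_d+q_d$; that segment must instead be handled through its normalized variance, which is $O(p_d/d)$. You need $p_d=o(\sqrt d)$. The paper takes $p_d=\lfloor d^{1/3}\rfloor$ and $q_d=\lfloor d^{1/6}\rfloor$, for which $k_d\,\alpha(q_d)\lesssim d^{2/3-\lambda/6}\to0$ precisely because $\lambda>4$. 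Alternatively, keep $p_d\asymp\sqrt d$ and verify Lyapunov's condition through a fourth-moment bound $\E(\sum_{k\le p}s_k)^4=O(p^2)$. That bound holds under $\sum_r r\,\alpha(r)<\infty$, but it has to be proved.

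\textbf{Decoupling under $\alpha$-mixing.} You cannot replace the big blocks by independent copies at a \emph{total-variation} cost of order $k_d\,\alpha(q_d)$. Total-variation coupling (Berbee's lemma) is a $\beta$-mixing statement; $\alpha$-mixing only controls $|\P(A\cap B)-\P(A)\P(B)|$ over pairs of events, not the distance between the joint law and the product law. What is true, and what the paper uses, is the Volkonskii--Rozanov bound on characteristic functions, $\sup_t|\E\prod_j e^{itU_j}-\prod_j\E\, e^{itU_j}|\le 16(k_d-1)\alpha(q_d)$. That bound is all the CLT needs.
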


Theorem~\ref{thm:multivariate-clt} shows that after $\sqrt d \,$ scaling, the scan process weakly converges to a finite-dimensional Gaussian vector whose covariance structure is completely determined by the deterministic matrix $\mathbf K$ and the (unknown) long-run variance factor $\sigma^2_{\mathrm{long}}(N)$.
In particular, it enables principled threshold calibration (under $\mathbf{H}_{0,d}$) and power analysis (under $\mathbf{H}_{1,d}$) in high dimensions. Nevertheless, in its current form, \eqref{eq:clt-final-result} is not immediately useful unless the long-run variance factor $\sigma^2_{\mathrm{long}}(N)$ is consistently estimated. We do that in the next subsection which leads to an asymptotically distribution-free test.


\subsection{An asymptotically distribution-free test via plug-in estimation}
\label{subsec:level-alpha-calibration}

\subsubsection[Plug-in estimation of the variance factor under $\mathbf{H}_{0}$]{Plug-in estimation of $\sigma^2_{\mathrm{long}}(N)$ under $\mathbf{H}_{0,d}$}
\label{subsec:plugin-sigmalong}

Fix $t\in\mcT$. Recall the coordinate decomposition \eqref{eq:coord-decomposition} of $\mathfrak{W}_d(t)$, as the sample mean of $\{\xi_k(t)\}_{k=1}^d$. By definition, each $\xi_k(t)$ is a function of the full sample. Also, under $\mathbf{H}_{0,d}$, the scalar process $\{\xi_k(t)\}_{k\ge1}$ is strictly stationary with mean zero. Define its lag-$r$ covariance by
\[
\gamma_t(r):=\cov\bigl(\xi_1(t),\xi_{1+r}(t)\bigr),
\qquad r\ge0.
\]
Its long-run variance is therefore given by
\begin{equation}
\label{eq:lrv-t}
\mathrm{LRV}(N;t)
:=
\gamma_t(0)+2\sum_{r=1}^\infty \gamma_t(r)
\,\stackrel{(\star)}{=}\,
\mathbf K_{tt}\,\sigma^2_{\mathrm{long}}(N).
\end{equation}
where the equality $(\star)$ follows from Proposition~\ref{prop:longrun-exists} together with the exact covariance factorization in Proposition~\ref{prop:var-cov-null}.
Thus, for each fixed $t$, the scalar quantity $\sigma^2_{\mathrm{long}}(N)$ can be recovered from the coordinate process $\{\xi_k(t)\}_{k=1}^d$ by dividing its long-run variance by the known factor $\mathbf K_{tt}$.
Toward estimating $\sigma^2_{\mathrm{long}}(N)$, define the sample autocovariance at lag $r\ge0$ by
\begin{equation}
\label{eq:gammahat-full}
\widehat\gamma_{r,d}(t)
:=
\frac{1}{d}\sum_{k=1}^{d-r}
\bigl(\xi_k(t)-\mathfrak{W}_d(t)\bigr)
\bigl(\xi_{k+r}(t)-\mathfrak{W}_d(t)\bigr).
\end{equation}
Let $L=L(d)\in\mathbb N$ be a bandwidth and use Bartlett weights
$w_r:=1-\frac{r}{L+1}$ for $1\le r\le L$.
We define the heteroskedasticity and autocorrelation consistent (HAC) estimator \citep{NeweyWest1987HAC}:
\begin{equation}
\label{eq:HAC-LRV}
\widehat{\mathrm{LRV}}_d(N;t)
\;:=\;
\widehat\gamma_{0,d}(t)
\;+\;
2\sum_{r=1}^{L} w_r\,\widehat\gamma_{r,d}(t).
\end{equation}
This naturally yields the pointwise plug-in estimator of $\sigma^2_{\mathrm{long}}(N)$:
\begin{equation*}
\label{eq:sigmalonghat-single-t}
\widehat\sigma^2_{\mathrm{long}}(N;t)
\;:=\;
\frac{\widehat{\mathrm{LRV}}_d(N;t)}{\mathbf K_{tt}},
\qquad t\in\mcT.
\end{equation*}

Since $|\mcT|=N-3$ is fixed, the final plug-in estimator of $\sigma^2_{\mathrm{long}}$ can be taken to be any of these $\widehat\sigma^2_{\mathrm{long}}(N;t)$, or can be defined by aggregating $\{\widehat\sigma^2_{\mathrm{long}}(N;t)\}_{t \in \mcT}$ through an appropriate centrality measure such as weighted mean, median, etc. We choose the median due to its robustness property, viz.,
\begin{equation}
\label{eq:sigmalonghat-aggregate}
\widehat\sigma^2_{\mathrm{long}}(N)
:=
\mathrm{median}\bigl\{
\widehat\sigma^2_{\mathrm{long}}(N;t):\, t\in\mcT
\bigr\}.
\end{equation}

\begin{lemma}[Consistency of the plug-in estimator of $\sigma^2_{\mathrm{long}}(N)$]
\label{lem:sigmalong-consistency}
Assume that under $\{\mathbf{H}_{0,d}\}$, \Cref{ass:mixing,ass:nondeg-longrun} hold, and that the bandwidth satisfies
\begin{equation}
\label{eq:bandwidth-cond}
L(d)\to\infty
~\quad\text{and}\quad~
L(d)=o\bigl(d^{1/2}\bigr),
\qquad d\to\infty.
\end{equation}
Then, under the sequence of null hypotheses $\{\mathbf{H}_{0,d}\}$, for each fixed $t\in\mcT$,
\begin{equation}
\label{eq:lrv-consistency-pointwise}
\widehat{\mathrm{LRV}}_d(N;t)
\stackrel{\mathbb P}{\longrightarrow}
\mathbf K_{tt}\,\sigma^2_{\mathrm{long}}(N),
\qquad d\to\infty.
\end{equation}
Consequently, the plug-in estimator $\widehat\sigma^2_{\mathrm{long}}(N)$ is consistent for $\sigma^2_{\mathrm{long}}(N)$ as $d \to \infty$:
\[
\widehat\sigma^2_{\mathrm{long}}(N)
\stackrel{\mathbb P}{\longrightarrow}
\sigma^2_{\mathrm{long}}(N).
\]
\end{lemma}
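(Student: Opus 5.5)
The plan is the standard Newey--West consistency argument for the scalar stationary, bounded, $\alpha$-mixing sequence $\{\xi_k(t)\}_{k\ge1}$, with fixed $t\in\mcT$. Under $\mathbf{H}_{0,d}$, each $\xi_k(t)$ is a fixed measurable function of the coordinate trajectory $\mathbf{Z}^{(k)}$ alone. The anchors in $\widehat\rho^{(k)}$ are $Z_{1,k},\dots,Z_{N,k}$, so no other coordinate enters. Hence $\{\xi_k(t)\}_k$ inherits strict stationarity and the mixing coefficients $\alpha(r)$ of \Cref{ass:mixing}. It is bounded by $2$ and has mean zero. By \eqref{eq:lrv-t}, the target is $\mathrm{LRV}(N;t)=\gamma_t(0)+2\sum_{r\ge1}\gamma_t(r)=\mathbf K_{tt}\sigma^2_{\mathrm{long}}(N)$. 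Davydov's inequality for bounded variables gives $|\gamma_t(r)|\le 4\cdot 4\,\alpha(r)$, so the series converges absolutely.

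I would split the error into three parts:
\[
\widehat{\mathrm{LRV}}_d-\mathrm{LRV}
=\underbrace{\textstyle\sum_{|r|\le L} w_{|r|}(\widehat\gamma_{r,d}-\tilde\gamma_{r,d})}_{(\mathrm I)}
+\underbrace{\textstyle\sum_{|r|\le L} w_{|r|}(\tilde\gamma_{r,d}-\gamma_t(r))}_{(\mathrm{II})}
+\underbrace{\textstyle\sum_{|r|\le L}(w_{|r|}-1)\gamma_t(r)-\sum_{|r|>L}\gamma_t(r)}_{(\mathrm{III})}.
\]
Here $\tilde\gamma_{r,d}:=d^{-1}\sum_{k=1}^{d-r}\xi_k\xi_{k+r}$ is the version without demeaning.

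\emph{Bias term (III).} This is deterministic. Since $w_r\to1$ for each fixed $r$ as $L\to\infty$, $|w_r-1|\le1$, and $\sum_r|\gamma_t(r)|<\infty$, dominated convergence gives $(\mathrm{III})\to0$.

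\emph{Demeaning term (I).} Expanding, $\widehat\gamma_{r,d}-\tilde\gamma_{r,d}$ consists of terms of the form $\mathfrak W_d(t)$ times a partial mean of the $\xi$'s bounded by $2$, plus a term $\mathfrak W_d(t)^2$. Each is $O_{\mathbb P}(|\mathfrak W_d(t)|)$ uniformly in $r$. Proposition~\ref{prop:var-cov-null} and Proposition~\ref{prop:longrun-exists} give $\var(\mathfrak W_d(t))=\mathbf K_{tt}\mathcal V_{d,N}=O(1/d)$, so $\mathfrak W_d(t)=O_{\mathbb P}(d^{-1/2})$. Summing over at most $2L+1$ lags gives $(\mathrm I)=O_{\mathbb P}(L d^{-1/2})=o_{\mathbb P}(1)$ by \eqref{eq:bandwidth-cond}.

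\emph{Variance term (II).} The mean part is easy: $\E\tilde\gamma_{r,d}-\gamma_t(r)=-(r/d)\gamma_t(r)$, and $\sum_{r\le L}(r/d)|\gamma_t(r)|\le (L/d)\sum_r|\gamma_t(r)|\to0$. The stochastic part is the main obstacle. I need $\var(\tilde\gamma_{r,d})\lesssim 1/d$ uniformly in $r\le L$. Then Minkowski's inequality gives $\|(\mathrm{II})-\E(\mathrm{II})\|_2\le\sum_{|r|\le L}\var(\tilde\gamma_{r,d})^{1/2}\lesssim L/\sqrt d\to0$, which is exactly where $L=o(d^{1/2})$ is used. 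To bound the variance, set $Y^{(r)}_k:=\xi_k\xi_{k+r}$. This sequence is bounded by $4$ and stationary, and it is measurable with respect to $\sigma(\mathbf Z^{(k)},\mathbf Z^{(k+r)})$. Then
\[
\var\Bigl(d^{-1}\textstyle\sum_k Y^{(r)}_k\Bigr)\le d^{-1}\sum_{h}|\cov(Y^{(r)}_0,Y^{(r)}_h)|.
\]
For $h>r$, the two products are separated by a gap $h-r$, so Davydov gives a bound $C\alpha(h-r)$. The pairs with $h\le r$ are not controlled by this gap alone. For them I would use the standard fourth-order device: write $\cov(\xi_0\xi_r,\xi_h\xi_{h+r})$ and split on the largest of the three gaps between consecutive sorted indices. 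If the largest gap is $g$, then $|\cov|\lesssim\alpha(g)+|\gamma\gamma|$ products. Summing over configurations gives $\sum_h|\cov(Y_0,Y_h)|\le C\bigl(\sum_g g^{2}\alpha(g)+(\sum|\gamma|)^2\bigr)$. This is finite because $\lambda>4$ in \Cref{ass:mixing}(ii) makes $\sum_g g^2\alpha(g)<\infty$. The resulting constant is independent of $r$, which yields the uniform $O(1/d)$ bound.

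Combining (I)--(III) gives \eqref{eq:lrv-consistency-pointwise}. Dividing by the known constant $\mathbf K_{tt}>0$ gives $\widehat\sigma^2_{\mathrm{long}}(N;t)\to\sigma^2_{\mathrm{long}}(N)$ in probability for each $t$. Since $|\mcT|=N-3$ is finite, the vector of these estimators converges jointly to the constant vector. The median is a continuous function of finitely many arguments, so the continuous mapping theorem gives $\widehat\sigma^2_{\mathrm{long}}(N)\to\sigma^2_{\mathrm{long}}(N)$ in probability. \Cref{ass:nondeg-longrun} is not needed for consistency itself; it only ensures that the limit is positive, so the studentization in \eqref{eq:studentized-max} is well defined.
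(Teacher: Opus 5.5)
Your proof is correct, but it divides the work differently from the paper's. The paper treats the HAC step as a black box. It checks boundedness, stationarity and uniform absolute summability of the autocovariances (via Davydov), and then cites Andrews (1991) for $\widehat{\mathrm{LRV}}_d(N;t)-\Gamma_d(t)\to 0$ in probability. Its effort goes into identifying the target (Steps 4--5). There it combines the exact identity $\var(\sqrt d\,\mathfrak W_d(t))=\gamma_0+2\sum_{r<d}(1-r/d)\gamma_r$ with \Cref{prop:var-cov-null,prop:longrun-exists} to show $\Gamma_d(t)\to\mathbf K_{tt}\sigma^2_{\mathrm{long}}(N)$. The median step is handled by the min--max sandwich rather than by continuous mapping.

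You do the reverse. You take the identification directly from \eqref{eq:lrv-t}, which is legitimate because it is exactly what the proof of \Cref{prop:longrun-exists} establishes. You then prove the Newey--West consistency by hand through the demeaning/variance/bias split.

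Your route shows where each hypothesis enters:
\begin{itemize}
\item $L\to\infty$ is used only in the Bartlett bias term (III).
\item $L=o(d^{1/2})$ is used in the demeaning error (I) and in the Minkowski bound on the centered part of (II).
\item The polynomial mixing rate is used in the fourth-cumulant bound that gives $\sup_{r\le L}\var(\tilde\gamma_{r,d})=O(1/d)$.
\end{itemize}
That last bound is the fourth-order cumulant summability that Andrews-type HAC theorems require, and which the paper's citation leaves unverified. Your derivation of it is sound: you use $\cov(\xi_0\xi_r,\xi_h\xi_{h+r})=\kappa_4+\gamma(h)^2+\gamma(h+r)\gamma(h-r)$ together with $|\kappa_4|\lesssim\alpha(\text{max gap})$. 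It only needs $\sum_g g^2\alpha(g)<\infty$, i.e.\ $\lambda>3$.

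The paper's version is shorter and keeps the target identification self-contained inside the proof. However, it outsources the one genuinely probabilistic step, which your argument supplies.
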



\begin{remark}[Hyperparameter-free scan vs. variance calibration]
The DAK scan itself is hyperparameter-free: unlike Gaussian kernel, projection-based, or regularized procedures, it involves no bandwidth, scale parameter, number of projections, or penalty parameter. The plug-in calibrated test only requires estimating the scalar long-run variance factor \(\sigma^2_{\mathrm{long}}(N)\). The HAC bandwidth \(L(d)\) enters only through this nuisance-scale estimation step. Lemma~\ref{lem:sigmalong-consistency} shows that consistency holds for any bandwidth sequence satisfying \eqref{eq:bandwidth-cond}. Thus, this bandwidth does not affect the scan statistic \(\mathfrak{W}_d(\cdot)\), or the localization rule \(\widehat\tau_d\).
\end{remark}

\subsubsection{An asymptotically distribution-free test}
\label{subsubsec:adf-test} 

Let \(\mcZ=(\mcZ_t)_{t\in\mcT}\sim \mcN(0,\mathbf{K}(N))\), where \(\mathbf{K}(N)\) is the deterministic matrix from \eqref{eq:K-mat-original}. For a fixed level of significance \(\alpha\in(0,1)\), let \(c_\alpha\) denote the \((1-\alpha)\)-quantile of $\max_{t\in\mathcal T} Z_t$. 

\medskip
We define the plug-in variance-calibrated test by
\begin{equation}
\label{eq:level-alpha-test}
\widehat\phi_d(\alpha)
\,:=\,
\mathbbm{1}\{S_d>c_\alpha\}
\,=\,
\mathbbm{1}\left\{
\max_{t\in\mcT}\frac{\sqrt d\,\mathfrak W_d(t)}{\widehat{\sigma}_{\mathrm{long}}(N)}
>c_\alpha\right\}
\end{equation}
where \(S_d\) is the studentized scan statistic defined in \eqref{eq:studentized-max}, with $\widehat\sigma_{\mathrm{long}}(N) := \sqrt{|\widehat\sigma^2_{\mathrm{long}}(N)|}$. $S_d$ is a fully computable function of the data; and since $\mathbf K$ is known once $N$ is fixed, the critical value $c_\alpha$ can be computed offline by Monte Carlo simulation from $\mathcal N(\mathbf 0,\mathbf K)$.


The limiting null law of $S_d$ depends only on the deterministic matrix $\mathbf{K}(N)$, and not on the unknown $F_d$. Thus, $\widehat\phi_d(\alpha)$ is an asymptotically distribution-free and size-$\alpha$ test under appropriate conditions.

\begin{theorem}[Asymptotic level-$\alpha$ validity of the test]
\label{thm:level-alpha}
Assume that under $\mathbf{H}_{0,d}$, \Cref{ass:mixing,ass:nondeg-longrun} hold, and that the bandwidth condition \eqref{eq:bandwidth-cond} is satisfied. Then,
\[
S_d
\ \stackrel{\mathscr{L}}{\longrightarrow}\
\max_{t\in\mcT}\mcZ_t,
\qquad d\to\infty.
\]
Consequently, the test $\widehat\phi_d(\alpha)$ in \eqref{eq:level-alpha-test} is an asymptotically distribution-free and size-$\alpha$ test; i.e.,
\begin{equation}
\label{eq:level-alpha-claim}
\lim_{d\to\infty} \mathbb{E}_{\mathbf{H}_{0,d}}\!\bigl[\widehat\phi_d(\alpha)\bigr]
= \lim_{d\to\infty} \mathbb{P}_{\mathbf{H}_{0,d}}\!\bigl(\widehat\phi_d(\alpha)=1\bigr)
= \alpha.
\end{equation}
\end{theorem}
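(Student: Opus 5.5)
The plan is to combine the multivariate CLT of \Cref{thm:multivariate-clt} with the consistency of the plug-in variance estimator from \Cref{lem:sigmalong-consistency}, and then push the limit through the continuous max functional. Under $\{\mathbf{H}_{0,d}\}$, \Cref{thm:multivariate-clt} gives
\[
\sqrt d\,\widetilde{\mathfrak W}_d \stackrel{\mathscr L}{\longrightarrow} \mathcal N\bigl(\mathbf 0,\sigma^2_{\mathrm{long}}(N)\mathbf K\bigr),
\]
and \Cref{lem:sigmalong-consistency} (which uses the bandwidth condition \eqref{eq:bandwidth-cond}) gives $\widehat\sigma^2_{\mathrm{long}}(N)\stackrel{\mathbb P}{\to}\sigma^2_{\mathrm{long}}(N)$. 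Since $x\mapsto\sqrt{|x|}$ is continuous, the continuous mapping theorem yields $\widehat\sigma_{\mathrm{long}}(N)\stackrel{\mathbb P}{\to}\sigma_{\mathrm{long}}(N)$. This limit is strictly positive by \Cref{ass:nondeg-longrun}.

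Next I would apply Slutsky's theorem to the pair $(\sqrt d\,\widetilde{\mathfrak W}_d,\widehat\sigma_{\mathrm{long}}(N))$. The second component converges in probability to a constant, so the pair converges jointly in distribution to $(\sigma_{\mathrm{long}}(N)\mathcal Z,\sigma_{\mathrm{long}}(N))$ with $\mathcal Z\sim\mathcal N(\mathbf 0,\mathbf K)$. Consider the map $(w,s)\mapsto \max_{t\in\mcT} w_t/s$. It is continuous on $\mathbb R^{|\mcT|}\times(0,\infty)$, and the limit point lies in that set almost surely. On the event $\{\widehat\sigma_{\mathrm{long}}(N)=0\}$, whose probability tends to zero, $S_d$ can be given any convention. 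The continuous mapping theorem then gives
\[
S_d\stackrel{\mathscr L}{\longrightarrow}\max_{t\in\mcT}\sigma_{\mathrm{long}}(N)\mathcal Z_t/\sigma_{\mathrm{long}}(N)=\max_{t\in\mcT}\mcZ_t .
\]
This limit depends only on $\mathbf K(N)$, which establishes the asymptotically distribution-free property.

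For the size statement, I would use $\mathbb P(S_d>c_\alpha)\to\mathbb P(\max_t\mcZ_t>c_\alpha)$, which holds by the portmanteau theorem provided $c_\alpha$ is a continuity point of the limiting CDF. This is the one point that needs checking. The diagonal of $\mathbf K$ is $\mathbf K_{tt}=2(N-1)(N-2)/\{t(t-1)(N-t)(N-t-1)\}>0$, so each $\mcZ_t$ is a nondegenerate Gaussian and $\mathbb P(\mcZ_t=c)=0$. Hence $\mathbb P(\max_t\mcZ_t=c)\le\sum_t\mathbb P(\mcZ_t=c)=0$, and the CDF of $\max_t\mcZ_t$ is continuous everywhere. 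It is also strictly increasing on the relevant range, since the support is all of $\mathbb R$. Therefore $\mathbb P(\max_t\mcZ_t>c_\alpha)=\alpha$ exactly, which gives \eqref{eq:level-alpha-claim}.

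I expect no serious obstacle, because the heavy lifting is in \Cref{thm:multivariate-clt} and \Cref{lem:sigmalong-consistency}. The step needing the most care is the joint convergence and continuity argument. It requires positivity of $\sigma_{\mathrm{long}}(N)$, so that division is continuous at the limit, and continuity of the limiting CDF at $c_\alpha$, which the positive diagonal of $\mathbf K$ secures.
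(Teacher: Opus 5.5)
Your proposal is correct and follows essentially the same route as the paper: the multivariate CLT of \Cref{thm:multivariate-clt}, consistency of $\widehat\sigma^2_{\mathrm{long}}(N)$ from \Cref{lem:sigmalong-consistency}, then Slutsky and the continuous mapping theorem for the max, and finally continuity of the law of $\max_t\mcZ_t$ at $c_\alpha$ because $\mathbf K_{tt}>0$. You are slightly more careful than the paper, since you invoke \Cref{ass:nondeg-longrun} to make division continuous at the limit and you handle the vanishing event $\{\widehat\sigma_{\mathrm{long}}(N)=0\}$; strict monotonicity of the limiting CDF is not needed, because continuity alone already gives $\mathbb P(\max_t\mcZ_t>c_\alpha)=\alpha$.
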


\subsection{Power analysis under the alternative $\mathbf{H}_{1}$}
\label{subsec:power-analysis}

The multivariate CLT in \Cref{thm:multivariate-clt} shows that under $\mathbf{H}_{0,d}$, the scan process fluctuates on the scale $d^{-1/2}$. A natural question is how the proposed test behaves under the sequence of alternatives $\{\mathbf{H_{1,d}}\}$, and in particular, what signal strength is required for reliable detection and accurate change-point localization.

We first establish a general power and localization result under high-level alternative-side fluctuation conditions, relying only on the deterministic structure of the population mean and the variance scaling of the statistic. We then provide a refined asymptotic characterization of power under additional weak limit approximation assumptions, which recovers a phase-transition-type behavior under local alternatives.

\subsubsection{Lower bounds on detection power and localization accuracy}
Recall the deterministic shape function $\Lambda_{\tau,N}(t)$ defined in \eqref{eq:det-shape-func} which was shown to be uniquely maximized at $t=\tau$. Since $N$ is fixed and $\mcT$ is finite, this implies a strictly positive separation between the true change-point and all other candidates. Define the separation gap
\begin{equation}
\label{eq:gap}
\Omega_{\tau,N}
:= \min_{t\in\mcT\setminus\{\tau\}} \bigl[\Lambda_{\tau,N}(\tau)-\Lambda_{\tau,N}(t)\bigr]
:= 1 - \max_{t\in\mcT\setminus\{\tau\}} \Lambda_{\tau,N}(t) \,\in\, (0,1).
\end{equation}

On the stochastic side, the finite-\(d\) lower bounds below require the scan fluctuations under \(\mathbf{H}_{1,d}\) to remain on the \(d^{-1/2}\) scale. We state this as part of the following alternative-side assumption.


\begin{assumption}[Alternative-side scale and fluctuation conditions]
\label{ass:sigma-long-ALT}
Under \(\{\mathbf{H}_{1,d}\}\), the following hold.
\begin{itemize}
\item[(i)] There exists \(\sigma_*^2(N)\in(0,\infty)\) such that
$~\widehat\sigma_{\mathrm{long}}^2(N) \stackrel{\mathbb P}{\longrightarrow} \sigma_*^2(N)$, as $d \to \infty$.

\item[(ii)] There exists a constant \(C_{\mathrm{alt}} \in [0,\infty)\) such that, for all sufficiently large \(d\),
\[
\max_{t\in\mcT}
\;\left\{d\cdot\var_{\mathbf{H}_{1,d}}\!\bigl(\mathfrak W_d(t)\bigr)\right\}
\le C_{\mathrm{alt}}.
\]
\end{itemize}
\end{assumption}

The following theorem shows that these two ingredients, viz., deterministic separation and $d^{-1/2}$-scale fluctuations, are sufficient for high-probability recovery of the true change-point location and high power.

\begin{theorem}[Change-point localization and power under the alternative]
\label{thm:power-main}
Suppose \Cref{ass:sigma-long-ALT} holds, and $N$ is fixed.
Let $\widehat\phi_d(\alpha)$ be the test defined in \eqref{eq:level-alpha-test}. Then the following holds.

\begin{enumerate}
\item[(i)] (Localization) For all sufficiently large $d$,
\[
\mathbb{P}_{\mathbf{H}_{1,d}}\!\left(\widehat\tau_d=\tau\right)
\ge
1-\frac{4\,C_{\mathrm{alt}}}{\Omega_{\tau,N}^2}\cdot \frac{N}{d\,\delta_d^2}.
\]

\item[(ii)] (Power) Fix any $\eta\in(0,1)$, define the event
\(
A_{d,\eta}
:=
\left\{
\left|\widehat\sigma_{\mathrm{long}}(N)-\sigma_*(N)\right|
\le \eta\,\sigma_*(N)
\right\}
\).
Then, for all sufficiently large $d$ satisfying $\sqrt d\,\delta_d > (1+\eta)\sigma_*(N)\,c_\alpha $, provided \(c_\alpha>0\), we have
\[
\mathbb P_{\mathbf{H}_{1,d}}\!\left(\widehat\phi_d(\alpha)=1\right)
\ge
1 - \mathbb P_{\mathbf{H}_{1,d}}(A_{d,\eta}^c) - \frac{C_{\mathrm{alt}}}{\bigl(\sqrt d\,\delta_d-(1+\eta)\sigma_*(N)c_\alpha\bigr)^2}.
\]
\end{enumerate}

\noindent
In particular, if $\sqrt d\,\delta_d\to\infty$, then
\(
\displaystyle\lim_{d \to \infty} \mathbb P_{\mathbf{H}_{1,d}}\!\left(\widehat\tau_d=\tau\right) = 1
\)
and
\(
\displaystyle\;\lim_{d \to \infty} \mathbb P_{\mathbf{H}_{1,d}}\!\left(\widehat\phi_d(\alpha)=1\right) = 1.
\)
\end{theorem}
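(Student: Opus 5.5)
The plan is to combine the exact mean structure of Proposition~\ref{prop:E[D-star]} with Chebyshev's inequality, using the $d^{-1/2}$ variance scale from \Cref{ass:sigma-long-ALT}(ii). Throughout, under $\mathbf{H}_{1,d}$ with atom-free marginals, Proposition~\ref{prop:E[D-star]} and \Cref{rem:geometric_factorization} give $\mu_d(t)=\Lambda_{\tau,N}(t)\,\delta_d$ with $\Lambda_{\tau,N}(\tau)=1$. Hence $\mu_d(\tau)=\delta_d$, and for every $t\in\mcT\setminus\{\tau\}$ the definition \eqref{eq:gap} gives
\[
\mu_d(\tau)-\mu_d(t)\ge \Omega_{\tau,N}\,\delta_d .
\]
For all sufficiently large $d$ we also have $\var(\mathfrak W_d(t))\le C_{\mathrm{alt}}/d$ uniformly over $t\in\mcT$.

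\emph{Localization.} I will treat ties pessimistically: the event $\{\widehat\tau_d\neq\tau\}$ is contained in $\{\exists\, t\neq\tau:\ \mathfrak W_d(t)\ge \mathfrak W_d(\tau)\}$. On that event we have
\[
[\mathfrak W_d(t)-\mu_d(t)]-[\mathfrak W_d(\tau)-\mu_d(\tau)]\ge \Omega_{\tau,N}\delta_d,
\]
so at least one of $|\mathfrak W_d(t)-\mu_d(t)|$ and $|\mathfrak W_d(\tau)-\mu_d(\tau)|$ is at least $\Omega_{\tau,N}\delta_d/2$. Therefore $\{\widehat\tau_d\neq\tau\}$ is contained in the union over $s\in\mcT$ of the events $\{|\mathfrak W_d(s)-\mu_d(s)|\ge \Omega_{\tau,N}\delta_d/2\}$. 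Chebyshev bounds each of these by $4C_{\mathrm{alt}}/(d\,\Omega_{\tau,N}^2\delta_d^2)$. A union bound over $|\mcT|=N-3\le N$ indices then gives the stated bound in (i).

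\emph{Power.} Fix $\eta\in(0,1)$ and work on $A_{d,\eta}$. There $\widehat\sigma_{\mathrm{long}}(N)\in[(1-\eta)\sigma_*,(1+\eta)\sigma_*]$, which is strictly positive. Since $c_\alpha>0$, we get $\widehat\sigma_{\mathrm{long}}(N)\,c_\alpha\le(1+\eta)\sigma_*c_\alpha$. Because $S_d\ge \sqrt d\,\mathfrak W_d(\tau)/\widehat\sigma_{\mathrm{long}}(N)$, rejection occurs on $A_{d,\eta}$ whenever $\sqrt d\,\mathfrak W_d(\tau)>(1+\eta)\sigma_*c_\alpha$. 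Conversely, non-rejection on $A_{d,\eta}$ forces
\[
\sqrt d\,\bigl(\mathfrak W_d(\tau)-\delta_d\bigr)\le -\bigl(\sqrt d\,\delta_d-(1+\eta)\sigma_*c_\alpha\bigr)<0,
\]
where the last inequality is the standing assumption. So
\[
\mathbb P(\widehat\phi_d(\alpha)=0)\le \mathbb P(A_{d,\eta}^c)+\mathbb P\bigl(|\mathfrak W_d(\tau)-\delta_d|\ge d^{-1/2}(\sqrt d\,\delta_d-(1+\eta)\sigma_*c_\alpha)\bigr).
\]
Chebyshev with $d\var(\mathfrak W_d(\tau))\le C_{\mathrm{alt}}$ bounds the second term by $C_{\mathrm{alt}}/(\sqrt d\,\delta_d-(1+\eta)\sigma_*c_\alpha)^2$, which is (ii).

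\emph{Limits.} If $\sqrt d\,\delta_d\to\infty$, then $N/(d\delta_d^2)\to0$, so (i) gives localization probability tending to one. In (ii), the condition $\sqrt d\,\delta_d>(1+\eta)\sigma_*c_\alpha$ eventually holds and the Chebyshev term vanishes. Also $\mathbb P(A_{d,\eta}^c)\to0$ by \Cref{ass:sigma-long-ALT}(i), using continuity of the square root at $\sigma_*^2>0$. If $c_\alpha\le0$ the conclusion is only easier, since rejection then follows from $\mathfrak W_d(\tau)>0$, which has probability tending to one by the same Chebyshev bound. I expect no deep obstacle. The only points needing care are the handling of ties in the argmax (avoided by using the weak inequality $\ge$) and the sign of $c_\alpha$ when replacing $\widehat\sigma_{\mathrm{long}}$ by its upper bound on $A_{d,\eta}$.
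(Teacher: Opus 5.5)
Your proposal is correct and follows the paper's proof essentially step for step. For (i) you use the mean factorization $\mu_d(t)=\Lambda_{\tau,N}(t)\,\delta_d$ together with Chebyshev and a union bound, and for (ii) you use the reduction $S_d\ge\sqrt d\,\mathfrak W_d(\tau)/\widehat\sigma_{\mathrm{long}}(N)$ on $A_{d,\eta}$ followed by Chebyshev. The only cosmetic difference is in (i): the paper applies Chebyshev to $D_{d,t}=\mathfrak W_d(\tau)-\mathfrak W_d(t)$ with $\var(D_{d,t})\le 4C_{\mathrm{alt}}/d$ and takes a union over the $N-4$ competitors, whereas you split the gap in half and bound the $N-3$ individual deviations, which gives the same constant $4$.
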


The theorem shows that once the signal dominates the intrinsic fluctuation scale $d^{-1/2}$, the scan statistic reliably identifies both the presence and the location of the change-point. The condition $\sqrt d\,\delta_d \to \infty$ thus ensures exact localization and consistency of the test in high dimensions.

\subsubsection{Power under local alternatives}

To understand the behavior of the procedure at the boundary of detectability, we now turn to a local asymptotic analysis.

\noindent\textbf{Local alternatives and weak limit of the centered scan process.}
Fix a change-point location $\tau\in\mcT$. We consider a sequence of local alternatives $\{\mathbf H_{1,d}\}$ such that the signal factor $\delta_d$ satisfies
\begin{equation}
\label{eq:local-signal-scaling}
\delta_d = \frac{h}{\sqrt d},
\end{equation}
for some fixed $h\in(0,\infty)$. Under this scaling, the exact factorization of the mean yields
\begin{equation}
\label{eq:local-mean-current-notation}
\E_{\mathbf H_{1,d}}[\mathfrak W_d(t)]
=
\Lambda_{\tau,N}(t) \cdot \delta_d
=
\frac{h}{\sqrt d}\,\Lambda_{\tau,N}(t),
\qquad t\in\mcT.
\end{equation}
\indent
Thus, after multiplication by $\sqrt d$, the deterministic component of the scan process converges to the non-trivial signal profile $h\,\bigl(\Lambda_{\tau,N}(t)\bigr)_{t\in\mcT}$.

To obtain a limiting power curve, we impose the following additional assumption on the centered scan vector under the local alternatives.

\begin{assumption}[Existence of weak limit of the centered scan vector under local alternatives]
\label{ass:local-weak-limit}
Suppose $\{\mathbf H_{1,d}\}$ is a sequence of local alternatives satisfying \eqref{eq:local-signal-scaling} for some fixed $h\in(0,\infty)$. Then, as $d\to\infty$,
\[
\sqrt d\left(\widetilde{\mathfrak W}_d-\E_{\mathbf H_{1,d}}[\widetilde{\mathfrak W}_d]\right)
\ \stackrel{\mathscr{L}}{\longrightarrow}\
\mathfrak{B}^{(h)},
\]
for some $\R^{|\mcT|}$-valued random vector $\mathfrak{B}^{(h)} = (\mathfrak{B}^{(h)}_t)_{t \in \mcT}$.
\end{assumption}

Assumption~\ref{ass:local-weak-limit} is a local asymptotic regularity condition. It merely assumes the existence of a non-degenerate weak limit for the centered scan vector at the critical $d^{-1/2}$ scale. The limiting law is allowed to depend on the local signal intensity $h$.
Our next result shows that under this assumption, the variance-calibrated scan admits a nontrivial asymptotic power curve.

\begin{theorem}[Local asymptotic power curve]
\label{thm:local-power}
Fix $h\in(0,\infty)$, and consider a sequence of local alternatives $\{\mathbf H_{1,d}\}$ satisfying \eqref{eq:local-signal-scaling}. Suppose \Cref{ass:sigma-long-ALT}(i) and \Cref{ass:local-weak-limit} hold. Then, as $d \to \infty$,
\[
\mathbb P_{\mathbf H_{1,d}}\!\left(\widehat\phi_d(\alpha)=1\right)
\;\longrightarrow\;
\mathbb{P}\!\left(M_h>c_\alpha\right)
=
\mathbb{P}\!\left(
\max_{t\in\mcT}
\left[\mathfrak{B}^{(h)}_t+h\,\Lambda_{\tau,N}(t)\right]
>c_\alpha \, \sigma_{*}(N)
\right),
\]
if $c_\alpha$ is a continuity point of the distribution function of $~M_h = \displaystyle\max_{t\in\mcT} \,\dfrac{\mathfrak{B}^{(h)}_t+h\,\Lambda_{\tau,N}(t)}{\sigma_{*}(N)}$.
\end{theorem}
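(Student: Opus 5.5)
The plan is to write the studentized scan as a continuous functional of a jointly converging pair, and then pass to the limit using the portmanteau theorem. Decompose
\[
\sqrt d\,\widetilde{\mathfrak W}_d
=\sqrt d\bigl(\widetilde{\mathfrak W}_d-\E_{\mathbf H_{1,d}}[\widetilde{\mathfrak W}_d]\bigr)+\sqrt d\,\E_{\mathbf H_{1,d}}[\widetilde{\mathfrak W}_d].
\]
By \Cref{prop:E[D-star]}, together with the exact factorization \eqref{eq:local-mean-current-notation}, the second term equals the deterministic vector $h\,(\Lambda_{\tau,N}(t))_{t\in\mcT}$ for every $d$. By \Cref{ass:local-weak-limit}, the first term converges in law to $\mathfrak B^{(h)}$. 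Hence
\[
\sqrt d\,\widetilde{\mathfrak W}_d\stackrel{\mathscr L}{\longrightarrow}\mathfrak B^{(h)}+h\Lambda_{\tau,N}.
\]
No Slutsky argument is needed at this stage, because the shift is exactly constant in $d$.

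Next, I bring in the variance estimator. By \Cref{ass:sigma-long-ALT}(i), $\widehat\sigma^2_{\mathrm{long}}(N)\to\sigma_*^2(N)>0$ in probability. By the continuous mapping theorem applied to $x\mapsto\sqrt{|x|}$, this gives $\widehat\sigma_{\mathrm{long}}(N)\to\sigma_*(N)$ in probability. Since the limit is a constant, Slutsky's lemma yields joint convergence of the pair $(\sqrt d\,\widetilde{\mathfrak W}_d,\widehat\sigma_{\mathrm{long}}(N))$ to $(\mathfrak B^{(h)}+h\Lambda_{\tau,N},\sigma_*(N))$. The map $(x,s)\mapsto \max_{t\in\mcT}x_t/s$ is continuous on $\R^{|\mcT|}\times(0,\infty)$, and the limit point lies in this set. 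The continuous mapping theorem then gives $S_d\stackrel{\mathscr L}{\longrightarrow}M_h$.

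Finally, I convert this to the power statement. Since $c_\alpha$ is a continuity point of the distribution function of $M_h$, the portmanteau theorem gives
\[
\mathbb P(S_d>c_\alpha)=1-\mathbb P(S_d\le c_\alpha)\to 1-\mathbb P(M_h\le c_\alpha)=\mathbb P(M_h>c_\alpha).
\]
Because $\sigma_*(N)>0$, the event $\{M_h>c_\alpha\}$ coincides with $\{\max_t[\mathfrak B^{(h)}_t+h\Lambda_{\tau,N}(t)]>c_\alpha\sigma_*(N)\}$, which is the displayed form.

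The only delicate points are minor. The first is that $\widehat\sigma_{\mathrm{long}}$ may equal zero with small probability, which makes $S_d$ undefined there. On those events one can assign $S_d$ an arbitrary value, since the events have vanishing probability and so do not affect convergence in law. The second is checking that the mean identity is exact rather than asymptotic, which follows directly from \Cref{prop:E[D-star]} since $\mu_d(t)=\Lambda_{\tau,N}(t)\delta_d$ holds for every $d$.
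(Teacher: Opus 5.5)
Your proposal is correct and follows essentially the same route as the paper's proof. Both arguments use the same steps: the exact deterministic shift $h\,\Lambda_{\tau,N}$ from the mean factorization, the weak limit from \Cref{ass:local-weak-limit}, Slutsky with the consistent scale estimator, continuous mapping through the maximum, and the continuity-point condition at $c_\alpha$. Your extra remarks are minor refinements of details the paper passes over quickly: that the shift is exactly constant in $d$, the passage from $\widehat\sigma^2_{\mathrm{long}}$ to $\widehat\sigma_{\mathrm{long}}$ via $x\mapsto\sqrt{|x|}$, and the vanishing-probability event $\widehat\sigma_{\mathrm{long}}(N)=0$.
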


\smallskip
\begin{remark}[Interpretation of the detection scale]
Collectively, \Cref{thm:power-main,thm:local-power} show that the scale
$\delta_d \asymp d^{-1/2}$ is the critical detection scale for the proposed DAK-based scan statistic in the HDLSS regime. When $\sqrt d\,\delta_d \to h\in(0,\infty)$, the test exhibits a non-degenerate limiting power characterized by \Cref{thm:local-power}; when $\sqrt d\,\delta_d \to \infty$, the power tends to $1$ by \Cref{thm:power-main}. 
\end{remark}

\section{Sequential extension for streaming high-dimensional data}
\label{sec:online}

\subsection{Sequential monitoring via small windows: Motivation and setup}

The offline DAK scan is developed for the HDLSS regime, where the sample size $N$ is fixed and the dimension $d$ diverges. This makes it particularly well-suited to small batches of high-dimensional observations. The same feature also suggests a natural sequential extension. Rather than relying on large windows (see, e.g., \citet{wang2020multiscale, XieEtAl2023WindowCPD, WeiXie2026OnlineKernelCUSUM}) or temporal accumulation such as recursive likelihood-ratio procedures (see, e.g., \citet{Lai1995SeqCPD, XieSiegmund2013MultiSensorCPD, CaoEtAl2018SeqCPDviaOCO}), we repeatedly apply the offline scan to small, moving windows of fixed size. In this way, the offline procedure serves as a building block for online detection in high-dimensional streaming data.

The online method thus has a different scope than the offline method. While the offline theory is tailored to a single HDLSS batch, the sequential procedure can be run over arbitrarily long streams by continually reusing the same fixed-window statistic. The underlying discrepancy, however, remains the same DAK construction; hence, the online method continues to target changes reflected in the aggregated one-dimensional marginals across coordinates.
Formally, suppose we observe a streaming sequence of $d$-dimensional random vectors $\{Z_t\}_{t\in\mathbb N}\subset\mathbb R^d$. We consider the following online single-change-point model with a change at $\nu$.
\begin{align}
\label{eq:single-cp-online}
\mathbf{H}_{0,d} : ~~ Z_t ~{\sim}~ F_d~\text{~for all~} t \in \bbN, \quad \text{vs.} \quad
\mathbf{H}_{1,d} : ~~ \exists\, \nu \in \bbN \ \text{such that}\ 
\begin{cases}
Z_1, Z_2, \dots, Z_\nu \overset{\text{i.i.d.}}{\sim} F_d, \\
Z_{\nu+1}, Z_{\nu+2}, \dots\overset{\text{i.i.d.}}{\sim} G_d,
\end{cases}
\end{align}
where $F_d$ and $G_d$ are probability distributions on $\mathbb R^d$ with at least one unequal coordinate marginal, i.e., $F_d^{(k)} \neq G_d^{(k)}$ for at least one $k \in [d]$.

For notational convenience, let \(\mathbb P_{d,\infty}\) denote the law under \(\mathbf{H}_{0,d}\), and let \(\mathbb P_{d,\nu}\) denote the law under $\mathbf{H}_{1,d}$ when $\nu$ is the true change-point.
We write \(\mathbb E_{d,\infty}\) and \(\mathbb E_{d,\nu}\) for the corresponding expectations. When the dimension \(d\) is clear from context, we suppress it and write simply \(\mathbb P_\infty,\mathbb P_\nu\) and \(\mathbb E_\infty,\mathbb E_\nu\).

\subsection{Sliding fixed-window DAK scan and stopping rule}

Fix a window length \(N_0\ge4\), treated as a constant independent of the dimension \(d\). Let $\mcT_0:=\{2,\dots,N_0-2\}$ denote the set of admissible split points inside a window of length \(N_0\). 
For each time \(s\ge N_0\), define the rolling window $\mcW_s:=\{Z_{s-N_0+1},\dots,Z_s\}$.

For each candidate split \(k\in\mcT_0\), partition \(\mcW_s\) into $\mcX_{s,k}:=\{Z_{s-N_0+1},\dots,Z_{s-N_0+k}\}$ and $\mcY_{s,k}:=\{Z_{s-N_0+k+1},\dots,Z_s\}$, and compute the offline DAK statistic
\[
\mathfrak W_{d,s}(k)
:=
\widehat\Delta_{\widehat\alpha_k,d}\bigl(\mcX_{s,k},\mcY_{s,k}\bigr),
\quad~\text{where}~~
\widehat\alpha_k:=\frac{k}{N_0}.
\]

We assume the availability of a pre-change reference sample $\mathcal C_{N_0} := \{Z_{1}^{\mathrm{(cal)}},\dots,Z_{N_0}^{\mathrm{(cal)}}\}$, consisting of \(N_0\) independent observations from \(F_d\), independent of the monitoring stream. This reference sample is used only once, before monitoring begins, to estimate the long-run variance factor $\sigma^2_{\mathrm{long}}(N_0)$ corresponding to window length \(N_0\), exactly as in \Cref{subsec:plugin-sigmalong}, with \(N\) replaced by \(N_0\). Denote the resulting variance and scale estimators by
$\widehat\sigma_{\mathrm{long}}^2(N_0;\mcC_{N_0})$ and $\widehat\sigma_{\mathrm{long}}(N_0;\mcC_{N_0}) :=
\sqrt{\bigl|\widehat\sigma_{\mathrm{long}}^2(N_0;\mcC_{N_0})\bigr|}$.
These are computed once at the calibration stage and then kept fixed throughout monitoring.

We define the calibrated, window-level DAK scan statistic for the $s$-th window, for $s \geq N_0$, as
\begin{equation}
\label{eq:window-studentized}
\mcM_d(s)
:=
\max_{k\in\mcT_0}~
\frac{\sqrt d\,\mathfrak W_{d,s}(k)}
{\widehat\sigma_{\mathrm{long}}(N_0;\mcC_{N_0})}.
\end{equation}

Let \(\mcZ=(\mcZ_k)_{k\in\mcT_0}\sim\mcN(\mathbf 0,\mathbf K(N_0))\), where \(\mathbf K(N_0)\) is the deterministic matrix with entries
\begin{equation}
\label{eq:K-mat-N0}
\mathbf K_{t,t'}(N_0)=\mathbf K_{t',t}(N_0)
=
\frac{2(N_0-1)(N_0-2)}
{t'(t'-1)(N_0-t)(N_0-t-1)},
\qquad t\le t',
\end{equation}
and let \(c_{\alpha,N_0}\) denote the \((1-\alpha)\)-quantile of \(\max_{k\in\mcT_0}\mcZ_k\). 
The stopping time is then defined by
\begin{equation}
\label{eq:online-stop}
\widehat\nu
:=
\inf\{s\ge N_0:\mcM_d(s) > c_{\alpha,N_0}\}.
\end{equation}

If \(\widehat\nu<\infty\), we estimate the corresponding change-point location by
\begin{equation}
\label{eq:online-cp-estimate}
\widehat\tau_{\mathrm{on}}
:= (\widehat\nu-N_0) + \underset{k\in\mcT_0}{\argmax}~\mathfrak W_{d,\widehat\nu}(k).
\end{equation}
where \(\argmax\) denotes the smallest maximizer in case of ties.
Thus, the online procedure is obtained by repeatedly applying the offline DAK scan over moving windows of fixed size, while using a single pre-change reference block for variance calibration.

\subsection{Behavior under $\mathbf{H}_{0}$ and ARL calibration}
\label{subsec:online-arl}

We first characterize the behavior of the sequential procedure under $\mathbf{H}_{0,d}$. 
For each fixed $s\ge N_0$, the window-level statistic $\mathfrak W_{d,s}(k)$ is computed from $N_0$ consecutive observations. Since the calibration sample $\mcC_{N_0}$ is independent of the data stream, the fixed-sample multivariate HDLSS CLT from \Cref{thm:multivariate-clt} applies to each window with $N$ replaced by $N_0$:
\[
\sqrt d\,(\mathfrak W_{d,s}(k))_{k\in\mcT_0}
\ \stackrel{\mathscr{L}}{\longrightarrow}\
\mcN\!\left(\mathbf 0,\,
\sigma_{\mathrm{long}}^2(N_0)\,\mathbf K(N_0)\right),
\qquad d\to\infty.
\]

Fix a threshold $c\in\mathbb R$. Since the online statistic is calibrated using the random reference block $\mcC_{N_0}$, it is natural to work conditionally on $\mcC_{N_0}$. Define the conditional stopping time and the corresponding conditional average run length by
\begin{equation}
\label{eq:conditional-arl}
\widehat\nu_d(c)
:=\inf\{s\ge N_0:\mcM_d(s)>c\},
\qquad
\mathrm{ARL}_{d}(c\,|\,\mcC_{N_0})
:=
\E_{d,\infty}\!\big[\widehat\nu_d(c)\mid \mcC_{N_0}\big].
\end{equation}
Also, define the conditional one-step exceedance probability
\begin{equation}
\label{eq:qd-conditional}
q_d(c\,|\,\mcC_{N_0})
:=
\P_{d,\infty}\!\big(\mcM_d(N_0)>c \,\big|\, \mcC_{N_0}\big).
\end{equation}

Conditional on $\mcC_{N_0}$, the denominator $\widehat\sigma_{\mathrm{long}}(N_0;\mcC_{N_0})$ is fixed. Hence, under $\mathbf{H}_{0,d}$, the sequence $\{\mcM_d(s)\}_{s\ge N_0}$ is stationary and $(N_0-1)$-dependent in time: if $|s-t|\ge N_0$, then the windows $\mcW_s$ and $\mcW_t$ are disjoint and therefore $\mcM_d(s)\indep \mcM_d(t)$, conditionally on $\mcC_{N_0}$.

\begin{lemma}[Conditional ARL bounds under window overlap]
\label{lem:online-arl-bounds}
Assume that $\mathbf{H}_{0,d}$ holds. Fix a threshold $c\in\mathbb R$. Then, almost surely with respect to the calibration sample $\mcC_{N_0}$,
\begin{equation}
\label{eq:arl-sandwich-conditional}
\frac{1}{4\,q_d(c\,|\,\mcC_{N_0})}-\frac{1}{2}
\ \le\
\mathrm{ARL}_{d}(c\,|\,\mcC_{N_0})
\ \le\
\frac{N_0}{q_d(c\,|\,\mcC_{N_0})}.
\end{equation}
In particular, since $N_0$ is fixed,
\(
~\mathrm{ARL}_{d}(c\,|\,\mcC_{N_0})
\asymp
q_d(c\,|\,\mcC_{N_0})^{-1}.
\)
\end{lemma}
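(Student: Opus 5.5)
Throughout, we work conditionally on $\mcC_{N_0}$. Write $q:=q_d(c\,|\,\mcC_{N_0})$ and let $E_s:=\{\mcM_d(s)>c\}$ for $s\ge N_0$. Conditionally, $\{\mathbf 1_{E_s}\}_{s\ge N_0}$ is a stationary, $(N_0-1)$-dependent indicator sequence with $\P(E_s\mid\mcC_{N_0})=q$. Write $\widehat\nu:=\widehat\nu_d(c)$. If $q=0$, the stopping time is a.s.\ infinite, and both bounds are trivially consistent under the conventions $1/0=\infty$. We therefore assume $q>0$.

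\emph{Upper bound.} Subsample the windows along the disjoint blocks $s_j:=N_0+(j-1)N_0$ for $j\ge1$. These windows do not overlap, so the events $\{E_{s_j}\}_{j\ge1}$ are conditionally i.i.d.\ with probability $q$. Let $J:=\inf\{j:E_{s_j}\text{ occurs}\}$. Then $J$ is geometric with mean $1/q$, and $\widehat\nu\le s_J=N_0 J$. Taking conditional expectations gives
\[
\mathrm{ARL}_d(c\,|\,\mcC_{N_0})\le N_0/q .
\]

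\emph{Lower bound.} Use $\E[\widehat\nu]\ge m\,\P(\widehat\nu> N_0-1+m)$ for any integer $m\ge 0$. This holds because $\widehat\nu\ge N_0\ge m$ on the event $\{\widehat\nu\le N_0-1+m\}$ whenever $m\le N_0$; in general one uses $\widehat\nu\ge N_0$ together with $\widehat\nu>N_0-1+m$ on the complement. A union bound gives
\[
\P(\widehat\nu\le N_0-1+m)\le \sum_{s=N_0}^{N_0-1+m}\P(E_s)=mq,
\]
so
\[
\E[\widehat\nu]\ge N_0(1-mq)\vee\big((N_0-1+m)(1-mq)\big).
\]
Choose $m=\lfloor 1/(2q)\rfloor$. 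Then $mq\le 1/2$ and $m\ge 1/(2q)-1$, which yields
\[
\E[\widehat\nu]\ge \tfrac12\bigl(N_0-1+\tfrac{1}{2q}-1\bigr)\ge \frac{1}{4q}-\frac12,
\]
using $N_0\ge 4$ (indeed $N_0\ge 1$ suffices). This is the stated lower bound. Notice that the lower bound needs no independence at all, since it uses only the union bound and stationarity of the marginal probabilities $\P(E_s\mid\mcC_{N_0})=q$.

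The only delicate point is justifying the conditional structure. Given $\mcC_{N_0}$, the denominator of $\mcM_d(s)$ is a constant, and $\mcC_{N_0}$ is independent of the stream. Under $\mathbf H_{0,d}$ the stream is i.i.d., so $\mathcal L(\mcM_d(s)\mid\mcC_{N_0})$ does not depend on $s$. Moreover, windows at lag at least $N_0$ share no observations, which gives conditional independence. Both properties hold for a regular version of the conditional law, and hence almost surely in $\mcC_{N_0}$. Finally, $N_0$ is fixed, so the two bounds together give $\mathrm{ARL}\asymp q^{-1}$ once $q\le 1/4$, say; in the regime of interest, $q\to 0$. I expect the main care to be needed in the lower-bound bookkeeping with the floor function, not in any probabilistic step.
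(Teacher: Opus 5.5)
Your proof is correct and follows the paper's route: the upper bound comes from the disjoint windows ending at $jN_0$ and a geometric index, and the lower bound comes from the union bound $\P(\widehat\nu\le N_0-1+m)\le mq$ with $m=\lfloor 1/(2q)\rfloor$, using only stationarity of the marginals. The one difference is that you go from the tail bound to the expectation via Markov's inequality rather than the paper's tail-sum formula $\E[T]=\sum_{m\ge0}\P(T>m)$. Your stated justification at that step is garbled, but the inequality you actually use follows at once from $\widehat\nu\ge N_0+m$ on $\{\widehat\nu>N_0-1+m\}$, and it even gives the slightly stronger bound $\frac{1}{4q}+\frac{N_0-2}{2}$.
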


We specialize to the Gaussian calibration threshold \(c=c_{\alpha,N_0}\), defined as the \((1-\alpha)\)-quantile of
\(
\mcZ_{\max}(\mcT_0):=\max_{k\in\mcT_0}\mcZ_k
\), where
\(
\mcZ=(\mcZ_k)_{k\in\mcT_0}\sim\mcN(\mathbf 0,\mathbf K(N_0)).
\)

Since the calibration sample \(\mcC_{N_0}\) is independent of the data stream, and since \(\widehat\sigma_{\mathrm{long}}(N_0;\mcC_{N_0})\) is consistent for \(\sigma_{\mathrm{long}}(N_0)\) under \(\mathbf{H}_{0,d}\), the studentized window statistic satisfies
\(
\mcM_d(s)
\ \stackrel{\mathscr{L}}{\longrightarrow}\
\mcZ_{\max}(\mcT_0)
\)
as $d\to\infty$, for each fixed \(s\ge N_0\). Thus,
\begin{equation}
\label{eq:qd-alpha-conditional}
q_d(c_{\alpha,N_0} \mid \mcC_{N_0})
\ \stackrel{\bbP}{\longrightarrow}\
\alpha,
\qquad\text{as~}\, d\to\infty.
\end{equation}

\begin{remark}[Role of the calibration level \(\alpha\)]
\label{rem:alpha-arl}
By \Cref{lem:online-arl-bounds}, the conditional ARL is controlled by the conditional one-step exceedance probability \(q_d(c_{\alpha,N_0}\mid \mcC_{N_0})\). Therefore, \eqref{eq:qd-alpha-conditional} implies that
\[
\mathrm{ARL}_{d}(c_{\alpha,N_0} \mid \mcC_{N_0})
=
\Theta_{\bbP}\!\left(\frac{1}{\alpha}\right),
\qquad d\to\infty.
\]
More explicitly, as $d \to \infty$, for every fixed \(\varepsilon\in(0,\alpha)\),
\[
\P\!\left(
\frac{1}{4(\alpha+\varepsilon)}-\frac12
\le
\mathrm{ARL}_{d}(c_{\alpha,N_0} \mid \mcC_{N_0})
\le
\frac{N_0}{\alpha-\varepsilon}
\right)
\longrightarrow 1.
\]
Thus, to target a desired ARL level \(\gamma\), it suffices to choose \(\alpha\) of order \(1/\gamma\). In this sense, although \(\alpha\) is introduced through a fixed-window Gaussian limit, it serves as a design parameter roughly controlling the long-run false-alarm frequency of the sequential procedure. An explicit renewal-type analysis (see, e.g., \citet{SiegmundYakir2000tailprob}) of the stopping time distribution is left as future work.
\end{remark}

\subsection{Expected detection delay under $\mathbf{H}_{1}$}
\label{subsec:online-edd}

We now study the delay of the sequential procedure following the change. Because alarms at or before the change-point are false alarms, it is natural to measure detection delay conditionally on no false alarm before the change. Moreover, since the same calibration estimate \(\widehat\sigma_{\mathrm{long}}(N_0;\mcC_{N_0})\) is reused throughout monitoring, all probabilities and expectations in this subsection are taken conditionally on the fixed calibration sample \(\mcC_{N_0}\); to avoid cumbersome notation, we suppress this conditioning.

Let \(\widehat\nu_d:=\widehat\nu_d(c_{\alpha,N_0})\) denote the stopping time in \eqref{eq:online-stop}, and let \(\nu\) be the deterministic change-point in \eqref{eq:single-cp-online}. We define the conditional expected detection delay by
\begin{equation}
\label{eq:cedd-def}
\operatorname{CEDD}_{d,\nu}
:=
\E_{d,\nu}\!\big[\,\widehat\nu_d-\nu \,\big|\, \widehat\nu_d>\nu\big].
\end{equation}
We also define the Pollak-type worst-case conditional EDD by
\begin{equation}
\label{eq:pcedd-def}
\operatorname{CEDD}_{d}^{\mathbf{(P)}}
:=
\sup_{\nu\in\mathbb N} \; \operatorname{CEDD}_{d,\nu} = \sup_{\nu\in\mathbb N} \; \E_{d,\nu}\!\big[\,\widehat\nu_d-\nu \,\big|\, \widehat\nu_d>\nu\big].
\end{equation}

To bound these quantities, we separate the mixed-window phase from the fully post-change phase. Define $\mcJ_\nu:=\{\nu+1,\nu+2,\dots,\nu+N_0-1\}$. These are the possible mixed-window indices relative to the change-point; among monitored times, the event \(\{\widehat\nu_d\in\mcJ_\nu\}\) corresponds to stopping during the mixed-window phase. Windows with index \(s\ge \nu+N_0\) are entirely post-change. We therefore introduce the following two quantities:
\begin{align}
\label{eq:pi-qG-def}
\pi_{d,\nu}
:=
\bbP_{d,\nu}\!\left(\widehat\nu_d\in\mcJ_\nu \,\middle|\, \widehat\nu_d>\nu\right),\quad~\text{and}\quad~
q_d^{(G)}
:=
\bbP_{d,\nu}\!\left(\mcM_d(\nu+2N_0-1)>c_{\alpha,N_0}\right). \notag
\end{align}

The quantity \(\pi_{d,\nu}\) is the conditional probability of stopping during the mixed-window phase. The quantity \(q_d^{(G)}\) is the one-window exceedance probability after the mixed-window phase has been cleared by one full window length. By stationarity of the post-change regime, \(q_d^{(G)}\) does not depend on \(\nu\).

\begin{lemma}[Conditional EDD and worst-case Pollak EDD bounds]
\label{lem:online-cedd}
Assume the online procedure is calibrated as in Section~\ref{sec:online}, and
suppose \(q_d^{(G)}>0\). Then, for every \(\nu\in\mathbb N\),
\begin{equation}
\label{eq:cedd-bound}
\operatorname{CEDD}_{d,\nu}
\le
(N_0-1)
+
\bigl(1-\pi_{d,\nu}\bigr)\frac{N_0}{q_d^{(G)}}.
\end{equation}
Consequently,
\begin{equation}
\label{eq:pcedd-bound-pi}
\operatorname{CEDD}_{d}^{\mathbf{(P)}}
\le
(N_0-1)
+
\left(1-\inf_{\nu\in\mathbb N}\pi_{d,\nu}\right)
\frac{N_0}{q_d^{(G)}}.
\end{equation}
\end{lemma}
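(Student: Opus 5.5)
The plan is to split the delay according to whether the alarm falls in the mixed-window phase $\mcJ_\nu$ or afterwards, and to bound the fully post-change phase by a block (geometric-trial) argument. Throughout we work under $\bbP_{d,\nu}$, conditionally on $\mcC_{N_0}$ as the paper stipulates, and on the event $E_\nu:=\{\widehat\nu_d>\nu\}$. We write
\[
\E_{d,\nu}[\widehat\nu_d-\nu\mid E_\nu]
=\E[(\widehat\nu_d-\nu)\mathbbm 1\{\widehat\nu_d\in\mcJ_\nu\}\mid E_\nu]
+\E[(\widehat\nu_d-\nu)\mathbbm 1\{\widehat\nu_d\ge \nu+N_0\}\mid E_\nu].
\]
On $\{\widehat\nu_d\in\mcJ_\nu\}$ the delay is at most $N_0-1$. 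So the first term is at most $(N_0-1)\pi_{d,\nu}$, and it remains to bound the second term by $(1-\pi_{d,\nu})\{(N_0-1)+N_0/q_d^{(G)}\}$. Adding the two pieces gives exactly \eqref{eq:cedd-bound}. Taking $\sup_\nu$ and noting that $1-\pi_{d,\nu}\le 1-\inf_\nu\pi_{d,\nu}$ then yields \eqref{eq:pcedd-bound-pi}.

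For the second term we condition further on $E'_\nu:=\{\widehat\nu_d\ge\nu+N_0\}\cap E_\nu$, which has conditional probability $1-\pi_{d,\nu}$. On $E'_\nu$ we write $\widehat\nu_d-\nu=(N_0-1)+(\widehat\nu_d-(\nu+N_0-1))$, so the task reduces to showing
\[
\E[\widehat\nu_d-(\nu+N_0-1)\mid E'_\nu]\le N_0/q_d^{(G)}.
\]
The event $E'_\nu$ is measurable with respect to the observations up to time $\nu+N_0-1$. Consider the windows indexed by $s_j:=\nu+N_0-1+jN_0$ for $j\ge1$. For each $j$, the window $\mcW_{s_j}$ consists of $Z_{\nu+(j-1)N_0+N_0},\dots,Z_{s_j}$, all of which are post-change. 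These windows are pairwise disjoint, and each is disjoint from the data generating $E'_\nu$. Conditionally on $\mcC_{N_0}$ the denominator is fixed, so the indicators $\mathbbm 1\{\mcM_d(s_j)>c_{\alpha,N_0}\}$ are i.i.d.\ Bernoulli with success probability equal to that of the window at $\nu+2N_0-1$, namely $q_d^{(G)}$. This holds by post-change stationarity, and the indicators are independent of $E'_\nu$.

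Hence $\widehat\nu_d\le s_J$, where $J$ is the first success index and is Geometric$(q_d^{(G)})$ independent of $E'_\nu$. This gives $\E[\widehat\nu_d-(\nu+N_0-1)\mid E'_\nu]\le N_0\,\E J=N_0/q_d^{(G)}$.

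The main point requiring care is exactly this independence and identical-distribution claim. Three things must be checked. First, the window at $s_1=\nu+2N_0-1$ matches the definition of $q_d^{(G)}$. Second, earlier post-change windows in $\{\nu+N_0,\dots,\nu+2N_0-2\}$ overlap $Z_{\nu+1},\dots,Z_{\nu+N_0-1}$; they can only make the alarm earlier, which only helps the upper bound, so we can discard them. Third, because calibration is conditioned on, $\widehat\sigma_{\mathrm{long}}(N_0;\mcC_{N_0})$ does not couple disjoint windows. The case $\nu<N_0$, where the monitoring start shifts the indices, is handled identically, since the block argument only uses windows with index $\ge\nu+2N_0-1$.
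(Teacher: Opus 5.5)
Your proposal is correct and follows essentially the same route as the paper. It splits $\{\widehat\nu_d>\nu\}$ into the mixed-window phase and the event $\{\widehat\nu_d\ge\nu+N_0\}$, uses that the latter depends only on $Z_1,\dots,Z_{\nu+N_0-1}$ and is hence independent of the disjoint fully post-change windows from index $\nu+2N_0-1$ onward, and applies $N_0$-spaced geometric thinning to get the $N_0/q_d^{(G)}$ term. The only difference is cosmetic: the paper runs the ARL lemma's argument on a shifted post-change process, while you write the Bernoulli trials out directly.
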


\begin{remark}[Interpretation of the bound]
\label{rem:cedd-interpretation}
The bound in \eqref{eq:cedd-bound} separates the delay into two parts. The term \(N_0-1\) is the largest possible delay if the procedure stops during the mixed-window phase. If this phase is missed, which occurs with conditional probability \(1-\pi_{d,\nu}\), the remaining delay is controlled by the post-change exceedance probability \(q_d^{(G)}\).
\end{remark}

We next specialize the bound using the HDLSS limit for a fully post-change window. Suppose that, for a fully post-change window, the analogue of \Cref{ass:mixing,ass:nondeg-longrun} holds under \(G_d\), with \(N\) replaced by \(N_0\), and with positive long-run variance factor \(\sigma_*^2(N_0)\). Define
\begin{equation}
\label{eq:qG-limit-def}
q_\infty^{(G)}
:=
\bbP\!\left(
\max_{k\in\mcT_0}\mcZ_k
>
\frac{\sigma_{\mathrm{long}}(N_0)}{\sigma_*(N_0)}
\,c_{\alpha,N_0}
\right),
\quad~\text{where~}~
\mcZ=(\mcZ_k)_{k\in\mcT_0}\sim
\mcN(\mathbf 0,\mathbf K(N_0)).
\end{equation}

\begin{corollary}[Pollak EDD bounds under a post-change HDLSS CLT]
\label{cor:online-pollak-cedd}
Assume the conditions of \Cref{lem:online-cedd}. Suppose that the reference calibration block satisfies the conditions of \Cref{lem:sigmalong-consistency} with \(N\) replaced by \(N_0\), and that the fully post-change HDLSS conditions stated above hold.
%
Then,
\(
q_d^{(G)}
\stackrel{\bbP}{\longrightarrow}
q_\infty^{(G)}\in(0,1)\,
\)
as $d \to \infty$,
and consequently,
\begin{equation}
\label{eq:pcedd-sandwich}
N_0-1
\;\le\;
\operatorname{CEDD}_{d}^{\mathbf{(P)}}
\;\le\;
N_0\left(1+\frac{1}{q_\infty^{(G)}}\right)-1
+
o_{\bbP}(1).
\end{equation}
\end{corollary}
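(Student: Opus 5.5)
The argument has two parts. First we identify the limit of $q_d^{(G)}$. Then we substitute that limit into the bound \eqref{eq:pcedd-bound-pi} of \Cref{lem:online-cedd}.

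\emph{Limit of $q_d^{(G)}$.} The window $\mcW_{\nu+2N_0-1}$ consists entirely of post-change observations drawn i.i.d.\ from $G_d$. It is also independent of the calibration block $\mcC_{N_0}$. Inside this window there is no change, so $G_d$ plays the role of the null law. Applying \Cref{thm:multivariate-clt} under the assumed post-change analogues of \Cref{ass:mixing,ass:nondeg-longrun} gives
\[
\sqrt d\,\bigl(\mathfrak W_{d,\nu+2N_0-1}(k)\bigr)_{k\in\mcT_0}\ \stackrel{\mathscr L}{\longrightarrow}\ \sigma_*(N_0)\,\mcZ,\qquad \mcZ\sim\mcN(\mathbf 0,\mathbf K(N_0)).
\]
For the denominator, \Cref{lem:sigmalong-consistency} applied to $\mcC_{N_0}$ (with $N$ replaced by $N_0$) yields $\widehat\sigma_{\mathrm{long}}(N_0;\mcC_{N_0})\stackrel{\mathbb P}{\to}\sigma_{\mathrm{long}}(N_0)$. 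Conditional on $\mcC_{N_0}$, write $\hat s$ for this fixed denominator. Then
\[
q_d^{(G)}=\mathbb P\Bigl(\max_k \sqrt d\,\mathfrak W_{d,\nu+2N_0-1}(k) > \hat s\, c_{\alpha,N_0}\Bigr).
\]
The limit law of the maximum is that of $\max_k\mcZ_k$, which is a maximum of nondegenerate Gaussians and hence has a continuous CDF. The CLT therefore holds uniformly in the threshold, by Pólya's theorem. This gives
\[
\sup_{x}\Bigl|\mathbb P\bigl(\max_k\sqrt d\,\mathfrak W(k)>x\bigr)-\mathbb P\bigl(\sigma_*\max_k\mcZ_k>x\bigr)\Bigr|\to0.
\]
Plugging in $x=\hat s\,c_{\alpha,N_0}$ and using continuity together with $\hat s\to\sigma_{\mathrm{long}}(N_0)$ in probability, we obtain $q_d^{(G)}\stackrel{\mathbb P}{\to}q_\infty^{(G)}$.

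\emph{Showing $q_\infty^{(G)}\in(0,1)$.} The covariance $\mathbf K(N_0)$ has positive diagonal, so $\max_k\mcZ_k$ is a finite maximum of Gaussian variables with positive variance. Such a maximum has full support on $\mathbb R$. Hence the probability of exceeding any finite threshold lies strictly between $0$ and $1$. This is the one step requiring care: the uniform-in-threshold convergence combined with a random threshold. That is exactly why we invoke Pólya's theorem instead of pointwise weak convergence.

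\emph{The EDD bounds.} The upper bound follows from \eqref{eq:pcedd-bound-pi} using $1-\inf_\nu\pi_{d,\nu}\le1$:
\[
\operatorname{CEDD}^{\mathbf{(P)}}_d\le N_0-1+\frac{N_0}{q_d^{(G)}}.
\]
Since $q_\infty^{(G)}>0$, the continuous mapping theorem gives $N_0/q_d^{(G)}=N_0/q_\infty^{(G)}+o_{\mathbb P}(1)$. This yields the right-hand side of \eqref{eq:pcedd-sandwich}.

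For the lower bound $N_0-1$, we note a structural restriction. Within a mixed window, a split at $k\in\mcT_0$ leaves at least two points on each side. A careful argument would use this to show that the post-change points cannot produce an alarm before the window contains enough of them. The simpler route is to take $\nu\to\infty$ inside the supremum, so that stopping essentially requires entering the post-change regime. If neither argument closes cleanly, I would reinterpret the stated lower bound as the trivial part of the sandwich, which is attained asymptotically once one accounts for windows straddling the change.
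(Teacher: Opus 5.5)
\textbf{Parts that are correct.} Your identification of the limit of $q_d^{(G)}$ and your upper bound in \eqref{eq:pcedd-sandwich} are correct. They follow the paper's route: treat a fully post-change window as a null window for $G_d$, apply the CLT together with consistency of the calibration estimator, then use \eqref{eq:pcedd-bound-pi} with $1-\inf_\nu\pi_{d,\nu}\le 1$ and continuous mapping. Your P\'olya-type step, carried out conditionally on $\mcC_{N_0}$ and using independence of the stream from the calibration block, is tidier than the paper's unconditional Slutsky argument. This matters because $q_d^{(G)}$ is a conditional probability.

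One justification needs repair. A positive diagonal of $\mathbf K(N_0)$ does not by itself give $\max_k\mcZ_k$ full support on $\mathbb R$: for the pair $(Z,-Z)$ the maximum is $|Z|\ge 0$. What works is that $\mathbf K(N_0)$ is positive definite. For $t\le t'$ one has $\mathbf K_{t,t'}\propto b(t)b(t')\,r(t)$, with $b(t)=1/\{t(t-1)\}$ and $r(t)=t(t-1)/\{(N_0-t)(N_0-t-1)\}$ strictly increasing. This is a rescaled Brownian covariance at distinct times, hence positive definite.

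\textbf{The gap: the lower bound.} You never prove $N_0-1\le\operatorname{CEDD}^{\mathbf{(P)}}_d$, and the routes you sketch point the wrong way.

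First, it is not true that post-change observations cannot trigger an alarm until enough of them have entered the window. $\mcM_d(s)$ can exceed $c_{\alpha,N_0}$ for windows containing only one or two post-change points. Even a ``two points on each side'' restriction would only give a delay of order $2$, not $N_0-1$.

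Second, letting $\nu\to\infty$ does not push the conditional delay above $N_0-1$. The paper's own simulations in Table~\ref{tab:online_arl_edd} report $\operatorname{CEDD}_{d,\nu}\approx 2$ at $\nu=50$ with $N_0=10$, well below $N_0-1=9$.

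The missing idea is that the Pollak supremum runs over all $\nu\in\mathbb N$, and the lower bound comes entirely from very early change-points. Monitoring starts only once a full window is available, so $\widehat\nu_d\ge N_0$ surely. Take $\nu=1$. Then the conditioning event $\{\widehat\nu_d>1\}$ has probability one, and $\widehat\nu_d-1\ge N_0-1$ pointwise. Hence $\operatorname{CEDD}^{\mathbf{(P)}}_d\ge\operatorname{CEDD}_{d,1}\ge N_0-1$, deterministically and for every $d$.

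Your fallback of reinterpreting the bound as something ``attained asymptotically'' is neither needed nor a proof of the stated inequality.
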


\begin{remark}[Interpretation of the Pollak bound]
\label{rem:pcedd-interpretation}
The lower bound in \eqref{eq:pcedd-sandwich} is deterministic: since \(\widehat\nu_d\ge N_0\), taking the early change-point \(\nu=1\)
gives
\[
\operatorname{CEDD}_{d}^{\mathbf{(P)}}
\ge
\E_{d,1}\!\left[\widehat\nu_d-1 \,\middle|\, \widehat\nu_d>1\right]
\ge
N_0-1.
\]
The upper bound shows that the worst-case conditional delay remains \(O_{\bbP}(1)\) in the stream length, with the residual waiting time controlled by \(q_\infty^{(G)}\). The bound, however, is conservative; sharpening it would require uniform control of the mixed-window phase after conditioning on no earlier alarm.
\end{remark}


\section{Numerical experiments}
\label{sec:experiments}

We assess the empirical performance of the proposed offline and online methods in both synthetic and real-data settings. We focus on marginal distributional changes, which are the target alternatives of the proposed discrepancy, and include heavy-tailed examples for which moment-based methods may be unreliable. To this end, we organize the numerical results into two parts: offline and online experiments.

\subsection{Simulation experiments on offline detection}
\label{subsec:simulation}

For the offline simulations, we compare our method against several existing approaches representing different methodological paradigms. These include \emph{E-Divisive} and \emph{E-CP3O} \citep{MattesonJames2014}, which are energy-statistic-based nonparametric changepoint procedures; \emph{Kernel CPA}, the kernel changepoint method of \citep{ArlotEtAL2019KCPA}; \emph{MMD-$\mcN$} and \emph{MMD-$\mcE$}, which use MMD with a Gaussian kernel and the energy kernel with $\ell_2$ distance \citep{sejdinovic2013energy}, respectively; \emph{HDD-DM}, the high-dimensional mean and covariance shift detection procedure of \citet{Drikvandi2025changepoint}; and a \emph{sliced Wasserstein} distance based on sliced optimal transport \citep{NadjahiEtAl2020SlicedWass}. Collectively, we cover a broad range of approaches, including energy-based, kernel-based, and optimal-transport-based methods. In particular, we include HDD-DM as a recent state-of-the-art method designed specifically for HDLSS data.

We first study the localization accuracy of the proposed offline DAK scan under single-change alternatives. Throughout this subsection, the sequence length is fixed at \(N=40\), the true change-point is \(\tau=15\), and the dimension is varied over \(d\in\{200,1000,5000\}\). For each simulated sequence, every method returns an estimated change-point \(\widehat\tau_d\), and performance is summarized by the empirical distribution of the absolute localization error $|\widehat\tau_d-\tau|$.
Specifically, Table~\ref{tab:cp_accuracy_main} reports the proportions of replications for which this deviation equals \(0,1,2\), or is at least \(3\), computed based on $1000$ independent replications.

\input{tables/cp-loc-main}

We report five representative alternatives: two heavy-tailed product examples involving Cauchy location-scale changes, a Dirichlet example on the simplex, a sparse Gaussian mean-shift example, and a mixed Cauchy--Gaussian example. These settings cover dense, heavy-tailed marginal shifts; scale changes without finite moments; constrained non-Gaussian data; sparse marginal signals; and heterogeneous coordinate types. 

The five representative examples reported are as follows. Fix \(N=40\) and \(\tau=15\) in the original testing problem \eqref{eq:single-cp}. We take \(Z_1,\ldots,Z_\tau\sim F_d\) and \(Z_{\tau+1},\ldots,Z_N\sim G_d\), for $d \in \{200, 1000, 5000\}$.

\begin{enumerate}
    \item \textit{Cauchy location change:} \(F_d=\mathrm{Cauchy}(0,1)^{\otimes d}\) and \(G_d=\mathrm{Cauchy}(1,1)^{\otimes d}\).

    \item \textit{Cauchy scale change:} \(F_d=\mathrm{Cauchy}(1,1)^{\otimes d}\) and \(G_d=\mathrm{Cauchy}(1,2)^{\otimes d}\).

    \item \textit{Dirichlet change:} \(F_d=\mathrm{Dirichlet}(\mathbf 1_d)\) and \(G_d=\mathrm{Dirichlet}(0.1\cdot\mathbf 1_d)\).

    \item \textit{Sparse Gaussian mean change:} \(F_d=\mathcal N(0,I_d)\) and \(G_d=\mathcal N(\mu_d,I_d)\), where \(\mu_{d,k}=1\) for \(1\le k\le s_d\), with \(s_d=\lfloor 0.05d\rfloor\); \(\mu_{d,k}=0\) otherwise.

    \item \textit{Cauchy--Gaussian mixture change:} Let \(d_2=\lfloor d/3\rfloor\) and \(d_1 = \lceil 2d/3\rceil\), so that $d_1 +d_2 = d$. Then, \(F_d=\mathcal N(0,1)^{\otimes d_1}\otimes \mathrm{Cauchy}(0,1)^{\otimes d_2}\) and \(G_d=\mathcal N(1,1)^{\otimes d_1}\otimes \mathrm{Cauchy}(1,1)^{\otimes d_2}\).
\end{enumerate}

The results in Table~\ref{tab:cp_accuracy_main} show that DAK Scan accurately localizes the change-point across the representative marginal-shift settings considered above. In the \emph{Cauchy Location}, \emph{Dirichlet}, and \emph{Cauchy--Gaussian Mix} examples, DAK achieves essentially exact localization across all three choices of the dimension $d$. These examples highlight the main robustness feature of the proposed statistic: localization remains reliable even when some or all coordinates are heavy-tailed or their moments may fail to exist.

The \emph{Cauchy Scale} and \emph{Gaussian Sparse Mean} examples illustrate the HDLSS accumulation effect more clearly. For the Cauchy scale change, the exact localization probability of DAK increases from \(0.13\) at \(d=200\), to \(0.68\) at \(d=1000\), and to \(0.99\) at \(d=5000\). For the sparse Gaussian mean change, where only \(5\%\) of the coordinates carry signal, the corresponding probabilities increase from \(0.03\), to \(0.33\), and then to \(0.90\). Thus, even when the per-coordinate signal is weak or sparse, the proposed scan becomes substantially more accurate as the dimension grows, consistent with the fixed-\(N\), large-\(d\) theory.

The strong separation between DAK and the competing procedures in several examples should be interpreted in light of the HDLSS regime considered here. The heavy-tailed settings, particularly the \emph{Cauchy location}, \emph{Cauchy scale}, and \emph{Cauchy--Gaussian mixture} examples, deliberately stress regimes in which $\ell_2$-distance-based, moment-based, or tuning-sensitive procedures can become unstable. HDD-DM, which is designed for HDLSS data, performs well in some settings but is less uniformly accurate across the heavy-tailed and sparse alternatives. In contrast, DAK is based on bounded coordinate-wise angular comparisons and therefore remains well-defined without requiring finite marginal moments. 

Its performance in these examples reflects the main design principle of the method: in fixed-\(N\), large-\(d\) problems, stable bounded marginal signals can be accumulated across coordinates even when classical moment-based summaries are unreliable. To check that the strong performance above is not solely an artifact of heavy-tailed alternatives, \Cref{tab:cp_accuracy_appendix} in Appendix~\ref{app-sec:additional-simulations} reports additional finite-moment examples that are more favorable to standard distance- and kernel-based procedures. Among the competing methods, HDD-DM is particularly relevant, as it performs strongly on several moment-friendly examples, but our proposed DAK scan remains broadly competitive across most settings.

\subsection{Experiments on sequential (online) monitoring}

\subsubsection{Simulation studies under a single change-point}

We next evaluate the online DAK Scan procedure on synthetic streams generated from the same five alternatives used in the offline experiments. Throughout, the sliding-window length is \(N_0=10\), the change occurs at \(\nu=50\), and the calibration level is \(\alpha=0.002\). Thus, each scan uses only ten observations at a time, with admissible splits \(k\in\mcT_0 := \{2,\ldots,N_0-2\}\).

We report only the online results for the proposed method. Unlike in the offline setting, where several nonparametric procedures can be directly compared, few sequential methods are tailored to the high-dimensional regime considered here. Sliding-window adaptations of offline methods are possible, but require separate calibration choices. We therefore focus on the empirical behavior of the proposed stopping rule.

Under \(\mathbf{H}_{0,d}\), all observations are generated from \(F_d\), and we report the empirical monitoring average run length (ARL). Under \(\mathbf{H}_{1,d}\), \(Z_1,\ldots,Z_\nu\sim F_d\) and \(Z_{\nu+1},Z_{\nu+2},\ldots\sim G_d\). We report the false-alarm proportion before \(\nu\), the empirical conditional expected detection delay (CEDD), and the non-detection proportion by the terminal horizon.

\input{tables/online-ARL-EDD-proposed}

Table~\ref{tab:online_arl_edd} shows that the online calibration produces the intended null behavior. The choice \(\alpha=0.002\) is motivated by the \(1/\alpha\) scaling of the null ARL, corresponding to a nominal target near \(500\); empirically, the ARLs across all examples and dimensions remain close to this target, lying roughly between \(490\) and \(550\). The false-alarm proportions by time \(\nu=50\) range from \(0.07\) to \(0.12\), reflecting repeated monitoring over overlapping windows.

Detection is rapid for strong marginal changes. For the Dirichlet and Cauchy--Gaussian mixture examples, the CEDD is essentially \(2\) across all dimensions. The Cauchy location example exhibits a clear HDLSS effect, with CEDD decreasing from \(42.73\) at \(d=200\) to approximately \(2\) for \(d=1000\) and \(d=5000\). The Cauchy scale and sparse Gaussian mean examples are harder under such short windows, but still show improvement with dimension. Overall, the results support the main operational message: the proposed angular-kernel scan can be run with very short windows while maintaining long null run lengths and delivering a small delay when the aggregated marginal signal is sufficiently strong.

\subsubsection{Multiple change-point detection in solar flare imaging: An illustrative case study}
\label{sec:solar-flare}

We next apply our online method to a solar-flare image sequence\footnote{The Solar Object Locator for the original data is \texttt{SOL2011-04-30T21-45-49L061C108}.} from the Solar Dynamics Observatory. Each frame is vectorized into a high-dimensional observation, with \(d=67744\) pixels, while the temporal length is only a few hundred frames. 
The example is challenging because the background solar activity is itself nonstationary, with bright regions whose shape changes gradually over time, making simple background-subtraction approaches unreliable.

We use a sliding window of length \(N_0 = 15\). For each window beginning at time \(s\), we compute the scan statistic
\(
   \overline{W}[s]
   =
   \max_{k\in\{2,\ldots,N_0-2\}}
   \mathfrak W_{d,s}(k),
\)
where \(\mathfrak W_{d,s}(k)\) compares the first \(k\) observations in the window with the remaining \(N_0-k\) observations. The scale parameter \(\sigma_{\mathrm{long}}(N_0)\) is estimated from a clean calibration window of length \(N_0\). Specifically, we repeatedly permute the calibration samples, compute the corresponding raw scan vector \(\widetilde{\mathfrak{W}}^{(\pi)}\), and form the whitened vector
\[
   \mcU^{(\pi)}
   =
   \sqrt d \; \mathbf{K}(N_0)^{-1/2}\; \widetilde{\mathfrak{W}}^{(\pi)} ,
\]
where \(\mathbf{K}(N_0)\) is the deterministic null covariance matrix of the scan vector.
Under the null CLT, \(\mcU^{(\pi)}\) approximately has covariance \(\sigma_{\mathrm{long}}^2(N_0) I\). We therefore estimate \(\sigma_{\mathrm{long}}(N_0)\) by the empirical standard deviation of the entries of ~\(\mcU^{(\pi)}\) over the generated permutations. The monitoring threshold is then
\[
   \widehat c_{N_0,\alpha}
   =
   c_{\alpha,N_0}
   \frac{\widehat\sigma_{\mathrm{long}}}{\sqrt d},
\]
where \(c_{\alpha,N_0}\) is the \((1-\alpha)\)-quantile of \(\max_{k\in\mcT_0} \mcZ_k\), with \(\mcZ\sim \mcN(0,\mathbf{K}(N_0))\). Notably, the implementation requires no hyperparameter tuning; $N_0$ and $\alpha$ are fixed design choices rather than data-tuned parameters.

\begin{figure}
    \centering
    \includegraphics[width=\linewidth]{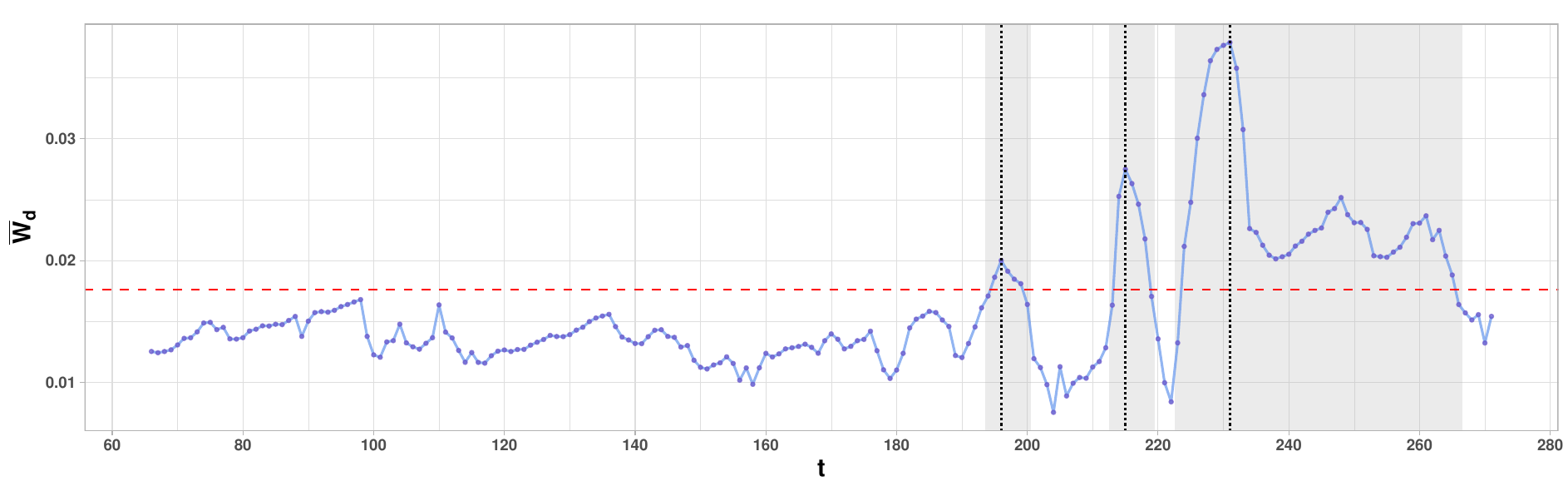}
    \captionsetup{font=footnotesize,justification=centering}
    \caption{
    Solar-flare image sequence. The curve shows the online scan statistic \( \overline{W}_d[s] = \max_{2\le k\le N_0-2} \mathfrak{W}_{d,s}(k)\) with window length \(N_0=15\). The dashed red line is the CLT-calibrated threshold using the permutation-whitened estimate of \(\sigma_{\mathrm{long}}(N_0)\). Shaded regions denote suprathreshold excursion bands, and dotted vertical lines mark the maximizing window within each band. The method identifies three event regions around \(t\approx 198,217,\) and \(227\).
    }
\label{fig:solar-flare}
\end{figure}

Figure~\ref{fig:solar-flare} shows the resulting scan path for $N_0 = 15$ and $\alpha = 2 \times 10^{-4}$. The dashed red line is the CLT-calibrated threshold, the shaded regions are excursion bands, and the dotted vertical lines mark the maximizing window within each detected excursion. The method identifies three distinct event regions, aligned with the visually apparent changes around times \(198\), \(217\), and \(227\). The largest excursion corresponds to the main flare burst, while the two smaller excursions capture weaker transient changes.
This example illustrates that the proposed angular kernel-based scan can detect multiple distributional changes in a very high-dimensional image stream without any bandwidth selection, moment estimation, or distributional modeling.


\section{Discussion}
\label{sec:discussion}

This paper introduced a dimension-averaged angular kernel (DAK) scan for detecting marginal distributional changes in high-dimensional data. The proposed statistic aggregates bounded one-dimensional discrepancies across coordinates, yielding a tuning-free and moment-agnostic procedure for detecting marginal distributional changes. For the HDLSS regime, we established an exact population mean factorization of our discrepancy statistic, characterized the null covariance structure, derived a high-dimensional CLT under cross-coordinate mixing, and used these results to obtain calibration, power, and localization guarantees. We also developed a fixed-window sequential version for high-dimensional streaming data, with ARL calibration and conditional detection delay bounds.

The scope of the method is deliberately targeted. Since the discrepancy is constructed from coordinate-wise marginal information, it is not intended to detect pure dependence changes when all one-dimensional marginals remain unchanged. This is a conscious trade-off: by focusing on marginal distributional shifts via bounded angular comparisons, the method applies to a broad class of heavy-tailed distributions, including settings where some of their marginal moments may not exist, which are beyond the scope of most existing change-point methods.

Several extensions are natural. One direction is to incorporate dependence-aware features while preserving the fixed-\(N\), large-\(d\) asymptotic structure. Another is to develop adaptive or multiscale online windows, which may improve delay in settings where the change duration or signal strength is unknown. Finally, sharper non-asymptotic delay bounds and distributionally robust variants of the angular discrepancy would further broaden the applicability of the framework.


\section*{Acknowledgment}

This work is partially supported by NSF DMS-2220495, CMMI-2112533, and the Coca-Cola Foundation. Jyotishka Ray Choudhury is also partially supported by the Ronald J. and Carol T. Beerman Presidential Fellowship at Georgia Tech.

\begingroup
\setstretch{1.05}
\putbib[references]
\endgroup

\end{bibunit}

\newpage


\appendix
\phantomsection
\addcontentsline{toc}{section}{Appendix}

\BeginAppendixToC

\input{supp}

\EndAppendixToC

\end{document}

%% file: tex-assets/macros.tex
\usepackage[margin=1in]{geometry}
\makeatletter
\def\thm@space@setup{%
  \thm@preskip=0.3\baselineskip
  \thm@postskip=0.3\baselineskip
}
\makeatother

\usepackage[max2,noblocks]{authblk}
\usepackage[symbol]{footmisc}

\usepackage[nottoc]{tocbibind}
\usepackage[toc,page,header]{appendix}
\usepackage{minitoc}

\usepackage[utf8]{inputenc} 
\usepackage[T1]{fontenc}    
\usepackage{amsmath, amsthm, amssymb, amsfonts}
\usepackage{float}
\usepackage{upgreek}
\usepackage{bm} 
\usepackage{scalerel}
\newlength\bshft
\def\fakebold#1{\ThisStyle{\ooalign{$\SavedStyle#1$\cr%
  \kern-\bshft$\SavedStyle#1$\cr%
  \kern\bshft$\SavedStyle#1$}}}

\usepackage{tikz}

\newcommand{\finedotul}[1]{%
  \tikz[baseline=(txt.base)]{
    \node[inner sep=0pt, outer sep=0pt] (txt) {#1};
    \draw[
      line width=0.45pt,
      line cap=round,
      dash pattern=on 0pt off 0.95pt
    ]
    ([yshift=-0.55ex]txt.south west) -- ([yshift=-0.55ex]txt.south east);
  }%
}

\usepackage[skip=2.5mm, indent=0.5cm]{parskip}

\makeatletter
\makeatletter

\renewcommand\section{\@startsection{section}{1}{\z@}%
  {-3.5ex plus -1ex minus -.2ex}
  {2ex plus .3ex}
  {\normalfont\Large\bfseries}}

\renewcommand\subsection{\@startsection{subsection}{2}{\z@}%
  {-3.0ex plus -1ex minus -.2ex}
  {1.5ex plus .2ex}
  {\normalfont\large\bfseries}}

\renewcommand\subsubsection{\@startsection{subsubsection}{3}{\z@}%
  {-2.0ex plus -1ex minus -.2ex}%
  {0.6ex plus .2ex}%
  {\normalfont\normalsize\bfseries}}
\makeatother

\usepackage{needspace}

\usepackage{color}
\usepackage{xcolor}
\PassOptionsToPackage{dvipsnames,svgnames,table}{xcolor}
\definecolor{darkblue}{rgb}{0,0,1}

\usepackage[colorlinks=true,
            urlcolor=purple,
            linkcolor=blue,
            citecolor=blue]{hyperref}
\usepackage{url}
\usepackage{colortbl}
\usepackage{orcidlink}

\makeatletter\def\Hy@Warning#1{}\makeatother 

\usepackage[round,authoryear]{natbib}
\usepackage{aliascnt}
\usepackage{cleveref}
\usepackage{etoolbox}
\usepackage{bibunits}

\defaultbibliographystyle{apalike}
\defaultbibliography{references}

\usepackage{nicefrac}
\usepackage{mathtools}
\usepackage{mathrsfs}
\usepackage{bm, bbm}

\usepackage{graphicx}
\usepackage{caption}
\usepackage{subcaption}
\usepackage{wrapfig}
\usepackage[section]{placeins}

\usepackage{enumitem}
\setlist[itemize]{itemsep=0.25\baselineskip, topsep=0.25\baselineskip, parsep=0pt, partopsep=0pt}
\setlist[enumerate]{itemsep=0.25\baselineskip, topsep=0.25\baselineskip, parsep=0pt, partopsep=0pt}

\usepackage{lipsum}
\usepackage{tcolorbox}
\usepackage{comment}
\usepackage{float}
\usepackage{caption}
\usepackage{array}

\allowdisplaybreaks

\usepackage{booktabs}
\usepackage{longtable}
\usepackage{makecell}
\usepackage{multirow}
\usepackage{tabu}
\usepackage{threeparttable}
\usepackage{threeparttablex}
\usepackage{xltabular}
\usepackage{adjustbox}

\newcolumntype{L}[1]{>{\raggedright\arraybackslash}p{#1}}
\newcolumntype{C}[1]{>{\centering\arraybackslash}p{#1}}

\usepackage{algorithm}
\usepackage{algpseudocode}



\makeatletter
\renewcommand{\fnum@algorithm}{\fname@algorithm}
\makeatother

\newcommand{\mcC}{\mathcal{C}}

\newcommand{\mcE}{\mathcal{E}}
\newcommand{\mcF}{\mathcal{F}}

\newcommand{\mcJ}{\mathcal{J}}

\newcommand{\mcM}{\mathcal{M}}
\newcommand{\mcN}{\mathcal{N}}
\newcommand{\mcO}{\mathcal{O}}

\newcommand{\mcT}{\mathcal{T}}
\newcommand{\mcU}{\mathcal{U}}

\newcommand{\mcW}{\mathcal{W}}
\newcommand{\mcX}{\mathcal{X}}
\newcommand{\mcY}{\mathcal{Y}}
\newcommand{\mcZ}{\mathcal{Z}}

\newcommand{\E}{\mathbb{E}}

\newcommand{\bbN}{\mathbb{N}}

\newcommand{\bbP}{\mathbb{P}}
\renewcommand{\P}{\mathbb{P}}

\newcommand{\bbR}{\mathbb{R}}
\newcommand{\R}{\mathbb{R}}

\newcommand{\bbZ}{\mathbb{Z}}

\newcommand{\var}{\operatorname{var}}
\newcommand{\cov}{\operatorname{cov}}

\newtheorem{example}{Example}[section]

\newtheorem{theorem}{Theorem}[section]

\newtheorem{assumption}{Assumption}
\newtheorem{remark}{\textbf{Remark}}

\newaliascnt{lemma}{theorem}
\newtheorem{lemma}[lemma]{Lemma}
\aliascntresetthe{lemma}
\crefname{lemma}{lemma}{lemmas}
\Crefname{lemma}{Lemma}{Lemmas}

\newaliascnt{proposition}{theorem}
\newtheorem{proposition}[proposition]{Proposition}
\aliascntresetthe{proposition}
\crefname{proposition}{proposition}{propositions}
\Crefname{proposition}{Proposition}{Propositions}

\newaliascnt{corollary}{theorem}
\newtheorem{corollary}[corollary]{Corollary}
\aliascntresetthe{corollary}
\crefname{corollary}{corollary}{corollaries}
\Crefname{corollary}{Corollary}{Corollaries}

\crefname{assumption}{assumption}{assumptions}
\Crefname{assumption}{Assumption}{Assumptions}

\newenvironment{fminipage}%
  {\begin{Sbox}\begin{minipage}}%
  {\end{minipage}\end{Sbox}\fbox{\TheSbox}}

\makeatletter
\newcommand{\ostar}{\mathbin{\mathpalette\make@circled\star}}
\newcommand{\make@circled}[2]{%
	\ooalign{$\m@th#1\smallbigcirc{#1}$\cr\hidewidth$\m@th#1#2$\hidewidth\cr}%
}
\newcommand{\smallbigcirc}[1]{%
	\vcenter{\hbox{\scalebox{0.77778}{$\m@th#1\bigcirc$}}}%
}
\makeatother

\NewDocumentCommand{\tens}{e{_^}}{%
  \mathbin{\mathop{\otimes}\displaylimits
    \IfValueT{#1}{_{#1}}
    \IfValueT{#2}{^{#2}}
  }%
}

\def\argmax{\mathop{\rm argmax}\nolimits}

\newcommand{\indep}{\perp\hspace{-2.3mm}\perp} 

\ifhidecomments
  \newcommand{\yx}[1]{}
  \newcommand{\jrc}[1]{}
  \newcommand{\jrcred}[1]{}
\else
  \newcommand{\red}[1]{\textcolor{red}{#1}}
  \newcommand{\yx}[1]{\medskip\noindent\textbf{{\red{[{YX -- #1}]}}}}

  \newcommand{\blue}[1]{\textcolor{blue}{#1}}
  \newcommand{\jrc}[1]{\medskip\noindent\textbf{{\blue{[{JRC -- #1}]}}}}

  \newcommand{\jrcred}[1]{\medskip\noindent\textbf{{\red{[{JRC -- #1}]}}}}
\fi


%% file: components/ToC.tex
\usepackage{etoolbox}
\usepackage{xparse}

\makeatletter

\newcommand{\appendixtableofcontents}{%
  \section*{Appendix Contents}%
  \@starttoc{apx}%
}

\newcommand{\BeginAppendixToC}{%
  \begingroup
  \let\apx@origaddcontentsline\addcontentsline
  \RenewDocumentCommand{\addcontentsline}{mmm}{%
    \ifstrequal{##1}{toc}{%
      \ifstrequal{##3}{Appendix}{%
        \apx@origaddcontentsline{toc}{##2}{##3}%
      }{%
        \apx@origaddcontentsline{apx}{##2}{##3}%
        \ifstrequal{##2}{section}{%
          \apx@origaddcontentsline{toc}{subsection}{##3}%
        }{%
        }%
      }%
    }{%
      \apx@origaddcontentsline{##1}{##2}{##3}%
    }%
  }%
}

\newcommand{\EndAppendixToC}{\endgroup}

\makeatother

\makeatletter
\newif\if@toc@aftersection
\@toc@aftersectionfalse

\@ifundefined{l@chapter}{}{%
  \pretocmd{\l@chapter}{\global\@toc@aftersectiontrue}{}{}
}

\pretocmd{\l@section}{\global\@toc@aftersectiontrue}{}{}

\pretocmd{\l@subsection}{%
  \if@toc@aftersection
    \addvspace{0.55ex}
    \global\@toc@aftersectionfalse
  \fi
}{}{}

\@ifundefined{l@subsubsection}{}{%
  \pretocmd{\l@subsubsection}{\global\@toc@aftersectionfalse}{}{}
}
\makeatother

%% file: components/author-info.tex
\renewcommand{\footnoterule}{%
  \kern -3pt
  \hrule width 155pt height 0.4pt
  \kern 2.6pt
}

\newcommand{\authdag}{\hspace{0.4mm}\textsuperscript{\ensuremath{\S}}\hspace{0.25mm}}
\newcommand{\authstar}{\hspace{0.4mm}\textsuperscript{\ensuremath{\star}}\hspace{0.25mm}}

\newcommand{\orcid}[1]{\raisebox{0.5ex}{\scalebox{1}{\orcidlink{#1}}}}

\author{Jyotishka {Ray Choudhury}\authdag \orcid{0000-0002-7070-683X} $\And$ 
Yao Xie\authstar\orcid{0000-0001-6777-2951}}

\DeclareRobustCommand{\GTAffiliation}{%
\begin{tabular}[t]{c}
\vspace{2mm}
H. Milton Stewart School of Industrial and Systems Engineering,\\
\textsc{Georgia Institute of Technology}\\[1.5mm]
755 Ferst Drive N.W., Atlanta, GA 30332
\end{tabular}%
}

\affil{\GTAffiliation}

%% file: abstract.tex
\begin{abstract}
We study nonparametric change-point detection for high-dimensional data in regimes where inference must be performed from small batches of observations. Our primary focus is the high-dimensional, low sample size (HDLSS) regime, where the sequence length is fixed while the ambient dimension diverges. We propose a dimension-averaged angular kernel scan framework for detecting marginal distributional shifts. The statistic aggregates bounded one-dimensional angular discrepancies across coordinates, yielding a fully nonparametric, hyperparameter-free, and moment-agnostic estimator that remains well-defined without specifying, estimating, or assuming finite marginal moments; for example, under heavy-tailed or contaminated distributions. For the offline single-change problem, we derive an exact population mean factorization into a universal deterministic shape function and a scalar signal factor, and characterize the exact null covariance structure up to a scalar variance factor, both valid for any fixed sample size and dimension. We also establish an HDLSS multivariate central limit theorem under cross-coordinate strong mixing which leads to a variance-calibrated asymptotically distribution-free test, asymptotic type-I error control, and lower bounds on power and localization accuracy. We further extend the offline procedure to a fixed-window sequential monitoring procedure for high-dimensional streaming data, and obtain ARL calibration and worst-case Pollak EDD bounds. Simulation studies demonstrate that the proposed method can accurately detect and localize changes in many challenging HDLSS and streaming high-dimensional settings where moment-based or hyperparameter-sensitive procedures may be extremely unstable or inaccurate.
\end{abstract}

\medskip
\smallskip
\begin{center}
\begin{minipage}[t]{0.9\textwidth}
\noindent\textbf{Keywords:}
\textit{change-point detection; high-dimensional statistics; nonparametric statistics; HDLSS; angular kernel; energy distance; MMD; sequential analysis.}
\end{minipage}
\end{center}

%% file: tables/cp-loc-main.tex
\begingroup
\footnotesize
\setlength{\tabcolsep}{2.7pt}
\renewcommand{\arraystretch}{1.05}
\setlength{\LTleft}{0pt plus 1fill}
\setlength{\LTright}{0pt plus 1fill}

\begin{longtable}[c]{
>{\centering\arraybackslash}m{1.9cm}
>{\raggedright\arraybackslash}m{2.4cm}
ccccc
@{}p{0.40cm}@{}
ccccc
@{}p{0.40cm}@{}
ccccc
}

\caption{\footnotesize
Empirical localization accuracy (offline problem) for \(N=40\), true change-point \(\tau=15\), and \(d\in\{200,1000,5000\}\).
Columns \(0,1,2,\geqslant\! 3\) report empirical proportions for \(|\widehat{\tau}_d-\tau|=0,1,2\), at least 3, respectively; and \(\widehat{\mathbb E}\) reports the empirical mean of \(|\widehat{\tau}_d-\tau|\).
Boldface marks the largest entry in column \(0\) within each example--dimension block; underlining marks the smallest empirical mean in the corresponding block.
}
\label{tab:cp_accuracy_main}
\\[-1mm]

\toprule
\multirow{2}{1.55cm}{\centering\arraybackslash\textbf{Example}}
&
\multirow{2}{2.25cm}{\centering\arraybackslash\textbf{Method~~~~}}
& \multicolumn{5}{c}{$d = 200$}
& \multicolumn{1}{c}{}
& \multicolumn{5}{c}{$d = 1000$}
& \multicolumn{1}{c}{}
& \multicolumn{5}{c}{$d = 5000$} \\
\cmidrule(l{4pt}r{4pt}){3-7}
\cmidrule(l{4pt}r{4pt}){9-13}
\cmidrule(l{4pt}r{4pt}){15-19}
&
& \textbf{$0$}
& \textbf{$1$}
& \textbf{$2$}
& \textbf{$\geqslant\!3$}
& \(\widehat{\mathbb E}\)
&
& \textbf{$0$}
& \textbf{$1$}
& \textbf{$2$}
& \textbf{$\geqslant\!3$}
& \(\widehat{\mathbb E}\)
&
& \textbf{$0$}
& \textbf{$1$}
& \textbf{$2$}
& \textbf{$\geqslant\!3$}
& \(\widehat{\mathbb E}\) \\
\midrule
\endfirsthead

\caption[]{\footnotesize
Empirical localization accuracy (offline problem) results \textit{(continued)}.
}
\\[-1mm]

\toprule
\multirow{2}{1.55cm}{\centering\arraybackslash\textbf{Example}}
&
\multirow{2}{2.25cm}{\centering\arraybackslash\textbf{Method~~~~}}
& \multicolumn{5}{c}{$d = 200$}
& \multicolumn{1}{c}{}
& \multicolumn{5}{c}{$d = 1000$}
& \multicolumn{1}{c}{}
& \multicolumn{5}{c}{$d = 5000$} \\
\cmidrule(l{4pt}r{4pt}){3-7}
\cmidrule(l{4pt}r{4pt}){9-13}
\cmidrule(l{4pt}r{4pt}){15-19}
&
& \textbf{$0$}
& \textbf{$1$}
& \textbf{$2$}
& \textbf{$\geqslant\!3$}
& \(\widehat{\mathbb E}\)
&
& \textbf{$0$}
& \textbf{$1$}
& \textbf{$2$}
& \textbf{$\geqslant\!3$}
& \(\widehat{\mathbb E}\)
&
& \textbf{$0$}
& \textbf{$1$}
& \textbf{$2$}
& \textbf{$\geqslant\!3$}
& \(\widehat{\mathbb E}\) \\
\midrule
\endhead

\endfoot

\bottomrule
\endlastfoot

 & DAK Scan (Ours) & \textbf{0.98} & 0.01 & 0.00 & 0.01 & \finedotul{0.12} &  & \textbf{1.00} & 0.00 & 0.00 & 0.00 & \finedotul{0.00} &  & \textbf{1.00} & 0.00 & 0.00 & 0.00 & \finedotul{0.00}\\
\nopagebreak
 & E-Divisive & 0.00 & 0.00 & 0.00 & 1.00 & 26.00 &  & 0.00 & 0.00 & 0.00 & 1.00 & 26.00 &  & 0.00 & 0.00 & 0.00 & 1.00 & 26.00\\
\nopagebreak
 & E-CP3O & 0.00 & 0.00 & 0.00 & 1.00 & 26.00 &  & 0.00 & 0.00 & 0.00 & 1.00 & 26.00 &  & 0.00 & 0.00 & 0.00 & 1.00 & 26.00\\
\nopagebreak
 & KCPA & 0.00 & 0.00 & 0.00 & 1.00 & 26.00 &  & 0.00 & 0.00 & 0.00 & 1.00 & 26.00 &  & 0.00 & 0.00 & 0.00 & 1.00 & 26.00\\
\nopagebreak
 & MMD-$\mcN$ & 0.00 & 0.00 & 0.00 & 1.00 & 17.93 &  & 0.00 & 0.00 & 0.00 & 1.00 & 18.15 &  & 0.00 & 0.00 & 0.00 & 1.00 & 18.03\\
\nopagebreak
 & MMD-$\mcE$ & 0.04 & 0.07 & 0.09 & 0.80 & 7.50 &  & 0.04 & 0.08 & 0.09 & 0.79 & 7.32 &  & 0.04 & 0.08 & 0.07 & 0.81 & 7.06\\
\nopagebreak
 & HDD-DM & 0.04 & 0.03 & 0.05 & 0.88 & 11.55 &  & 0.04 & 0.06 & 0.07 & 0.83 & 11.43 &  & 0.03 & 0.04 & 0.07 & 0.87 & 11.72\\
\nopagebreak
\multirow{-8}{1.7cm}{\centering\arraybackslash \textit{Cauchy Location}} & Sliced-Wass & 0.00 & 0.00 & 0.00 & 1.00 & 14.24 &  & 0.00 & 0.00 & 0.00 & 1.00 & 14.25 &  & 0.00 & 0.00 & 0.00 & 1.00 & 14.32\\
\cmidrule{1-19}\pagebreak[0]

 & DAK Scan (Ours) & \textbf{0.13} & 0.05 & 0.01 & 0.81 & 14.68 &  & \textbf{0.68} & 0.03 & 0.00 & 0.30 & \finedotul{5.39} &  & \textbf{0.99} & 0.00 & 0.00 & 0.01 & \finedotul{0.10}\\
\nopagebreak
 & E-Divisive & 0.00 & 0.00 & 0.00 & 1.00 & 26.00 &  & 0.00 & 0.00 & 0.00 & 1.00 & 26.00 &  & 0.00 & 0.00 & 0.00 & 1.00 & 26.00\\
\nopagebreak
 & E-CP3O & 0.00 & 0.00 & 0.00 & 1.00 & 26.00 &  & 0.00 & 0.00 & 0.00 & 1.00 & 26.00 &  & 0.00 & 0.00 & 0.00 & 1.00 & 26.00\\
\nopagebreak
 & KCPA & 0.00 & 0.00 & 0.00 & 1.00 & 26.00 &  & 0.00 & 0.00 & 0.00 & 1.00 & 26.00 &  & 0.00 & 0.00 & 0.00 & 1.00 & 26.00\\
\nopagebreak
 & MMD-$\mcN$ & 0.00 & 0.00 & 0.00 & 0.99 & 20.62 &  & 0.00 & 0.00 & 0.00 & 1.00 & 20.29 &  & 0.00 & 0.00 & 0.00 & 1.00 & 19.99\\
\nopagebreak
 & MMD-$\mcE$ & 0.02 & 0.04 & 0.05 & 0.88 & \finedotul{7.90} &  & 0.01 & 0.04 & 0.05 & 0.89 & 7.73 &  & 0.01 & 0.03 & 0.06 & 0.89 & 7.58\\
\nopagebreak
 & HDD-DM & 0.03 & 0.04 & 0.04 & 0.89 & 11.82 &  & 0.04 & 0.05 & 0.06 & 0.84 & 11.74 &  & 0.02 & 0.04 & 0.05 & 0.89 & 12.05\\
\nopagebreak
\multirow{-8}{1.7cm}{\centering\arraybackslash \textit{Cauchy Scale\phantom{h}}} & Sliced-Wass & 0.00 & 0.00 & 0.00 & 1.00 & 15.22 &  & 0.00 & 0.00 & 0.00 & 1.00 & 15.09 &  & 0.00 & 0.00 & 0.00 & 0.99 & 15.12\\
\cmidrule{1-19}\pagebreak[0]

 & DAK Scan (Ours) & \textbf{1.00} & 0.00 & 0.00 & 0.00 & \finedotul{0.00} &  & \textbf{1.00} & 0.00 & 0.00 & 0.00 & \finedotul{0.00} &  & \textbf{1.00} & 0.00 & 0.00 & 0.00 & \finedotul{0.00}\\
\nopagebreak
 & E-Divisive & 0.00 & 0.00 & 0.00 & 1.00 & 26.00 &  & 0.00 & 0.00 & 0.00 & 1.00 & 26.00 &  & 0.00 & 0.00 & 0.00 & 1.00 & 26.00\\
\nopagebreak
 & E-CP3O & 0.00 & 0.00 & 0.00 & 1.00 & 26.00 &  & 0.00 & 0.00 & 0.00 & 1.00 & 26.00 &  & 0.00 & 0.00 & 0.00 & 1.00 & 26.00\\
\nopagebreak
 & KCPA & 0.00 & 0.00 & 0.00 & 1.00 & 26.00 &  & 0.00 & 0.00 & 0.00 & 1.00 & 26.00 &  & 0.00 & 0.00 & 0.00 & 1.00 & 26.00\\
\nopagebreak
 & MMD-$\mcN$ & 0.00 & 0.00 & 0.00 & 1.00 & 23.00 &  & 0.00 & 0.00 & 0.00 & 1.00 & 23.00 &  & 0.00 & 0.00 & 0.00 & 1.00 & 23.00\\
\nopagebreak
 & MMD-$\mcE$ & 0.00 & 0.00 & 0.00 & 1.00 & 10.37 &  & 0.00 & 0.00 & 0.00 & 1.00 & 10.61 &  & 0.00 & 0.00 & 0.00 & 1.00 & 10.59\\
\nopagebreak
 & HDD-DM & 0.99 & 0.00 & 0.00 & 0.01 & 0.06 &  & \textbf{1.00} & 0.00 & 0.00 & 0.00 & \finedotul{0.00} &  & \textbf{1.00} & 0.00 & 0.00 & 0.00 & \finedotul{0.00}\\
\nopagebreak
\multirow{-8}{1.7cm}{\centering\arraybackslash \textit{Dirichlet}} & Sliced-Wass & 0.00 & 0.00 & 0.00 & 1.00 & 23.00 &  & 0.00 & 0.00 & 0.00 & 1.00 & 23.00 &  & 0.00 & 0.00 & 0.00 & 1.00 & 23.00\\
\cmidrule{1-19}\pagebreak[0]

 & DAK Scan (Ours) & \textbf{0.03} & 0.01 & 0.00 & 0.95 & 16.33 &  & \textbf{0.33} & 0.03 & 0.01 & 0.62 & 10.19 &  & \textbf{0.90} & 0.00 & 0.00 & 0.09 & \finedotul{1.26}\\
\nopagebreak
 & E-Divisive & 0.00 & 0.00 & 0.00 & 1.00 & 26.00 &  & 0.00 & 0.00 & 0.00 & 1.00 & 26.00 &  & 0.00 & 0.00 & 0.00 & 1.00 & 26.00\\
\nopagebreak
 & E-CP3O & 0.00 & 0.00 & 0.00 & 1.00 & 26.00 &  & 0.00 & 0.00 & 0.00 & 1.00 & 26.00 &  & 0.00 & 0.00 & 0.00 & 1.00 & 26.00\\
\nopagebreak
 & KCPA & 0.00 & 0.00 & 0.00 & 1.00 & 26.00 &  & 0.00 & 0.00 & 0.00 & 1.00 & 26.00 &  & 0.00 & 0.00 & 0.00 & 1.00 & 26.00\\
\nopagebreak
 & MMD-$\mcN$ & 0.00 & 0.00 & 0.00 & 1.00 & 17.14 &  & 0.00 & 0.00 & 0.00 & 1.00 & 16.54 &  & 0.00 & 0.00 & 0.00 & 1.00 & 15.06\\
\nopagebreak
 & MMD-$\mcE$ & 0.00 & 0.00 & 0.00 & 1.00 & \finedotul{7.20} &  & 0.00 & 0.00 & 0.00 & 1.00 & \finedotul{7.03} &  & 0.00 & 0.00 & 0.00 & 1.00 & 7.11\\
\nopagebreak
 & HDD-DM & 0.00 & 0.01 & 0.01 & 0.98 & 23.96 &  & 0.06 & 0.02 & 0.05 & 0.87 & 14.39 &  & 0.78 & 0.02 & 0.02 & 0.18 & 2.12\\
\nopagebreak
\multirow{-8}{1.7cm}{\centering\arraybackslash \textit{Gaussian Sparse Mean}} & Sliced-Wass & 0.00 & 0.00 & 0.00 & 1.00 & 17.36 &  & 0.00 & 0.00 & 0.00 & 1.00 & 17.88 &  & 0.00 & 0.00 & 0.00 & 1.00 & 17.54\\
\cmidrule{1-19}\pagebreak[0]

 & DAK Scan (Ours) & \textbf{1.00} & 0.00 & 0.00 & 0.00 & \finedotul{0.00} &  & \textbf{1.00} & 0.00 & 0.00 & 0.00 & \finedotul{0.00} &  & \textbf{1.00} & 0.00 & 0.00 & 0.00 & \finedotul{0.00}\\
\nopagebreak
 & E-Divisive & 0.00 & 0.00 & 0.00 & 1.00 & 26.00 &  & 0.00 & 0.00 & 0.00 & 1.00 & 26.00 &  & 0.00 & 0.00 & 0.00 & 1.00 & 26.00\\
\nopagebreak
 & E-CP3O & 0.00 & 0.00 & 0.00 & 1.00 & 26.00 &  & 0.00 & 0.00 & 0.00 & 1.00 & 26.00 &  & 0.00 & 0.00 & 0.00 & 1.00 & 26.00\\
\nopagebreak
 & KCPA & 0.00 & 0.00 & 0.00 & 1.00 & 26.00 &  & 0.00 & 0.00 & 0.00 & 1.00 & 26.00 &  & 0.00 & 0.00 & 0.00 & 1.00 & 26.00\\
\nopagebreak
 & MMD-$\mcN$ & 0.00 & 0.00 & 0.00 & 1.00 & 17.52 &  & 0.00 & 0.00 & 0.00 & 1.00 & 17.88 &  & 0.00 & 0.00 & 0.00 & 1.00 & 17.74\\
\nopagebreak
 & MMD-$\mcE$ & 0.03 & 0.06 & 0.05 & 0.86 & 8.02 &  & 0.04 & 0.08 & 0.08 & 0.80 & 7.28 &  & 0.03 & 0.07 & 0.07 & 0.83 & 7.72\\
\nopagebreak
 & HDD-DM & 0.02 & 0.04 & 0.05 & 0.88 & 12.17 &  & 0.03 & 0.05 & 0.03 & 0.89 & 12.52 &  & 0.02 & 0.05 & 0.06 & 0.87 & 11.67\\
\nopagebreak
\multirow{-8}{1.7cm}{\centering\arraybackslash \textit{Cauchy-Gaussian Mix}} & Sliced-Wass & 0.00 & 0.00 & 0.00 & 1.00 & 13.86 &  & 0.00 & 0.00 & 0.00 & 1.00 & 14.55 &  & 0.00 & 0.00 & 0.00 & 1.00 & 14.14\\
\bottomrule

\end{longtable}
\endgroup

%% file: tables/online-ARL-EDD-proposed.tex
\begingroup
\footnotesize
\setlength{\tabcolsep}{2.25pt}
\renewcommand{\arraystretch}{1.25}

\begin{table}[!ht]
\centering
\caption{
\footnotesize
Empirical online performance for \(N_0=10\), change point \(\nu=50\), and nominal calibration level \(\alpha=0.002\).
ARL denotes the empirical monitoring ARL under \(\mathbf H_{0,d}\); FA is the empirical false-alarm proportion by time \(\nu\) under \(\mathbf H_{1,d}\); CEDD is the empirical conditional expected detection delay \eqref{eq:cedd-def}; and ND is the empirical non-detection proportion by the horizon.
}
\label{tab:online_arl_edd}

\begin{adjustbox}{max width=0.9\textwidth}
\begin{tabular}{
>{\raggedright\arraybackslash}m{4cm}
cccc
@{}p{0.3cm}@{}
cccc
@{}p{0.3cm}@{}
cccc
}
\specialrule{1.1pt}{0pt}{0pt}
\multirow{2}{3cm}{\centering\arraybackslash\textbf{Example}}
& \multicolumn{4}{c}{$d = 200$}
& \multicolumn{1}{c}{}
& \multicolumn{4}{c}{$d = 1000$}
& \multicolumn{1}{c}{}
& \multicolumn{4}{c}{$d = 5000$} \\
\cmidrule(l{12pt}r{0pt}){2-5}
\cmidrule(l{10pt}r{0pt}){7-10}
\cmidrule(l{10pt}r{0pt}){12-15}
& \textbf{~~ARL~~}
& \textbf{FA}
& \textbf{~CEDD~}
& \textbf{ND}
&
& \textbf{~~ARL~~}
& \textbf{FA}
& \textbf{~CEDD~}
& \textbf{ND}
&
& \textbf{~~ARL~~}
& \textbf{FA}
& \textbf{~CEDD~}
& \textbf{ND} \\
\midrule

\textit{Cauchy Location}
& 502.2 & 0.12 & 42.73 & 0.01
&  & 494.6 & 0.09 & 1.99 & 0.00
&  & 518.1 & 0.07 & 2.00 & 0.00\\
\nopagebreak

\textit{Cauchy Scale}
& 502.2 & 0.12 & 393.65 & 0.05
&  & 494.6 & 0.09 & 386.99 & 0.03
&  & 518.1 & 0.07 & 62.82 & 0.00\\
\nopagebreak

\textit{Dirichlet}
& 485.6 & 0.12 & 1.99 & 0.00
&  & 537.6 & 0.08 & 2.00 & 0.00
&  & 531.0 & 0.08 & 2.00 & 0.00\\
\nopagebreak

\textit{Gaussian Sparse Mean}
& 521.9 & 0.12 & 402.93 & 0.05
&  & 518.1 & 0.09 & 432.24 & 0.04
&  & 529.6 & 0.07 & 292.91 & 0.02\\
\nopagebreak

\textit{Cauchy-Gaussian Mix}
& 490.9 & 0.10 & 1.99 & 0.00
&  & 542.5 & 0.09 & 2.00 & 0.00
&  & 532.3 & 0.07 & 2.00 & 0.00\\[3pt]
\specialrule{1.1pt}{0pt}{0pt}
\end{tabular}
\end{adjustbox}
\end{table}

\endgroup

%% file: supp.tex

\begin{center}
    \LARGE\textbf{\textsc{Appendix:}\\
    High-dimensional Change-point Detection\\ 
    via Angular Kernel Statistics}    
\end{center}

\vspace{1cm}
\begingroup
\setstretch{1.15}
\appendixtableofcontents
\endgroup

\begin{bibunit}


\section{Some useful lemmas and their proofs}

\subsection{One-dimensional Stieltjes identities for the signal factor \texorpdfstring{\(\delta^{*}_d\)}{delta-d}}


\begin{lemma}[One-dimensional Stieltjes identities for atom-free marginals]
\label{lem:onedim-stieltjes-identities}
Let \(F\) and \(G\) be atom-free probability distributions on \(\mathbb R\), and write \(F\) and \(G\) also for their distribution functions. Let \(h:=F-G\), and write \(dF,dG,dh\) for the associated Lebesgue--Stieltjes measures, with
\(dh=dF-dG\). Define
\[
\mathfrak{M}(F,G):=\int_{\mathbb R} h(z)^2\,dF(z).
\]
Then
\[
\mathfrak{M}(F,G) = \int_{\mathbb R} h(z)^2\,dF(z)
=
\int_{\mathbb R} h(z)^2\,dG(z).
\]
Moreover, if
\[
a^{(1)}:=\int (F+G-2FG)\,dF,\qquad
a^{(2)}:=\int (F+G-2FG)\,dG,
\]
\[
b^{(1)}:=\int (2F-2F^2)\,dF,\qquad
b^{(2)}:=\int (2F-2F^2)\,dG,
\]
\[
c^{(1)}:=\int (2G-2G^2)\,dF,\qquad
c^{(2)}:=\int (2G-2G^2)\,dG,
\]
then
\begin{align}
2a^{(1)}-b^{(1)}-c^{(1)} &= 2\,\mathfrak{M}(F,G), \label{eq:onedim-id-F}\\
2a^{(2)}-b^{(2)}-c^{(2)} &= 2\,\mathfrak{M}(F,G), \label{eq:onedim-id-G}\\
a^{(1)}+a^{(2)}-b^{(1)}-c^{(2)} &= \mathfrak{M}(F,G). \label{eq:onedim-id-mixed}
\end{align}
\end{lemma}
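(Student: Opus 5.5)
The whole argument rests on one tool: the Lebesgue--Stieltjes integration-by-parts formula for continuous distribution functions. Since \(F\) and \(G\) are atom-free, they are continuous and bounded with limits \(0\) at \(-\infty\) and \(1\) at \(+\infty\). For any polynomial \(P\) in \((F,G)\) we therefore have the chain rule \(d\,P(F,G)=\partial_F P\,dF+\partial_G P\,dG\), valid for Stieltjes measures of continuous monotone functions. In particular
\[
\int F^k\,dF=\tfrac{1}{k+1},\qquad \int G^k\,dG=\tfrac{1}{k+1},\qquad \int F\,dG+\int G\,dF=1,
\]
and
\[
\int (F^2\,dG+2FG\,dF)=1,\qquad \int (G^2\,dF+2FG\,dG)=1 .
\]
These follow from \(d(FG)=F\,dG+G\,dF\), \(d(F^2G)=F^2dG+2FG\,dF\) and \(d(FG^2)\) similarly, integrated over \(\mathbb R\) with boundary values \(0\) and \(1\).

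\textbf{First claim.} We need \(\int h^2\,dF=\int h^2\,dG\), i.e.\ \(\int h^2\,dh=0\). This holds because \(h\) is continuous with \(h(\pm\infty)=0\), so \(\int h^2\,dh=[h^3/3]_{-\infty}^{\infty}=0\). Expanding \(h^2=F^2-2FG+G^2\) gives the same conclusion via the identities above, which I would record as a check.

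\textbf{Remaining identities.} I would reduce every quantity to a handful of basic integrals:
\[
p:=\int G\,dF \quad(\text{so } \textstyle\int F\,dG=1-p),\qquad q:=\int FG\,dF,\qquad r:=\int G^2\,dF .
\]
By the product identities, \(\int F^2\,dG=1-2q\) and \(\int FG\,dG=(1-r)/2\). Also \(\int F\,dF=\tfrac12\), \(\int F^2\,dF=\tfrac13\), and the same values hold for \(G\) against \(dG\). This gives
\[
\mathfrak M=\int(F^2-2FG+G^2)\,dF=\tfrac13-2q+r .
\]
Then I would compute each term:
\begin{align*}
a^{(1)}&=\tfrac12+p-2q, & b^{(1)}&=1-\tfrac23=\tfrac13, & c^{(1)}&=2p-2r,\\
a^{(2)}&=(1-p)+\tfrac12-(1-r), & b^{(2)}&=2(1-p)-2(1-2q), & c^{(2)}&=\tfrac13 .
\end{align*}

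\textbf{Verification.}
\[
2a^{(1)}-b^{(1)}-c^{(1)}=1+2p-4q-\tfrac13-2p+2r=\tfrac23-4q+2r=2\mathfrak M,
\]
which is \eqref{eq:onedim-id-F}. For \eqref{eq:onedim-id-G} we have \(a^{(2)}=\tfrac12-p+r\) and \(b^{(2)}=-2p+4q\), so
\[
2a^{(2)}-b^{(2)}-c^{(2)}=1-2p+2r+2p-4q-\tfrac13=2\mathfrak M .
\]
For \eqref{eq:onedim-id-mixed},
\[
a^{(1)}+a^{(2)}-b^{(1)}-c^{(2)}=\tfrac12+p-2q+\tfrac12-p+r-\tfrac23=\tfrac13-2q+r=\mathfrak M .
\]

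The main obstacle is justifying the Stieltjes chain and product rules rigorously. This needs continuity (atom-freeness) to rule out jump corrections such as \(\sum \Delta F\,\Delta G\). I would handle it with the standard integration-by-parts theorem for right-continuous functions of bounded variation: when one function is continuous, the jump terms vanish. After that, the bookkeeping of the integrals \(p,q,r\) is routine.
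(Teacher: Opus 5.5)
Your proof is correct, and it takes a different route from the paper's.

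\emph{The paper's route.} The paper proves \eqref{eq:onedim-id-F} and \eqref{eq:onedim-id-G} without any integration by parts. It integrates the pointwise identity $2(F+G-2FG)-(2F-2F^2)-(2G-2G^2)=2(F-G)^2$ against $dF$ or $dG$, and then uses $\int h^2\,dF=\int h^2\,dG$. Calculus enters only for the mixed identity \eqref{eq:onedim-id-mixed}. There it writes $a^{(1)}-b^{(1)}=\int h(2F-1)\,dF$ and $a^{(2)}-c^{(2)}=\int h(1-2G)\,dG$, substitutes $dF=dG+dh$, and applies $\int h\,dh=\int h^2\,dh=0$ together with the product rule for $Gh^2$.

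\emph{Your route.} You coordinatize every quantity by the three integrals $p,q,r$, using the product rule on $FG$, $F^2G$ and $FG^2$. All three identities then follow by linear bookkeeping. I checked your values of $a^{(i)},b^{(i)},c^{(i)}$ and the three verifications, and they are right.

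\emph{Comparison.} Your approach is uniform and mechanical. It gives explicit closed forms the paper never records, e.g.\ $b^{(1)}=c^{(2)}=\tfrac13$. It also re-derives the first claim as a byproduct, since $\int h^2\,dG$ likewise evaluates to $\tfrac13-2q+r$. The paper's approach is more economical and more revealing. Its pointwise identity shows that $2a-b-c=2\int h^2\,d\mu$ for any anchor law $\mu$. That is what underlies the insensitivity to the anchor mixture in \Cref{lem:signal_factor}(b).

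Both arguments rest on the same analytic input: the jump-free Stieltjes product rule for continuous distribution functions. You correctly identify this as the only step needing care.
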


\begin{proof}
Since \(F\) and \(G\) are atom-free, their distribution functions are continuous. As distribution functions they are non-decreasing and bounded; hence \(F\) and \(G\), and therefore \(h=F-G\), are continuous functions of bounded variation. The finite signed Lebesgue--Stieltjes measure induced by \(h\) satisfies \(dh=dF-dG\), because the two signed measures agree on all half-open intervals \((a,b]\).

All Stieltjes integrals over \(\mathbb R\) are understood as limits over \((-m,m]\) as \(m\to\infty\); the limits exist because the integrators have finite total variation and the integrands are bounded.
We first show that the two \(h^2\)-integrals agree. Since \(h\) is continuous of bounded variation and \(h(-\infty)=h(\infty)=0\), the Stieltjes chain rule gives
\[
\int h^2\,dF-\int h^2\,dG
=
\int h^2\,dh
=
\frac{1}{3} h^3\Big|_{-\infty}^{\infty}
=0.
\]
Thus \(\mathfrak{M}(F,G)\) may equivalently be written as \(\int h^2\,dG\).

The identities \eqref{eq:onedim-id-F} and \eqref{eq:onedim-id-G} follow by direct subtraction:
\[
2(F+G-2FG)-(2F-2F^2)-(2G-2G^2)=2(F-G)^2.
\]
It remains to prove \eqref{eq:onedim-id-mixed}. From the definitions,
\[
a^{(1)}-b^{(1)}=\int h(2F-1)\,dF,
\qquad
a^{(2)}-c^{(2)}=\int h(1-2G)\,dG.
\]
Therefore,
\[
a^{(1)}+a^{(2)}-b^{(1)}-c^{(2)}
=
\int h(2F-1)\,dF+\int h(1-2G)\,dG.
\]
Using \(dF=dG+dh\), the right-hand side equals
\[
2\int h^2\,dG+\int h(2F-1)\,dh.
\]
Since \(F=G+h\),
\[
\int h(2F-1)\,dh
=
2\int hG\,dh+2\int h^2\,dh-\int h\,dh.
\]
The last two terms vanish by the Stieltjes chain rule:
\[
\int h^2\,dh= \frac{1}{3} h^3\Big|_{-\infty}^{\infty}=0,
\qquad
\int h\,dh= \frac12 h^2\Big|_{-\infty}^{\infty}=0.
\]
For the remaining term, the product rule applied to \(Gh^2\) gives
\[
Gh^2\Big|_{-\infty}^{\infty}
=
\int h^2\,dG+2\int Gh\,dh.
\]
The boundary term is zero because \(0\le G\le1\) and \(h(\pm\infty)=0\). Hence
\(2\int Gh\,dh=-\int h^2\,dG\). Consequently,
\[
a^{(1)}+a^{(2)}-b^{(1)}-c^{(2)}
=
2\int h^2\,dG-\int h^2\,dG
=
\mathfrak{M}(F,G),
\]
which proves \eqref{eq:onedim-id-mixed}.
\end{proof}

\subsection{A direct construction of $\delta^{*}_d$ from scratch}
\label{app-subsec:delta_d_construction}

Let $U,V,Z$ be mutually independent random vectors. For every $k \in [d]$ and $i = 1,2$, define $a_k^{(i)}$, $b_k^{(i)}$, $c_k^{(i)}$ in the following way, where the expectation is taken with respect to the laws of $U_k$, $V_k$, and $Z_k$.
\begin{equation}
\label{eq:E[rho_0]}
\E[\rho_0(U_k,V_k;Z_k)] := 
\begin{cases}
    ~a_k^{(1)} & \text{~if~~} U_k \sim F_d^{(k)},\, V_k \sim G_d^{(k)},\, \text{and~} Z_k \sim F_d^{(k)},\\
    ~b_k^{(1)} &  \text{~if~~} U_k \sim F_d^{(k)},\, V_k \sim F_d^{(k)},\, \text{and~} Z_k \sim F_d^{(k)},\\
    ~c_k^{(1)} & \text{~if~~} U_k \sim G_d^{(k)},\, V_k \sim G_d^{(k)},\, \text{and~} Z_k \sim F_d^{(k)}.\\
\end{cases}
\end{equation}
Analogously, $a_k^{(2)}$, $b_k^{(2)}$, and $c_k^{(2)}$ are defined as the values taken by $\E[\rho_0(U_k,V_k;Z_k)]$ when $Z_k \sim G_d^{(k)}$.

Recall the one-dimensional angular kernel $\rho_0(\cdot,\cdot;\cdot)$ defined in \eqref{eq:one-dim-angular-kernel}. For $i = 1,2$, define the dimension-averaged quantities:
\begin{equation*}
\bar a^{(i)} := \frac{1}{d}\sum_{k=1}^d a_k^{(i)}~,\qquad
\bar b^{(i)} := \frac{1}{d}\sum_{k=1}^d b_k^{(i)}~,\qquad
\bar c^{(i)} := \frac{1}{d}\sum_{k=1}^d c_k^{(i)}.
\end{equation*}
Given a change-point $\tau \in \mcT$, define
\begin{equation*}
\bar a_d := \frac{\tau}{N} \, \bar a^{(1)} + \left(1-\frac{\tau}{N}\right) \, \bar a^{(2)},~\quad
\bar b_d := \frac{\tau}{N} \, \bar b^{(1)} + \left(1-\frac{\tau}{N}\right) \, \bar b^{(2)},~\quad
\bar c_d := \frac{\tau}{N} \, \bar c^{(1)} + \left(1-\frac{\tau}{N}\right) \, \bar c^{(2)}.
\end{equation*}

\begin{lemma}[Direct construction of $\delta^{*}_d$]
\label{lem:signal_factor_construction}
\phantom{abcd}

\begin{itemize}
\item[(a)] (Existence and boundedness) For arbitrary probability distributions
\(F_d,G_d\) on \(\mathbb R^d\), each of the six quantities~
\(a_k^{(1)}\), \(a_k^{(2)}\), \(b_k^{(1)}\), \(b_k^{(2)}\), \(c_k^{(1)}\), and
\(c_k^{(2)}\) exists and lies in \([0,1]\) for every \(k\in[d]\). Consequently,
\[
\bar a^{(1)},\bar a^{(2)},\bar b^{(1)},\bar b^{(2)},\bar c^{(1)},\bar c^{(2)}
\in[0,1],
\quad\text{and}\quad
\bar a_d,\bar b_d,\bar c_d\in[0,1].
\]

\item[(b)] (Non-negativity of the signal factor under atom-free marginals)
Assume that, for every \(k\in[d]\), the one-dimensional marginal laws \(F_d^{(k)}\) and \(G_d^{(k)}\) are atom-free; equivalently, their distribution functions are continuous. Then,
\begin{equation}
\label{eq:signal_nonneg_bar_general}
\delta^{*}_d = 2\bar a_d-\bar b_d-\bar c_d\bigr.
\end{equation}

\end{itemize}

\begin{proof}
\phantom{abcd}

\begin{itemize}
\item[(a)]
Since $\rho_0(p,q;r)\in\{0,1\}$ for all $p,q,r \in \R$, each random variable $\rho_0(U_k,V_k;Z_k)$ is bounded and measurable. Hence, each expectation defining $a_k^{(1)}$, $a_k^{(2)}$, $b_k^{(1)}$, $b_k^{(2)}$, $c_k^{(1)}$, $c_k^{(2)}$ exists and lies in $[0,1]$. Averaging over $k$ preserves the interval $[0,1]$, proving that $$\bar a^{(1)}, \bar a^{(2)}, \bar b^{(1)}, \bar b^{(2)}, \bar c^{(1)}, \bar c^{(2)} \in [0,1].$$

Finally, since \(\bar a_d,\bar b_d,\bar c_d\) are convex combinations with weights \(\tau/N\) and \(1-\tau/N\), it follows that \(\bar a_d,\bar b_d,\bar c_d\in[0,1]\).
\qed

\item[(b)] 
Fix $k\in[d]$ and write $F_k:=F_d^{(k)}$ and $G_k:=G_d^{(k)}$ for the $k$-th marginals, with distribution functions (also denoted by) $F_k(\cdot)$ and $G_k(\cdot)$. 
Let \(U,V,Z\) be mutually independent real-valued random variables with the laws specified in the definitions of \(a_k^{(i)},b_k^{(i)},c_k^{(i)}\).

\medskip
Let \(h_k:=F_k-G_k\), and set
\begin{equation}
\label{eq:def-M}
\mathfrak{M}_k:=\int h_k(z)^2\,dF_k(z).
\end{equation}
By \Cref{lem:onedim-stieltjes-identities}, this also equals
\(\int h_k(z)^2\,dG_k(z)\).

\medskip
Since the marginals are continuous, strict and non-strict inequalities have the
same probability. Conditioning on \(Z=z\), if \(U\) has distribution function
\(A\) and \(V\) has distribution function \(B\), then
\[
\rho_0(U,V;z)
=
\mathbbm{1}\{U<z<V\}
+
\mathbbm{1}\{V<z<U\},
\]
and hence
\[
\E[\rho_0(U,V;Z)\mid Z=z]
=
A(z)\{1-B(z)\}+B(z)\{1-A(z)\}
=
A(z)+B(z)-2A(z)B(z).
\]
Using the above identity and by \Cref{lem:onedim-stieltjes-identities}, we have
\begin{align}
2a_k^{(1)}-b_k^{(1)}-c_k^{(1)}
&=
2\,\E_{Z\sim F_k}
\!\left[
F_k(Z)^2+G_k(Z)^2-2F_k(Z)G_k(Z)
\right] \notag\\
&=
2\,\E_{Z\sim F_k}
\!\left[
\bigl(F_k(Z)-G_k(Z)\bigr)^2
\right] \notag\\
&= 2 \int h_k^2 dF_k \notag\\
&= 2 \, \mathfrak{M}_k \,,
\label{eq:key_identity_Fk}
\end{align}
and similarly,
\begin{align}
2a_k^{(2)}-b_k^{(2)}-c_k^{(2)}
&=
2\,\E_{Z\sim G_k}
\!\left[
\bigl(F_k(Z)-G_k(Z)\bigr)^2
\right] = 2 \int h_k^2 dG_k = 2 \, \mathfrak{M}_k \,.
\label{eq:key_identity_Gk}
\end{align}

Starting from the definition of $\delta_d$, and using the definitions of \(\bar a_d,\bar b_d,\bar c_d\), we obtain
\begin{align}
&2\bar a_d-\bar b_d-\bar c_d \nonumber \\[2mm]
&= \frac{\tau}{N} \left[2 \bar a^{(1)} - \bar b^{(1)} - \bar c^{(1)}\right] + \left(1-\frac{\tau}{N}\right) \left[2 \bar a^{(2)} - \bar b^{(2)} - \bar c^{(2)}\right] \nonumber \\
&= \frac{1}{d}\frac{\tau}{N} \left[2 \sum_{k=1}^d a_k^{(1)} - \sum_{k=1}^d b_k^{(1)} - \sum_{k=1}^d c_k^{(1)}\right] + \frac{1}{d}\left(1-\frac{\tau}{N}\right) \left[2 \sum_{k=1}^d a_k^{(2)} - \sum_{k=1}^d b_k^{(2)} - \sum_{k=1}^d c_k^{(2)}\right] \nonumber \\
&= \frac{1}{d} \sum_{k=1}^d \left[ \frac{\tau}{N} \left(2a_k^{(1)}-b_k^{(1)}-c_k^{(1)}\right) + \left(1-\frac{\tau}{N}\right) \left(2a_k^{(2)}-b_k^{(2)}-c_k^{(2)}\right)\right] \nonumber \\
&= \frac{1}{d} \sum_{k=1}^d \left[ 2\frac{\tau}{N} \mathfrak{M}_k + 2\left(1-\frac{\tau}{N}\right) \mathfrak{M}_k \right] \label{eq:M-equals-2a-b-c}\\
&= \frac{2}{d} \sum_{k=1}^d ~\mathfrak{M}_k \nonumber \\
&=
\frac{2}{d} \sum_{k=1}^d \int_{\mathbb R}
\bigl\{F_d^{(k)}(z)-G_d^{(k)}(z)\bigr\}^2\,dF_d^{(k)}(z) \label{eq:use-def-M}\\
&= \delta^{*}_d \nonumber
\end{align}
which proves \eqref{eq:signal_nonneg_bar_general}. We use \eqref{eq:key_identity_Fk} and \eqref{eq:key_identity_Gk} in \eqref{eq:M-equals-2a-b-c}, and the definition of $\mathfrak{M}_k$ from \eqref{eq:def-M} in \eqref{eq:use-def-M}.

\end{itemize}

\end{proof}
\end{lemma}

\subsection{Counting lemmas pertaining to \Cref{prop:var-cov-null}}

We state and prove $10$ counting lemmas that we crucially use in the proof of \Cref{prop:var-cov-null}. The proofs are purely combinatorial, and rely on delicate counting arguments.

\input{components/computational-lemmas}


\subsection{A simple variance lemma pertaining to \Cref{prop:var(D)}}

\begin{lemma}
\label{lem:sum-of-var-upper}
Let \( X_1, X_2, \dots, X_N \) be real-valued random variables with finite second moments. Then
\[
\var\left(\frac{1}{N} \sum_{i=1}^N X_i \right)
\leq \frac{1}{N} \sum_{i=1}^N \var(X_i).
\]
\end{lemma}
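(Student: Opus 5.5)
The plan is to expand the variance of the average and bound each cross-covariance by Cauchy--Schwarz. Write $\bar X := \frac1N\sum_{i=1}^N X_i$. Since each $X_i$ has a finite second moment, $\var(\bar X)$ is finite, and bilinearity of covariance gives
\[
\var(\bar X) = \frac{1}{N^2}\sum_{i=1}^N\sum_{j=1}^N \cov(X_i,X_j).
\]

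Next, apply the Cauchy--Schwarz inequality to each term, $\cov(X_i,X_j)\le \sqrt{\var(X_i)\var(X_j)}$. This bounds the double sum by $\bigl(\sum_{i=1}^N \sqrt{\var(X_i)}\bigr)^2$.

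Then apply Cauchy--Schwarz in $\mathbb R^N$, pairing the vector $(\sqrt{\var(X_i)})_i$ with the all-ones vector:
\[
\Bigl(\sum_{i=1}^N \sqrt{\var(X_i)}\Bigr)^2 \le N\sum_{i=1}^N \var(X_i).
\]
Dividing by $N^2$ yields $\var(\bar X)\le \frac1N\sum_{i=1}^N\var(X_i)$, which is the claim.

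An equivalent route is to view the variance as a convex functional and apply Jensen's inequality to the squared $L^2$ norm of the centered variables $X_i-\E X_i$. This needs no independence, which matters here because the coordinate-wise terms entering $\mathfrak W_d(t)$ are dependent. No step is a genuine obstacle; the only point to keep explicit is that no independence or uncorrelatedness is assumed, so the cross terms must be bounded by Cauchy--Schwarz rather than dropped.
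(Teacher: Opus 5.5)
Your proof is correct and follows essentially the same route as the paper: expand the variance into covariances and bound each cross term by Cauchy--Schwarz. The only difference is the last step: the paper uses $\sqrt{\var(X_i)\var(X_j)}\le \frac12[\var(X_i)+\var(X_j)]$ termwise, while you collect the bound into $\bigl(\sum_i \sqrt{\var(X_i)}\bigr)^2$ and apply Cauchy--Schwarz in $\mathbb R^N$, which is the same inequality; your Jensen alternative (pointwise convexity of $x\mapsto x^2$ applied to the centered average) is also valid and is the shortest argument.
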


\begin{proof}
We first observe that for any square-integrable real-valued random variables \( X_1, X_2, \allowbreak \dots,\allowbreak X_N \), the variance of their sum can be expressed as
\[
\var\left( \sum_{i=1}^N X_i \right)
= \sum_{i=1}^N \var(X_i) + \sum_{\substack{i,j = 1 \\ i \neq j}}^N \cov(X_i, X_j).
\]
To bound the off-diagonal terms, we apply the Cauchy--Schwarz inequality:
\[
\left| \cov(X_i, X_j) \right|
\leq \sqrt{ \var(X_i) \var(X_j) }
\leq \frac{1}{2} \Big[ \var(X_i) + \var(X_j) \Big].
\]
Using this, we obtain
\[
\sum_{\substack{i,j = 1 \\ i \neq j}}^N \left| \cov(X_i, X_j) \right|
\leq \frac{1}{2} \sum_{i \neq j} \left( \var(X_i) + \var(X_j) \right)
= (N - 1) \sum_{i=1}^N \var(X_i),
\]
where the last equality uses the fact that for each fixed \( i \), the term \( \var(X_i) \) appears exactly \( N - 1 \) times in the sum over \( i \neq j \). Therefore,
\[
\var\left( \sum_{i=1}^N X_i \right)
\leq \sum_{i=1}^N \var(X_i) + (N - 1) \sum_{i=1}^N \var(X_i)
= N \sum_{i=1}^N \var(X_i).
\]
\end{proof}

\begin{remark}
The upper bound stated above is crude for most practical cases, especially when the variables are not highly correlated (e.g., independent or weakly correlated). Nevertheless, since we are in a finite sample size regime, $N$ is fixed. We exploit this fact to show that the variance of $~\mathfrak{W}_{d}(t)$ decays with increasing dimensions.
\end{remark}


\subsection{Lemmas pertaining to \Cref{thm:multivariate-clt}}

\begin{lemma}[Monotonicity of strong mixing coefficients]
\label{lem:alpha-monotone}
Let $\{X_k\}_{k\in\bbZ}$ be any (not necessarily stationary) sequence, and define the $\alpha$-mixing coefficients by
\[
\alpha(r)
:=
\sup_{k\in\bbZ}\;
\sup_{\substack{A\in\sigma(X_i:i\le k)\\ B\in\sigma(X_i:i\ge k+r)}}
\big|\P(A\cap B)-\P(A)\P(B)\big|,
\qquad r\ge 1.
\]
Then $r\mapsto \alpha(r)$ is non-increasing, i.e., $\alpha(r_2)\le \alpha(r_1)$ for all $r_1 < r_2$.
\end{lemma}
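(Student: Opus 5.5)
The plan is to prove the claim directly from the definition, by showing that the family of $\sigma$-field pairs over which the supremum defining $\alpha(r_2)$ is taken is contained in the family used for $\alpha(r_1)$ when $r_1<r_2$. This needs no stationarity or other distributional property; it only uses the nesting of the future $\sigma$-fields.

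First, fix $k\in\bbZ$ and $r_1<r_2$. Since $k+r_2>k+r_1$, every index $i\ge k+r_2$ also satisfies $i\ge k+r_1$, so
\[
\sigma(X_i:i\ge k+r_2)\subseteq\sigma(X_i:i\ge k+r_1).
\]
The past $\sigma$-field $\sigma(X_i:i\le k)$ is the same in both definitions.

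Second, for this fixed $k$, the inner supremum for lag $r_2$ ranges over $A\in\sigma(X_i:i\le k)$ and $B\in\sigma(X_i:i\ge k+r_2)$. By the inclusion above, every such pair $(A,B)$ is also admissible for lag $r_1$. A supremum of the same function $|\P(A\cap B)-\P(A)\P(B)|$ over a smaller set is at most the supremum over the larger set, so the inner supremum at lag $r_2$ is bounded by the inner supremum at lag $r_1$ for each $k$.

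Third, take the supremum over $k\in\bbZ$ on both sides. The pointwise-in-$k$ inequality passes to the supremum, which gives $\alpha(r_2)\le\alpha(r_1)$.

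There is no genuine obstacle in this argument. The only care required is to compare the two suprema at the same $k$ before taking the outer supremum, rather than comparing the overall suprema directly. It also helps to note that all the suprema are finite, since each lies in $[0,1/4]$, so no issues with infinite values arise.
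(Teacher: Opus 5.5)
Your proof is correct and follows the paper's argument: both fix $k$, use the inclusion $\sigma(X_i:i\ge k+r_2)\subseteq\sigma(X_i:i\ge k+r_1)$ to conclude that every pair $(A,B)$ admissible at lag $r_2$ is also admissible at lag $r_1$, and then take the supremum over $k$. Your remark that all the suprema lie in $[0,1/4]$ is true but not needed.
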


\begin{proof}
Fix any $r_1,r_2\in\bbN$ with $r_1<r_2$. For each $k\in\bbZ$, define the past and future $\sigma$-fields
\[
\mcF_{-\infty}^k := \sigma(X_i:i\le k),
\qquad
\mcF_{k+r}^\infty := \sigma(X_i:i\ge k+r).
\]
Since $r_2>r_1$, we have the set inclusion $\mcF_{k+r_2}^\infty \subseteq \mcF_{k+r_1}^\infty$. Consequently, for every fixed $k$ and every pair of events $A\in \mcF_{-\infty}^k$ and $B\in \mcF_{k+r_2}^\infty$, the pair $(A,B)$ is also admissible in the definition of $\alpha(r_1)$ (because $B\in \mcF_{k+r_1}^\infty$ as well). Hence,
\[
\big|\P(A\cap B)-\P(A)\P(B)\big|
\le
\sup_{\substack{A'\in \mcF_{-\infty}^k\\ B'\in \mcF_{k+r_1}^\infty}}
\big|\P(A'\cap B')-\P(A')\P(B')\big|
\le \alpha(r_1).
\]
Taking the supremum over all $k\in\bbZ$ and all admissible $(A,B)$ with $A\in \mcF_{-\infty}^k$ and $B\in \mcF_{k+r_2}^\infty$ yields $\alpha(r_2)\le \alpha(r_1)$. Since $r_1<r_2$ were arbitrary, $\alpha(\cdot)$ is non-increasing.
\end{proof}


\begin{lemma}[A subsequence summability bound for non-increasing $\alpha$]
\label{lem:alpha-subsequence-bound}
Let $\alpha:\bbN\to[0,\infty)$ be non-increasing and assume $\sum_{m=1}^\infty \alpha(m)<\infty$.
Then for every integer $p\ge 1$,
\[
\sum_{r=1}^\infty \alpha(rp)
\;\le\;
\frac{1}{p}\sum_{m=1}^\infty \alpha(m).
\]
\end{lemma}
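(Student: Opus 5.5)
The plan is a direct block-counting comparison: each $\alpha(rp)$ is bounded above by the average of $\alpha$ over the block of $p$ indices that ends at $rp$.

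Fix $p\ge1$ and $r\ge 1$. Because $\alpha$ is non-increasing, every index $m$ in the block
\[
B_r:=\{(r-1)p+1,\dots,rp\}
\]
satisfies $m\le rp$, and hence $\alpha(rp)\le \alpha(m)$. Averaging this inequality over the $p$ elements of $B_r$ gives
\[
\alpha(rp)\le \frac1p\sum_{m\in B_r}\alpha(m).
\]

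The blocks $B_1,B_2,\dots$ are pairwise disjoint, and together they cover all of $\bbN$. Summing the previous inequality over $r\ge1$, and using that all terms are nonnegative, yields
\[
\sum_{r\ge1}\alpha(rp)\le \frac1p\sum_{m\ge1}\alpha(m)<\infty.
\]

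To make the summation over $r$ fully rigorous, first apply it to partial sums: for each $R$, summing over $r\le R$ gives $\sum_{r\le R}\alpha(rp)\le p^{-1}\sum_{m\le Rp}\alpha(m)\le p^{-1}\sum_{m\ge1}\alpha(m)$. Since the left side is nondecreasing in $R$ and bounded, letting $R\to\infty$ gives the claim. I do not expect any step to be a real obstacle; the only point requiring care is to bound $\alpha(rp)$ by the block that ends at $rp$, rather than the one that starts there, because monotonicity only compares $\alpha(rp)$ favorably with values at smaller indices.
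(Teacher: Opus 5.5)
Your proof is correct and is essentially the paper's argument: monotonicity gives $\alpha(rp)\le\alpha(m)$ on each block $\{(r-1)p+1,\dots,rp\}$, and summing over these disjoint blocks, which cover $\mathbb{N}$, yields the bound. Your partial-sum step over $r\le R$ spells out the limit passage that the paper leaves implicit.
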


\begin{proof}
Fix any integer $p\ge 1$. For each $r\ge 1$ and each $m\in\{(r-1)p+1,\dots,rp\}$, monotonicity due to \Cref{lem:alpha-monotone} of $\alpha(\cdot)$ implies $\alpha(m)\ge \alpha(rp)$.
Therefore,
\[
\sum_{m=(r-1)p+1}^{rp} \alpha(m)
\;\ge\;
\sum_{m=(r-1)p+1}^{rp} \alpha(rp)
\;=\;
p\,\alpha(rp).
\]
Summing over $r\ge 1$ gives
\[
\sum_{m=1}^\infty \alpha(m)
=
\sum_{r=1}^\infty \sum_{m=(r-1)p+1}^{rp} \alpha(m)
\;\ge\;
p\sum_{r=1}^\infty \alpha(rp),
\]
which rearranges to the claimed inequality.
\end{proof}


\begin{lemma}[Covariance bound via strong mixing]
\label{lem:cov-alpha}
Let $U$ be $\mathcal A$-measurable and $V$ be $\mathcal B$-measurable, and assume that both random variables are bounded. Then
\[
\big|\cov(U,V)\big|
\;\le\;
4\,\|U\|_\infty\,\|V\|_\infty\,\alpha(\mathcal A,\mathcal B),
\]
where
\[
\alpha(\mathcal A,\mathcal B)
:= \sup_{A\in\mathcal A,\,B\in\mathcal B}
\big|\P(A\cap B)-\P(A)\P(B)\big|.
\]
\end{lemma}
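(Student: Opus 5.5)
This is the standard Davydov-type covariance inequality for bounded variables. The plan is to reduce it, by approximating with simple functions, to an inequality about indicators, where the mixing coefficient enters directly.

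\emph{Step 1: normalize.} We may assume $\|U\|_\infty,\|V\|_\infty>0$, since otherwise the claim is trivial. Covariance is bilinear, so we rescale to $\|U\|_\infty=\|V\|_\infty=1$. The target is then $|\cov(U,V)|\le 4\alpha$, with $\alpha:=\alpha(\mathcal A,\mathcal B)$.

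\emph{Step 2: reduce to sign variables.} Write
\[
\cov(U,V)=\E\big[U\,(\E[V\mid\mathcal A]-\E V)\big].
\]
Taking absolute values gives $|\cov(U,V)|\le \E|\E[V\mid\mathcal A]-\E V|$. Next set $\xi:=\mathrm{sgn}(\E[V\mid\mathcal A]-\E V)$, which is $\mathcal A$-measurable and takes values in $\{-1,1\}$. Then
\[
\E|\E[V\mid\mathcal A]-\E V|=\E[\xi(\E[V\mid\mathcal A]-\E V)]=\cov(\xi,V).
\]
Repeating the same argument with the roles of the two $\sigma$-fields exchanged yields an $\mathcal B$-measurable $\eta\in\{-1,1\}$ with $|\cov(\xi,V)|\le \cov(\xi,\eta)$. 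So it suffices to show $|\cov(\xi,\eta)|\le 4\alpha$ for $\pm1$-valued $\xi\in\mathcal A$ and $\eta\in\mathcal B$.

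\emph{Step 3: indicator computation.} Write $\xi=\mathbf 1_{A}-\mathbf 1_{A^c}$ and $\eta=\mathbf 1_{B}-\mathbf 1_{B^c}$. Expanding by bilinearity gives
\[
\cov(\xi,\eta)=\sum_{A'\in\{A,A^c\}}\sum_{B'\in\{B,B^c\}}\pm\big[\P(A'\cap B')-\P(A')\P(B')\big].
\]
Each of the four bracketed terms is bounded in absolute value by $\alpha$, because $A'\in\mathcal A$ and $B'\in\mathcal B$. Hence $|\cov(\xi,\eta)|\le4\alpha$. Undoing the normalization of Step 1 recovers the factor $\|U\|_\infty\|V\|_\infty$, which proves the lemma.

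\emph{Main obstacle.} The only delicate point is Step 2: one must check that the sign variables are measurable with respect to the correct $\sigma$-fields. In the first pass, $\xi$ is $\mathcal A$-measurable because it is a function of $\E[V\mid\mathcal A]$. In the second pass, $\eta$ must be chosen as the sign of $\E[\xi\mid\mathcal B]-\E\xi$, which makes it $\mathcal B$-measurable.
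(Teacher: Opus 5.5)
Your proof is correct, but it takes a different route from the paper's. The paper never conditions. It first records $|\cov(\mathbf 1_A,\mathbf 1_B)|\le\alpha(\mathcal A,\mathcal B)$. Then, for nonnegative $U,V$, it uses the layer-cake representation $U=\int_0^{\|U\|_\infty}\mathbf 1_{\{U>s\}}\,ds$ (and likewise for $V$) together with Fubini to write $\cov(U,V)$ as a double integral of indicator covariances. This yields $|\cov(U,V)|\le\|U\|_\infty\|V\|_\infty\,\alpha(\mathcal A,\mathcal B)$, and the factor $4$ comes from splitting $U=U^+-U^-$ and $V=V^+-V^-$ into four nonnegative pieces.

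You instead apply Ibragimov's conditional-expectation/sign trick twice to pass to $\pm1$-valued $\xi\in\mathcal A$ and $\eta\in\mathcal B$. Your factor $4$ comes from expanding $\cov(\mathbf 1_A-\mathbf 1_{A^c},\mathbf 1_B-\mathbf 1_{B^c})$. This in fact equals exactly $4\{\P(A\cap B)-\P(A)\P(B)\}$, since $\cov(\mathbf 1_A,\mathbf 1_{B^c})=-\cov(\mathbf 1_A,\mathbf 1_B)$, and so on.

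The two routes trade off as follows. The paper's is more elementary, needing only Fubini and the indicator bound; it is the Hoeffding-identity device behind quantile-based refinements of such inequalities. Yours shows that $\pm1$-valued variables are extremal. It also yields the by-product $\E|\E[V\mid\mathcal A]-\E V|\le 4\|V\|_\infty\,\alpha(\mathcal A,\mathcal B)$. Combined with H\"older's inequality (and truncation), this is the usual starting point for $L^p$-type covariance bounds such as the Davydov inequality the paper later invokes in its HAC consistency argument.

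One minor point: your opening sentence announces an approximation by simple functions, but the argument you actually give never uses one.
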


\begin{proof}
We proceed in three steps.

\noindent\textbf{Step 1: Indicator functions.}
For any $A\in\mathcal A$ and $B\in\mathcal B$,
\[
\cov(\mathbf 1_A,\mathbf 1_B)
= \P(A\cap B)-\P(A)\P(B).
\]
By definition of $\alpha(\mathcal A,\mathcal B)$,
\[
\big|\cov(\mathbf 1_A,\mathbf 1_B)\big|
\le \alpha(\mathcal A,\mathcal B).
\]

\noindent\textbf{Step 2: Non-negative bounded random variables.}
Let $U,V\ge 0$ be bounded and measurable with respect to $\mathcal A$ and $\mathcal B$, respectively. Using the tail-integral representation (valid pointwise), we can write
\[
U = \int_0^{\|U\|_\infty} \mathbf 1_{\{U>s\}}\,ds~,
\qquad
V = \int_0^{\|V\|_\infty} \mathbf 1_{\{V>t\}}\,dt~.
\]
Next, by the bilinearity of covariance,
\[
\cov(U,V)
=
\int_0^{\|U\|_\infty}\!\!\int_0^{\|V\|_\infty}
\cov\big(\mathbf 1_{\{U>s\}},\,\mathbf 1_{\{V>t\}}\big)
\,dt\,ds.
\]
The integrand satisfies
\[
\big|\cov(\mathbf 1_{\{U>s\}},\mathbf 1_{\{V>t\}})\big|
\le 1,
\]
and hence is absolutely integrable over $[0,\|U\|_\infty]\times[0,\|V\|_\infty]$; therefore Fubini’s theorem applies. Moreover, since $\{U>s\}\in\mathcal A$ and $\{V>t\}\in\mathcal B$ for all $s,t$, Step~1 yields
\[
\big|\cov(\mathbf 1_{\{U>s\}},\mathbf 1_{\{V>t\}})\big|
\le \alpha(\mathcal A,\mathcal B).
\]
Consequently,
\[
\big|\cov(U,V)\big|
\le
\|U\|_\infty\,\|V\|_\infty\,\alpha(\mathcal A,\mathcal B).
\]

\noindent\textbf{Step 3: General bounded random variables.}
For general bounded $U,V$, write $U=U^+-U^-$ and $V=V^+-V^-$, where $U^\pm,V^\pm\ge 0$.
By bilinearity of covariance,
\begin{align*}
\cov(U,V)
&= \cov(U^+-U^-, V^+-V^-)\\
&= \cov(U^+,V^+)
-\cov(U^+,V^-)
-\cov(U^-,V^+)
+\cov(U^-,V^-).
\end{align*}
Each term is the covariance of non-negative bounded random variables, and $\|U^\pm\|_\infty\le \|U\|_\infty$, $\|V^\pm\|_\infty\le \|V\|_\infty$.
Applying Step~2 to each term yields
\[
\big|\cov(U,V)\big|
\le
4\,\|U\|_\infty\,\|V\|_\infty\,\alpha(\mathcal A,\mathcal B),
\]
which completes the proof.
\end{proof}


\begin{lemma}[Explicit block lengths]
\label{lem:explicit-blocks}
Under \Cref{ass:mixing}(ii), define
\[
p_d:=\lfloor d^{1/3}\rfloor,\qquad q_d:=\lfloor d^{1/6}\rfloor,\qquad
k_d:=\Big\lfloor\frac{d}{p_d+q_d}\Big\rfloor.
\]
Then, as $d\to\infty$,
\[
p_d\to\infty,\qquad q_d\to\infty,\qquad \frac{q_d}{p_d}\to 0,\qquad \frac{p_d}{\sqrt d}\to 0,
\qquad \text{and}\qquad k_d\,\alpha(q_d)\to 0.
\]
\end{lemma}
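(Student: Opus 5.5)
The plan is to verify the five claims one at a time. The first four are elementary asymptotics of floor functions. The last needs the polynomial mixing rate from \Cref{ass:mixing}(ii).

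\emph{Divergence and ratios.} Since $p_d=\lfloor d^{1/3}\rfloor\ge d^{1/3}-1$ and $q_d=\lfloor d^{1/6}\rfloor\ge d^{1/6}-1$, both $p_d$ and $q_d$ tend to infinity. For the ratio, use $q_d\le d^{1/6}$ and $p_d\ge d^{1/3}/2$ for $d\ge 8$; this gives $q_d/p_d\le 2d^{-1/6}\to0$. Likewise $p_d/\sqrt d\le d^{1/3-1/2}=d^{-1/6}\to0$.

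\emph{The mixing term.} First bound the number of blocks: $k_d\le d/(p_d+q_d)\le d/p_d\le 2d^{2/3}$ for $d\ge 8$. Next, \Cref{ass:mixing}(ii) gives $\alpha(q_d)\le C_\alpha q_d^{-\lambda}$ whenever $q_d\ge1$. For large $d$ we have $q_d\ge d^{1/6}/2$, so
\[
k_d\,\alpha(q_d)\le 2d^{2/3}\,C_\alpha\,2^{\lambda}d^{-\lambda/6}=2^{1+\lambda}C_\alpha\,d^{(4-\lambda)/6}.
\]
This tends to $0$ precisely because $\lambda>4$. Monotonicity of $\alpha$ (\Cref{lem:alpha-monotone}) is not needed here, since the polynomial bound holds at every integer lag.

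\emph{Main obstacle.} The only delicate step is the last one. The exponents $1/3$ and $1/6$ are tuned so that the number of blocks, of order $d^{2/3}$, is beaten by $q_d^{-\lambda}\asymp d^{-\lambda/6}$. This is exactly where the threshold $\lambda>4$ comes from. Everything else is routine bookkeeping with floors, where I will replace each floor by half its argument once $d$ is large enough.
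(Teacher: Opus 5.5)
Your proposal is correct and follows essentially the same route as the paper: you bound $k_d\lesssim d^{2/3}$, bound $\alpha(q_d)\le C_\alpha q_d^{-\lambda}\lesssim d^{-\lambda/6}$, and conclude that the exponent $(4-\lambda)/6$ is negative because $\lambda>4$. The only difference is that you make the floor-function constants explicit (e.g.\ $\lfloor x\rfloor\ge x/2$ for $x\ge1$), where the paper writes $\asymp$. You are also right that monotonicity of $\alpha$ is not needed for this step.
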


\begin{proof}
The first four limits are immediate:
$p_d\asymp d^{1/3}\to\infty$, $q_d\asymp d^{1/6}\to\infty$,
$q_d/p_d\asymp d^{-1/6}\to0$, and $p_d/\sqrt d\asymp d^{-1/6}\to0$.

\medskip\noindent
Next, since $p_d+q_d\asymp d^{1/3}$, we have
\[
k_d=\left\lfloor\frac{d}{p_d+q_d}\right\rfloor \asymp d^{2/3}.
\]
By \Cref{ass:mixing}(ii), $\alpha(q_d)\le C_\alpha q_d^{-\lambda}\lesssim d^{-\lambda/6}$. Hence, as $d \to \infty$,
\[
k_d\,\alpha(q_d)\ \lesssim\ d^{2/3}\cdot d^{-\lambda/6}
= d^{\,2/3-\lambda/6}\ \longrightarrow\ 0
\]
since $\lambda>4$ implies $2/3-\lambda/6<0$.
\end{proof}


\section{Deferred proofs from \Cref{sec:angular-kernel,sec:theory,sec:online}}

\subsection[Proof of \Cref{lem:signal_factor}: Properties of the population discrepancy]{Proof of \Cref{lem:signal_factor}}

\begin{enumerate}[label=(\alph*)]
\item 
By \citet[Lemma D.2]{kim2020robust}, for each \(k\in[d]\), the coordinate-wise
angular pseudometric \(\rho_{F_d,G_d}^{(k)}\) is of negative type. That is, for
any \(m\ge2\), any \(z_1,\ldots,z_m\in\bbR^d\), and any
\(a_1,\ldots,a_m\in\bbR\) satisfying \(\sum_{i=1}^m a_i=0\),
\[
\sum_{i=1}^m\sum_{j=1}^m
a_i a_j\,\rho_{F_d,G_d}^{(k)}(z_i,z_j)
\le 0.
\]
Averaging over \(k\), we obtain
\[
\sum_{i=1}^m\sum_{j=1}^m
a_i a_j\,\overline\rho_{F_d,G_d}(z_i,z_j)
=
\frac1d\sum_{k=1}^d
\sum_{i=1}^m\sum_{j=1}^m
a_i a_j\,\rho_{F_d,G_d}^{(k)}(z_{i,k},z_{j,k})
\le 0.
\]
Thus \(\overline\rho_{F_d,G_d}\) is of negative type. Since each \(\rho_{F_d,G_d}^{(k)}\) is a bounded pseudometric, their average \(\overline\rho_{F_d,G_d}\) is also a bounded pseudometric.

\medskip
Since \(\overline\rho_{F_d,G_d}\) is bounded, all expectations appearing in \(\Delta_{\alpha,d}(F_d,G_d)\) are finite. Therefore,
\[
\Delta_{\alpha,d}(F_d,G_d)
=
2\mathbb E \big[\overline\rho_{F_d,G_d}(X,Y)\big]
-\mathbb E \big[\overline\rho_{F_d,G_d}(X,X')\big]
-\mathbb E \big[\overline\rho_{F_d,G_d}(Y,Y')\big]
\]
is a well-defined generalized energy distance generated by \(\overline\rho_{F_d,G_d}\), where \(X,X'\stackrel{\mathrm{i.i.d.}}{\sim}F_d\), \(Y,Y'\stackrel{\mathrm{i.i.d.}}{\sim}G_d\), and all variables are independent.
\qed

\item
Fix any $d \in \bbN$. The existence of $\delta_d^{*}$ is guaranteed by \Cref{lem:signal_factor_construction}.

Fix \(k\in[d]\), and write
\[
F=F_d^{(k)},\qquad G=G_d^{(k)},\qquad
H_\alpha=\alpha F+(1-\alpha)G .
\]
Since \(F\) and \(G\) are atom-free, \(H_\alpha\) is atom-free. Hence, for
\(u,v\in\bbR\),
\[
\mathbb E_{R\sim H_\alpha}\{\rho_0(u,v;R)\}
=
\mathbb P_{R\sim H_\alpha}\{R \text{ lies strictly between } u \text{ and } v\}
=
|H_\alpha(u)-H_\alpha(v)|.
\]
Therefore, the \(k\)-th coordinate contribution to
\(\Delta_{\alpha,d}(F_d,G_d)\) is
\[
\Delta_{\alpha}^{(k)}
=
2\mathbb E|H_\alpha(X)-H_\alpha(Y)|
-\mathbb E|H_\alpha(X)-H_\alpha(X')|
-\mathbb E|H_\alpha(Y)-H_\alpha(Y')|,
\]
where \(X,X'\stackrel{\mathrm{i.i.d.}}{\sim}F\), \(Y,Y'\stackrel{\mathrm{i.i.d.}}{\sim}G\), and all variables are independent.

\medskip
By the one-dimensional energy identity applied to the transformed metric
\((u,v)\mapsto |H_\alpha(u)-H_\alpha(v)|\),
\[
\Delta_{\alpha}^{(k)}
=
2\int_{\bbR} \{F(z)-G(z)\}^2\,dH_\alpha(z).
\]
Since \(H_\alpha=\alpha F+(1-\alpha)G\), this becomes
\begin{align*}
\Delta_{\alpha}^{(k)}
&= 2\alpha\int_{\bbR}\{F(z)-G(z)\}^2\,dF(z)
+
2(1-\alpha)\int_{\bbR}\{F(z)-G(z)\}^2\,dG(z)\\
&= 2\alpha\int_{\bbR} h(z)^2 \, dF(z)
+
2(1-\alpha)\int_{\bbR} h(z)^2 \, dG(z)\\
&= 2\alpha\; \mathfrak{M}(F,G) + 2(1-\alpha)\; \mathfrak{M}(F,G)\\
&= 2\;\mathfrak{M}(F,G)\\
&= 2 \int_{\bbR} \{F(z)-G(z)\}^2\,dF(z)
\end{align*}
where the last three equations follow from \Cref{lem:onedim-stieltjes-identities}. Thus, $\Delta_{\alpha}^{(k)}$ is independent of \(\alpha\) for each $k \in [d]$. Averaging over \(k\in[d]\), we obtain
\[
\Delta_{\alpha,d}(F_d,G_d)
=
\frac1d\sum_{k=1}^d \Delta_{\alpha}^{(k)}
=
\frac{2}{d}\sum_{k=1}^d
\int_{\bbR}
\{F_d^{(k)}(z)-G_d^{(k)}(z)\}^2\,dF_d^{(k)}(z)
= \delta_d^{*}
\]
for every \(\alpha\in[0,1]\). \qed

\item 
Next, we prove that $0 \leq \delta_d^{*} \leq \frac{2}{3}$. We know that
\[ 
\delta_d^{*} = \frac{2}{d}\sum_{k=1}^d \int_{\bbR} \bigl\{F_d^{(k)}(z)-G_d^{(k)}(z)\bigr\}^2 \,dF_d^{(k)}(z). 
\] 
The lower bound is immediate. We prove the upper bound coordinate by coordinate. Fix \(k\in[d]\), and write \( F=F_d^{(k)}\) and \(G=G_d^{(k)}.\) Let \(X\sim F\), and set 
\[ 
U=F(X),\qquad V=G(X). 
\] 
Since \(F\) is atom-free, \(U\sim \operatorname{Unif}(0,1)\). Also, \(0\le V\le1\). Hence \[ \int_{\bbR} \{F(z)-G(z)\}^2\,dF(z) = \mathbb E\{F(X)-G(X)\}^2 = \mathbb E(U-V)^2 = \mathbb E U^2+\mathbb E V^2-2\mathbb E(UV). 
\] 

Since \(0\le V\le1\), we have \(V^2\le V\), and therefore \(\mathbb E V^2\le \mathbb E [V]\). 

\smallskip
Next, we lower bound \(\mathbb E(UV)\). Let \(X'\) be an independent copy of \(X\), and set \(U'=F(X')\), \(V'=G(X')\). Since both \(F\) and \(G\) are non-decreasing, $(U-U')(V-V') \geq 0$ almost surely. Taking expectations gives 
\[ 
0 \le \mathbb E\{(U-U')(V-V')\} = 2\{\mathbb E(UV)-\mathbb E [U]\,\mathbb E [V]\}. 
\] 
Thus, 
\[ 
\mathbb E(UV)\ge \mathbb E [U] \cdot \mathbb E [V] = \frac12\,\mathbb E [V]. 
\] 
Combining the preceding inequalities, 
\[ 
\mathbb E V^2-2\mathbb E(UV) \le \mathbb E [V] - 2\cdot \frac12\mathbb E [V] = 0. 
\] 
Therefore, 
\[ 
\int_{\bbR} \{F(z)-G(z)\}^2\,dF(z) 
= \mathbb E(U-V)^2 \le \mathbb E U^2 
= \int_0^1 u^2\,du = \frac{1}{3}. 
\] 
Since this holds for every coordinate \(k\), 
\[ 
\delta_d^{*} ~\leq~ \frac{2}{d}\sum_{k=1}^d \frac{1}{3} ~=~ \frac{2}{3}. 
\] 
Finally, the bound is sharp. Take, for example, 
\[ F_d^{(k)}=\operatorname{Unif}(0,1), \qquad G_d^{(k)}=\operatorname{Unif}(2,3), \qquad k=1,\ldots,d. 
\] 
Then, for \(X_k\sim F_d^{(k)}\), we have \(G_d^{(k)}(X_k)=0\) and \(F_d^{(k)}(X_k)\sim \operatorname{Unif}(0,1)\). Hence 
\[ 
\int_{\bbR} \bigl\{F_d^{(k)}(z)-G_d^{(k)}(z)\bigr\}^2 \,dF_d^{(k)}(z) = \int_0^1 u^2\,du = \frac{1}{3} 
\] 
for every \(k\), and therefore \(\delta_d^{*}=2/3\). \qed

\medskip
It remains to prove the $\delta_d^{*} = 0$ characterization result.
\(k\in[d]\),
\[
\E_{Z\sim F_k}
\!\left[
\bigl(F_k(Z)-G_k(Z)\bigr)^2
\right]
=
\E_{Z\sim G_k}
\!\left[
\bigl(F_k(Z)-G_k(Z)\bigr)^2
\right]
=
0.
\]
Equivalently, if
\[
h_k(x):=F_k(x)-G_k(x),
\]
then \(h_k(Z)=0\) almost surely under both \(Z\sim F_k\) and
\(Z\sim G_k\). Thus
\[
\int h_k(x)^2\,d(F_k+G_k)(x)=0.
\]
Let \(\mu_k:=F_k+G_k\), viewed as a finite measure on \(\mathbb R\). The above equation implies that \(h_k=0\) \(\mu_k\)-almost everywhere.

\medskip
We now use the atom-free assumption. Since \(F_k\) and \(G_k\) are atom-free, their distribution functions are continuous, and hence \(h_k\) is continuous. We claim that \(h_k(x)=0\) for every \(x\in\mathbb R\). 
\begin{itemize}
\item[\textbullet] First, suppose \(x\in\operatorname{supp}(\mu_k)\) and \(h_k(x)\neq0\). Then by continuity there exists an open neighborhood \(I\) of \(x\) and a constant \(\varepsilon>0\) such that \(|h_k(y)|\ge \varepsilon\) for all \(y\in I\). Since \(x\in\operatorname{supp}(\mu_k)\), we have \(\mu_k(I)>0\), contradicting \(\int h_k^2\,d\mu_k=0\). Hence $h_k(x)=0$ for all $x\in\operatorname{supp}(\mu_k)$.

\item[\textbullet] Next, suppose \(x\notin\operatorname{supp}(\mu_k)\). Let \(I\) be the connected component of \(\mathbb R\setminus\operatorname{supp}(\mu_k)\) containing \(x\). Since \(\mu_k(I)=0\), both \(F_k\) and \(G_k\) assign zero mass to \(I\); therefore both distribution functions are constant on \(I\), and so \(h_k\) is constant on \(I\). If \(I=(a,b)\) with finite endpoints, then \(a,b\in\operatorname{supp}(\mu_k)\), and by the previous paragraph \(h_k(a)=h_k(b)=0\). By continuity, the constant value of \(h_k\) on \(I\) must also be $0$. If \(I\) is an unbounded component, then either both distribution functions are identically \(0\) on \(I\), or both are identically \(1\) on \(I\), so again \(h_k=0\) on \(I\).
\end{itemize}

Thus \(h_k(x)=0\) for all \(x\in\mathbb R\). Consequently, \(F_k(x)=G_k(x)\) for every \(x\in\mathbb R\), and hence \(F_k=G_k\) as distributions. We have shown that \(\delta_d^{*}=0\) implies \(F_k=G_k\) for every \(k\in[d]\). The converse is immediate: if \(F_k=G_k\) for every \(k\), then every squared term in the representation of \(\delta_d^{*}\) is identically zero, and hence \(\delta_d^{*}=0\). 

\medskip
Therefore,
\[
\delta_d^{*}>0
\quad\Longleftrightarrow\quad
\exists\,k\in[d]\ \text{such that}\ F_d^{(k)}\neq G_d^{(k)}.
\]
In other words,
\[
\delta_d^{*}=0
\quad\Longleftrightarrow\quad
F_d^{(k)} \equiv G_d^{(k)}
\ \text{for every }k = 1,2, \ldots,d.
\]
\qed
\end{enumerate}


\subsection[Proof of \Cref{prop:pair-dependent-mmd-main}: Exact pair-dependent MMD representation of $\delta_d^{*}$]{Proof of \Cref{prop:pair-dependent-mmd-main}}

Fix a pair of distributions $(F_d,G_d)$ and a reference point $\beta\in\mathbb R^d\setminus\{0\}$. For brevity, write
\[
\overline{\rho}(x,y):=\overline{\rho}_{F_d,G_d}(x,y),
\qquad
\mathbf k(x,y):=\mathbf k_{F_d,G_d}^{(\beta)}(x,y).
\]
By the discussion preceding the proposition, $\overline{\rho}$ is a pseudometric of negative type. Hence the centered distance kernel
\[
\mathbf k(x,y)
=
\frac12\{\overline{\rho}(x,\beta)+\overline{\rho}(y,\beta)-\overline{\rho}(x,y)\}
\]
is positive semidefinite. Since $\overline{\rho}$ is bounded, all expectations below are finite.

Let $X,X'\stackrel{\mathrm{i.i.d.}}{\sim}F_d$ and
$Y,Y'\stackrel{\mathrm{i.i.d.}}{\sim}G_d$, all mutually independent. Then
\begin{align*}
\mathbb E[\mathbf k(X,X')] =
\frac12\mathbb E\!\left[
\overline{\rho}(X,\beta)+\overline{\rho}(X',\beta)-\overline{\rho}(X,X')
\right] =
\mathbb E[\overline{\rho}(X,\beta)]
-\frac12\mathbb E[\overline{\rho}(X,X')],
\end{align*}
and similarly,
\[
\mathbb E[\mathbf k(Y,Y')]
=
\mathbb E[\overline{\rho}(Y,\beta)]
-\frac12\mathbb E[\overline{\rho}(Y,Y')].
\]
Moreover,
\[
\mathbb E[\mathbf k(X,Y)]
=
\frac12\left\{
\mathbb E[\overline{\rho}(X,\beta)]
+
\mathbb E[\overline{\rho}(Y,\beta)]
-
\mathbb E[\overline{\rho}(X,Y)]
\right\}.
\]
Therefore,
\begin{align*}
\mathrm{MMD}^2(F_d,G_d;{\mathbf k})
&=
\mathbb E[\mathbf k(X,X')]
+
\mathbb E[\mathbf k(Y,Y')]
-
2\mathbb E[\mathbf k(X,Y)] \\
&\begin{aligned}
&= \mathbb E[\overline{\rho}(X,\beta)]
-\frac12\mathbb E[\overline{\rho}(X,X')] + \mathbb E[\overline{\rho}(Y,\beta)]
-\frac12\mathbb E[\overline{\rho}(Y,Y')]\\ 
&\hspace{5cm} -\mathbb E[\overline{\rho}(X,\beta)] - \mathbb E[\overline{\rho}(Y,\beta)] + \mathbb E[\overline{\rho}(X,Y)]
\end{aligned}
\\
&=
\mathbb E[\overline{\rho}(X,Y)]
-\frac12\mathbb E[\overline{\rho}(X,X')]
-\frac12\mathbb E[\overline{\rho}(Y,Y')].
\end{align*}
Multiplying both sides by $2$ gives
\[
2\,\mathrm{MMD}^2(F_d,G_d;{\mathbf k})
=
2\,\mathbb E\big[\overline{\rho}(X,Y)\big]
-
\mathbb E\big[\overline{\rho}(X,X')\big]
-
\mathbb E\big[\overline{\rho}(Y,Y')\big]
=
\Delta_{\alpha,d}(F_d,G_d) = \delta_d^{*},
\]
which proves \eqref{eq:Delta-mmd-main}.

Finally, all terms involving the reference point $\beta$ cancel in the above expansion. Hence the value of \(\mathrm{MMD}^2(F_d,G_d;\mathbf k_{F_d,G_d}^{(\beta)})\), and therefore the right-hand side of \eqref{eq:Delta-mmd-main}, does not depend on the choice of $\beta$.
\qed


\subsection{Derivation of \(\delta_d^{*}\) in the examples}
\label{subsec:derive-delta-examples}

We derive the expressions stated in Examples~\ref{ex:ex1-gaussian-mean-shift} and \ref{ex:ex2-cauchy-scale}. Throughout, we use the representation from \Cref{lem:signal_factor}:
\[
\delta_d^{*}
=
\frac{2}{d}
\sum_{k=1}^d
\int_{\mathbb R}
\{F_d^{(k)}(z)-G_d^{(k)}(z)\}^2\,dF_d^{(k)}(z)\,.
\]

\subsubsection{\Cref{ex:ex1-gaussian-mean-shift}: Gaussian mean shift}

Suppose \(F_d=\mcN(0,I_d)\) and \(G_d=\mcN(\mu,I_d)\), where \(\mu=(\mu_1,\dots,\mu_d)^\top\in\bbR^d\). Then, for each \(k\in[d]\),
\[
F_d^{(k)}(z)=\Phi(z),
\qquad
G_d^{(k)}(z)=\Phi(z-\mu_k).
\]
Substituting these marginal CDFs into the general representation immediately gives
\[
\delta^{*}_d
=
\frac{2}{d}
\sum_{k=1}^d
\E_{Z\sim\mcN(0,1)}
\bigl(\Phi(Z)-\Phi(Z-\mu_k)\bigr)^2,
\]
which proves \eqref{eq:delta-gaussian-shift}.

It remains to derive the small-shift expansion. For \(m\in\bbR\), define
\[
I(m):=
\E_{Z\sim\mcN(0,1)}
\bigl(\Phi(Z)-\Phi(Z-m)\bigr)^2.
\]
Using
\[
\Phi(z)-\Phi(z-m)
=
m\int_0^1 \phi(z-um)\,du,
\]
where \(\phi\) is the standard Gaussian density, we obtain
\[
\frac{I(m)}{m^2}
=
\int_{\bbR}
\left\{\int_0^1 \phi(z-um)\,du\right\}^2
\phi(z)\,dz.
\]
For \(|m|\le1\), the integrand is bounded by
\[
\sup_{|a|\le1}\phi(z-a)^2\,\phi(z),
\]
and this envelope is integrable because it is bounded on compact sets and has Gaussian tails. Hence dominated convergence applies. Hence, by the dominated convergence theorem,
\[
\frac{I(m)}{m^2}
\longrightarrow
\int_{\bbR}\phi(z)^3\,dz.
\]
Since
\[
\int_{\bbR}\phi(z)^3\,dz
=
(2\pi)^{-3/2}\int_{\bbR}e^{-3z^2/2}\,dz
=
\frac{1}{2\pi\sqrt3},
\]
we have
\[
I(m)=\frac{m^2}{2\pi\sqrt3}+o(m^2),
\qquad m\to0.
\]
Therefore, if \(\|\mu\|_\infty\to0\), then
\[
\delta^{*}_d
=
\frac{2}{d}
\sum_{k=1}^d
\left\{
\frac{\mu_k^2}{2\pi\sqrt3}+o(\mu_k^2)
\right\}
=
\frac{1}{\pi\sqrt3}\frac{\|\mu\|_2^2}{d}
+
o\!\left(\frac{\|\mu\|_2^2}{d}\right),
\]
where the remainder is uniform over \(k\) because \(\max_{1\le k\le d}|\mu_k|\le\|\mu\|_\infty\to0\).
\qed

\subsubsection{\Cref{ex:ex2-cauchy-scale}: Cauchy scale change}

Let \(F\) and \(G\) denote the univariate CDFs of \(\mathrm{Cauchy}(0,1)\) and \(\mathrm{Cauchy}(0,\lambda)\), respectively. Since \(F_d=F^{\otimes d}\) and \(G_d=G^{\otimes d}\), all coordinate marginals are identical. Hence, from the definition of the population signal,
\[
\delta^{*}_d
=
\frac{2}{d}
\sum_{k=1}^d
\int_{\mathbb R}
\{F(z)-G(z)\}^2\,dF(z)
=
2
\int_{\mathbb R}
\{F(z)-G(z)\}^2\,dF(z).
\]
For \(F=\mathrm{Cauchy}(0,1)\) and \(G=\mathrm{Cauchy}(0,\lambda)\), we know that
\[
F(z)=\frac12+\frac1\pi \tan^{-1}(z),
\qquad
G(z)=\frac12+\frac1\pi \tan^{-1}(z/\lambda),
\]
and
\[
dF(z)=\frac{1}{\pi(1+z^2)}\,dz.
\]
Therefore
\[
\begin{aligned}
\delta^{*}_d
&=
2
\int_{\mathbb R}
\left[
\frac{1}{\pi}
\{\tan^{-1}(z)-\tan^{-1}(z/\lambda)\}
\right]^2
\frac{1}{\pi(1+z^2)}\,dz \\[2mm]
&=
\frac{2}{\pi^3}
\int_{\mathbb R}
\frac{\{\tan^{-1}(z)-\tan^{-1}(z/\lambda)\}^2}{1+z^2}\,dz.
\end{aligned}
\]
This proves the first equality.

\medskip
\noindent
It remains to evaluate the integral
\[
I_\lambda
:=
\int_{\mathbb R}
\frac{
\left[
\tan^{-1}(z)-\tan^{-1}(z/\lambda)
\right]^2
}
{1+z^2}\,dz .
\]
Set \(z=\tan(\theta)\), where \(\theta\in(-\pi/2,\pi/2)\). Since \(dz/(1+z^2)=d\theta\), we get
\[
I_\lambda
=
\int_{-\pi/2}^{\pi/2}
\left[
\theta-\tan^{-1}\!\left(\frac{\tan(\theta)}{\lambda}\right)
\right]^2d\theta.
\]
Let $\rho:=\frac{\lambda-1}{\lambda+1}$. Since \(\lambda>0\), we have \(|\rho|<1\). We now derive a Fourier expansion for
\[
A_\lambda(\theta)
:=
\theta-\tan^{-1}\!\left(\frac{\tan(\theta)}{\lambda}\right).
\]
First,
\[
\tan A_\lambda(\theta)
=
\frac{\tan(\theta)-\tan(\theta)/\lambda}
{1+\tan^2(\theta)/\lambda}
=
\frac{(\lambda-1)\tan(\theta)}{\lambda+\tan^2(\theta)}.
\]
On the other hand,
\[
1+\rho e^{2i\theta}
=
1+\rho\cos(2\theta)+i\rho\sin(2\theta).
\]
Since \(|\rho|<1\), its real part satisfies $1+\rho\cos(2\theta)\ge 1-|\rho|>0$, so the principal argument lies in \((- \pi/2,\pi/2)\). 
Moreover,
\begin{align*}
\tan\!\left\{\operatorname{Arg}(1+\rho e^{2i\theta})\right\}
=
\frac{\rho\sin(2\theta)}{1+\rho\cos(2\theta)} 
&= \frac{2\rho\tan(\theta)}{(1+\tan^2(\theta))+\rho(1-\tan^2(\theta))} \\[2mm]
&= \frac{2\frac{\lambda-1}{\lambda+1}\tan(\theta)}{(1+\tan^2(\theta))+\frac{\lambda-1}{\lambda+1}(1-\tan^2(\theta))} \\[2mm]
&= \frac{2(\lambda-1)\tan(\theta)}{(\lambda+1)(1+\tan^2(\theta))+(\lambda-1)(1-\tan^2(\theta))} \\[2mm]
&= \frac{2(\lambda-1)\tan(\theta)}{2\lambda+2\tan^2(\theta)} \\[2mm]
&=
\frac{(\lambda-1)\tan(\theta)}{\lambda+\tan^2(\theta)},
\end{align*}
Thus \(A_\lambda(\theta)\) and \(\operatorname{Arg}(1+\rho e^{2i\theta})\) have the same tangent. Since both belong to \((- \pi/2,\pi/2)\), they are equal:
\[
A_\lambda(\theta)
=
\operatorname{Arg}(1+\rho e^{2i\theta}).
\]
Next, because \(|\rho|<1\), the logarithmic expansion is absolutely and uniformly convergent:
\[
\log(1+\rho e^{2i\theta})
=
\sum_{m=1}^{\infty}
(-1)^{m+1}\frac{\rho^m}{m}e^{2im\theta}.
\]
Taking imaginary parts gives
\[
A_\lambda(\theta)
=
\sum_{m=1}^{\infty}
(-1)^{m+1}\frac{\rho^m}{m}\sin(2m\theta).
\]
By orthogonality on \((-\pi/2,\pi/2)\),
\[
\int_{-\pi/2}^{\pi/2}
\sin(2m\theta)\sin(2n\theta)\,d\theta
=
\begin{cases}
\pi/2, & m=n,\\
0, & m\ne n.
\end{cases}
\]
Therefore, by Parseval's identity for the sine expansion on \((-\pi/2,\pi/2)\), we get
\[
I_\lambda
=
\int_{-\pi/2}^{\pi/2} A_\lambda(\theta)^2\,d\theta
=
\frac{\pi}{2}
\sum_{m=1}^\infty \frac{\rho^{2m}}{m^2}
=
\frac{\pi}{2}
\operatorname{Li}_2(\rho^2).
\]
where $\operatorname{Li}_2$ denotes the dilogarithm function.
Since \(\rho=\frac{\lambda-1}{\lambda+1}\), this gives
\[
I_\lambda
=
\frac{\pi}{2}
\operatorname{Li}_2\!\left[
\left(\frac{\lambda-1}{\lambda+1}\right)^2
\right].
\]
Substituting this into the main expression yields
\[
\delta^{*}_d
=
\frac{2}{\pi^3}
\int_{\mathbb R}
\frac{
\left[
\tan^{-1}(z)-\tan^{-1}(z/\lambda)
\right]^2
}
{1+z^2}\,dz
=
\frac{1}{\pi^2}
\operatorname{Li}_2\!\left[
\left(\frac{\lambda-1}{\lambda+1}\right)^2
\right].
\]
\qed

\subsection[Proof of \Cref{prop:E[D-star]}: Exact closed-form expression of mean discrepancy]{Proof of \Cref{prop:E[D-star]}}

For distinct sample observations \(Z_i,Z_j\), write
\[
\widehat{\overline{\rho}}(Z_i,Z_j)
=
\frac1d\sum_{k=1}^d
\frac1N\sum_{r=1}^N \rho_0(Z_{i,k},Z_{j,k};Z_{r,k}).
\]
When the anchor \(Z_r\) coincides with either endpoint \(Z_i\) or \(Z_j\), the corresponding \(\rho_0\)-term is zero by the convention in \eqref{eq:one-dim-angular-kernel}. Thus endpoint anchors must be counted as zero terms in the pooled-anchor average.

Define the endpoint-corrected expected pairwise quantities
\[
A := \frac{1}{dN}\sum_{k=1}^d
\left[(\tau-1)a_k^{(1)}+(N-\tau-1)a_k^{(2)}\right],
\]
\[
B := \frac{1}{dN}\sum_{k=1}^d
\left[(\tau-2)b_k^{(1)}+(N-\tau)b_k^{(2)}\right],
\qquad
C := \frac{1}{dN}\sum_{k=1}^d
\left[\tau c_k^{(1)}+(N-\tau-2)c_k^{(2)}\right].
\]

Then \(A,B,C\) are respectively the expectations of~ \(\widehat{\overline\rho}(F,G)\), \(\widehat{\overline\rho}(F,F)\), and \(\widehat{\overline\rho}(G,G)\), where \(F\) and \(G\) indicate the laws of the two sample endpoints. For instance, an \(F\)-\(G\) pair leaves \(\tau-1\) available \(F\)-anchors and \(N-\tau-1\) available \(G\)-anchors; the two endpoint anchors are included in the factor \(1/N\) but contribute zero.
We first show that
\begin{equation}
\label{eq:ABC-delta-proof}
2A-B-C=\delta_d.
\end{equation}

\noindent
For each coordinate \(k\), let 
\[\mathfrak{M}_k:=\int\{F_d^{(k)}-G_d^{(k)}\}^2\,dF_d^{(k)}.\]
By \Cref{lem:onedim-stieltjes-identities},
\[
2a_k^{(1)}-b_k^{(1)}-c_k^{(1)}=2\,\mathfrak{M}_k,\qquad
2a_k^{(2)}-b_k^{(2)}-c_k^{(2)}=2\,\mathfrak{M}_k,\qquad
a_k^{(1)}+a_k^{(2)}-b_k^{(1)}-c_k^{(2)}=\mathfrak{M}_k.
\]
Therefore,
\begin{align*}
2A-B-C
&=
\frac1{dN}\sum_{k=1}^d
\bigg[
\tau\{2a_k^{(1)}-b_k^{(1)}-c_k^{(1)}\}
+
(N-\tau)\{2a_k^{(2)}-b_k^{(2)}-c_k^{(2)}\} \\
&\hspace{3.5cm}
-
2\{a_k^{(1)}+a_k^{(2)}-b_k^{(1)}-c_k^{(2)}\}
\bigg] \\
&=
\frac1{dN}\sum_{k=1}^d
\{\tau(2\,\mathfrak{M}_k)+(N-\tau)(2\,\mathfrak{M}_k)-2\,\mathfrak{M}_k\} \\
&=
\frac{2(N-1)}{Nd}\sum_{k=1}^d \mathfrak{M}_k \\
&=
\delta_d,
\end{align*}
where the last equality follows from \Cref{lem:signal_factor}. This proves
\eqref{eq:ABC-delta-proof}.

Fix any \(t \in [\tau]\). For compactness, write \(\widehat\rho=\widehat{\overline{\rho}}\) throughout this proof. Recall the definition of \(\mathfrak{W}_{d}(t)\):
\begin{equation}
\mathfrak{W}_{d}(t)
=
\frac{2}{t(N-t)}
\sum_{x \in \mathcal{X}_t}
\sum_{y \in \mathcal{Y}_t}
\widehat{\rho}(x, y)
-
\frac{1}{t(t-1)}
\sum_{\substack{x,x' \in \mathcal{X}_t\\ x\neq x'}}
\widehat{\rho}(x, x')
-
\frac{1}{(N-t)(N-t-1)}
\sum_{\substack{y,y' \in \mathcal{Y}_t\\ y\neq y'}}
\widehat{\rho}(y, y').
\end{equation}
We simply take expectation on both sides, and compute:
\begin{align*}
\E [\mathfrak{W}_{d}(t)]
&=
\begin{aligned}[t]
&\frac{2}{t(N-t)}
\sum_{x \in \mathcal{X}_t}
\sum_{y \in \mathcal{Y}_t}
\E[\widehat{\rho}(x, y)]
-
\frac{1}{t(t-1)}
\sum_{\substack{x,x' \in \mathcal{X}_t\\ x\neq x'}}
\E [\widehat{\rho}(x, x')]
\\
&\qquad
-
\frac{1}{(N-t)(N-t-1)}
\sum_{\substack{y,y' \in \mathcal{Y}_t\\ y\neq y'}}
\E [\widehat{\rho}(y, y')]
\end{aligned}
\\
&=:
2P^{(1)}_t - Q^{(1)}_t - R^{(1)}_t .
\end{align*}

We simplify \(P^{(1)}_t\), \(Q^{(1)}_t\), and \(R^{(1)}_t\) individually for better readability. Since \(t\le \tau\), all observations in \(\mathcal X_t\) are from \(F_d\), whereas \(\mathcal Y_t\) contains \(\tau-t\) observations from \(F_d\) and \(N-\tau\) observations from \(G_d\).
Note that
\begin{align*}
P^{(1)}_t
&:=
\frac{1}{t(N-t)}
\sum_{x \in \mathcal{X}_t}
\sum_{y \in \mathcal{Y}_t}
\E[\widehat{\rho}(x,y)]
\\
&=
\frac{1}{t(N-t)}
\left\{
\sum_{x \in \mathcal{X}_t}
\sum_{\substack{y \in \mathcal{Y}_t\\ y\sim F_d}}
\E[\widehat{\rho}(x,y)]
+
\sum_{x \in \mathcal{X}_t}
\sum_{\substack{y \in \mathcal{Y}_t\\ y\sim G_d}}
\E[\widehat{\rho}(x,y)]
\right\}
\\
&=
\frac{1}{t(N-t)}
\left\{
t(\tau-t)B+t(N-\tau)A
\right\}
\\
&=
\frac{1}{N-t}
\left\{
(\tau-t)B+(N-\tau)A
\right\}.
\end{align*}
Similarly,
\begin{align*}
Q^{(1)}_t
&:=
\frac{1}{t(t-1)}
\sum_{\substack{x,x' \in \mathcal{X}_t\\ x\neq x'}}
\E[\widehat{\rho}(x,x')]
= \frac{1}{t(t-1)} \sum_{\substack{x,x' \in \mathcal{X}_t\\ x\neq x'}} B = B.
\end{align*}
For the third term,
\begin{align*}
R^{(1)}_t
&:=
\frac{1}{(N-t)(N-t-1)}
\sum_{\substack{y,y' \in \mathcal{Y}_t\\ y\neq y'}}
\E[\widehat{\rho}(y,y')]
\\
&=
\frac{1}{(N-t)(N-t-1)}
\Bigg\{
\sum_{\substack{y,y' \in \mathcal{Y}_t\\ y\sim F_d,\ y'\sim G_d}} \!\!\!\!\!\!\E[\widehat{\rho}(y,y')]
\quad+~
\sum_{\substack{y,y' \in \mathcal{Y}_t\\ y\sim G_d,\ y'\sim F_d}} \!\!\!\!\!\!\E[\widehat{\rho}(y,y')]
\\
&\hspace{5cm}
+ \sum_{\substack{y,y' \in \mathcal{Y}_t\\ y\sim F_d,\ y'\sim F_d\\ y\neq y'}} \!\!\!\!\!\!\E[\widehat{\rho}(y,y')]
\quad+~
\sum_{\substack{y,y' \in \mathcal{Y}_t\\ y\sim G_d,\ y'\sim G_d\\ y\neq y'}}
\!\!\!\!\!\!\E[\widehat{\rho}(y,y')]
\Bigg\}
\\
&=
\frac{1}{(N-t)(N-t-1)}
\Big\{
2(\tau-t)(N-\tau)A + (\tau-t)(\tau-t-1)B + (N-\tau)(N-\tau-1)C
\Big\}.
\end{align*}

\noindent
Denote
\(\tau_{\ell}:=\tau-t\) and \(\tau_r:=N-\tau\). Then \(N-t=\tau_\ell+\tau_r\). We combine these to obtain the following.
\begin{align*}
\E\left[\mathfrak{W}_{d}(t)\right]
&=
2P^{(1)}_t - Q^{(1)}_t - R^{(1)}_t
\\
&=
\begin{aligned}[t]
&\bigg[
\frac{2}{N-t}
\Big\{(\tau-t)B+(N-\tau)A\Big\}
-B
\\
&\qquad
-
\frac{1}{(N-t)(N-t-1)}
\Big\{
2(\tau-t)(N-\tau)A
+
(\tau-t)(\tau-t-1)B\\
&\hspace{9cm}+
(N-\tau)(N-\tau-1)C
\Big\}
\bigg]
\end{aligned}
\\
&=
\begin{aligned}[t]
&\frac{1}{(N-t)(N-t-1)}
\bigg[
2(\tau_\ell+\tau_r-1)\tau_\ell B
+
2\tau_r(\tau_\ell+\tau_r-1)A
\\
&\hspace{3cm}
-
(\tau_\ell+\tau_r)(\tau_\ell+\tau_r-1)B
-
2\tau_\ell\tau_r A
-
\tau_\ell(\tau_\ell-1)B
-
\tau_r(\tau_r-1)C
\bigg]
\end{aligned}
\\
&=
\begin{aligned}[t]
&\frac{1}{(N-t)(N-t-1)}
\bigg[
A\Big\{2\tau_r(\tau_\ell+\tau_r-1)-2\tau_\ell\tau_r\Big\}
\\
&\hspace{2.6cm}
+
B\Big\{
2\tau_\ell(\tau_\ell+\tau_r-1)
-
(\tau_\ell+\tau_r)(\tau_\ell+\tau_r-1)
-
\tau_\ell(\tau_\ell-1)
\Big\}
\\
&\hspace{2.6cm}
-
C\Big\{\tau_r(\tau_r-1)\Big\}
\bigg]
\end{aligned}
\\
&=
\begin{aligned}[t]
&\frac{1}{(N-t)(N-t-1)}
\bigg[
2A\Big\{\tau_\ell\tau_r+\tau_r^2-\tau_r-\tau_\ell\tau_r\Big\}
-
C\Big\{\tau_r(\tau_r-1)\Big\}
\\
&\hspace{2.6cm}
+
B\Big\{
2\tau_\ell^2+2\tau_\ell\tau_r-2\tau_\ell
-\tau_\ell^2-2\tau_\ell\tau_r-\tau_r^2+\tau_\ell+\tau_r
-\tau_\ell^2+\tau_\ell
\Big\}
\bigg]
\end{aligned}
\\
&=
\frac{\tau_r(\tau_r-1)}
{(N-t)(N-t-1)}
\Big[2A-B-C\Big]
\\[2mm]
&=
\frac{(N-\tau)(N-\tau-1)}
{(N-t)(N-t-1)}
\Big[2A-B-C\Big]
\\[2mm]
&=
\frac{(N-\tau)(N-\tau-1)}
{(N-t)(N-t-1)}
\delta_d\\[2mm]
&=: f_d(t).
\end{align*}
This proves the first part of \eqref{eq:E[D-star]}. 

The second part can be proved in an analogous manner. Suppose \(t\ge\tau\). Then \(\mcY_t\) contains only \(G_d\)-observations, whereas \(\mcX_t\) contains \(\tau\) observations from \(F_d\) and \(t-\tau\) observations from \(G_d\). Again write \(\E[\mathfrak W_d(t)]=2P_t^{(2)}-Q_t^{(2)}-R_t^{(2)}\). Here,
\begin{align*}
P_t^{(2)} &= \frac{\tau A+(t-\tau)C}{t},\\
Q_t^{(2)} &=
\frac{\tau(\tau-1)B + 2\tau(t-\tau)A + (t-\tau)(t-\tau-1)C}{t(t-1)},\\
R_t^{(2)} &= C.
\end{align*}
Substituting and simplifying gives
\[
\E[\mathfrak W_d(t)]
=
\frac{\tau(\tau-1)}{t(t-1)}(2A-B-C)
=
\frac{\tau(\tau-1)}{t(t-1)}\,\delta_d \,.
\]
Combining the two cases proves \eqref{eq:E[D-star]}.

Finally, the shape factor \(\Lambda_{\tau,N}(t)\) lies in \([0,1]\) for all admissible \(t\), and \(\delta_d\in[0,\nicefrac{2}{3})\) by \eqref{eq:delta-simplified-cvm}. Hence \(0\le\mu_d(t)\le\nicefrac{2}{3}\), completing the proof.
\qed

\subsection[Proof of \Cref{prop:var-cov-null}: Covariance structure of $\mathfrak{W}_d(t)$ under $\mathbf{H}_{0,d}$]{Proof of \Cref{prop:var-cov-null} 
}

We aim to show the following.
\begin{itemize}
    \item[(a)] 
    $\mathcal{V}_{d,N} := \var(\widehat{\overline{\rho}}(Z_i,Z_j)) - 2 \cov(\widehat{\overline{\rho}}(Z_i,Z_j), \widehat{\overline{\rho}}(Z_i,Z_k)) + \cov(\widehat{\overline{\rho}}(Z_i,Z_j), \widehat{\overline{\rho}}(Z_k,Z_l)) =: \sigma_d^2 - 2\kappa_d + \eta_d \in [0,1]$ (with distinct $i,j,k,l$) depends only on $(d,N,F_d)$, not on $t$.

    \item[(b)]
    For all $t,t' \in \mcT$ such that $t \leq t'$,
    \begin{equation}
    \cov\!\big(\mathfrak{W}_d(t),\mathfrak{W}_d(t')\big) = \frac{2(N-1)(N-2)}{\,t'(t'-1)(N-t)(N-t-1)} \, \mathcal{V}_{d,N}~.
    \end{equation}
\end{itemize}

\noindent\textbf{Proof of (a).} Lastly, we want to show that $\mathcal{V}_{d,N} \in [0,1]$. Fix four distinct indices $i,j,k,l$ and write
\[
G_{ab}:=\widehat{\overline{\rho}}(Z_a,Z_b)\in[0,1]\qquad(a\neq b),
\]
where the bound follows since $\rho_0(\cdot,\cdot;\cdot)\in[0,1]$ and $G_{ab}$ is an average of such terms.
Under $\mathbf{H}_{0,d}$, we have $Z_1,\dots,Z_N \stackrel{\mathrm{i.i.d.}}{\sim} F_d$, hence the array $\{G_{ab}\}$ is jointly exchangeable; in particular, for distinct indices the covariances depend only on the overlap pattern. Therefore the quantities
\[
\sigma_d^2:=\var(G_{ij}),\qquad 
\kappa_d:=\cov(G_{ij},G_{ik}),\qquad 
\eta_d:=\cov(G_{ij},G_{kl})
\]
are well-defined (do not depend on the specific choice of distinct indices), and
\[
\mathcal{V}_{d,N}=\sigma_d^2-2\kappa_d+\eta_d.
\]
Now define the quantity
\[
\varphi := (G_{ij}-G_{ik})-(G_{lj}-G_{lk}).
\]
A direct expansion gives
\begin{align*}
\var(\varphi)
&= \var(G_{ij}-G_{ik})+\var(G_{lj}-G_{lk})
   -2\cov(G_{ij}-G_{ik},\,G_{lj}-G_{lk}) \\
&= (2\sigma_d^2-2\kappa_d) + (2\sigma_d^2-2\kappa_d)
   -2\Big(\kappa_d-\eta_d-\eta_d+\kappa_d\Big) \\
&= 4(\sigma_d^2-2\kappa_d+\eta_d) \\
&= 4\mathcal{V}_{d,N}, 
\end{align*}
where we used that $\cov(G_{ij},G_{lj})=\kappa_d$ and $\cov(G_{ij},G_{lk})=\eta_d$ for distinct $i,j,k,l$ (one shared index vs.\ disjoint pairs), and similarly for the other terms. Hence
\[
\mathcal{V}_{d,N}=\frac14\,\var(\varphi)\ge 0.
\]
Moreover, since each $G_{ab}\in[0,1]$, we have $G_{ij}-G_{ik}\in[-1,1]$ and $G_{lj}-G_{lk}\in[-1,1]$, so $\Delta \in [-2,2]$ almost surely. Thus, by Popoviciu's variance inequality,
\[
\var(\varphi)\le \frac{(2-(-2))^2}{4}=4
\quad\Longrightarrow\quad
\mathcal{V}_{d,N}=\frac14\var(\varphi)\le 1.
\]
Combining the two bounds yields $\mathcal{V}_{d,N}\in[0,1]$.
\qed

\noindent\textbf{Proof of (b).} Recall that for any $t\in\mcT$, we write $\mathfrak{W}_d(t) = 2A_t - B_t - C_t$, 
where
\[
A_t
:= \frac{1}{t(N-t)}
\sum_{i\in \mathcal{X}_t}\,
\sum_{j\in \mathcal{Y}_t}
\widehat{\rho}(Z_i,Z_j) \;,
\]
\[
B_t
:= \frac{1}{t(t-1)}
\sum_{\substack{i,i'\in \mathcal{X}_t\\ i \neq i'}}
\widehat{\rho}(Z_i,Z_{i'}) \;,\qquad
C_t
:= 
\frac{1}{(N-t)(N-t-1)}
\sum_{\substack{j,j'\in \mathcal{Y}_t\\ j \neq j'}}
\widehat{\rho}(Z_j,Z_{j'})\,,
\]
where $\mathcal{X}_t = \{1,\dots,t\}$ and $\mathcal{Y}_t=\{t+1,\dots,N\}$.
Thus,
\begin{align}
\cov(\mathfrak{W}_{d}(t), \mathfrak{W}_{d}(t')) 
&= \cov(2 A_t - B_t - C_t , 2 A_{t'} - B_{t'} - C_{t'}) \notag\\
&\begin{aligned}
\notag
&=4\,\cov(A_t,A_{t'})
-2\,\cov(A_t,B_{t'})
-2\,\cov(A_t,C_{t'})\\
        &\qquad-2\,\cov(B_t,A_{t'})
+ \cov(B_t,B_{t'})
+ \cov(B_t,C_{t'})\\ 
        &\qquad\qquad-2\,\cov(C_t,A_{t'})
+ \cov(C_t,B_{t'})
+ \cov(C_t,C_{t'}).
    \end{aligned}\\
    &\begin{aligned}
        &=4\,\cov(A_t,A_{t'})
        + \cov(B_t,B_{t'})
        + \cov(C_t,C_{t'})\\
        &\qquad
        -2\,\cov(A_t,B_{t'})
        + \cov(B_t,C_{t'})
        -2\,\cov(C_t,A_{t'})\\ 
        &\qquad\qquad
        -2\,\cov(A_{t'},B_t)
        + \cov(B_{t'},C_t)
        -2\,\cov(C_{t'},A_t).\\
    \end{aligned}
\label{eq:cov_W_expansion_initial}
\end{align}

Each covariance term can be written explicitly in terms of the individual kernel evaluations $\widehat{\rho}(Z_a,Z_b)$.  
For example,
\[
\cov(A_t,A_{t'})
=
\frac{1}{t(N-t)\,t'(N-t')}
\sum_{i\in\mathcal{X}_t}\sum_{j\in\mathcal{Y}_t}
\sum_{k\in\mathcal{X}_{t'}}\sum_{\ell\in\mathcal{Y}_{t'}}
\cov\bigl(\widehat{\rho}(Z_i,Z_j),\widehat{\rho}(Z_k,Z_\ell)\bigr).
\]
Analogous expansions hold for all the remaining eight covariance blocks, each differing only in the admissible index ranges and the corresponding normalization factor.  
For instance,
\[
\cov(A_t,B_{t'})
=
\frac{1}{t(N-t)\,t'(t'-1)}
\sum_{i\in\mathcal{X}_t}\sum_{j\in\mathcal{Y}_t}
\sum_{\substack{k,k'\in\mathcal{X}_{t'}\\ k \neq k'}}
\cov\bigl(\widehat{\rho}(Z_i,Z_j),\widehat{\rho}(Z_k,Z_{k'})\bigr),
\]
\[
\cov(B_t,C_{t'})
=
\frac{1}{t(t-1)\,(N-t')(N-t'-1)}
\sum_{\substack{i,i'\in\mathcal{X}_t\\ i \neq i'}}
\sum_{\substack{\ell,\ell'\in\mathcal{Y}_{t'}\\ \ell \neq \ell'}}
\cov\bigl(\widehat{\rho}(Z_i,Z_{i'}),\widehat{\rho}(Z_\ell,Z_{\ell'})\bigr),
\]
and similarly for the remaining six terms.

\medskip

Throughout this analysis, we work under the null hypothesis
\(
\mathbf{H}_{0,d} : Z_1,\dots,Z_N \stackrel{\mathrm{i.i.d.}}{\sim} F.
\)
Hence the random vectors $(Z_a,Z_b)$ and $(Z_c,Z_d)$ have the same joint distribution whenever the index pairs $(a,b)$ and $(c,d)$ share the same \emph{overlap pattern}. In particular, for $a\neq b$ and $c\neq d$,
\(
\cov\bigl(\widehat{\rho}(Z_a,Z_b),\widehat{\rho}(Z_c,Z_d)\bigr)
\)
depends only on whether the unordered pairs $\{a,b\}$ and $\{c,d\}$ are:
\begin{itemize}
    \item[(i)] identical;
    \item[(ii)] overlapping in exactly one index;
    \item[(iii)] disjoint.
\end{itemize}
This follows from exchangeability of $(Z_1,\dots,Z_N)$ and the fact that $\widehat{\rho}$ is defined symmetrically in its arguments. Accordingly, we introduce the dimension--dependent constants
\[
\sigma_d^2
:= 
\var\bigl(\widehat{\rho}(Z_i,Z_j)\bigr),
\qquad (i\neq j),
\]
\[
\kappa_d
:=
\cov\bigl(\widehat{\rho}(Z_i,Z_j),\,\widehat{\rho}(Z_i,Z_k)\bigr),
\qquad (i,j,k \text{ all distinct}),
\]
\[
\eta_d
:=
\cov\bigl(\widehat{\rho}(Z_i,Z_j),\,\widehat{\rho}(Z_k,Z_\ell)\bigr),
\qquad (i,j,k,\ell \text{ all distinct}).
\]
These quantities are well defined under $H_0$ because all index tuples of the same overlap type have identical joint distributions.

\medskip

Each of the nine covariance blocks appearing in
\eqref{eq:cov_W_expansion_initial}
therefore admits a representation of the form
\[
\cov(\cdot,\cdot)
=
\frac{1}{N_0^{(\cdot)} + N_1^{(\cdot)} + N_2^{(\cdot)}}
\Bigl[
N_0^{(\cdot)}\,\sigma_d^2
\;+\;
N_1^{(\cdot)}\,\kappa_d
\;+\;
N_2^{(\cdot)}\,\eta_d
\Bigr],
\]
where $N_0^{(\cdot)}$, $N_1^{(\cdot)}$, $N_2^{(\cdot)}$ denote, respectively, the numbers of quadruples of indices contributing to that block for which the ordered pairs share.
\begin{itemize}
    \item $N_0~:$ Two overlapping indices (``identical pairs''),
    \item $N_1~:$ Exactly one overlapping index,
    \item $N_2~:$ Zero overlapping indices (``disjoint pairs'').
\end{itemize}
The superscript indicates the covariance block, e.g.\ $(A_t,B_{t'})$, $(C_t,C_{t'})$, and so forth.  
The corresponding normalization factors are the prefactors determined earlier, such as $[t(N-t)\,t'(N-t')]^{-1}$ for $(A_t,A_{t'})$.
Since for each block the total number of quadruples is known explicitly, once $N_0^{(\cdot)}$ and $N_2^{(\cdot)}$ are computed, the quantity $N_1^{(\cdot)}$ is obtained by subtraction. For example, denoting $N^{(A_t ,B_{t'})} = N_0^{(A_t ,B_{t'})} + N_1^{(A_t ,B_{t'})} + N_2^{(A_t ,B_{t'})}$, note that
\begin{align*}
    \cov(A_t ,B_{t'}) &=
\frac{1}{N^{(A_t ,B_{t'})}}
\left[ N_0^{(A_t ,B_{t'})}\,\sigma_d^2 \;+\; N_1^{(A_t ,B_{t'})}\,\kappa_d \;+\; N_2^{(A_t ,B_{t'})}\,\eta_d \right]\\
&= \frac{1}{N^{(A_t ,B_{t'})}}
\left[ N_0^{(A_t ,B_{t'})}\,\sigma_d^2 \;+\; \left( N^{(A_t ,B_{t'})} - N_0^{(A_t ,B_{t'})} - N_2^{(A_t ,B_{t'})} \right) \,\kappa_d \;+\; N_2^{(A_t ,B_{t'})}\,\eta_d \right]\\
&= \frac{1}{N^{(A_t ,B_{t'})}}
\left[ N^{(A_t ,B_{t'})} \kappa_d + N_0^{(A_t ,B_{t'})}\,(\sigma_d^2 - \kappa_d) \;+\; N_2^{(A_t ,B_{t'})}\, (\eta_d - \kappa_d) \right]\\
&= \kappa_d + \frac{1}{N^{(A_t ,B_{t'})}}
\left[ N_0^{(A_t ,B_{t'})}\,(\sigma_d^2 - \kappa_d) \;+\; N_2^{(A_t ,B_{t'})}\, (\eta_d - \kappa_d) \right]
\end{align*}
Analogous expressions can be found for the other $8$ pairs too. Let us denote the ordered collection of $9$ pairs by 
\[
\mathscr{P} = \Big\{ (A_t ,A_{t'}), (B_t ,B_{t'}), (C_t ,C_{t'}), (A_t ,B_{t'}), (B_t ,C_{t'}), (C_t ,A_{t'}), (A_{t'} ,B_t), (B_{t'} ,C_t), (C_{t'} ,A_t) \Big\}~.
\]
Since the weights $w = \{w^{(P,Q)}\}_{(P,Q) \in \mathscr{P}} = (4,1,1,-2,1,-2,-2,1,-2)$ in \eqref{eq:cov_W_expansion_initial} sums up to 0, thus we have
\begin{align}
&\cov(\mathfrak{W}_{d}(t), \mathfrak{W}_{d}(t')) \notag \\
&= \kappa_d\sum_{(P,Q) \in \mathscr{P}} w^{(P,Q)} + \sum_{(P,Q) \in \mathscr{P}}~ \frac{w^{(P,Q)}}{N^{(P,Q)}}
\left[ N_0^{(P,Q)}\,(\sigma_d^2 - \kappa_d) \;+\; N_2^{(P,Q)}\, (\eta_d - \kappa_d) \right] \notag \\
&\begin{aligned}
    \notag
        &=(\sigma_d^2 - \kappa_d) 
        \Bigg[4\, \frac{N_0^{(A_t ,A_{t'})}}{N^{(A_t ,A_{t'})}} 
        + \frac{N_0^{(B_t ,B_{t'})}}{N^{(B_t ,B_{t'})}} 
        + \frac{N_0^{(C_t ,C_{t'})}}{N^{(C_t ,C_{t'})}} 
        - 2\, \frac{N_0^{(A_t ,B_{t'})}}{N^{(A_t ,B_{t'})}} 
        + \frac{N_0^{(B_t ,C_{t'})}}{N^{(B_t ,C_{t'})}} \\
        &\hspace{14em}
        - 2\, \frac{N_0^{(C_t ,A_{t'})}}{N^{(C_t ,A_{t'})}} 
        - 2\, \frac{N_0^{(A_{t'} ,B_t)}}{N^{(A_{t'} ,B_t)}} 
        + \frac{N_0^{(B_{t'} ,C_t)}}{N^{(B_{t'} ,C_t)}} 
        - 2\, \frac{N_0^{(C_{t'} ,A_t)}}{N^{(C_{t'} ,A_t)}}\Bigg]
    \end{aligned}\\
&\begin{aligned}
    \notag
        &\qquad + (\eta_d - \kappa_d) 
        \Bigg[4\, \frac{N_2^{(A_t ,A_{t'})}}{N^{(A_t ,A_{t'})}} 
        + \frac{N_2^{(B_t ,B_{t'})}}{N^{(B_t ,B_{t'})}} 
        + \frac{N_2^{(C_t ,C_{t'})}}{N^{(C_t ,C_{t'})}} 
        - 2\, \frac{N_2^{(A_t ,B_{t'})}}{N^{(A_t ,B_{t'})}} 
        + \frac{N_2^{(B_t ,C_{t'})}}{N^{(B_t ,C_{t'})}} \\
        &\qquad \hspace{12em}
        - 2\, \frac{N_2^{(C_t ,A_{t'})}}{N^{(C_t ,A_{t'})}} 
        - 2\, \frac{N_2^{(A_{t'} ,B_t)}}{N^{(A_{t'} ,B_t)}} 
        + \frac{N_2^{(B_{t'} ,C_t)}}{N^{(B_{t'} ,C_t)}} 
        - 2\, \frac{N_2^{(C_{t'} ,A_t)}}{N^{(C_{t'} ,A_t)}}\Bigg]
    \end{aligned}\\[1mm]
&=: (\sigma_d^2 - \kappa_d) \mathscr{J}_0 + (\eta_d - \kappa_d) \mathscr{J}_2
\label{eq:cov-Wk-Wl-main}
\end{align}
Next, we claim that 
\begin{equation}
\label{eq:J0-J2-equal}
\mathscr{J}_0 = \mathscr{J}_2 = \frac{2(N-1)(N-2)}{t'(t'-1)(N-t)(N-t-1)}
\end{equation} 
We prove that by computing the explicit expressions for $\mathscr{J}_0$ and $\mathscr{J}_2$. In \Cref{lem:N0_all_pairs,lem:N2_At_Atprime,lem:N2_Bt_Btprime,lem:N2_Ct_Ctprime,lem:N2_At_Btprime,lem:N2_Bt_Ctprime,lem:N2_Ct_Atprime,lem:N2_Atprime_Bt,lem:N2_Btprime_Ct,lem:N2_Ctprime_At}, we compute $N_0^{(P,Q)}$ and $N_2^{(P,Q)}$ for all $(P,Q) \in \mathscr{P}$, which we use extensively in this computation.


\input{components/J0-J2-computation}

\noindent
Comparing \eqref{eq:J0-expr-final} and \eqref{eq:J2-expr-final}, we see that
\begin{align*}
\mathscr{J}_0 = \mathscr{J}_2 = \frac{2(N-1)(N-2)}{t'(t'-1)(N-t)(N-t-1)}\,,
\end{align*}
proving our claim in \eqref{eq:J0-J2-equal}. Plugging this back into \eqref{eq:cov-Wk-Wl-main}, we obtain that 
\begin{align*}
\cov(\mathfrak{W}_{d}(t), \mathfrak{W}_{d}(t'))
&= (\sigma_d^2 - \kappa_d) \mathscr{J}_0 + (\eta_d - \kappa_d) \mathscr{J}_2\\[0.2\baselineskip]
&= \frac{2(N-1)(N-2)}{t'(t'-1)(N-t)(N-t-1)} \, (\sigma_d^2 - 2\kappa_d + \eta_d)\\[0.2\baselineskip]
&= \frac{2(N-1)(N-2)}{t'(t'-1)(N-t)(N-t-1)} \, \mathcal{V}_{d,N} \,,
\end{align*}
yielding exactly what we claimed.
\qed

\noindent
Setting $t'=t$ in (b) gives
\[
\var\big(\mathfrak W_d(t)\big)
=
\cov\!\big(\mathfrak W_d(t),\mathfrak W_d(t)\big)
=
\frac{2(N-1)(N-2)}{t(t-1)(N-t)(N-t-1)}\,\mathcal{V}_{d,N}.
\]

\subsection[Proof of \Cref{prop:var(D)}: Generic variance bound under $\mathbf{H}_{0,d}$ and $\mathbf{H}_{1,d}$]{Proof of \Cref{prop:var(D)}}

    We start off by directly employing \Cref{lem:sum-of-var-upper} on the variance of $\mathfrak{W}_{d}(t)$.
    \begin{align}
    \var\left( \mathfrak{W}_{d}(t) \right) 
    &= \begin{aligned}[t]
    &\var\Bigg[ \bigg\{\frac{2}{t(N-t)} \sum_{x \in \mathcal{X}_t} \sum_{y \in \mathcal{Y}_t} \widehat{\rho}(x, y) \bigg\} - \bigg\{\frac{1}{t(t-1)} \sum_{x, x' \in \mathcal{X}_t} \widehat{\rho}(x, x') \bigg\}\\
    &\qquad\qquad- \bigg\{\frac{1}{(N-t)(N-t-1)} \sum_{y, y' \in \mathcal{Y}_t} \widehat{\rho}(y, y') \bigg\}\Bigg]\\ \nonumber
    \end{aligned}\\
    &\leqslant \begin{aligned}[t]
    &3 \Bigg[\var\bigg\{\frac{2}{t(N-t)} \sum_{x \in \mathcal{X}_t} \sum_{y \in \mathcal{Y}_t} \widehat{\rho}(x, y) \bigg\} + \var\bigg\{\frac{1}{t(t-1)} \sum_{x, x' \in \mathcal{X}_t} \widehat{\rho}(x, x') \bigg\}\\
    &\qquad\qquad +\var\bigg\{\frac{1}{(N-t)(N-t-1)} \sum_{y, y' \in \mathcal{Y}_t} \widehat{\rho}(y, y') \bigg\}\Bigg]\\ \nonumber
    \end{aligned}\\
    &= \begin{aligned}[t]
    &\frac{12}{[t(N-t)]^2} \var\Bigg\{ \sum_{x \in \mathcal{X}_t} \sum_{y \in \mathcal{Y}_t} \widehat{\rho}(x, y) \Bigg\} + \frac{4}{[t(t-1)]^2} \var\Bigg\{ \sum_{x, x' \in \mathcal{X}_t} \widehat{\rho}(x, x') \Bigg\}\\
    &\qquad\qquad+ \frac{4}{[(N-t)(N-t-1)]^2} \var\Bigg\{\sum_{y, y' \in \mathcal{Y}_t} \widehat{\rho}(y, y') \Bigg\}\Bigg]\\ \nonumber
    \end{aligned}\\
    &=: \frac{12}{[t(N-t)]^2} \cdot K_t + \frac{4}{[t(t-1)]^2} \cdot L_t + \frac{4}{[(N-t)(N-t-1)]^2} \cdot M_t
    \label{var-upper-inter}
    \end{align}


    Before proceeding further along this line of argument, we claim that for any $\mathbf{u}$, $\mathbf{v}$ arising from $F_d$ or $G_d$, we must have $\var\left(\widehat{\rho}(\mathbf{u},\mathbf{v})\right) = o(1)$ as $d \to \infty$. We prove the claim below first. Note that based on the sample $\mcX_t \cup \mcY_t = \{Z_1, Z_2, \ldots, Z_N\}$,
    \[
    \widehat{\rho}(\mathbf{u}, \mathbf{v}) = \frac{1}{N} \sum_{t=1}^{N} \rho_0(\mathbf{u}, \mathbf{v}; Z_t) \in [0,1],
    \]
    since each summand lies in $[0,1]$. Thus, for each coordinate $i \in [d]$, $\var({\widehat{\rho}}\left(u_i, v_i\right)) \leqslant \mathbb{E}\left[{\widehat{\rho}^2}\left(u_i, v_i\right)\right] \leqslant \mathbb{E}\left[{\widehat{\rho}}\left(u_i, v_i\right)\right] \in [0,1]$, owing to the fact that $\widehat{\rho}\left(u_i, v_i\right) \in [0,1]$. Therefore, we have
    \begin{align*}
        \var\left(\widehat{\rho}(\mathbf{u},\mathbf{v})\right)
        &= \var\left(\frac{1}{d}\sum_{i=1}^d{\widehat{\rho}}\left({u_i},{v_i}\right)\right)\\
        &=\frac{1}{d^2}\left[\sum_{i=1}^d \var({\widehat{\rho}}\left(u_i, v_i\right)) ~+~ 2\!\!\sum_{1\leq i<j\leq d}\cov(\widehat{\rho}(u_i, v_i), \widehat{\rho}(u_j, v_j))\right]\\
        &\leqslant \frac{1}{d^2}\left[\sum_{i=1}^d \mathbb{E}({\widehat{\rho}^2}\left(u_i, v_i\right)) ~+~ 2\!\!\sum_{1\leq i<j\leq d}\cov(\widehat{\rho}(u_i, v_i), \widehat{\rho}(u_j, v_j ))\right]\\
        &\leqslant \frac{1}{d^2}\left[d + o(d^2)\right]\\[0.2\baselineskip]
        &= \frac{1}{d} + o(1) ~\longrightarrow~ 0 ~\text{ ,  as } d \to\infty,
    \end{align*}
    where the fact that $\sum_{1\leq i<j\leq d}\cov(\widehat{\rho}(u_i, v_i), \widehat{\rho}(u_j, v_j )) = o(d^2)$ follows directly from \citet[Lemma A.6]{raychoudhury2023robust}. Coming back to the original proof, observe that
    \begin{align*}
        K_t &= \var\Bigg\{ \sum_{x \in \mathcal{X}_t} \sum_{y \in \mathcal{Y}_t} \widehat{\rho}(x, y) \Bigg\}\\
        &\leqslant t(N-t)\sum_{x \in \mathcal{X}_t} \sum_{y \in \mathcal{Y}_t} \var\left(\widehat{\rho}(x, y)\right)\\ 
        &\leqslant t(N-t)\sum_{x \in \mathcal{X}_t} \sum_{y \in \mathcal{Y}_t} \left[\frac{1}{d} + o(1)\right] = \left[t(N-t)\right]^2 \cdot \left[\frac{1}{d} + o(1)\right]
    \end{align*}
    where we use above result on convergence of the variance of $\widehat{\rho}(x, y)$. By an exactly analogous argument, one can show that as $d \to \infty$,
    \[
    L_t \leqslant \left[t(t-1)\right]^2 \cdot \left[\frac{1}{d} + o(1)\right]\quad \text{and} \quad M_t \leqslant \left[(N-t)(N-t-1)\right]^2 \cdot \left[\frac{1}{d} + o(1)\right].
    \]

    \noindent
    Combining these three bounds and plugging them back into \eqref{var-upper-inter}, we obtain
    \[
    \var\{\mathfrak W_d(t)\}
    \le C_{N,t}\left[d^{-1}+o(1)\right],
    \]
    where \(C_{N,t}<\infty\) is a constant depending only on \(N\) and \(t\).
    Thus, for each fixed \(t\in\mathcal T\), we have $\var\{\mathfrak W_d(t)\}\to0$ as $d\to\infty$, proving the claimed result.
    \qed

\subsection[Proof of \Cref{thm:D-consistency}: Pointwise consistency of $\mathfrak{W}_{d}(t)$ under $\mathbf{H}_{0,d}$ and $\mathbf{H}_{1,d}$]{Proof of \Cref{thm:D-consistency}}

    Let \( t \in \mcT \) be fixed. For any \( \varepsilon > 0 \), we apply Chebyshev's inequality:
    \[
    \mathbb{P}\Big( \left| \mathfrak{W}_{d}(t) - \mathbb{E}[\mathfrak{W}_{d}(t)] \right| > \varepsilon \Big) \leq \frac{\var(\mathfrak{W}_{d}(t))}{\varepsilon^2}.
    \]
    By \Cref{prop:var(D)}, \( \var(\mathfrak{W}_{d}(t)) \longrightarrow 0 \) as \( d \to \infty \). Therefore,
    \[
    \lim_{d \to \infty} \mathbb{P}\left( \left| \mathfrak{W}_{d}(t) - \mathbb{E}[\mathfrak{W}_{d}(t)] \right| > \varepsilon \right) = 0.
    \]
    Since this holds for every \( \varepsilon > 0 \), we conclude that
    \[
    \mathfrak{W}_{d}(t) \stackrel{\mathbb{P}}{\longrightarrow} \E[\mathfrak{W}_{d}(t)] \text{ , as } d \to\infty. \qedhere
    \]
    Under \(\mathbf{H}_{1,d}\), \(\E[\mathfrak W_d(t)]=\mu_d(t)\) by \Cref{prop:E[D-star]}; under \(\mathbf{H}_{0,d}\), the same identity holds with \(\mu_d(t)=0\). Therefore,
    \[
    \mathfrak W_d(t)-\mu_d(t) \stackrel{\mathbb{P}}{\longrightarrow} 0.
    \]
    Finally, since \(\mcT\) is finite,
    \[
    \mathbb P\left(
    \max_{t\in\mcT}
    \left|\mathfrak W_d(t)-\mu_d(t)\right|>\varepsilon
    \right)
    \le
    \sum_{t\in\mcT}
    \mathbb P\left(
    \left|\mathfrak W_d(t)-\mu_d(t)\right|>\varepsilon
    \right)
    \longrightarrow 0 \,.
    \]
    This proves the uniform statement as well.
    \qed

\subsection[Proof of \Cref{prop:longrun-exists}: Existence of the long-run variance factor under $\mathbf{H}_{0,d}$]{Proof of \Cref{prop:longrun-exists}}

Throughout the proof, expectations, variances, and covariances are taken under
\(\mathbf{H}_{0,d}\). By \eqref{eq:coord-decomposition},
\[
\mathfrak W_d(u)=\frac1d\sum_{k=1}^d \xi_k(u),
\qquad u\in\mcT,
\]
where, by \eqref{eq:xi_k}, each \(\xi_k(u)\) is a measurable function only of the \(k\)-th coordinate trajectory
\[
\mathbf Z^{(k)}=(Z_{1,k},\dots,Z_{N,k})^\top\in\bbR^N.
\]
Thus, for each \(u\in\mcT\), there exists a measurable map \(g_u:\bbR^N\to\bbR\) such that $\xi_k(u)=g_u(\mathbf Z^{(k)})$ for all $k \geq 1$.
Equivalently, with $\boldsymbol{\xi}_k:=(\xi_k(u))_{u\in\mcT}\in\bbR^{|\mcT|}$, there exists a measurable map \(g:\bbR^N\to\bbR^{|\mcT|}\) such that
\[
\boldsymbol{\xi}_k=g(\mathbf Z^{(k)}),\qquad k\ge1.
\]
Since \(\rho_0\in\{0,1\}\), it follows from \eqref{eq:xi_k} that
$|\xi_k(u)| \leq 2$ for all $k\ge1$ and $u\in\mcT$. Hence,
\[
\sup_{k\ge1}\sup_{u\in\mcT}\|\xi_k(u)\|_{L^\infty}\le2.
\]

We claim that \(\{\boldsymbol{\xi}_k\}_{k\ge1}\) is strictly stationary and \(\alpha\)-mixing, with mixing coefficients bounded above by those in \Cref{ass:mixing}. Strict stationarity follows immediately from the strict stationarity of \(\{\mathbf Z^{(k)}\}_{k\ge1}\), since \(\boldsymbol{\xi}_k\) is obtained from \(\mathbf Z^{(k)}\) by the same measurable map \(g\) for every \(k\).

For the mixing property, define
\[
\mathcal G_m^-:=\sigma(\boldsymbol{\xi}_k:k\le m),
\qquad
\mathcal G_{m+r}^+:=\sigma(\boldsymbol{\xi}_k:k\ge m+r).
\]
Since \(\boldsymbol{\xi}_k\) is measurable with respect to
\(\sigma(\mathbf Z^{(k)})\),
\[
\mathcal G_m^-\subseteq \sigma(\mathbf Z^{(i)}:i\le m)=\mathcal A_m,
\qquad
\mathcal G_{m+r}^+\subseteq \sigma(\mathbf Z^{(i)}:i\ge m+r)=\mathcal B_{m+r}.
\]
Therefore, for every \(m\ge1\),
\[
\sup_{\substack{A\in\mathcal G_m^-\\ B\in\mathcal G_{m+r}^+}}
\left|\P(A\cap B)-\P(A)\P(B)\right|
\le
\sup_{\substack{A\in\mathcal A_m\\ B\in\mathcal B_{m+r}}}
\left|\P(A\cap B)-\P(A)\P(B)\right|
\le \alpha(r).
\]
Taking the supremum over \(m\), we obtain $\alpha_{\boldsymbol\xi}(r)\le \alpha(r)$.
Thus \(\{\boldsymbol{\xi}_k\}_{k\ge1}\), and hence each scalar sequence \(\{\xi_k(u)\}_{k\ge1}\), is \(\alpha\)-mixing with coefficients bounded by \(\alpha(r)\).

Now define the centered variables
\[
\widetilde \xi_k(u):=\xi_k(u)-\E[\xi_k(u)],
\qquad u\in\mcT,\ k\ge1.
\]
Since \(|\xi_k(u)|\le2\), we have
\[
|\widetilde \xi_k(u)|\le4
\qquad\text{for all }k\ge1,\ u\in\mcT.
\]
For \(r\ge0\), define
\[
\gamma_{u,v}(r)
:=
\cov\!\left(\widetilde\xi_1(u),\widetilde\xi_{1+r}(v)\right),
\qquad u,v\in\mcT.
\]

We first show that the corresponding long-run covariance series is absolutely summable. For \(r\ge1\), the variables \(\widetilde\xi_1(u)\) and \(\widetilde\xi_{1+r}(v)\) are measurable with respect to sigma-fields separated by lag \(r\). Hence, by \Cref{lem:cov-alpha},
\[
|\gamma_{u,v}(r)|
\le
4\|\widetilde\xi_1(u)\|_{L^\infty}
  \|\widetilde\xi_{1+r}(v)\|_{L^\infty}
  \alpha(r)
\le
64\,\alpha(r).
\]
Since \(\sum_{r=1}^\infty\alpha(r)<\infty\), we have
\[
\sum_{r=1}^\infty |\gamma_{u,v}(r)|<\infty
\qquad\text{for every }u,v\in\mcT.
\]
Consequently, for every \(u,v\in\mcT\), the series
\[
\Gamma_N(u,v)
:=
\gamma_{u,v}(0)
+
\sum_{r=1}^\infty \gamma_{u,v}(r)
+
\sum_{r=1}^\infty \gamma_{v,u}(r)
\]
is absolutely convergent and finite.

We now compute the covariance limit. By \eqref{eq:coord-decomposition},
\[
\mathfrak W_d(u)-\E[\mathfrak W_d(u)]
=
\frac1d\sum_{k=1}^d \widetilde\xi_k(u),
\qquad u\in\mcT.
\]
Hence,
\[
\cov\!\left(\mathfrak W_d(u),\mathfrak W_d(v)\right)
=
\frac1{d^2}
\sum_{k=1}^d\sum_{\ell=1}^d
\cov\!\left(\widetilde\xi_k(u),\widetilde\xi_\ell(v)\right).
\]
By stationarity of \(\{\boldsymbol{\xi}_k\}_{k\ge1}\),
\[
\cov\!\left(\widetilde\xi_k(u),\widetilde\xi_k(v)\right)
=
\gamma_{u,v}(0),
\]
and, for \(r\ge1\),
\[
\cov\!\left(\widetilde\xi_k(u),\widetilde\xi_{k+r}(v)\right)
=
\gamma_{u,v}(r),
\qquad
\cov\!\left(\widetilde\xi_{k+r}(u),\widetilde\xi_k(v)\right)
=
\gamma_{v,u}(r).
\]
Therefore,
\[
\cov\!\left(\mathfrak W_d(u),\mathfrak W_d(v)\right)
=
\frac1{d^2}
\left[
d\,\gamma_{u,v}(0)
+
\sum_{r=1}^{d-1}(d-r)\gamma_{u,v}(r)
+
\sum_{r=1}^{d-1}(d-r)\gamma_{v,u}(r)
\right].
\]
Multiplying by \(d\), we get
\begin{equation}
\label{eq:d-cov-finite-proof}
d\,\cov\!\left(\mathfrak W_d(u),\mathfrak W_d(v)\right)
=
\gamma_{u,v}(0)
+
\sum_{r=1}^{d-1}\left(1-\frac rd\right)\gamma_{u,v}(r)
+
\sum_{r=1}^{d-1}\left(1-\frac rd\right)\gamma_{v,u}(r).
\end{equation}

To pass to the limit in \eqref{eq:d-cov-finite-proof}, we use dominated convergence for series (see, e.g.\ \citet[Theorem~10.28]{Apostol1974analysis}). For each fixed \(r\ge1\),
\[
\mathbf 1\{r\le d-1\}\left(1-\frac rd\right)\gamma_{u,v}(r)
\longrightarrow \gamma_{u,v}(r),
\qquad
\mathbf 1\{r\le d-1\}\left(1-\frac rd\right)\gamma_{v,u}(r)
\longrightarrow \gamma_{v,u}(r).
\]
Moreover,
\[
\left|
\mathbf 1\{r\le d-1\}\left(1-\frac rd\right)\gamma_{u,v}(r)
\right|
\le |\gamma_{u,v}(r)|,
\qquad
\left|
\mathbf 1\{r\le d-1\}\left(1-\frac rd\right)\gamma_{v,u}(r)
\right|
\le |\gamma_{v,u}(r)|,
\]
and both dominating sequences are summable. Hence, by DCT,
\[
\sum_{r=1}^{d-1}\left(1-\frac rd\right)\gamma_{u,v}(r)
\longrightarrow
\sum_{r=1}^\infty \gamma_{u,v}(r),
\qquad
\sum_{r=1}^{d-1}\left(1-\frac rd\right)\gamma_{v,u}(r)
\longrightarrow
\sum_{r=1}^\infty \gamma_{v,u}(r).
\]
Thus, from \eqref{eq:d-cov-finite-proof},
\[
\lim_{d \to \infty}\, d\,\cov\!\left(\mathfrak W_d(u),\mathfrak W_d(v)\right)
= \Gamma_N(u,v)\,.
\]

We now specialize to the variance to identify the limit of
\(d\,\mathcal V_{d,N}\). Fix any \(t_0\in\mcT\). By
\Cref{prop:var-cov-null}, with \(t=t'=t_0\),
\[
\var\left(\mathfrak W_d(t_0)\right)
=
\frac{2(N-1)(N-2)}
{t_0(t_0-1)(N-t_0)(N-t_0-1)}
\,\mathcal V_{d,N}.
\]
Therefore,
\[
d\,\mathcal V_{d,N}
=
\frac{
t_0(t_0-1)(N-t_0)(N-t_0-1)
}
{2(N-1)(N-2)}
\,d\,\var\left(\mathfrak W_d(t_0)\right).
\]
From the covariance limit above with \(u=v=t_0\),
\[
d\,\var\left(\mathfrak W_d(t_0)\right)
\longrightarrow
\Gamma_N(t_0,t_0).
\]
Consequently,
\[
d\,\mathcal V_{d,N}
\longrightarrow
\frac{
t_0(t_0-1)(N-t_0)(N-t_0-1)
}
{2(N-1)(N-2)}
\,\Gamma_N(t_0,t_0)
=:\sigma_{\mathrm{long}}^2(N).
\]
This limit is finite and non-negative. Moreover, the right-hand side does not depend on the particular choice of \(t_0\), since it is the limit of the same sequence \(d\,\mathcal V_{d,N}\). Hence there exists \(\sigma_{\mathrm{long}}^2(N)\in[0,\infty)\) such that
\[
d\,\mathcal V_{d,N}\longrightarrow \sigma_{\mathrm{long}}^2(N).
\]
Combining this convergence with \Cref{prop:var-cov-null} also yields the limiting covariance factorization
\begin{equation}
\label{eq:longrun-cov-factorization}
\Gamma_N(u,v)
=
\frac{2(N-1)(N-2)}
{v(v-1)(N-u)(N-u-1)}
\,\sigma_{\mathrm{long}}^2(N),
\qquad u\le v,\ \ u,v\in\mcT.
\end{equation}
This completes the proof.
\qed

\subsection[Proof of \Cref{thm:multivariate-clt}: Multivariate CLT in the HDLSS regime under $\{\mathbf{H}_{0,d}\}$]{Proof of \Cref{thm:multivariate-clt}}

We employ the Cram\'er--Wold device (see, e.g., \citet{Kallenberg2002book}). Fix any non-zero $\mathbf{a}\in\R^{|\mcT|}$ and define
\[
Y_k:=\mathbf{a}^\top \boldsymbol{\xi}_k~,\qquad \text{for~} k\ge 1.
\]

Note that since $\boldsymbol{\xi}_k$ is a measurable function of $\mathbf{Z}^{(k)}$, we have
$\sigma(\boldsymbol{\xi}_k)\subseteq \sigma(\mathbf{Z}^{(k)})$. Hence, the induced sequence
$\{\boldsymbol{\xi}_k\}_{k\ge 1}$ is strictly stationary and $\alpha$-mixing with coefficients
$\alpha_{\xi}(r)$ satisfying
\begin{equation}
\label{eq:alpha-monotone}
\alpha_{\xi}(r)\le \alpha(r)\qquad \text{for all } r\ge 1.
\end{equation}

Therefore, $\{Y_k\}$ being a function of $\{\boldsymbol{\xi}_k\}$, is strictly stationary, $\alpha$-mixing with coefficients upper bounded by $\alpha(\cdot)$
via \eqref{eq:alpha-monotone}, and is uniformly bounded:
\begin{equation}
\label{eq:Y-bounded}
|Y_k|\le 2\|\mathbf{a}\|_1=:B
\qquad \text{a.s. for all }k.
\end{equation}
Under $\mathbf{H}_{0,d}$, $\E[Y_k]= \E[\mathbf{a}^\top \boldsymbol{\xi}_k] = \mathbf{a}^\top \E[\boldsymbol{\xi}_k] = 0$. It suffices to show that
\begin{equation}
\label{eq:target-scalar}
S_d:=\frac{1}{\sqrt{d}}\sum_{k=1}^d Y_k
\ \stackrel{\mathscr{L}}{\longrightarrow}\ \mathcal{N}(0,\nu^2),
\qquad
\nu^2:=\sigma^2_{\mathrm{long}}(N)\cdot \mathbf{a}^\top \mathbf{K} \mathbf{a}.
\end{equation}

\noindent\textbf{Step 0 (Sub-linear variance growth).}
First, we put an upper bound on $\var\left(\sum_{k=1}^m Y_k\right)$. By expanding the variance into a double summation over all index pairs $(i, j)$, we separate the diagonal variance terms from the off-diagonal covariance terms. Stationarity implies that $\cov(Y_i, Y_j)$ depends only on the lag $h = |j-i|$, and for each fixed lag $h \in \{1, \dots, m-1\}$, there are exactly $m-h$ such pairs.
\begin{align}
\var\left(\sum_{k=1}^m Y_k\right)
&= \sum_{i=1}^m\sum_{j=1}^m \cov(Y_i,Y_j) \notag \\
&= m\var(Y_1)+2\sum_{h=1}^{m-1}(m-h)\cov(Y_1,Y_{1+h}) \notag \\
&\le m\var(Y_1)+2m \sum_{h=1}^{m-1}\big|\cov(Y_1,Y_{1+h})\big| \notag \\
&\le m\left[\var(Y_1)+8B^2\sum_{h=1}^\infty \alpha(h)\right] \label{eq:appl-lem-cov}\\
&=: C_0\,m,\label{eq:var-linear}
\end{align}
where we apply \Cref{lem:cov-alpha} with $U=Y_1$ and $V=Y_{1+h}$ to obtain
$|\cov(Y_1,Y_{1+h})|\le 4B^2\,\alpha_Y(h)$, and use $\alpha_Y(h)\le \alpha(h)$ together with $\sum_{h=1}^\infty \alpha(h)<\infty$ from \Cref{ass:mixing}(i).

\noindent\textbf{Step 1 (Blocking).}
Let us define the following sequences:
\[
p_d:=\lfloor d^{1/3}\rfloor,\qquad q_d:=\lfloor d^{1/6}\rfloor,\qquad
k_d:=\Big\lfloor\frac{d}{p_d+q_d}\Big\rfloor.
\]
Define big blocks $I_j$ and small blocks $J_j$ by
\[
I_j=\{(j-1)(p_d+q_d)+1,\dots,(j-1)(p_d+q_d)+p_d\}~,~~
J_j=\{(j-1)(p_d+q_d)+p_d+1,\dots,j(p_d+q_d)\},
\]
and set
\[
U_j:=\sum_{i\in I_j} Y_i~,\qquad
V_j:=\sum_{i\in J_j} Y_i~,\qquad
R_d:=\sum_{i=k_d(p_d+q_d)+1}^{d} Y_i~.
\]
Then,
\begin{equation}
\sum_{k=1}^d Y_k=\sum_{j=1}^{k_d}U_j+\sum_{j=1}^{k_d}V_j+R_d~.
\end{equation}

\noindent\textbf{Step 2 (Remainder is negligible).}
The remainder has at most $(p_d+q_d)$ terms. Hence, by \eqref{eq:Y-bounded},
\[
\Big|\frac{R_d}{\sqrt d}\Big|
\le \frac{(p_d+q_d)B}{\sqrt d}
\ \longrightarrow\ 0~,
\]
since by \Cref{lem:explicit-blocks}, $p_d + q_d \asymp d^\frac{1}{3}$. Therefore $R_d/\sqrt d \to 0$ almost surely (and hence, in probability too).

\noindent\textbf{Step 3 (Small blocks are negligible).}
Write $S_{V,d}:=\sum_{j=1}^{k_d}V_j$. First, by \eqref{eq:var-linear} applied to a block of length $q_d$,
\begin{equation}
\label{eq:VarVj}
\var(V_j)\le C_0\,q_d\qquad \text{for all }j.
\end{equation}
Second, for $i<j$, the $\sigma$-fields $\sigma(V_i)$ and $\sigma(V_j)$, generated by $V_i$ and $V_j$, respectively, are separated by at least
$(j-i-1)p_d$ coordinates, and by at least $p_d$ when $j=i+1$. Thus, by \Cref{lem:cov-alpha},
\begin{equation}
\label{eq:CovViVj}
|\cov(V_i,V_j)|
\le 4\,\|V_i\|_\infty \|V_j\|_\infty\,\alpha\big((j-i-1)p_d \vee p_d\big)
\le 4\,(q_d B)^2\,\alpha\big((j-i-1)p_d \vee p_d\big).
\end{equation}
Hence, we have the following:
\begin{align}
\var(S_{V,d})
&= \sum_{j=1}^{k_d}\var(V_j) +2\sum_{1\le i<j\le k_d}\cov(V_i,V_j) \notag\\
&\le C_0 \, k_d q_d
+ 8\,q_d^2 B^2 \sum_{r=1}^{k_d-1} (k_d-r)\,\alpha((r-1) p_d \vee p_d) \notag\\
&\le C_0 \, k_d q_d + 16\,q_d^2 B^2\,k_d \sum_{s=1}^{\infty} \alpha(s p_d) \notag \\
&\le C_0 \, k_d q_d + 16\,B^2\,\frac{q_d^2 k_d}{p_d} \sum_{s=1}^{\infty} \alpha(s) \label{eq:lem-alpha-subsequence-bound-application}\\
&\le C_0 \, k_d q_d + 16\,C'_0\,B^2\,\frac{q_d^2 k_d}{p_d}~, \label{eq:VarSumV}
\end{align}
where \eqref{eq:lem-alpha-subsequence-bound-application} is a direct application of \Cref{lem:alpha-subsequence-bound}, and \eqref{eq:VarSumV} follows from \Cref{ass:mixing}(i) which states that $\sum_{r=1}^\infty \alpha(r)< C'_0$ for some $C'_0 < \infty$.
Therefore, \eqref{eq:VarSumV} implies 
\begin{align}
\var\left(\frac{1}{\sqrt d} \sum_{j=1}^{k_d}V_j \right)
=
\frac{1}{d}\var(S_{V,d})
&= \mcO\left(\frac{k_d q_d}{d} + \frac{q_d^2 k_d}{dp_d}\right)\\
&= \mcO\left(\frac{d^{\frac{2}{3} + \frac{1}{6}}}{d} + \frac{d^{\frac{1}{3} + \frac{2}{3}}}{d^{1 + \frac{1}{3}}}\right)
= \mcO\left(d^{-\frac{1}{6}} + d^{-\frac{1}{3}}\right)
\ \longrightarrow\ 0
\end{align}
by \Cref{lem:explicit-blocks}. Hence $d^{-1/2}\sum_{j=1}^{k_d}V_j\to 0$ in $L^2$, and therefore in probability.

\noindent\textbf{Step 4 (Coupling big blocks to independent copies).}
Let $\widetilde U_1,\dots,\widetilde U_{k_d}$ be independent random variables such that
$\widetilde U_j\stackrel{d}{=}U_j$ for each $j$.
Let
\[
\psi_d(t):=\E\left[\exp\Big(it\sum_{j=1}^{k_d}U_j\Big)\right] =\E\left[\prod_{j=1}^{k_d}e^{itU_j}\right] ,\quad
\widetilde\psi_d(t):=\E\left[\exp\Big(it\sum_{j=1}^{k_d}\widetilde U_j\Big)\right]=\prod_{j=1}^{k_d}\E\left[ e^{it\widetilde U_j}\right].
\]
By the Volkonski\u{\i}--Rozanov inequality (\citet{VolkonskiiRozanov1959}; and also, \citet[Lemma 2.4]{FanMasry1992RegressionMixing}), we have the uniform bound
\begin{equation}
\label{eq:VR}
\sup_{t\in\R}|\psi_d(t)-\widetilde\psi_d(t)|
\le 16\,(k_d-1)\,\alpha(q_d).
\end{equation}
Applying \eqref{eq:VR} at \(t/\sqrt d\) shows that the characteristic functions of \(d^{-1/2}\sum_{j=1}^{k_d}U_j\) and \(d^{-1/2}\sum_{j=1}^{k_d}\widetilde U_j\) differ by at most \(16(k_d-1)\alpha(q_d)\), uniformly in \(t\in\mathbb R\).
By \Cref{lem:explicit-blocks}, the right-hand side tends to $0$, as $d \to \infty$. Therefore, $\frac{1}{\sqrt{d}}\sum_{j=1}^{k_d}U_j$ and $\frac{1}{\sqrt{d}}\sum_{j=1}^{k_d}\widetilde U_j$ have the same asymptotic law.

\noindent\textbf{Step 5a (Existence of the long-run variance for $\{Y_k\}$).}
Let $\gamma(h):=\cov(Y_1,Y_{1+h})$ for $h\ge 0$. We first show that
$\sum_{h\ge 1}|\gamma(h)|<\infty$ and that
\begin{equation}
\label{eq:lrv-Y-basic}
\lim_{m\to\infty}\frac{1}{m}\var\left(\;\sum_{k=1}^m Y_k\right)
=
\gamma(0)+2\sum_{h=1}^\infty \gamma(h)
=
\var(Y_1)+2\sum_{h=1}^\infty \cov(Y_1,Y_{1+h}).
\end{equation}
For each $h\ge 1$, by bilinearity of covariance,
\[
\gamma(h)
=\cov(Y_1,Y_{1+h})
=\cov(\mathbf a^\top\boldsymbol{\xi}_1,\mathbf a^\top\boldsymbol{\xi}_{1+h})
=\mathbf a^\top\cov(\boldsymbol{\xi}_1,\boldsymbol{\xi}_{1+h})\,\mathbf a.
\]
Applying \Cref{lem:cov-alpha} to the bounded variables $Y_1$ and $Y_{1+h}$ yields
\[
|\gamma(h)|
\le
4\|Y_1\|_\infty^2\,\alpha_\xi(h)
\le
C\,\alpha_\xi(h),
\]
for some constant $C<\infty$ depending only on $\mathbf a$ and
$\|\boldsymbol{\xi}_1\|_\infty$;
and hence $\sum_{h\ge 1}|\gamma(h)|<\infty$ since $\sum_{h\ge 1}\alpha_\xi(h)<\infty$.

\noindent
Next, let $S_m:=\sum_{k=1}^m Y_k$. By stationarity,
\[
\var(S_m)
=
\sum_{i=1}^m\sum_{j=1}^m \cov(Y_i,Y_j)
=
m\,\gamma(0)+2\sum_{h=1}^{m-1}(m-h)\gamma(h).
\]
Therefore,
\[
\frac{1}{m}\var(S_m)
=
\gamma(0)+2\sum_{h=1}^{m-1}\Big(1-\frac{h}{m}\Big)\gamma(h).
\]
Since $\sum_{h\ge1}|\gamma(h)|<\infty$, dominated convergence theorem (DCT) for series (e.g.\ \citet[Theorem~10.28]{Apostol1974analysis}) implies
\[
\lim_{m\to\infty}\frac{1}{m}\var(S_m)
=
\gamma(0)+2\sum_{h=1}^\infty \gamma(h),
\]
which proves \eqref{eq:lrv-Y-basic}.

\noindent\textbf{Step 5b (Identification and appearance of \(\mathbf K\)).}
We now identify the scalar long-run variance in \eqref{eq:lrv-Y-basic}. Define the long-run covariance matrix
\[
\Sigma_\xi
:=
\left(\Gamma_N(u,v)\right)_{u,v\in\mcT},
\]
where \(\Gamma_N(u,v)\) is the absolutely convergent long-run covariance identified in \Cref{prop:longrun-exists}; equivalently,
\[
\Gamma_N(u,v)
=
\gamma_{u,v}(0)
+
\sum_{h=1}^\infty \gamma_{u,v}(h)
+
\sum_{h=1}^\infty \gamma_{v,u}(h),
\]
with
\[
\gamma_{u,v}(h)
:=
\cov\!\left(\widetilde\xi_1(u),\widetilde\xi_{1+h}(v)\right).
\]
By \Cref{prop:longrun-exists}, or equivalently by
\eqref{eq:longrun-cov-factorization},
$\Sigma_\xi
=
\sigma_{\mathrm{long}}^2(N)\,\mathbf K$.
Now recall that \(Y_k=\mathbf a^\top\boldsymbol\xi_k\). Since
\(\E[Y_k]=0\) under \(\mathbf{H}_{0,d}\), we have
\[
\gamma(h):=\cov(Y_1,Y_{1+h})
=
\mathbf a^\top
\cov(\boldsymbol\xi_1,\boldsymbol\xi_{1+h})
\mathbf a.
\]
Therefore,
\begin{align*}
\gamma(0)+2\sum_{h=1}^\infty \gamma(h)
&=
\mathbf a^\top
\left[
\cov(\boldsymbol\xi_1,\boldsymbol\xi_1)
+
2\sum_{h=1}^\infty
\cov(\boldsymbol\xi_1,\boldsymbol\xi_{1+h})
\right]
\mathbf a.
\end{align*}
For scalar quadratic forms, the one-sided and symmetrized versions agree:
\[
\mathbf a^\top
\cov(\boldsymbol\xi_1,\boldsymbol\xi_{1+h})
\mathbf a
=
\mathbf a^\top
\cov(\boldsymbol\xi_{1+h},\boldsymbol\xi_1)
\mathbf a.
\]
Hence
\[
\gamma(0)+2\sum_{h=1}^\infty \gamma(h)
=
\mathbf a^\top \Sigma_\xi \mathbf a
=
\sigma_{\mathrm{long}}^2(N)\,\mathbf a^\top \mathbf K\mathbf a.
\]
Combining this with \eqref{eq:lrv-Y-basic}, we obtain
\begin{equation}
\label{eq:lrv-Y}
\lim_{m\to\infty}\frac{1}{m}\var\left(\sum_{k=1}^m Y_k\right)
=
\sigma_{\mathrm{long}}^2(N)\,\mathbf a^\top \mathbf K\mathbf a
=: \nu^2.
\end{equation}

\noindent\textbf{Step 6 (CLT for the independent triangular array).}
Define
\[
Z_d:=\frac{1}{\sqrt d}\sum_{j=1}^{k_d}\widetilde U_j.
\]
We verify Lindeberg--Feller.

\begin{itemize}
\item[(i)] \emph{Lindeberg condition.}
Since $|\widetilde U_j|\le p_d B$ almost surely, we have by \Cref{lem:explicit-blocks},
\[
\max_{1\le j\le k_d}\frac{|\widetilde U_j|}{\sqrt d}
\le \frac{p_d B}{\sqrt d} \asymp d^{-\frac{1}{6}} B \longrightarrow 0~.
\]
Hence, for any $\varepsilon>0$ and all sufficiently large $d$, $\mathbbm{1}\{|\widetilde U_j|/\sqrt d>\varepsilon\}\equiv 0$, so the Lindeberg condition holds.

\item[(ii)] \emph{Variance convergence.}
Let $s_d^2:=\var(Z_d)=\frac{1}{d}\sum_{j=1}^{k_d}\var(U_j)$.
By stationarity of $\{Y_k\}$, $\var(U_j) = \allowbreak \var(U_1)$ for all $j$, hence
\begin{equation}
\label{eq:sd2}
s_d^2=\frac{k_d}{d}\var(U_1).
\end{equation}
Moreover, $\var(U_1)=\var(\sum_{k=1}^{p_d}Y_k)$ and, by \eqref{eq:lrv-Y}, the limit
\[
\lim_{m\to\infty}\frac{1}{m}\var\left(\;\sum_{k=1}^{m}Y_k\right) = \sigma^2_{\mathrm{long}}(N)\;\mathbf a^\top \mathbf{K}\,\mathbf a
\]
exists. Denote this limit by $\nu^2$. Then, $\var(U_1)=p_d\nu^2+o(p_d)$.
Combining with \eqref{eq:sd2} and $k_d p_d/d\to 1$ (from \Cref{lem:explicit-blocks}) yields
\begin{equation*}
s_d^2 =\frac{k_d}{d}\var(U_1) = \frac{k_d p_d}{d} \nu^2 + o\left(\frac{k_d p_d}{d}\right) \longrightarrow 1 \cdot \nu^2 + o(1) = \nu^2.
\end{equation*}
\end{itemize}
Thus, by Lindeberg--Feller CLT, we have 
\[\frac{1}{\sqrt d}\sum_{j=1}^{k_d}\widetilde U_j ~=~ Z_d ~\stackrel{\mathscr{L}}{\longrightarrow}~ \mathcal{N}(0,\nu^2)~\equiv~\mathcal{N}(0, \sigma^2_{\mathrm{long}}(N)\;\mathbf a^\top \mathbf{K}\,\mathbf a)~.\]

\noindent\textbf{Step 7 (conclusion).}
Steps 2--3 show that
\[
\frac{1}{\sqrt d}\sum_{k=1}^d Y_k
=
\frac{1}{\sqrt d}\sum_{j=1}^{k_d}U_j
+ o_{\P}(1).
\]
Step 4 transfers the limit law from $\sum_{j=1}^{k_d} U_j$ to the independent sum $\sum_{j=1}^{k_d} \widetilde U_j$; Steps 5a and 5b compute the long-run variance of $\sum_{k=1}^m Y_k$ under a $m^{-1}$ scaling; and Step 6 shows that $\sum_{j=1}^{k_d} \widetilde U_j$ converges weakly to $\mathcal{N}(0,\sigma^2_{\mathrm{long}}(N)\;\mathbf a^\top \mathbf{K} \,\mathbf a)$. This proves
\begin{equation}
\mathbf{a}^\top \sqrt{d} \ \widetilde{\mathfrak{W}}_d
= S_d 
= \mathbf{a}^\top \left( \frac{1}{\sqrt{d}}\sum_{k=1}^d \boldsymbol{\xi}_k \right) 
= \frac{1}{\sqrt{d}}\sum_{k=1}^d Y_k
\ \stackrel{\mathscr{L}}{\longrightarrow}\ \mathcal{N}(0,\sigma^2_{\mathrm{long}}(N)\; \mathbf{a}^\top \mathbf{K}\mathbf{a})~,
\end{equation}
which is precisely what we claimed in \eqref{eq:target-scalar}.
Finally, the Cram\'er--Wold theorem yields the desired multivariate limit.
\hfill $\square$

\begin{remark}[On strict stationarity]
Strict stationarity in \Cref{ass:mixing} can be weakened to the corresponding second-order stationarity conditions needed to identify $\var(U_j)$ and the long-run variance; we keep the standard strict formulation for simplicity.
\end{remark}


\subsection[Proof of \Cref{lem:sigmalong-consistency}: Consistency of the plug-in estimator of $\sigma^2_{\mathrm{long}}(N)$]{Proof of \Cref{lem:sigmalong-consistency}}

Fix \(t\in\mcT\). Throughout the proof, we suppress the explicit dependence on \(t\) whenever no confusion may arise. For each \(d\ge1\), write
\[
\xi_{k,d}:=\xi_k(t),\qquad k\ge1,
\]
and let
\[
\bar\xi_d:=\frac1d\sum_{k=1}^d \xi_{k,d}.
\]
By the coordinate decomposition, \(\bar\xi_d=\mathfrak W_d(t)\). Under \(\mathbf{H}_{0,d}\), \(\E[\xi_{k,d}]=0\) for every \(k\).

\paragraph{Step 1: Basic properties of the coordinate process.}
Since \(\xi_{k,d}\) is a measurable function of the coordinate trajectory \(\mathbf Z^{(k)}\), and since \(\{\mathbf Z^{(k)}\}_{k\ge1}\) is strictly stationary and \(\alpha\)-mixing under \Cref{ass:mixing}, the scalar sequence \(\{\xi_{k,d}\}_{k\ge1}\) is also strictly stationary and \(\alpha\)-mixing, with mixing coefficients bounded by the same sequence \(\alpha(r)\). Moreover, because \(\rho_0\in\{0,1\}\) and \(N\) is fixed,
\[
|\xi_{k,d}|\le 2
\qquad\text{a.s. for all }k,d.
\]
Thus \(\xi_{k,d}\in L^p\) for every \(p\ge1\), uniformly in \(k\) and \(d\).

\paragraph{Step 2: Population long-run variance and absolute summability.}
For \(r\ge0\), define
\[
\gamma_{r,d}(t):=\cov(\xi_{1,d},\xi_{1+r,d}).
\]
By Davydov's covariance inequality, for every \(p>2\), there exists a constant
\(C_p<\infty\), independent of \(r\) and \(d\), such that
\[
|\gamma_{r,d}(t)|
\le
C_p\,\alpha(r)^{1-2/p},
\qquad r\ge1.
\]
Since, by \Cref{ass:mixing}(ii), \(\alpha(r)\le C_\alpha r^{-\lambda}\) with \(\lambda>4\), choose \(p>2\) large enough so that \(\eta:=\lambda(1-2/p)>1\). Then, for some constant \(C<\infty\),
\[
\sup_{d\ge1}|\gamma_{r,d}(t)|
\le C r^{-\eta},
\qquad r\ge1.
\]
Hence
\begin{equation}
\label{eq:uniform-abs-summability}
\sum_{r=1}^\infty \sup_{d\ge1}|\gamma_{r,d}(t)|<\infty.
\end{equation}
It follows that the population long-run variance
\begin{equation}
\label{eq:Gamma-d-t}
\Gamma_d(t)
:=
\gamma_{0,d}(t)+2\sum_{r=1}^\infty \gamma_{r,d}(t)
\end{equation}
is well-defined and finite for every \(d\).

\paragraph{Step 3: HAC consistency for the target \(\Gamma_d(t)\).}
Recall that
\[
\widehat\gamma_{r,d}(t)
:=
\frac1d\sum_{k=1}^{d-r}
\bigl(\xi_{k,d}-\mathfrak W_d(t)\bigr)
\bigl(\xi_{k+r,d}-\mathfrak W_d(t)\bigr),
\qquad 0\le r\le d-1.
\]
With Bartlett weights \(w_r=1-r/(L(d)+1)\), the HAC estimator is
\[
\widehat{\mathrm{LRV}}_d(N;t)
=
\widehat\gamma_{0,d}(t)
+
2\sum_{r=1}^{L(d)}
\left(1-\frac{r}{L(d)+1}\right)\widehat\gamma_{r,d}(t).
\]
Because \(\{\xi_{k,d}\}_{k\ge1}\) is uniformly bounded, strictly stationary, and \(\alpha\)-mixing with the uniform summability bound \eqref{eq:uniform-abs-summability}, and because \(L(d)\to\infty\) and \(L(d)=o(d^{1/2})\), standard Bartlett HAC consistency results (see, e.g., \citet{Andrews1991HAC}) for bounded strongly mixing triangular arrays imply
\begin{equation}
\label{eq:HAC-to-Gamma}
\widehat{\mathrm{LRV}}_d(N;t)-\Gamma_d(t)
\stackrel{\bbP}{\longrightarrow}0.
\end{equation}

\paragraph{Step 4: Exact variance identity for \(\sqrt d\,\mathfrak W_d(t)\).}
Since \(\mathfrak W_d(t)=d^{-1}\sum_{k=1}^d\xi_{k,d}\), the variance of the partial-sum average satisfies the exact identity
\begin{equation}
\label{eq:exact-var-identity}
\var\bigl(\sqrt d\,\mathfrak W_d(t)\bigr)
=
\gamma_{0,d}(t) + 2\sum_{r=1}^{d-1}\Bigl(1-\frac{r}{d}\Bigr)\gamma_{r,d}(t).
\end{equation}
Therefore, subtracting \eqref{eq:exact-var-identity} from \eqref{eq:Gamma-d-t} gives
\begin{align}
\label{eq:Gamma-minus-var}
\left|
\Gamma_d(t)-\var\bigl(\sqrt d\,\mathfrak W_d(t)\bigr)
\right|
&\le
2\sum_{r=1}^{d-1}\frac{r}{d}\,|\gamma_{r,d}(t)|
+
2\sum_{r=d}^{\infty}|\gamma_{r,d}(t)|.
\end{align}

We now show that the right-hand side of \eqref{eq:Gamma-minus-var} converges to zero. Fix \(\varepsilon>0\). By \eqref{eq:uniform-abs-summability}, there exists \(M\in\bbN\) such that
\[
\sum_{r=M+1}^\infty \sup_{d'\ge1}|\gamma_{r,d'}(t)|<\frac{\varepsilon}{8}.
\]
Then, for every $d>M$,
\begin{align*}
\kappa_1 = \sum_{r=1}^{d-1}\frac{r}{d}\,|\gamma_{r,d}(t)|
&=
\sum_{r=1}^{M}\frac{r}{d}\,|\gamma_{r,d}(t)|
+
\sum_{r=M+1}^{d-1}\frac{r}{d}\,|\gamma_{r,d}(t)| \\
&\le
\underbrace{\frac{M}{d}\sum_{r=1}^{M}\sup_{d'\ge1}|\gamma_{r,d'}(t)|}_{:= \kappa_{1,1}}
+
\underbrace{\sum_{r=M+1}^{\infty}\sup_{d'\ge1}|\gamma_{r,d'}(t)|}_{:= \kappa_{1,2}} = \kappa_{1,1} + \kappa_{1,2}.
\end{align*}
By the choice of $M$, we have $\kappa_{1,2} \leq \varepsilon/8$. Since $M$ is now fixed, the numerator of $\kappa_{1,1}$ is bounded; thus, there exists $d_1=d_1(\varepsilon,M)$ such that for all $d\ge d_1$,
\[
\kappa_{1,1} = \frac{M}{d}\sum_{r=1}^{M}\sup_{d'\ge1}|\gamma_{r,d'}(t)|
< \frac{\varepsilon}{8}.
\]
Hence, for all $d\ge \max\{M+1,d_1\}$,
\[
\kappa_1 = \sum_{r=1}^{d-1}\frac{r}{d}\,|\gamma_{r,d}(t)| \leq \kappa_{1,1} + \kappa_{1,2} < \frac{\varepsilon}{8} + \frac{\varepsilon}{8} = \frac{\varepsilon}{4}.
\]
Next, again by \eqref{eq:uniform-abs-summability}, there exists
$d_2=d_2(\varepsilon)$ such that for all $d\ge d_2$,
\[
\sum_{r=d}^{\infty}\sup_{d'\ge1} |\gamma_{r,d'}(t)| < \frac{\varepsilon}{4}.
\]
Therefore, for all $d\ge d_2$,
\[
\kappa_2 = \sum_{r=d}^{\infty} |\gamma_{r,d}(t)|
\le
\sum_{r=d}^{\infty}\sup_{d'\ge1} |\gamma_{r,d'}(t)|
< \frac{\varepsilon}{4}.
\]
Combining the last two displays with \eqref{eq:Gamma-minus-var}, we conclude that for all $d\ge \max\{M+1,d_1,d_2\}$,
\[
\Bigl|
\Gamma_d(t)-\var\bigl(\sqrt d\,\mathfrak W_d(t)\bigr)
\Bigr|
<
2\cdot\frac{\varepsilon}{4}
+
2\cdot\frac{\varepsilon}{4}
=
\varepsilon.
\]
Since our choice of $\varepsilon > 0$ was arbitrary, we conclude that
\begin{equation}
\label{eq:Gamma-var-gap}
\left|\,\Gamma_d(t)-\var\bigl(\sqrt d\,\mathfrak W_d(t)\bigr)\, \right|\longrightarrow 0.
\end{equation}

\paragraph{Step 5: Identification of the limit and pointwise consistency.}
By \Cref{prop:var-cov-null}, under \(\mathbf{H}_{0,d}\),
\[
\var\bigl(\mathfrak W_d(t)\bigr)
=
\mathbf K_{tt}\,\mathcal V_{d,N}.
\]
Thus
\[
\var\bigl(\sqrt d\,\mathfrak W_d(t)\bigr)
=
d\,\mathbf K_{tt}\,\mathcal V_{d,N}
\longrightarrow
\mathbf K_{tt}\,\sigma_{\mathrm{long}}^2(N),
\]
where the convergence follows from \Cref{prop:longrun-exists}. Combining this
with \eqref{eq:Gamma-var-gap}, we obtain
\[
\Gamma_d(t)\longrightarrow \mathbf K_{tt}\,\sigma_{\mathrm{long}}^2(N).
\]
Together with \eqref{eq:HAC-to-Gamma}, this gives
\[
\widehat{\mathrm{LRV}}_d(N;t)
\stackrel{\bbP}{\longrightarrow}
\mathbf K_{tt}\,\sigma_{\mathrm{long}}^2(N).
\]
Since \(\mathbf K_{tt}>0\) for every \(t\in\mcT\),
\[
\widehat\sigma_{\mathrm{long}}^2(N;t)
=
\frac{\widehat{\mathrm{LRV}}_d(N;t)}{\mathbf K_{tt}}
\stackrel{\bbP}{\longrightarrow}
\sigma_{\mathrm{long}}^2(N).
\]

\paragraph{Step 6: Consistency of the aggregated estimator.}
Because \(\mcT\) is fixed and finite, the pointwise convergence above implies
\begin{equation}
\label{eq:max-sigmasq-diff}
\max_{t\in\mcT}
\left|
\widehat\sigma_{\mathrm{long}}^2(N;t)
-
\sigma_{\mathrm{long}}^2(N)
\right|
\stackrel{\bbP}{\longrightarrow}0.
\end{equation}
Now let
\[
\widehat\sigma_{\mathrm{long}}^2(N)
=
\mathrm{median}\left\{
\widehat\sigma_{\mathrm{long}}^2(N;t):t\in\mcT
\right\}.
\]
Since the median of finitely many real numbers lies between their minimum and maximum,
\[
\left|
\widehat\sigma_{\mathrm{long}}^2(N)
-
\sigma_{\mathrm{long}}^2(N)
\right|
\le
\max_{t\in\mcT}
\left|
\widehat\sigma_{\mathrm{long}}^2(N;t)
-
\sigma_{\mathrm{long}}^2(N)
\right|.
\]
Therefore, \eqref{eq:max-sigmasq-diff} gives
\[
\widehat\sigma_{\mathrm{long}}^2(N)
\stackrel{\bbP}{\longrightarrow}
\sigma_{\mathrm{long}}^2(N).
\]
This completes the proof.
\qed


\subsection[Proof of \Cref{thm:level-alpha}: Asymptotic level-$\alpha$ validity of the plug-in calibrated scan]{Proof of \Cref{thm:level-alpha}}

By \Cref{thm:multivariate-clt}, under $\mathbf{H}_{0,d}$ we have
\[
\sqrt d\,\widetilde{\mathfrak W}_d
\ \stackrel{\mathscr{L}}{\longrightarrow} \
\mathcal N\!\bigl(\mathbf 0,\,\sigma^2_{\mathrm{long}}(N)\mathbf K\bigr).
\]
By \Cref{lem:sigmalong-consistency}, $\widehat{\sigma}_{\mathrm{long}}\xrightarrow{\mathbb{P}}\sigma_{\mathrm{long}}$.
By Slutsky's theorem,
\[
\frac{\sqrt d\,\widetilde{\mathfrak W}_d}{\widehat{\sigma}_{\mathrm{long}}}
\ \stackrel{\mathscr{L}}{\longrightarrow} \
\mathcal N(\mathbf 0,\mathbf K).
\]
The mapping $x\mapsto \max_{t\in\mcT} x_t$ is continuous on $\mathbb R^{|\mcT|}$,
so the continuous mapping theorem yields
\[
S_d
=
\max_{t\in\mcT}\frac{\sqrt d\,\mathfrak W_d(t)}{\widehat{\sigma}_{\mathrm{long}}}
\ \stackrel{\mathscr{L}}{\longrightarrow} \
\max_{t\in\mcT}\mcZ_t,
\qquad \mcZ\sim\mathcal N(\mathbf 0,\mathbf K).
\]
The random variable \(\max_{t\in\mcT}\mcZ_t\) has a continuous distribution because each marginal Gaussian coordinate has positive variance \(\mathbf K_{tt}>0\). Therefore \(c_\alpha\) is a continuity point of the limit distribution.
Since the limit distribution is continuous, convergence in distribution implies convergence of tail probabilities at the quantile point $c_\alpha$:
\[
\lim_{d\to\infty}\mathbb P_{\mathbf{H}_{0,d}}(S_d>c_\alpha)
=
\mathbb{P}\Bigl(\max_{t\in\mcT}\mcZ_t>c_\alpha\Bigr)
=
\alpha,
\]
by the definition of $c_\alpha$ in \Cref{subsubsec:adf-test}. This proves \eqref{eq:level-alpha-claim}.
\qed


\subsection[Proof of \Cref{thm:power-main}: Change-point localization and power under the alternative]{Proof of \Cref{thm:power-main}}


\begin{proof}
We prove the two claims separately.

\paragraph{Proof of (i): Localization bound.}

\noindent
Recall that \(\widehat\tau_d\in\argmax_{t\in\mcT}\mathfrak W_d(t)\). Hence
\[
\{\widehat\tau_d\neq\tau\}
\subseteq
\bigcup_{t\in\mcT\setminus\{\tau\}}
\{\mathfrak W_d(t)\ge \mathfrak W_d(\tau)\}.
\]
Therefore, by the union bound,
\begin{equation}
\label{eq:loc-union-updated}
\mathbb P_{\mathbf{H}_{1,d}}(\widehat\tau_d\neq\tau)
\le
\sum_{t\in\mcT\setminus\{\tau\}}
\mathbb P_{\mathbf{H}_{1,d}}\!\left(\mathfrak W_d(t)\ge\mathfrak W_d(\tau)\right).
\end{equation}
Fix \(t\in\mcT\setminus\{\tau\}\), and define $D_{d,t}:=\mathfrak W_d(\tau)-\mathfrak W_d(t)$. Then,
\[
\mathbb P_{\mathbf{H}_{1,d}}\!\left(\mathfrak W_d(t)\ge\mathfrak W_d(\tau)\right)
=
\mathbb P_{\mathbf{H}_{1,d}}(D_{d,t}\le0).
\]
By the mean factorization in \Cref{prop:E[D-star]} and \Cref{rem:geometric_factorization},
\[
\mu_d(s)=\Lambda_{\tau,N}(s) \; \delta_d,\qquad s\in\mcT.
\]
Since \(\Lambda_{\tau,N}(\tau)=1\) and \(\Lambda_{\tau,N}\) is uniquely maximized
at \(\tau\),
\[
\E_{\mathbf{H}_{1,d}}[D_{d,t}]
=
\mu_d(\tau)-\mu_d(t)
=
\{\Lambda_{\tau,N}(\tau)-\Lambda_{\tau,N}(t)\}\delta_d
\ge
\Omega_{\tau,N}\delta_d.
\]
Hence,
\begin{align}
\mathbb P_{\mathbf{H}_{1,d}}(D_{d,t}\le 0)
&= \mathbb P_{\mathbf{H}_{1,d}}\!\left(D_{d,t}-\E_{\mathbf{H}_{1,d}}[D_{d,t}] \leq -\E_{\mathbf{H}_{1,d}}[D_{d,t}]\right) \notag\\
&\leq \mathbb P_{\mathbf{H}_{1,d}}\!\left(|D_{d,t}-\E_{\mathbf{H}_{1,d}}[D_{d,t}]|\geq \Omega_{\tau,N}\,\delta_d \right) \notag\\
&\leq \frac{\var_{\mathbf{H}_{1,d}}(D_{d,t})}{\Omega_{\tau,N}^2\delta_d^2}.
\label{eq:cheb-loc-updated}
\end{align}
where we use Chebyshev's inequality in the last step.

\noindent
Using \(\var(A-B)\le2\var(A)+2\var(B)\) and
\Cref{ass:sigma-long-ALT}(ii),
\[
\var_{\mathbf{H}_{1,d}}(D_{d,t})
\le
2\var_{\mathbf{H}_{1,d}}\{\mathfrak W_d(\tau)\}
+
2\var_{\mathbf{H}_{1,d}}\{\mathfrak W_d(t)\}
\le
\frac{4C_{\mathrm{alt}}}{d}
\]
for all sufficiently large \(d\). Substituting this into \eqref{eq:cheb-loc-updated} gives
\[
\mathbb P_{\mathbf{H}_{1,d}}(D_{d,t}\le0)
\le
\frac{4C_{\mathrm{alt}}}{\Omega_{\tau,N}^2\,d\,\delta_d^2}.
\]
Returning to \eqref{eq:loc-union-updated} and using
\(|\mcT\setminus\{\tau\}|=N-4\), we obtain
\[
\mathbb P_{\mathbf{H}_{1,d}}(\widehat\tau_d\neq\tau)
\le
\frac{4C_{\mathrm{alt}}(N-4)}
{\Omega_{\tau,N}^2\,d\,\delta_d^2}.
\]
Thus, we finally obtain
\[
\mathbb P_{\mathbf{H}_{1,d}}(\widehat\tau_d=\tau)
\ge
1- \frac{N-4}{\Omega_{\tau,N}^2} \cdot \frac{4 C_{\mathrm{alt}}}{d\,\delta_d^2}.
\]
Using $N-4 \leq N$ proves (i).

\smallskip
\paragraph{Proof of (ii): Lower bound on power.}

\noindent
Recall that
\[
S_d
=
\max_{t\in\mcT}
\frac{\sqrt d\,\mathfrak W_d(t)}{\widehat\sigma_{\mathrm{long}}(N)},
\qquad
\widehat\phi_d(\alpha)=\mathbbm 1\{S_d>c_\alpha\}.
\]
Since \(S_d\ge \sqrt d\,\mathfrak W_d(\tau)/\widehat\sigma_{\mathrm{long}}(N)\),
\[
\mathbb P_{\mathbf{H}_{1,d}}\{\widehat\phi_d(\alpha)=1\}
\ge
\mathbb P_{\mathbf{H}_{1,d}}\!\left(
\frac{\sqrt d\,\mathfrak W_d(\tau)}
{\widehat\sigma_{\mathrm{long}}(N)}
>c_\alpha
\right).
\]
Fix \(\eta\in(0,1)\), and let
\[
A_{d,\eta}
:=
\left\{
\left|
\widehat\sigma_{\mathrm{long}}(N)-\sigma_*(N)
\right|
\le
\eta \, \sigma_*(N)
\right\}.
\]
On \(A_{d,\eta}\), we have
\(\widehat\sigma_{\mathrm{long}}(N)\le(1+\eta)\sigma_*(N)\). Therefore,
\[
\left\{
\sqrt d\,\mathfrak W_d(\tau)>
(1+\eta) \, \sigma_*(N) \, c_\alpha
\right\}
\cap A_{d,\eta}
\subseteq
\left\{
\frac{\sqrt d\,\mathfrak W_d(\tau)}
{\widehat\sigma_{\mathrm{long}}(N)}
>c_\alpha
\right\}.
\]
Consequently,
\begin{align}
\mathbb P_{\mathbf{H}_{1,d}}\{\widehat\phi_d(\alpha)=1\}
&\ge
\mathbb P_{\mathbf{H}_{1,d}}\!\left(
\sqrt d\,\mathfrak W_d(\tau)>
(1+\eta) \, \sigma_*(N) \, c_\alpha ~,~ A_{d,\eta}
\right) \notag\\
&\ge
\mathbb P_{\mathbf{H}_{1,d}}\!\left(
\sqrt d\,\mathfrak W_d(\tau)>
(1+\eta) \, \sigma_*(N) \, c_\alpha
\right) - \mathbb P_{\mathbf{H}_{1,d}}(A_{d,\eta}^c).
\label{eq:power-sigma-reduction-updated}
\end{align}
Since \(\mu_d(\tau)=\delta_d\), we have
\begin{align*}
&\mathbb P_{\mathbf{H}_{1,d}}\!\left(
\sqrt d\,\mathfrak W_d(\tau)>
(1+\eta) \, \sigma_*(N) \, c_\alpha
\right)\\
&=
1-
\mathbb P_{\mathbf{H}_{1,d}}\!\left(
\sqrt d\{\mathfrak W_d(\tau)-\mu_d(\tau)\}
\le
-\left[
\sqrt d\,\delta_d-(1+\eta) \, \sigma_*(N) \, c_\alpha
\right]
\right).
\end{align*}
If $\sqrt d\,\delta_d > (1+\eta) \, \sigma_*(N) \, c_\alpha$, then Chebyshev's inequality gives
\[
\mathbb P_{\mathbf{H}_{1,d}}\!\left(
\sqrt d\,\mathfrak W_d(\tau)>
(1+\eta) \, \sigma_*(N) \, c_\alpha
\right)
\ge
1-
\frac{
d\,\var_{\mathbf{H}_{1,d}}\{\mathfrak W_d(\tau)\}
}{
\left[
\sqrt d\,\delta_d-(1+\eta) \, \sigma_*(N) \, c_\alpha
\right]^2
}.
\]
By \Cref{ass:sigma-long-ALT}(ii),
\(d\,\var_{\mathbf{H}_{1,d}}\{\mathfrak W_d(\tau)\}\le C_{\mathrm{alt}}\) for all sufficiently large \(d\). Hence, combining with
\eqref{eq:power-sigma-reduction-updated} yields
\[
\mathbb P_{\mathbf{H}_{1,d}}\{\widehat\phi_d(\alpha)=1\}
\ge
1-\mathbb P_{\mathbf{H}_{1,d}}(A_{d,\eta}^c)
-
\frac{C_{\mathrm{alt}}}{
\left[
\sqrt d\,\delta_d-(1+\eta)\sigma_*(N)c_\alpha
\right]^2
}.
\]
This proves (ii).

Finally, if \(\sqrt d\,\delta_d\to\infty\), then
\(d\delta_d^2\to\infty\), so the localization error bound in (i) tends to zero.
Also, by \Cref{ass:sigma-long-ALT}(i), for every fixed \(\eta\in(0,1)\),
\[
\mathbb P_{\mathbf{H}_{1,d}}(A_{d,\eta}^c)\to0,
\]
and the denominator in the power bound diverges. Therefore,
\[
\mathbb P_{\mathbf{H}_{1,d}}(\widehat\tau_d=\tau)\to1,
\qquad
\mathbb P_{\mathbf{H}_{1,d}}\{\widehat\phi_d(\alpha)=1\}\to1.
\]
This completes the proof.
\end{proof}

\subsection[Proof of \Cref{thm:local-power}: Local asymptotic power curve]{Proof of \Cref{thm:local-power}}

Under the local alternatives \eqref{eq:local-signal-scaling}, \eqref{eq:local-mean-current-notation} implies
\[
\sqrt d\,\E_{\mathbf H_{1,d}}[\widetilde{\mathfrak W}_d]
=
h\,\bigl(\Lambda_{\tau,N}(t)\bigr)_{t\in\mcT}.
\]
By \Cref{ass:local-weak-limit},
\[
\sqrt d\left(\widetilde{\mathfrak W}_d-\E_{\mathbf H_{1,d}}[\widetilde{\mathfrak W}_d]\right)
\ \stackrel{\mathscr{L}}{\longrightarrow}\
\mathfrak{B}^{(h)}.
\]
Therefore, by Slutsky's theorem,
\[
\sqrt d\,\widetilde{\mathfrak W}_d
=
\sqrt d\left(\widetilde{\mathfrak W}_d-\E_{\mathbf H_{1,d}}[\widetilde{\mathfrak W}_d]\right)
+
\sqrt d\,\E_{\mathbf H_{1,d}}[\widetilde{\mathfrak W}_d]
\ \stackrel{\mathscr{L}}{\longrightarrow}\
\mathfrak{B}^{(h)} + h\,\bigl(\Lambda_{\tau,N}(t)\bigr)_{t\in\mcT}.
\]
Since \(\widehat\sigma_{\mathrm{long}}(N)\stackrel{\mathbb P}{\longrightarrow}\sigma_*(N)\)
by \Cref{ass:sigma-long-ALT}(i), another application of Slutsky's theorem yields
\[
\frac{\sqrt d\,\widetilde{\mathfrak W}_d}{\widehat\sigma_{\mathrm{long}}(N)}
\ \stackrel{\mathscr{L}}{\longrightarrow}\
\frac{\mathfrak{B}^{(h)} + h\,\bigl(\Lambda_{\tau,N}(t)\bigr)_{t\in\mcT}}{\sigma_*(N)}.
\]
Now the map $x\mapsto \max_{t\in\mcT}x_t$ is continuous on $\R^{|\mcT|}$, so by continuous mapping theorem,
\[
S_d
=
\max_{t\in\mcT}\frac{\sqrt d\,\mathfrak W_d(t)}{\widehat\sigma_{\mathrm{long}}(N)}
\ \stackrel{\mathscr{L}}{\longrightarrow}\
M_h.
\]
Finally, since $\widehat\phi_d(\alpha)=\mathbbm 1\{S_d>c_\alpha\}$ and $c_\alpha$ is assumed to be a continuity point of the distribution function of $M_h$, weak convergence implies
\[
\mathbb P_{\mathbf H_{1,d}}\!\left(\widehat\phi_d(\alpha)=1\right)
=
\mathbb P_{\mathbf H_{1,d}}(S_d>c_\alpha)
\longrightarrow
\mathbb{P}(M_h>c_\alpha),
\]
which proves the claim.
\qed

\subsection[Proof of \Cref{lem:online-arl-bounds}: Conditional ARL bounds under window overlap]{Proof of \Cref{lem:online-arl-bounds}}

Throughout this proof, probabilities and expectations are taken conditionally on \(\mcC_{N_0}\). We suppress this conditioning from the notation. Write \(q:=q_d(c\mid\mcC_{N_0})\).

If \(q=0\), then \(\widehat\nu_d(c)=\infty\) almost surely and the bound is understood with \(1/q=\infty\), so the claim is trivial. Hence assume \(q>0\). We proceed in a few steps.

\paragraph{Step 1 (Finite-range dependence).}
For each $s\ge N_0$, $\mcM_d(s)$ is a measurable function of the window $\mcW_s=\{Z_{s-N_0+1},\dots,Z_s\}$.
If $r\ge N_0$, then $\mcW_s$ and $\mcW_{s+r}$ are disjoint, hence independent under i.i.d.\ $\{Z_t\}$.
Therefore $\mcM_d(s)$ and $\mcM_d(s+r)$ are independent, and $\{\mcM_d(s)\}$ is $(N_0-1)$-dependent.

\paragraph{Step 2 (Upper bound).}
Define the thinned sequence $Y_j:=\mcM_d(jN_0)$, $j\ge1$, and the stopping time $$\widehat\nu^\star:=\inf\{j\ge1:Y_j>c\}.$$
By Step~1, $(Y_j)_{j\ge1}$ are i.i.d.\ and $\P(Y_1>c)=q$ by stationarity. Hence \(\widehat\nu^\star\) is geometric with success probability \(q\), so $\E\left[\widehat\nu^\star\right]=1/q$.
Moreover, $\widehat\nu_d(c)\le jN_0$ whenever $\widehat\nu^\star=j$, hence
$\widehat\nu_d(c) \leq N_0\,\widehat\nu^\star$ a.s.\ and thus we have
$$\E\left[\widehat\nu_d(c)\right]\leq N_0\,\E\left[\widehat\nu^\star\right]=\frac{N_0}{q}.$$

\paragraph{Step 3 (Lower bound).}
For any $n\ge0$, by the union bound and stationarity,
\[
\P\Big(\widehat\nu_d(c) \leq N_0+n\Big)
=\P\!\left(\bigcup_{j=0}^{n}\{\mcM_d(N_0+j)>c\}\right)
\le \sum_{j=0}^{n}\P(\mcM_d(N_0+j)>c)
=(n+1)q.
\]
Hence,
\begin{equation}
\label{eq:tail-lb}
\P\Big(\widehat\nu_d(c) > N_0+n\Big)\ \ge\ 1-(n+1)q.
\end{equation}
Let $L:=\big\lfloor (2q)^{-1}\big\rfloor$. Then $(n+1)q\le Lq\le \tfrac12$ for all $0\le n\le L-1$,
so \eqref{eq:tail-lb} gives $\P(\widehat\nu_d(c)> N_0+n)\ge \tfrac12$ for $0\le n\le L-1$.
Using $\E[T]=\sum_{m\ge0}\P(T>m)$ for integer-valued $T\ge0$,
\[
\E\left[\widehat\nu_d(c)\right]
\ge \sum_{n=0}^{L-1}\P(\widehat\nu_d(c)> N_0+n)
\ge \sum_{n=0}^{L-1}\frac12
=\frac{L}{2}.
\]
Finally, $L=\lfloor (2q)^{-1}\rfloor\ge (2q)^{-1}-1$, hence
\[
\E\left[\widehat\nu_d(c)\right]\ \ge\ \frac{1}{2}\left(\frac{1}{2q}-1\right)
=\frac{1}{4q}-\frac{1}{2}.
\]
Combining Steps 2 and 3 yields \eqref{eq:arl-sandwich-conditional}.
\qed


\subsection[Proof of \Cref{lem:online-cedd}: Conditional EDD and Pollak EDD bounds]{Proof of \Cref{lem:online-cedd}}

Throughout the proof, all probabilities and expectations are understood conditionally on the fixed calibration sample \(\mcC_{N_0}\), and we suppress this conditioning from the notation.

Let \(\widehat\nu_d:=\widehat\nu_d(c_{\alpha,N_0})\). Next, let us define $S_{d,\nu}:=\{\widehat\nu_d>\nu\}$, and $A_{d,\nu}:=\{\widehat\nu_d\ge \nu+N_0\}$.
Since $\mcJ_\nu=\{\nu+1,\dots,\nu+N_0-1\}$, we have the disjoint decomposition
\[
S_{d,\nu}
=
\{\widehat\nu_d\in\mcJ_\nu\}\,\sqcup\,A_{d,\nu},
\]
where $A \sqcup B$ denotes disjoint union, i.e., $A \sqcup B = A \cup B$ when $A \cap B = \varnothing$. Therefore,
\[
\bbP_{\mathbf{H}_{1,d}}(A_{d,\nu}\mid S_{d,\nu})
=
1-\pi_{d,\nu}.
\]
Hence,
\begin{align*}
\operatorname{CEDD}_{d,\nu}
&=
\E_{\mathbf{H}_{1,d}}\!\left[\widehat\nu_d-\nu \,\middle|\, S_{d,\nu},\,\widehat\nu_d\in\mcJ_\nu\right]
\pi_{d,\nu} 
+
\E_{\mathbf{H}_{1,d}}\!\left[\widehat\nu_d-\nu \,\middle|\, A_{d,\nu}\right]
(1-\pi_{d,\nu}).
\end{align*}
On the event \(\{\widehat\nu_d\in\mcJ_\nu\}\), we have
\[
1\le \widehat\nu_d-\nu\le N_0-1,
\]
and therefore
\[
\E_{\mathbf{H}_{1,d}}\!\left[\widehat\nu_d-\nu \,\middle|\, S_{d,\nu},\,\widehat\nu_d\in\mcJ_\nu\right]
\le N_0-1.
\]
It remains to bound the second term. Define
\[
s_{0,\nu}:=\nu+2N_0-1,
\qquad
\tau_{d,\nu}^{(G)}
:=
\inf\{s\ge s_{0,\nu}:\mcM_d(s)>c_{\alpha,N_0}\}.
\]

By construction, every window \(\mcW_s\) with \(s\ge s_{0,\nu}\) is fully post-change and is disjoint from every window with index at most \(\nu+N_0-1\). Consequently, under \(\mathbf{H}_{1,d}\), the event \(A_{d,\nu}\) depends only on \(\{Z_1,\dots,Z_{\nu+N_0-1}\}\), whereas \(\tau_{d,\nu}^{(G)}\) depends only on \(\{Z_{\nu+N_0},Z_{\nu+N_0+1},\dots\}\). Since the change-point \(\nu\) is deterministic and the observations are independent across time under \(\mathbf{H}_{1,d}\), it follows that \(A_{d,\nu}\) and \(\tau_{d,\nu}^{(G)}\) are conditionally independent given \(\mcC_{N_0}\), i.e., $A_{d,\nu} \indep \tau_{d,\nu}^{(G)}~|~\mcC_{N_0}$.

Moreover, on the event \(A_{d,\nu}\), either the procedure stops before
\(s_{0,\nu}\), in which case \(\widehat\nu_d<s_{0,\nu}\le \tau_{d,\nu}^{(G)}\),
or it does not, in which case \(\widehat\nu_d=\tau_{d,\nu}^{(G)}\). Thus,
$\widehat\nu_d\le \tau_{d,\nu}^{(G)}$ on $A_{d,\nu}$,
and hence
\[
\E_{\mathbf{H}_{1,d}}\!\left[\widehat\nu_d-\nu \,\middle|\, A_{d,\nu}\right]
\le
\E_{\mathbf{H}_{1,d}}\!\left[\tau_{d,\nu}^{(G)}-\nu\right].
\]
Now define the shifted post-change process
\[
\widetilde{\mcM}_d(u)
:=
\mcM_d(s_{0,\nu}-N_0+u),
\qquad u\ge N_0,
\]
and the associated stopping time
\[
T_d^{(G)}
:=
\inf\{u\ge N_0:\widetilde{\mcM}_d(u)>c_{\alpha,N_0}\}.
\]
Then we have $\tau_{d,\nu}^{(G)} = s_{0,\nu}-N_0+T_d^{(G)}$.  Since every window entering \(\widetilde{\mcM}_d(u)\) is fully post-change, the process \(\{\widetilde{\mcM}_d(u)\}_{u\ge N_0}\) is stationary and \((N_0-1)\)-dependent under the post-change law \(G_d\). Also,
\[
\bbP_{\mathbf{H}_{1,d}}\!\left(\widetilde{\mcM}_d(N_0)>c_{\alpha,N_0}\right)
=
q_d^{(G)}.
\]
Applying the same \(m\)-dependence argument used in \Cref{lem:online-arl-bounds} to this shifted process yields
\[
\E_{\mathbf{H}_{1,d}}[T_d^{(G)}]\le \frac{N_0}{q_d^{(G)}}.
\]
Therefore,
\begin{align*}
\E_{\mathbf{H}_{1,d}}\!\left[\tau_{d,\nu}^{(G)}-\nu\right]
=
(s_{0,\nu}-N_0-\nu)+\E_{\mathbf{H}_{1,d}}[T_d^{(G)}]
\le
(s_{0,\nu}-N_0-\nu)+\frac{N_0}{q_d^{(G)}}
=
N_0-1+\frac{N_0}{q_d^{(G)}}.
\end{align*}
Thus,
\[
\E_{\mathbf{H}_{1,d}}\!\left[\widehat\nu_d-\nu \,\middle|\, A_{d,\nu}\right]
\le
N_0-1+\frac{N_0}{q_d^{(G)}}.
\]
Combining the two parts, we obtain
\begin{align*}
\operatorname{CEDD}_{d,\nu}
&\le
(N_0-1)\pi_{d,\nu}
+
\left(N_0-1+\frac{N_0}{q_d^{(G)}}\right)(1-\pi_{d,\nu})
\\
&=
(N_0-1)
+
\bigl(1-\pi_{d,\nu}\bigr)\frac{N_0}{q_d^{(G)}},
\end{align*}
which proves \eqref{eq:cedd-bound}.
\qed

\medskip\noindent
Taking the supremum over \(\nu\in\bbN\) and using \(q_d^{(G)}>0\), we obtain
\[
\operatorname{CEDD}_{d}^{\mathbf{(P)}}
=
\sup_{\nu\in\bbN} \; \operatorname{CEDD}_{d,\nu}
\le
(N_0-1)
+
\left(1-\inf_{\nu\in\bbN}\pi_{d,\nu}\right)
\frac{N_0}{q_d^{(G)}},
\]
which proves \eqref{eq:pcedd-bound-pi}.
\qed

\subsection[Proof of \Cref{cor:online-pollak-cedd}: Pollak EDD bounds under a post-change HDLSS CLT]{Proof of \Cref{cor:online-pollak-cedd}}

Throughout the proof, all probabilities and expectations are understood conditionally on the fixed calibration sample \(\mcC_{N_0}\), and we suppress this conditioning from the notation.

\paragraph{Proof of $q_d^{(G)} \stackrel{\bbP}{\to} q_\infty^{(G)}$.}

By assumption, for any fixed fully post-change window \(s\), the
coordinate-trajectory sequence generated under \(G_d\) satisfies the analogue
of \Cref{ass:mixing,ass:nondeg-longrun} with \(N\) replaced by \(N_0\). Hence,
by the same argument as in \Cref{thm:multivariate-clt},
\[
\sqrt d\,(\mathfrak W_{d,s}(k))_{k\in\mcT_0}
\ \stackrel{\mathscr L}{\longrightarrow}\
\mcN\!\left(\mathbf 0,\sigma_*^2(N_0)\mathbf K(N_0)\right).
\]
Since the calibration sample is independent of the monitoring stream and
\[
\widehat\sigma_{\mathrm{long}}(N_0;\mcC_{N_0})
\stackrel{\bbP}{\longrightarrow}
\sigma_{\mathrm{long}}(N_0),
\]
Slutsky's theorem gives
\[
\mcM_d(s)
=
\max_{k\in\mcT_0}
\frac{\sqrt d\,\mathfrak W_{d,s}(k)}
{\widehat\sigma_{\mathrm{long}}(N_0;\mcC_{N_0})}
\ \stackrel{\mathscr L}{\longrightarrow}\
\frac{\sigma_*(N_0)}{\sigma_{\mathrm{long}}(N_0)}
\max_{k\in\mcT_0}\mcZ_k,
\]
where \(\mcZ=(\mcZ_k)_{k\in\mcT_0}\sim
\mcN(\mathbf 0,\mathbf K(N_0))\).

By stationarity of the fully post-change regime,
\[
q_d^{(G)} = \P_{d,\nu}\!\left(\mcM_d(s)>c_{\alpha,N_0}\right)
\]
for any fully post-change window \(s\). Since \(\max_{k\in\mcT_0}\mcZ_k\) has a continuous distribution, the threshold \(c_{\alpha,N_0}\) is a continuity point of the limiting law. Therefore,
\[
q_d^{(G)}
\stackrel{\bbP}{\longrightarrow}
\P\!\left(
\frac{\sigma_*(N_0)}{\sigma_{\mathrm{long}}(N_0)}
\max_{k\in\mcT_0}\mcZ_k
>
c_{\alpha,N_0}
\right)
=
q_\infty^{(G)}.
\]
Because \(\sigma_*(N_0)\in(0,\infty)\), \(\sigma_{\mathrm{long}}(N_0)\in(0,\infty)\), and \(\max_{k\in\mcT_0}\mcZ_k\) has a non-degenerate continuous Gaussian law, it follows that $q_\infty^{(G)}\in(0,1)$.

\paragraph{Proof of the Pollak EDD lower bound.}
The lower bound is deterministic. Since the stopping rule only begins once a full window is available, \(\widehat\nu_d\ge N_0\). Taking \(\nu=1\) in the Pollak supremum gives
\[
\operatorname{CEDD}_{d}^{\mathbf{(P)}}
\ge
\operatorname{CEDD}_{d,1}
=
\E_{d,1}\!\left[
\widehat\nu_d-1
\,\middle|\,
\widehat\nu_d>1
\right]
\ge
N_0-1.
\]

\paragraph{Proof of the Pollak EDD upper bound.}
For the upper bound, \eqref{eq:pcedd-bound-pi} and
\(0\le\pi_{d,\nu}\le1\) imply
\[
\operatorname{CEDD}_{d}^{\mathbf{(P)}}
\le
N_0-1+\frac{N_0}{q_d^{(G)}}.
\]
Since \(q_d^{(G)} \stackrel{\bbP}{\longrightarrow} q_\infty^{(G)}\in(0,1)\), the continuous mapping theorem gives
\[
\frac{1}{q_d^{(G)}}
\stackrel{\bbP}{\longrightarrow}
\frac{1}{q_\infty^{(G)}}.
\]
Therefore,
\[
\operatorname{CEDD}_{d}^{\mathbf{(P)}}
\le
N_0\left(1+\frac{1}{q_\infty^{(G)}}\right)-1
+
o_{\bbP}(1),
\]
which proves \eqref{eq:pcedd-sandwich}.
\qed

\section{Operational details for the numerical experiments}
\label{app-sec:operational-details-exp}

All offline simulations were conducted on a MacBook Pro equipped with an Apple M3 processor, \(8\) cores, and \(16\) GB of memory. The online simulations were conducted on a Windows workstation equipped with AMD EPYC-Genoa processors (2.25 GHz), \(64\) vCPUs, and \(256\) GB of memory. All experiments were implemented in the \texttt{R} programming language \citep{R-lang}, version \(4.6.0\).


We summarize the implementation details for the competing methods used in the offline simulation study. All methods were applied to the same simulated sequences with \(N=40\), true change-point \(\tau=15\), candidate split set \(\mcT := \{2,\ldots,N-2\}\), and \(1000\) Monte Carlo replications for each distributional setting and dimension. For methods that returned no valid interior change-point, or returned a boundary point, we recorded the estimate as \(N+1\), so that it falls in the largest localization error category.

\begin{itemize}
    \item \textit{DAK Scan :} We computed the proposed scan statistic over all candidate split points \(t\in\mcT\) using the bit-level implementation of the angular-kernel scan. The estimated change-point was the maximizer of the scan statistic. The computationally intensive components of the proposed DAK scan were implemented in \texttt{C++} and called from \texttt{R} through \texttt{Rcpp} \citep[version \(1.1.1.1.1\)]{Rcpp-package} and \texttt{RcppArmadillo} \citep[version \(15.2.6.1\)]{RcppArmadillo-package}, substantially reducing the computation time.

    \item \textit{E-Divisive \citep{MattesonJames2014}} : We used the function \texttt{e.divisive()} from the \texttt{ecp} package\footnote{\texttt{ecp} package: \url{https://cran.r-project.org/web/packages/ecp/index.html}} \citep{ecp-package} in \texttt{R}, with significance level \(0.05\), number of permutations \(\texttt{R}=200\), and maximum number of changes \(\texttt{k}=1\). When the output contained exactly one interior estimated change-point, we used that estimate; otherwise the method was treated as returning no valid change-point.

    \item \textit{E-CP3O \citep{MattesonJames2014}} : We used the function \texttt{e.cp3o()} from the \texttt{ecp} package with \(\texttt{K}=1\). If the returned estimate was an interior point, it was used as the estimated change-point; boundary estimates were treated as invalid.

    \item \textit{KCPA \citep{ArlotEtAL2019KCPA}} : We used the function \texttt{kcpa()} from the \texttt{ecp} package, which is an implementation of kernel change-point analysis with \(\texttt{L}=1\) and \(\texttt{C}=1\). If the method returned no estimate, or returned a boundary point, it was treated as returning no valid interior change-point.

    
    \item \textit{MMD-\(\mathcal N\) :} We implemented a scan based on the empirical Gaussian kernel MMD statistic (see, e.g., \citet{GrettonEtAl2012Kernel}). For a candidate split \(t\), the two samples are \(\{Z_1,\ldots,Z_t\}\) and \(\{Z_{t+1},\ldots,Z_N\}\), and the statistic is computed using the V-statistic form
    \[
    \mathrm{MMD}^{(\mcN)}_{\sigma}(t)
    =
    \frac{1}{t^2}\sum_{i,i'\le t} k_\sigma(Z_i,Z_{i'})
    +
    \frac{1}{(N-t)^2}\sum_{j,j'>t} k_\sigma(Z_j,Z_{j'})
    -
    \frac{2}{t(N-t)}\sum_{i\le t<j} k_\sigma(Z_i,Z_j),
    \]
    where \(k_\sigma(x,y)=\exp\{-\|x-y\|_2^2/(2\sigma^2)\}\). The bandwidth \(\sigma\) was chosen once for each simulated sequence using the median heuristic, \(\sigma=\{ \operatorname{median}_{i<j}\|Z_i-Z_j\|_2^2/2\}^{1/2}\). The estimated change-point was the maximizer over \(t\in\{2,\ldots,N-2\}\).


    \item \textit{MMD-\(\mathcal E\) :} We implemented an analogous MMD scan using the energy kernel based on the Euclidean distance (see, e.g., \citet{sejdinovic2013energy}). For a candidate split \(t\), the statistic was
    \[
    \mathrm{MMD}^{(\mcE)}_{\ell_2}(t)
    =
    \frac{2}{t(N-t)}\sum_{i\le t<j}\|Z_i-Z_j\|_2
    -
    \frac{1}{t^2}\sum_{i,i'\le t}\|Z_i-Z_{i'}\|_2
    -
    \frac{1}{(N-t)^2}\sum_{j,j'>t}\|Z_j-Z_{j'}\|_2.
    \]
    The estimated change-point was the maximizer of this statistic over \(t\in\mcT\).

    \item \textit{HDD-DM \citep{Drikvandi2025changepoint}} : HDD-DM is a recent, distribution-free, state-of-the-art method designed specifically for change-point detection in the HDLSS regime. We use the function \texttt{test\_single\_changepoint()} from the \texttt{HDDchangepoint} package \footnote{\texttt{HDDchangepoint} package: \url{https://github.com/rezadrikvandi/HDDchangepoint}} in \texttt{R}, with \(\texttt{npermut}=100\). The returned change-point index was shifted by one to match our convention for split locations. If no valid interior estimate was returned, the method was treated as returning no valid change-point.

    \item \textit{Sliced-Wass :} We implemented a Sliced Wasserstein scan over \(t\in\mcT\) using \(100\) random projections and Wasserstein order \(p=1\). The estimated change-point was the maximizer of the resulting scan statistic.
    \vspace{0.5cm}
\end{itemize}

\section{Additional simulations under finite-moment alternatives}
\label{app-sec:additional-simulations}

\Cref{tab:cp_accuracy_appendix} reports additional localization accuracy results for the offline simulation study. The setup is the same as in Section~\ref{subsec:simulation}: \(N=40\), true change-point \(\tau=15\), dimensions \(d\in\{200,1000,5000\}\), and \(1000\) independent replications. The table includes additional light-tailed, contaminated, sparse, and covariance/dependence-dominated examples that were omitted from the main text for space. As before, performance is summarized through the empirical distribution of \(|\widehat\tau_d-\tau|\), together with its empirical mean.

\input{tables/cp-loc-supp-new}

The additional examples reported in Table~\ref{tab:cp_accuracy_appendix} are as follows. As in the main text, we fix \(N=40\) and \(\tau=15\), and take \(Z_1,\ldots,Z_\tau\sim F_d\) and \(Z_{\tau+1},\ldots,Z_N\sim G_d\), for \(d\in\{200,1000,5000\}\).

\begin{enumerate}
    \item \textit{Gaussian location change:} \(F_d=\mathcal N(0,1)^{\otimes d}\) and \(G_d=\mathcal N(1,1)^{\otimes d}\).

    \item \textit{Gaussian scale change:} \(F_d=\mathcal N(0,1)^{\otimes d}\) and \(G_d=\mathcal N(0,4)^{\otimes d}\).

    \item \textit{Gaussian spiked covariance change:} \(F_d=\mathcal N(0,I_d)\) and \(G_d=\mathcal N(0,I_d+bvv^\top)\), where \(b=5\) and \(v\in\mathbb R^d\) is a unit vector.

    \item \textit{Gaussians with same marginals:} \(F_d=\mathcal N(0,I_d)\), while \(G_d=\mathcal N(0,\Sigma_d)\) with \((\Sigma_d)_{kk}=1\) for all \(k\) and \((\Sigma_d)_{k\ell}=\rho\) for \(k\neq \ell\), where \(\rho=0.3\). Thus, \(F_d \neq G_d\), but $F_d^{(k)} \equiv G_d^{(k)}$ for all $k \in [d]$.

    \item \textit{Laplace location change:} \(F_d=\mathrm{Laplace}(0,1)^{\otimes d}\) and \(G_d=\mathrm{Laplace}(1,1)^{\otimes d}\).

    \item \textit{Bernoulli--Gaussian change:} \(F_d\) is the product distribution of \(B_{0.1}X\), and \(G_d\) is the product distribution of \(B_{0.9}X\), where \(B_p\sim\mathrm{Bernoulli}(p)\), \(X\sim\mathcal N(0,1)\), and \(B_p\) and \(X\) are independent.

    \item \textit{Gaussian mixture:} \(F_d=\mathcal N(0,1)^{\otimes d}\), while \(G_d=\{(1-\epsilon)\mathcal N(0,1)+\epsilon\mathcal N(10,1)\}^{\otimes d}\), with \(\epsilon=0.1\).
%
\end{enumerate}

The additional results are broadly consistent with the main simulation findings. In standard marginal-shift settings such as Gaussian location, Gaussian scale, Laplace location, Bernoulli--Gaussian, and Gaussian mixture examples, DAK localizes accurately, with performance improving as \(d\) increases in the harder scale and mixture settings. HDD-DM is also highly competitive in several of these examples, especially in the Gaussian and Bernoulli--Gaussian settings.

The covariance- and dependence-dominated examples clarify the scope of the proposed method. In the Gaussian same-marginals example, the one-dimensional marginals are unchanged and the difference is only in the dependence structure; DAK is therefore not expected to localize the change. Similarly, the Gaussian spiked covariance example is primarily a covariance change and is not favorable to a marginal scan. The sparse variance example also illustrates a limitation of dimension averaging: only a fixed number of coordinates carry the variance change, so the fraction of affected coordinates decreases with \(d\). These cases reinforce that the proposed procedure is designed for changes reflected in aggregated coordinate-wise marginals, rather than arbitrary joint-distribution changes.

\begingroup
\setstretch{1.05}
\putbib[references]
\endgroup

\end{bibunit}

%% file: components/computational-lemmas.tex

\begin{lemma}
\label{lem:N0_all_pairs}
Fix integers $N \in \bbN$ and $1 \leq t \leq t' \leq N$.
For each ordered configuration $(P,Q) \in \mathscr{P}$, let $N_0^{(P,Q)}(t,t',N)$ denote the number of ordered couples $\bigl((i,j),(i',j')\bigr)$ in configuration $(P,Q)$ for which the two ordered pairs exhibit two complete overlaps, i.e.\ $(i',j') = (i,j)$ or $(i',j') = (j,i)$, whenever these equalities are compatible with the index-range constraints defining $(P,Q)$.  
Then:
\begin{itemize}

\item[(i)] \textbf{Configuration $(A_t,A_{t'})$.}  
$i\in\{1,\dots,t\}$, $j\in\{t+1,\dots,N\}$, $i'\in\{1,\dots,t'\}$, $j'\in\{t'+1,\dots,N\}$.
\[
N_0^{(A_t,A_{t'})}(t,t',N)= t\,(N-t').
\]

\item[(ii)] \textbf{Configuration $(B_t,B_{t'})$.}  
$i,j\in\{1,\dots,t\}$ and $i',j'\in\{1,\dots,t'\}$.
\[
N_0^{(B_t,B_{t'})}(t,t',N)= 2\,t(t-1).
\]

\item[(iii)] \textbf{Configuration $(C_t,C_{t'})$.}  
$i,j\in\{t+1,\dots,N\}$ and $i',j'\in\{t'+1,\dots,N\}$.
\[
N_0^{(C_t,C_{t'})}(t,t',N)= 2\,(N-t')(N-t'-1).
\]

\item[(iv)] \textbf{Configuration $(A_t,B_{t'})$.}  
$i\in\{1,\dots,t\}$, $j\in\{t+1,\dots,N\}$, and $i',j'\in\{1,\dots,t'\}$.
\[
N_0^{(A_t,B_{t'})}(t,t',N)= 2\,t\,(t'-t).
\]

\item[(v)] \textbf{Configuration $(B_t,C_{t'})$.}  
$i,j\in\{1,\dots,t\}$ and $i',j'\in\{t'+1,\dots,N\}$.
\[
N_0^{(B_t,C_{t'})}(t,t',N)=0.
\]

\item[(vi)] \textbf{Configuration $(C_t,A_{t'})$.}  
$i,j\in\{t+1,\dots,N\}$, $i'\in\{1,\dots,t'\}$, and $j'\in\{t'+1,\dots,N\}$.
\[
N_0^{(C_t,A_{t'})}(t,t',N)= 2\,(N-t')(t'-t).
\]

\item[(vii)] \textbf{Configuration $(A_{t'},B_t)$.}  
$i,j\in\{1,\dots,t\}$, $i'\in\{1,\dots,t'\}$ and $j'\in\{t'+1,\dots,N\}$.
\[
N_0^{(A_{t'},B_t)}(t,t',N)=0.
\]

\item[(viii)] \textbf{Configuration $(B_{t'},C_t)$.}  
$i,j\in\{t+1,\dots,N\}$ and $i',j'\in\{1,\dots,t'\}$.
\[
N_0^{(B_{t'},C_t)}(t,t',N)=2\,(t'-t)(t'-t-1).
\]

\item[(ix)] \textbf{Configuration $(C_{t'},A_t)$.}  
$i\in\{1,\dots,t\}$, $j\in\{t+1,\dots,N\}$, and $i',j'\in\{t'+1,\dots,N\}$.
\[
N_0^{(C_{t'},A_t)}(t,t',N)=0.
\]
\end{itemize}
\end{lemma}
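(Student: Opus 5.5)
The proof is a direct enumeration, carried out one configuration at a time. The key observation is that two ordered pairs have a complete overlap exactly when they are equal as unordered sets, $\{i',j'\}=\{i,j\}$. So $N_0^{(P,Q)}$ counts the ordered pairs $(i,j)$ admissible for the first block $P$ and, for each, the number of orientations $(i,j)$ or $(j,i)$ admissible for the second block $Q$. In every configuration the first pair is distinct ($i\neq j$), so the two orientations are different ordered pairs and never double-count. Throughout I would use $t\le t'$, which gives $\{1,\dots,t\}\subseteq\{1,\dots,t'\}$ and $\{t'+1,\dots,N\}\subseteq\{t+1,\dots,N\}$. The index sets then decompose into the three blocks $\{1,\dots,t\}$, $(t,t']$ and $\{t'+1,\dots,N\}$, of sizes $t$, $t'-t$ and $N-t'$.

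\textbf{Zero cases.} Configurations (v), (vii) and (ix) vanish because the index ranges are incompatible.
\begin{itemize}
\item In (v), $\{i,j\}\subseteq\{1,\dots,t\}$ while $\{i',j'\}\subseteq\{t'+1,\dots,N\}$, and these sets are disjoint.
\item In (vii), $\{i,j\}\subseteq\{1,\dots,t\}\subseteq\{1,\dots,t'\}$, but $j'>t'$.
\item In (ix), $\{i',j'\}\subseteq\{t'+1,\dots,N\}$, but $i\le t$.
\end{itemize}

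\textbf{Nonzero cases.} For each, I locate $i$ and $j$ in the three blocks and count orientations.
\begin{itemize}
\item (i): Since $i\le t\le t'$, the reversed orientation would need $i>t'$, which is impossible. The straight orientation needs $j>t'$. This gives $t(N-t')$.
\item (ii): Both indices lie in $\{1,\dots,t\}\subseteq\{1,\dots,t'\}$, so both orientations are admissible. This gives $2t(t-1)$.
\item (iii): Both indices must lie in $\{t'+1,\dots,N\}$, and both orientations are admissible. This gives $2(N-t')(N-t'-1)$.
\item (iv): $B_{t'}$ forces $j\le t'$, so $j\in(t,t']$, and both orientations are admissible. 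This gives $2t(t'-t)$.
\item (vi): Exactly one of $i,j$ lies in $(t,t']$ and the other in $\{t'+1,\dots,N\}$. There are $2(t'-t)(N-t')$ such ordered pairs, each matched by exactly one orientation (the smaller index goes to $i'$).
\item (viii): Both indices lie in $(t,t']$, giving $(t'-t)(t'-t-1)$ ordered pairs, each with two admissible orientations. This gives $2(t'-t)(t'-t-1)$.
\end{itemize}

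The only real obstacle is bookkeeping: deciding, for each configuration, whether both orientations or only one are admissible, which is what produces the factor $2$ in some cases and not others. Case (vi) is the most delicate, since there the ordered first pair carries the factor $2$ while only one orientation matches. I would guard against errors by checking the boundary values $t=t'$, where the (iv), (vi) and (viii) counts must vanish.
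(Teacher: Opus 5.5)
Your proposal is correct and matches the paper's proof: a configuration-by-configuration enumeration of the couples with $\{i',j'\}=\{i,j\}$, where for each configuration you use $t\le t'$ to decide which of the two orientations $(i,j)$, $(j,i)$ is compatible with the index ranges. The differences are cosmetic: you treat the three vanishing configurations first, and you make explicit the partition $\{1,\dots,t\}\cup(t,t']\cup\{t'+1,\dots,N\}$, which the paper uses only implicitly when it intersects the index ranges.
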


\begin{proof}
The quantity $N_0^{(P,Q)}(t,t',N)$ counts the number of ordered couples $((i,j), (i',j'))$ such that $(i,j) \in P$, $(i',j') \in Q$, and $\{i,j\} = \{i',j'\}$ (meaning $(i',j')=(i,j)$ or $(i',j')=(j,i)$). Note that for configurations of type $B$ and $C$, the diagonal cases $i=j$ are excluded by definition. We assume $1 \leq t \leq t' \leq N$.

\begin{itemize}
\item[(i)] \textbf{Configuration $(A_t,A_{t'})$.}
Here $(i,j) \in \{1,\dots,t\} \times \{t+1,\dots,N\}$ and $(i',j') \in \{1,\dots,t'\} \times \{t'+1,\dots,N\}$.
For an overlap, we need $\{i,j\} = \{i',j'\}$.
Since $t \leq t'$, the first components imply $i, i' \leq t'$ and the second components imply $j, j' > t$.
Specifically, because $i \le t$ and $j > t$, and $i' \le t'$ and $j' > t'$, an overlap requires $(i',j') = (i,j)$. The reverse $(i',j') = (j,i)$ is impossible because $j' > t'$ but $i \le t$.
Thus, we must have $(i,j) = (i',j')$.
This requires $i = i' \in \{1,\dots,t\} \cap \{1,\dots,t'\} = \{1,\dots,t\}$ ($t$ choices) and $j = j' \in \{t+1,\dots,N\} \cap \{t'+1,\dots,N\} = \{t'+1,\dots,N\}$ ($N-t'$ choices).

\medskip\noindent 
Total count: $t(N-t')$.

\item[(ii)] \textbf{Configuration $(B_t,B_{t'})$.}
Here $i,j \in \{1,\dots,t\}$ with $i \neq j$.
Since $t \leq t'$, any pair in $B_t$ is also a valid pair of indices in $\{1,\dots,t'\}$.
For any $(i,j) \in B_t$, both permutations $(i,j)$ and $(j,i)$ are valid in $B_{t'}$ because $i,j \le t \le t'$ and $i \neq j$.
The number of valid $(i,j)$ in $B_t$ is $t(t-1)$. Lastly, note that each generates 2 overlaps (itself and its reversal).

\medskip\noindent 
Total count: $2t(t-1)$.

\item[(iii)] \textbf{Configuration $(C_t,C_{t'})$.}
Here $i,j \in \{t+1,\dots,N\}$ with $i \neq j$.
For overlap with $C_{t'}$ (indices in $\{t'+1,\dots,N\}$), we need $i,j \in \{t+1,\dots,N\} \cap \{t'+1,\dots,N\} = \{t'+1,\dots,N\}$.
The number of such ordered pairs $(i,j)$ is $(N-t')(N-t'-1)$.
Both permutations $(i,j)$ and $(j,i)$ exist in $C_{t'}$.

\medskip\noindent 
Total count: $2(N-t')(N-t'-1)$.

\item[(iv)] \textbf{Configuration $(A_t,B_{t'})$.}
Here $(i,j) \in A_t \implies i \in \{1,\dots,t\}, j \in \{t+1,\dots,N\}$.
We require $\{i,j\} \subseteq \{1,\dots,t'\}$ (domain of $B_{t'}$).
$i \le t \le t'$ is automatically satisfied. We need $j \in \{t+1,\dots,N\} \cap \{1,\dots,t'\} = \{t+1,\dots,t'\}$.
Number of choices for $i$: $N$.
Number of choices for $j$: $t' - (t+1) + 1 = t' - t$.
Total pairs $(i,j)$: $t(t'-t)$.
For each pair, since $i \le t$ and $j \ge t+1$, $i \neq j$, so both $(i,j)$ and $(j,i)$ are in $B_{t'}$.

\medskip\noindent 
Total count: $2t(t'-t)$.

\item[(v)] \textbf{Configuration $(B_t,C_{t'})$.}
$(i,j) \in B_t \implies i,j \le t$.
$(i',j') \in C_{t'} \implies i',j' \ge t'+1$.
Since $t < t'+1$, the index sets are disjoint. No overlap is possible.

\medskip\noindent 
Total count: 0.

\item[(vi)] \textbf{Configuration $(C_t,A_{t'})$.}
$(i,j) \in C_t \implies i,j \in \{t+1,\dots,N\}, i \neq j$.
$(i',j') \in A_{t'} \implies i' \in \{1,\dots,t'\}, j' \in \{t'+1,\dots,N\}$.
Overlap requires $\{i,j\} = \{i',j'\}$. Thus, one index must be in $\{1,\dots,t'\}$ and the other in $\{t'+1,\dots,N\}$.
Intersecting with the domain of $C_t$:
One index is in $\{t+1,\dots,N\} \cap \{1,\dots,t'\} = \{t+1,\dots,t'\}$ (let this set be $S_1$, size $t'-t$).
The other is in $\{t+1,\dots,N\} \cap \{t'+1,\dots,N\} = \{t'+1,\dots,N\}$ (let this set be $S_2$, size $N-t'$).
\begin{itemize}
    \item Case 1: $i \in S_1, j \in S_2$. Count: $(t'-t)(N-t')$. Matches in $A_{t'}$ require $i' \le t', j' > t'$, so $(i',j') = (i,j)$. (1 match).
    \item Case 2: $i \in S_2, j \in S_1$. Count: $(N-t')(t'-t)$. Matches in $A_{t'}$ require $i' \le t', j' > t'$, so $(i',j') = (j,i)$. (1 match).
\end{itemize}

\medskip\noindent 
Total count: $2(N-t')(t'-t)$.

\item[(vii)] \textbf{Configuration $(A_{t'},B_t)$.}
$(i,j) \in A_{t'} \implies j \ge t'+1$.
$(i',j') \in B_t \implies i',j' \le t$.
Since $t < t'+1$, $j$ cannot be in $\{i',j'\}$. No overlap.

\medskip\noindent 
Total count: 0.

\item[(viii)] \textbf{Configuration $(B_{t'},C_t)$.}
$(i,j) \in B_{t'} \implies i,j \in \{1,\dots,t'\}, i \neq j$.
$(i',j') \in C_t \implies i',j' \in \{t+1,\dots,N\}, i' \neq j'$.
Overlap requires $\{i,j\} \subseteq \{t+1,\dots,N\}$.
Combined with $(i,j) \in B_{t'}$, we need $i,j \in \{t+1,\dots,t'\}$.
Number of ordered pairs $(i,j)$ in this range: $(t'-t)(t'-t-1)$.
For each such pair, both permutations correspond to valid pairs in $C_t$.

\medskip\noindent 
Total count: $2(t'-t)(t'-t-1)$.

\item[(ix)] \textbf{Configuration $(C_{t'},A_t)$.}
$(i,j) \in C_{t'} \implies i,j \ge t'+1$.
$(i',j') \in A_t \implies i' \le t$.
Since $t < t'+1$, $i'$ cannot be in $\{i,j\}$. No overlap.

\medskip\noindent 
Total count: 0.
\end{itemize}
\end{proof}


\begin{lemma}
\label{lem:N2_At_Atprime}
Fix integers $N \geq 1$ and $1 \leq t \leq t' \leq N$. Consider ordered couples
$\bigl( (i,j),(i',j') \bigr)$
of integer pairs satisfying
\[
i \in \{1,\dots,t\}, \quad j \in \{t+1,\dots,N\}, \quad
i' \in \{1,\dots,t'\}, \quad j' \in \{t'+1,\dots,N\},
\]
together with the constraints that all four indices $i,j,i',j'$ are pairwise distinct (in particular, $i \neq j$ and $j \neq j'$). Then the number of such ordered couples, denoted $N_2^{(A_t,A_{t'})}(t,t',N)$, is given by
\begin{equation}
\label{eq:N2_At_Atprime}
N_2^{(A_t,A_{t'})}(t,t',N)
=
t\,(N - t')
\Bigl\{ (t' - t)(t' - 2) + (t' - 1)(N - t' - 1) \Bigr\}.
\end{equation}
\end{lemma}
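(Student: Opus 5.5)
We count directly, by fixing the first ordered pair $(i,j)$ and counting the admissible second pairs $(i',j')$ that avoid both $i$ and $j$. Since $t\le t'$, the set $\{1,\dots,t'\}$ splits as $\{1,\dots,t\}\sqcup\{t+1,\dots,t'\}$, and the range $\{t+1,\dots,N\}$ of $j$ splits as $\{t+1,\dots,t'\}\sqcup\{t'+1,\dots,N\}$. We therefore consider two cases according to where $j$ lies. The index $i\in\{1,\dots,t\}$ always lies in the range of $i'$ and never in the range of $j'$ (because $i\le t\le t'$). So $i$ removes exactly one option for $i'$ and none for $j'$.

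\emph{Case 1: $j\in\{t+1,\dots,t'\}$.} There are $t(t'-t)$ choices of $(i,j)$. Here $j\le t'$, so $j$ also lies in the range of $i'$ and not in the range of $j'$. Hence $i'$ has $t'-2$ choices and $j'$ has $N-t'$ choices. These choices are automatically distinct from each other, since $i'\le t'<j'$. The contribution is $t(t'-t)(t'-2)(N-t')$.

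\emph{Case 2: $j\in\{t'+1,\dots,N\}$.} There are $t(N-t')$ choices of $(i,j)$. Now $j$ lies in the range of $j'$ but not of $i'$. Hence $i'$ has $t'-1$ choices and $j'$ has $N-t'-1$ choices. The contribution is $t(N-t')(t'-1)(N-t'-1)$.

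Adding the two cases and factoring out $t(N-t')$ gives exactly \eqref{eq:N2_At_Atprime}. The only point needing care is to verify, in each case, which of $i,j$ fall into the range of $i'$ versus that of $j'$. This uses $t\le t'$, and no other step is an obstacle. For degenerate parameters (e.g.\ $t=t'$, which makes Case 1 empty, or $t'=N$), the formula stays valid because the corresponding factors vanish.
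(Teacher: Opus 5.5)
Your proof is correct and follows essentially the same route as the paper: both fix $i$ and condition on whether $j\in\{t+1,\dots,t'\}$ (giving $t'-2$ choices for $i'$ and $N-t'$ for $j'$) or $j\in\{t'+1,\dots,N\}$ (giving $t'-1$ and $N-t'-1$), then add the two contributions and factor out $t(N-t')$. Your remark on degenerate parameters is a small addition the paper leaves implicit.
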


\begin{proof}
We first make the constraints explicit. By construction, $i$ and $j$ always lie in disjoint sets, namely $i \in \{1,\dots,t\}$ and $j \in \{t+1,\dots,N\}$, so the condition $i \neq j$ is automatically satisfied. Similarly, $i'$ and $j'$ lie in disjoint sets, so $i' \neq j'$ is automatic. Because $j,j' \in \{t+1,\dots,N\}$ with $j' \geq t'+1>t$, the constraint $j \neq j'$ is nontrivial and must be enforced. The requirement that all four indices $i,j,i',j'$ be distinct reduces to the conjunction of
\[
i' \neq i, \qquad i' \neq j, \qquad j' \neq j.
\]
The relations $i \neq j$ and $i' \neq j'$ are already implied by the ranges of the indices, and $i$ and $j'$ can never coincide since $i \leq t < t'+1 \leq j'$.

To count the admissible couples, it is convenient to proceed by conditioning on the value of $j$. There are $t$ possible choices for $i$, and once $i$ is fixed we distinguish two cases depending on whether $j$ lies in the overlap of the ranges $\{t+1,\dots,N\}$ and $\{1,\dots,t'\}$.

Consider first the case $j \in \{t+1,\dots,t'\}$. There are exactly $t' - t$ such values of $j$. In this situation $j$ belongs to $\{1,\dots,t'\}$, so the index $i'$ must avoid both $i$ and $j$. Thus $i'$ can be any element of $\{1,\dots,t'\} \setminus \{i,j\}$, and therefore there are precisely $t'-2$ admissible choices for $i'$. On the other hand, $j'$ ranges over $\{t'+1,\dots,N\}$ and, since $j \leq t'$, the inequality $j' \neq j$ is automatically satisfied; hence there are $N - t'$ admissible values for $j'$. For each fixed $i$, the number of ordered pairs $(j,i',j')$ in this case is
\[
(t' - t)\,(t' - 2)\,(N - t').
\]

Next consider the case $j \in \{t'+1,\dots,N\}$. There are $N - t'$ such values of $j$. In this case $j \notin \{1,\dots,t'\}$, so $i'$ only needs to avoid $i$. Consequently $i'$ can be chosen from $\{1,\dots,t'\} \setminus \{i\}$, giving $t'-1$ admissible choices. However, now $j$ lies in the same range as $j'$, namely $\{t'+1,\dots,N\}$, so we must enforce $j' \neq j$. Thus $j'$ may be any element of $\{t'+1,\dots,N\} \setminus \{j\}$, yielding $N - t' - 1$ choices for $j'$. For each fixed $i$, the number of ordered pairs $(j,i',j')$ in this case is
\[
(N - t')\,(t' - 1)\,(N - t' - 1).
\]

Since $i$ can be chosen arbitrarily from $\{1,\dots,t\}$, there are $t$ possible values of $i$, and the two cases for $j$ are disjoint and exhaustive. Summing the contributions from both regimes, we obtain
\[
N_2^{(A_t,A_{t'})}(t,t',N)
=
t \bigl\{ (t' - t)(t' - 2)(N - t') + (N - t')(t' - 1)(N - t' - 1) \bigr\},
\]
which coincides with \eqref{eq:N2_At_Atprime}. This completes the proof.
\end{proof}


\begin{lemma}
\label{lem:N2_Bt_Btprime}
Fix integers $N \geq 1$ and $1 \leq t < t' \leq N$. Consider ordered couples
$\bigl((i,j),(i',j')\bigr)$
of integer pairs satisfying
\[
i,j \in \{1,\dots,t\}, \quad i' , j' \in \{1,\dots,t'\},
\]
together with the constraints
\[
i \neq j, \qquad i' \neq j', \qquad \text{and} \qquad i,j,i',j' \ \text{are pairwise distinct}.
\]
Let $N_2^{(B_t,B_{t'})}(t,t',N)$ denote the number of such ordered couples. Then
\begin{equation}
\label{eq:N2_Bt_Btprime}
N_2^{(B_t,B_{t'})}(t,t',N)
=
t(t-1)\,(t'-2)(t'-3).
\end{equation}
\end{lemma}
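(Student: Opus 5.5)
The count $N_2^{(B_t,B_{t'})}(t,t',N)$ follows from a direct sequential count. Choose the ordered pair $(i,j)$ first, then the ordered pair $(i',j')$ given $(i,j)$. Because $\{1,\dots,t\}\subseteq\{1,\dots,t'\}$ when $t<t'$, the index sets are nested. So, unlike \Cref{lem:N2_At_Atprime}, no case split on where $i,j$ fall is needed.

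First, the ordered pair $(i,j)$ is chosen from $\{1,\dots,t\}$ with $i\neq j$. There are $t(t-1)$ such choices.

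Next, fix $(i,j)$. The pairwise distinctness constraint says that $i'$ and $j'$ must both avoid $\{i,j\}$, and that $i'\neq j'$. Both $i$ and $j$ lie in $\{1,\dots,t\}\subseteq\{1,\dots,t'\}$, so exactly two elements of the range $\{1,\dots,t'\}$ are forbidden. This leaves $t'-2$ admissible choices for $i'$. Once $i'$ is chosen, $j'$ must also avoid $i'$, which leaves $t'-3$ choices. The constraint $i'\neq j'$ is absorbed by this step.

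Multiplying the two stages gives $t(t-1)(t'-2)(t'-3)$, which is \eqref{eq:N2_Bt_Btprime}. The only step needing care is checking that the forbidden set $\{i,j\}$ has exactly two elements inside $\{1,\dots,t'\}$. This is guaranteed by $i\ne j$ and the nesting $t\le t'$.

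In degenerate ranges such as $t'\le 3$ or $t\le 1$, the formula returns $0$ or a nonpositive factor. This agrees with the true count of $0$, since no admissible couples exist there; for example, $t'=3$ gives the factor $t'-3=0$. No obstacle is expected beyond this bookkeeping.
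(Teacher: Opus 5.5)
Your proposal is correct and matches the paper's proof essentially verbatim: you count the $t(t-1)$ ordered pairs $(i,j)$, then $t'-2$ choices for $i'\notin\{i,j\}$ and $t'-3$ choices for $j'\notin\{i,j,i'\}$, using the nesting $\{1,\dots,t\}\subseteq\{1,\dots,t'\}$. Your degenerate-case remark is also fine, since the only negative factor is $t'-3=-1$ at $t'=2$, and there $t'-2=0$ (and $t=1$) already makes the product vanish.
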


\begin{proof}
By assumption we have $1 \leq t < t' \leq N$. All four indices $i,j,i',j'$ are constrained to lie in $\{1,\dots,t'\}$, with $i,j \in \{1,\dots,t\}$ and $i',j' \in \{1,\dots,t'\}$. The conditions $i \neq j$ and $i' \neq j'$ are explicit, and the requirement that $i,j,i',j'$ be pairwise distinct is equivalent to the conjunction
\[
i' \notin \{i,j\}, \qquad j' \notin \{i,j,i'\}.
\]

We first choose the ordered pair $(i,j)$ with $i,j \in \{1,\dots,t\}$ and $i \neq j$. There are $t$ possible choices for $i$, and for each such $i$ there are $t-1$ possible choices for $j \neq i$. Thus the number of ordered pairs $(i,j)$ is
\[
\#\{(i,j): i,j \in \{1,\dots,t\},\, i \neq j\} = t(t-1).
\]

Fix an ordered pair $(i,j)$ with $i \neq j$. The indices $i$ and $j$ are distinct elements of $\{1,\dots,t\} \subset \{1,\dots,t'\}$. For the primed pair $(i',j')$ we require $i',j' \in \{1,\dots,t'\}$, $i' \neq j'$, and $i',j'$ both distinct from $i$ and $j$. Consequently $i'$ must lie in the set
\[
\{1,\dots,t'\} \setminus \{i,j\},
\]
which has cardinality $t'-2$. Once $i'$ is chosen, the index $j'$ must lie in
\[
\{1,\dots,t'\} \setminus \{i,j,i'\},
\]
which has cardinality $t'-3$, since $i,j,i'$ are three distinct elements of $\{1,\dots,t'\}$ by construction. Therefore, for each fixed $(i,j)$ there are exactly
\[
(t'-2)(t'-3)
\]
admissible ordered pairs $(i',j')$.

Multiplying the number of choices for $(i,j)$ and the number of admissible $(i',j')$ for each such $(i,j)$, we obtain
\[
N_2^{(B_t,B_{t'})}(t,t',N)
=
t(t-1)\,(t'-2)(t'-3),
\]
which is precisely \eqref{eq:N2_Bt_Btprime}. This completes the proof.
\end{proof}


\begin{lemma}
\label{lem:N2_Ct_Ctprime}
Fix integers $N \geq 1$ and $1 \leq t < t' \leq N$. Define
\[
r := N - t, \qquad r' := N - t'.
\]
Consider ordered couples
$\bigl((i,j),(i',j')\bigr)$
of integer pairs satisfying
\[
i,j \in \{t+1,\dots,N\}, \quad i \neq j, \qquad
i',j' \in \{t'+1,\dots,N\}, \quad i' \neq j',
\]
together with the additional requirement that the four indices $i,j,i',j'$ are pairwise distinct. Let $N_2^{(C_t,C_{t'})}(t,t',N)$ denote the number of such ordered couples. Then
\begin{equation}
\label{eq:N2_Ct_Ctprime}
N_2^{(C_t,C_{t'})}(t,t',N)
=
r'(r'-1)\,(r-2)(r-3)
=
(N - t')(N - t' - 1)(N - t - 2)(N - t - 3).
\end{equation}
\end{lemma}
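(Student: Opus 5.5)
The count in \Cref{lem:N2_Ct_Ctprime} can be done by the same direct product counting used in \Cref{lem:N2_Bt_Btprime}, with the roles of the two ranges reversed. The containment now goes the other way: since $t<t'$, we have $\{t'+1,\dots,N\}\subseteq\{t+1,\dots,N\}$. So I will choose the primed pair $(i',j')$ first, because it lives in the smaller set, and then count the unprimed pair $(i,j)$ in the larger set while avoiding the primed indices. The constraints $i\neq j$ and $i'\neq j'$ are explicit. Together with them, pairwise distinctness of $i,j,i',j'$ is equivalent to
\[
i\notin\{i',j'\},\qquad j\notin\{i',j',i\}.
\]

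\textbf{Step 1 (the primed pair).} The ordered pairs $(i',j')$ with distinct entries in $\{t'+1,\dots,N\}$, a set of size $r'=N-t'$, number $r'(r'-1)$.

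\textbf{Step 2 (the unprimed pair).} Fix such a pair $(i',j')$. Both $i'$ and $j'$ are distinct elements of $\{t+1,\dots,N\}$, a set of size $r=N-t$, because $t'+1>t+1$.
\begin{itemize}
\item The index $i$ ranges over $\{t+1,\dots,N\}\setminus\{i',j'\}$, giving $r-2$ choices.
\item Given $i$, the index $j$ ranges over $\{t+1,\dots,N\}\setminus\{i',j',i\}$. Since $i',j',i$ are three distinct elements of that set, this gives $r-3$ choices.
\end{itemize}

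\textbf{Step 3 (multiply).} Multiplying the counts gives $r'(r'-1)(r-2)(r-3)$, and substituting $r$ and $r'$ yields \eqref{eq:N2_Ct_Ctprime}.

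The only step needing care is Step 2: one has to confirm that the two primed indices really lie inside the larger range, so that each exclusion removes exactly one admissible value. This follows from $t<t'$. When $r<3$ or $r'<2$, the formula still gives the correct count $0$, since one of the factors vanishes and no admissible configurations exist. So no separate case analysis is needed.
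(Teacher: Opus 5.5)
Your proof is correct, and it takes a shorter route than the paper's proof of this lemma.

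The paper fixes the unprimed pair $(i,j)$ in the larger set $S=\{t+1,\dots,N\}$ first. This forces a case split on $k=\#(\{i,j\}\cap S')$, where $S'=\{t'+1,\dots,N\}$. The three cases contribute $(t'-t)(t'-t-1)\,r'(r'-1)$, $2(t'-t)\,r'(r'-1)(r'-2)$ and $r'(r'-1)(r'-2)(r'-3)$. Collapsing the sum then needs the polynomial identity $d(d-1)+2d(r'-2)+(r'-2)(r'-3)=(r-2)(r-3)$ with $d=t'-t$.

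You instead fix first the pair living in the nested smaller set $S'\subseteq S$. Then each exclusion removes exactly one admissible value from $S$, so the count factorizes at once, with no cases and no algebra. This is the same nesting trick the paper itself uses for $(B_t,B_{t'})$.

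What the paper's overlap-type decomposition buys is uniformity. It is the template needed for configurations such as $(C_t,A_{t'})$ and $(B_{t'},C_t)$, where the two index ranges overlap without one containing the other, so your one-line factorization is unavailable there.

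Your remark on degenerate cases is also sound. Any admissible primed pair forces $r'\ge 2$, hence $r\ge r'+1\ge 3$, so $r-2$ and $r-3$ are genuine cardinalities. If $r'<2$, both the count and $r'(r'-1)$ are $0$.
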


\begin{proof}
Introduce the sets
\[
S := \{t+1,\dots,N\}, \qquad S' := \{t'+1,\dots,N\} \subset S.
\]
By definition, $|S| = r = N-t$ and $|S'| = r' = N-t'$. The ordered pair $(i,j)$ ranges over $S \times S$ with $i \neq j$, and the ordered pair $(i',j')$ ranges over $S' \times S'$ with $i' \neq j'$. The condition that $i,j,i',j'$ are pairwise distinct is equivalent to requiring
\[
\{i,j\} \cap \{i',j'\} = \varnothing,
\]
since the inequalities $i \neq j$ and $i' \neq j'$ are already imposed.

To exploit the inclusion $S' \subset S$, it is convenient to partition $S$ into the disjoint union of the ``overlap'' and the ``tail''. Set
\[
D := S \setminus S' = \{t+1,\dots,t'\}, \qquad S' = \{t'+1,\dots,N\},
\]
so that $S = D \cup S'$ is a disjoint union, and
\[
|D| = t' - t, \qquad |S'| = r'.
\]
For later reference, note that
\[
r = |S| = |D| + |S'| = (t'-t) + r'.
\]

Fix an ordered pair $(i,j)$ with $i,j \in S$ and $i \neq j$. The pair $(i',j')$ must consist of two distinct elements of $S'$ that do not belong to $\{i,j\}$. Define
\[
k := \#\bigl(\{i,j\} \cap S'\bigr) \in \{0,1,2\}.
\]
Then among the $r'$ elements of $S'$, exactly $k$ are forbidden (those in $\{i,j\}\cap S'$), and the remaining $r'-k$ are admissible for $i'$ and $j'$. Thus the number of ordered pairs $(i',j')$ with $i',j' \in S'$, $i' \neq j'$, and $\{i',j'\} \cap \{i,j\} = \varnothing$ is
\[
(r'-k)(r'-k-1),
\]
since we may choose $i'$ in $r'-k$ ways and then $j'$ in $(r'-k-1)$ ways.

To complete the count, it remains to enumerate the ordered pairs $(i,j)$ according to the value of $k = \#(\{i,j\}\cap S')$.

If $k=0$, then both $i$ and $j$ lie in $D$. Since $|D| = t'-t$, there are
\[
|D|\,(|D|-1) = (t'-t)(t'-t-1)
\]
ordered choices for $(i,j)$ with $i,j \in D$, $i \neq j$. For each such pair we have $r'-k = r'$ admissible elements in $S'$, and hence $(r'-k)(r'-k-1) = r'(r'-1)$ admissible ordered pairs $(i',j')$. The contribution of the case $k=0$ to $N_2^{(C_t,C_{t'})}(t,t',N)$ is therefore
\[
(t'-t)(t'-t-1)\,r'(r'-1).
\]

If $k=1$, then one of $i,j$ lies in $D$ and the other lies in $S'$. There are $|D| = t'-t$ choices for the element in $D$, and $|S'| = r'$ choices for the element in $S'$, and the order matters, so the total number of ordered pairs $(i,j)$ with $k=1$ is
\[
2\,|D|\,|S'| = 2(t'-t)r'.
\]
In this case $r'-k = r'-1$, so there are $(r'-1)(r'-2)$ admissible ordered pairs $(i',j')$. Thus the contribution of the case $k=1$ is
\[
2(t'-t)r'\,(r'-1)(r'-2).
\]

If $k=2$, then both $i$ and $j$ lie in $S'$. There are $r'(r'-1)$ ordered pairs $(i,j)$ with $i,j \in S'$ and $i \neq j$. For such pairs we have $r'-k = r'-2$, so the number of admissible $(i',j')$ is $(r'-2)(r'-3)$. The contribution of the case $k=2$ is thus
\[
r'(r'-1)\,(r'-2)(r'-3).
\]

Summing over $k \in \{0,1,2\}$, we obtain
\begin{align*}
N_2^{(C_t,C_{t'})}(t,t',N)
&= (t'-t)(t'-t-1)\,r'(r'-1) + 2(t'-t)r'\,(r'-1)(r'-2)\\
&\hspace{7cm}+ r'(r'-1)\,(r'-2)(r'-3) \\
&= r'(r'-1)\Bigl\{ (t'-t)(t'-t-1) + 2(t'-t)(r'-2) + (r'-2)(r'-3)\Bigr\}.
\end{align*}
Set $d := t'-t$ and write $r = d + r'$. Then the expression in braces can be rewritten as
\[
d(d-1) + 2d(r'-2) + (r'-2)(r'-3).
\]
Introduce the shifted variable $s := r'-2$. In terms of $d$ and $s$, this becomes
\[
d(d-1) + 2ds + s(s-1)
= (d^2 - d) + 2ds + (s^2 - s)
= (d+s)^2 - 3(d+s) + 2.
\]
Since $d+s = (t'-t) + (r'-2) = (d + r') - 2 = r - 2$, we find
\[
d(d-1) + 2d(r'-2) + (r'-2)(r'-3)
= (r-2)^2 - 3(r-2) + 2
= (r-2)(r-3).
\]
Substituting back into the expression for $N_2^{(C_t,C_{t'})}(t,t',N)$ yields
\[
N_2^{(C_t,C_{t'})}(t,t',N)
= r'(r'-1)\,(r-2)(r-3),
\]
which is precisely \eqref{eq:N2_Ct_Ctprime}. This completes the proof.
\end{proof}


\begin{lemma}
\label{lem:N2_At_Btprime}
Fix integers $N \geq 1$ and $1 \leq t < t' \leq N$. Consider ordered couples
$\bigl((i,j),(i',j')\bigr)$
of integer pairs satisfying
\[
i \in \{1,\dots,t\}, \quad j \in \{t+1,\dots,N\}, \qquad
i',j' \in \{1,\dots,t'\}, \quad i' \neq j',
\]
together with the additional requirement that the four indices $i,j,i',j'$ are pairwise distinct. Let $N_2^{(A_t,B_{t'})}(t,t',N)$ denote the number of such ordered couples. Then
\begin{equation}
\label{eq:N2_At_Btprime}
N_2^{(A_t,B_{t'})}(t,t',N)
=
t\,(t'-2)\bigl[(t'-t)(t'-3) + (N-t')(t'-1)\bigr].
\end{equation}
\end{lemma}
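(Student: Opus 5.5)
The count follows the same pattern as the proof of \Cref{lem:N2_At_Atprime}. I would condition on the location of the index $j$ and count the admissible primed pairs directly.

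First I would make the distinctness constraints explicit. Since $i \le t < j$, the condition $i \neq j$ is automatic, and $i' \neq j'$ is imposed by hypothesis. The remaining requirements are therefore $i' \notin \{i,j\}$ and $j' \notin \{i,j,i'\}$. The index $i$ ranges over $t$ values. Because $i \le t \le t'$, the index $i$ always lies in the primed range $\{1,\dots,t'\}$, so it always removes one option for each primed index.

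Next I would split on $j$, using the two disjoint and exhaustive cases $j\in\{t+1,\dots,t'\}$ and $j\in\{t'+1,\dots,N\}$.

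In the first case there are $t'-t$ choices of $j$. Both $i$ and $j$ lie in $\{1,\dots,t'\}$, so $i'$ has $t'-2$ choices and $j'$ then has $t'-3$ choices. This case contributes $t\,(t'-t)(t'-2)(t'-3)$.

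In the second case there are $N-t'$ choices of $j$. Here $j$ lies outside the primed range, so only $i$ is forbidden. Then $i'$ has $t'-1$ choices and $j'$ has $t'-2$ choices, for a contribution of $t\,(N-t')(t'-1)(t'-2)$.

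Finally I would add the two contributions and factor out $t\,(t'-2)$, which gives exactly \eqref{eq:N2_At_Btprime}. There is no genuine obstacle in this argument. The only point needing care is the bookkeeping that $i$ always lies inside $\{1,\dots,t'\}$, while $j$ lies inside it only in the first case. Getting this wrong would shift the factors $t'-2$ and $t'-3$. As a sanity check, I would verify small cases, for instance $t=1$, $t'=2$, where the count must be zero.
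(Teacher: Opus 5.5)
Your proposal is correct and follows the paper's proof essentially verbatim: the paper also conditions on whether $j \in \{t+1,\dots,t'\}$ or $j \in \{t'+1,\dots,N\}$, obtains the per-$i$ counts $(t'-t)(t'-2)(t'-3)$ and $(N-t')(t'-1)(t'-2)$, multiplies by the $t$ choices of $i$, and factors out $t'-2$. Your sanity check at $t=1$, $t'=2$ also confirms that the formally negative factor $t'-3=-1$ is harmless there, since it is multiplied by $t'-2=0$.
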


\begin{proof}
We begin by making the constraints explicit. By construction, $i$ and $j$ lie in disjoint sets, namely $i \in \{1,\dots,t\}$ and $j \in \{t+1,\dots,N\}$, so $i \neq j$ is automatic. The pair $(i',j')$ takes values in $\{1,\dots,t'\}^2$ with the explicit condition $i' \neq j'$. The requirement that $i,j,i',j'$ are pairwise distinct can be written as
\[
i' \notin \{i,j\}, \qquad j' \notin \{i,j,i'\},
\]
since $i \neq j$ and $i' \neq j'$ are already enforced.

It is convenient to proceed by a case distinction on the location of $j$ relative to $t'$. For each choice of $i \in \{1,\dots,t\}$, we consider separately the regimes $j \in \{t+1,\dots,t'\}$ and $j \in \{t'+1,\dots,N\}$.

First suppose that $j \in \{t+1,\dots,t'\}$. There are exactly $t'-t$ such values of $j$. In this case, all three indices $i,j,i'$ take values in $\{1,\dots,t'\}$, and the requirement that $i',j'$ be distinct from $i$ and $j$ implies that $i'$ must avoid the two values $i$ and $j$. Thus $i'$ ranges over the set
\[
\{1,\dots,t'\} \setminus \{i,j\},
\]
which has cardinality $t'-2$. Once $i'$ is fixed, we must choose $j'$ so that
\[
j' \in \{1,\dots,t'\}, \qquad j' \notin \{i,j,i'\}.
\]
Since $i,j,i'$ are three distinct elements of $\{1,\dots,t'\}$ in this regime, there remain exactly $t'-3$ admissible choices for $j'$. Therefore, for each fixed $i \in \{1,\dots,t\}$, the number of ordered triples $(j,i',j')$ with $j \in \{t+1,\dots,t'\}$ that satisfy all constraints is
\[
(t'-t)\,(t'-2)\,(t'-3).
\]

Next consider the case $j \in \{t'+1,\dots,N\}$. There are exactly $N-t'$ such values of $j$. In this situation, the index $j$ lies outside $\{1,\dots,t'\}$, so $j$ can never coincide with $i'$ or $j'$, both of which belong to $\{1,\dots,t'\}$. Thus $i'$ only needs to avoid $i$, and $j'$ only needs to avoid $i$ and $i'$; the index $j$ imposes no additional restriction on the primed pair. Consequently, $i'$ ranges over
\[
\{1,\dots,t'\} \setminus \{i\},
\]
which has cardinality $t'-1$. Once $i'$ is chosen, the index $j'$ must lie in
\[
\{1,\dots,t'\} \setminus \{i,i'\},
\]
which has cardinality $t'-2$ because $i \neq i'$ by construction. Thus, for each fixed $i \in \{1,\dots,t\}$, the number of admissible ordered triples $(j,i',j')$ with $j \in \{t'+1,\dots,N\}$ is
\[
(N-t')\,(t'-1)\,(t'-2).
\]

Since $i$ can be chosen arbitrarily from $\{1,\dots,t\}$, there are $t$ possible values of $i$, and the two regimes for $j$ are disjoint and exhaustive. Summing the contributions from the two cases and multiplying by the number of choices for $i$, we obtain
\[
N_2^{(A_t,B_{t'})}(t,t',N)
=
t\bigl\{ (t'-t)(t'-2)(t'-3) + (N-t')(t'-1)(t'-2) \bigr\}.
\]
Factoring out $(t'-2)$ from the expression in braces yields
\[
N_2^{(A_t,B_{t'})}(t,t',N)
=
t\,(t'-2)\bigl[(t'-t)(t'-3) + (N-t')(t'-1)\bigr],
\]
which is exactly \eqref{eq:N2_At_Btprime}. This completes the proof.
\end{proof}


\begin{lemma}
\label{lem:N2_Bt_Ctprime}
Fix integers $N \geq 1$ and $1 \leq t < t' \leq N$. Define $r' := N - t'$.
Consider ordered couples
$\bigl((i,j),(i',j')\bigr)$
of integer pairs satisfying
\[
i,j \in \{1,\dots,t\}, \quad i \neq j, \qquad
i',j' \in \{t'+1,\dots,N\}, \quad i' \neq j',
\]
together with the additional requirement that the four indices $i,j,i',j'$ are pairwise distinct. Let $N_2^{(B_t,C_{t'})}(t,t',N)$ denote the number of such ordered couples. Then
\begin{equation}
\label{eq:N2_Bt_Ctprime}
N_2^{(B_t,C_{t'})}(t,t',N)
=
t(t-1)\,r'(r'-1)
=
t(t-1)\,(N - t')(N - t' - 1).
\end{equation}
\end{lemma}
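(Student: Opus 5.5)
The plan is a direct product-rule count, with no case analysis. The key observation is that the two index ranges are disjoint: $(i,j)$ ranges over $\{1,\dots,t\}$ and $(i',j')$ ranges over $\{t'+1,\dots,N\}$. Since $t<t'+1$, no index from the first pair can ever coincide with an index from the second pair. This is the same disjointness already used in \Cref{lem:N0_all_pairs}(v) to show $N_0^{(B_t,C_{t'})}=0$.

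I will first record what this disjointness gives. The requirement that $i,j,i',j'$ be pairwise distinct collapses to the two within-pair conditions $i\neq j$ and $i'\neq j'$, which are imposed explicitly. So every admissible $(i,j)$ combines freely with every admissible $(i',j')$, and the set of couples is the Cartesian product of the two sets of ordered pairs.

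Next I count each factor separately.
\begin{enumerate}
\item Ordered pairs $(i,j)$ of distinct elements of $\{1,\dots,t\}$: there are $t$ choices for $i$ and then $t-1$ choices for $j$, giving $t(t-1)$.
\item Ordered pairs $(i',j')$ of distinct elements of $\{t'+1,\dots,N\}$, a set of size $r'=N-t'$: by the same reasoning there are $r'(r'-1)$.
\end{enumerate}
Multiplying the two counts gives $t(t-1)\,r'(r'-1)=t(t-1)(N-t')(N-t'-1)$, which is the claimed formula \eqref{eq:N2_Bt_Ctprime}.

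There is no real obstacle. The only point to watch is degenerate ranges such as $t=1$ or $r'\le 1$. In those cases the corresponding factor is zero, and the formula correctly returns zero, so no separate treatment is needed.
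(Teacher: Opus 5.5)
Your proposal is correct and follows the paper's proof essentially step for step: the paper also uses the disjointness of $\{1,\dots,t\}$ and $\{t'+1,\dots,N\}$ to reduce pairwise distinctness to the within-pair conditions, then multiplies the counts $t(t-1)$ and $r'(r'-1)$. Your check of the degenerate cases $t=1$ and $r'\le 1$ is a minor addition that the paper leaves implicit.
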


\begin{proof}
Introduce the disjoint index sets
\[
S_1 := \{1,\dots,t\}, \qquad S_2 := \{t'+1,\dots,N\}.
\]
Since $t < t'$, these sets are disjoint and satisfy $|S_1| = t$ and $|S_2| = r' = N - t'$. By assumption we have
\[
(i,j) \in S_1 \times S_1, \quad i \neq j, \qquad
(i',j') \in S_2 \times S_2, \quad i' \neq j'.
\]
Because $S_1 \cap S_2 = \varnothing$, any index in $S_1$ is automatically distinct from any index in $S_2$. In particular, the cross-distinctness conditions
\[
i' \neq i,\quad i' \neq j,\quad j' \neq i,\quad j' \neq j
\]
are all automatically satisfied as soon as $(i,j)$ lies in $S_1^2$ and $(i',j')$ lies in $S_2^2$. Therefore, the requirement that $i,j,i',j'$ be pairwise distinct reduces exactly to the within-pair constraints $i \neq j$ and $i' \neq j'$ that are already explicitly imposed.

It follows that to count the admissible ordered couples $$\bigl((i,j),(i',j')\bigr)$$ it suffices to count:
\begin{itemize}
    \item[(i)] ordered pairs $(i,j)$ with $i,j \in S_1$ and $i \neq j$;
    \item[(ii)] ordered pairs $(i',j')$ with $i',j' \in S_2$ and $i' \neq j'$.
\end{itemize}
For (i), there are $t$ choices for $i \in S_1$ and, given $i$, there are $t-1$ choices for $j \in S_1 \setminus \{i\}$. Hence
\[
\#\{(i,j) : i,j \in S_1,\ i \neq j\}
=
t(t-1).
\]
For (ii), there are $r'$ possible values for $i' \in S_2$ and, given $i'$, there are $r'-1$ possible values for $j' \in S_2 \setminus \{i'\}$. Therefore,
\[
\#\{(i',j') : i',j' \in S_2,\ i' \neq j'\}
=
r'(r'-1).
\]

Since the choice of $(i,j)$ is independent of the choice of $(i',j')$ and every such pair automatically satisfies the cross-distinctness conditions, the total number of admissible ordered couples is the product of these two counts:
\[
N_2^{(B_t,C_{t'})}(t,t',N)
=
t(t-1)\,r'(r'-1).
\]
Recalling that $r' = N - t'$, we obtain exactly \eqref{eq:N2_Bt_Ctprime}, which completes the proof.
\end{proof}


\begin{lemma}
\label{lem:N2_Ct_Atprime}
Fix integers $N \geq 1$ and $1 \leq t < t' \leq N$. Define
\[
r := N - t, \qquad r' := N - t', \qquad d := t' - t,
\]
so that $r = d + r'$ and $d \geq 1$. Consider ordered couples
$\bigl((i,j),(i',j')\bigr)$
of integer pairs satisfying
\[
i,j \in \{t+1,\dots,N\}, \quad i \neq j,\qquad
i' \in \{1,\dots,t'\}, \qquad
j' \in \{t'+1,\dots,N\},
\]
together with the additional requirement that the four indices $i,j,i',j'$ are pairwise distinct. Let $N_2^{(C_t,A_{t'})}(t,t',N)$ denote the number of such ordered couples. Then
\begin{equation}
\label{eq:N2_Ct_Atprime}
N_2^{(C_t,A_{t'})}(t,t',N)
=
r'(r-2)\,\bigl[t'(r-3) + 2t\bigr]
=
(N - t')(N - t - 2)\,\bigl[t'(N - t - 3) + 2t\bigr].
\end{equation}
\end{lemma}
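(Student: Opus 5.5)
Since $j'\ge t'+1$ always lies in $S:=\{t+1,\dots,N\}$, I will fix the unprimed pair first and then count the primed pair. Split $S$ into $D:=\{t+1,\dots,t'\}$ (size $d$) and $S':=\{t'+1,\dots,N\}$ (size $r'$). Then $i'$ ranges over $\{1,\dots,t'\}=\{1,\dots,t\}\sqcup D$ and $j'$ ranges over $S'$. Since $(i,j)$ lies in $S$, the constraints reduce to $\{i,j\}\cap\{i',j'\}=\varnothing$. Also $i'\neq j'$ holds automatically because the two ranges are disjoint.

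Classify the ordered pairs $(i,j)$, $i\neq j$, in $S$ by $m:=\#(\{i,j\}\cap D)\in\{0,1,2\}$. Then $2-m$ of the indices lie in $S'$. For a given $(i,j)$, the admissible $i'$ number $t'-m$ (avoid the $m$ indices in $D$), and the admissible $j'$ number $r'-(2-m)$. These choices are independent, so the primed count is $(t'-m)(r'-2+m)$.

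The unprimed counts are as follows:
\begin{itemize}
\item $m=2$: $d(d-1)$ pairs, each with $(t'-2)r'$ primed choices;
\item $m=1$: $2dr'$ pairs, each with $(t'-1)(r'-1)$ primed choices;
\item $m=0$: $r'(r'-1)$ pairs, each with $t'(r'-2)$ primed choices.
\end{itemize}
Summing gives
\[
N_2^{(C_t,A_{t'})}=r'\bigl[d(d-1)(t'-2)+2d(t'-1)(r'-1)+t'(r'-1)(r'-2)\bigr].
\]

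The remaining step, which I expect to be the only real work, is the algebra showing that the bracket equals $(r-2)[t'(r-3)+2t]$ with $r=d+r'$ and $t=t'-d$. As in Lemma~\ref{lem:N2_Ct_Ctprime}, I will group the $t'$-terms and the constant terms separately. The $t'$-coefficient is $d(d-1)+2d(r'-1)+(r'-1)(r'-2)=(d+r'-1)(d+r'-2)-2(d+r'-1)+2\cdot$(correction). Concretely, with $s=r'-1$ this is $d^2-d+2ds+s^2-s=(d+s)^2-(d+s)$, which equals $(r-1)(r-2)$. The remaining terms are $-2d(d-1)-2d(r'-1)=-2d(r-2)$. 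The bracket is therefore $(r-2)[t'(r-1)-2d]$. Since $t'(r-1)-2d=t'(r-3)+2(t'-d)=t'(r-3)+2t$, this is exactly \eqref{eq:N2_Ct_Atprime}.

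I will double-check each case count against small values, for example $t=1$, $t'=2$, $N=5$, to guard against off-by-one errors in $|D|$ and $|S'|$.
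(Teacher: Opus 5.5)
Your proposal is correct and is essentially the paper's proof. It uses the same split of $\{t+1,\dots,N\}$ into $\{t+1,\dots,t'\}$ and $\{t'+1,\dots,N\}$, the same classification of $(i,j)$ by how many indices land in the first block, and the same conditional counts $(t'-m)(r'-2+m)$ weighted by $r'(r'-1)$, $2dr'$, $d(d-1)$. The one difference is that you actually verify the final identity (the $t'$-coefficient collapses to $(r-1)(r-2)$ and the remaining terms to $-2d(r-2)$), which the paper only calls routine; delete the unfinished fragment ending in ``$+2\cdot$(correction)'', since the $s=r'-1$ computation right after it is the real argument.
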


\begin{proof}
Introduce the index sets
\[
S := \{t+1,\dots,N\}, \qquad S_1 := \{t+1,\dots,t'\}, \qquad S_2 := \{t'+1,\dots,N\}.
\]
Then $S_1$ and $S_2$ form a disjoint partition of $S$,
\[
S = S_1 \cup S_2, \qquad S_1 \cap S_2 = \varnothing,
\]
and we have $|S_1| = d = t'-t$, $|S_2| = r' = N - t'$, and $|S| = r = d + r'$. By construction,
\[
(i,j) \in S \times S, \quad i \neq j, \qquad
i' \in \{1,\dots,t'\}, \qquad
j' \in S_2.
\]
The pairwise distinctness of $i,j,i',j'$ is equivalent to the conjunction of
\[
i' \notin \{i,j\}, \qquad j' \notin \{i,j\}, \qquad i' \neq j'.
\]
However, $i' \leq t' < t'+1 \leq j'$ implies $i' \neq j'$ automatically, so the nontrivial constraints reduce to
\[
i' \notin \{i,j\}, \qquad j' \notin \{i,j\}.
\]

Fix an ordered pair $(i,j)$ with $i,j \in S$ and $i \neq j$. The index $j'$ must be chosen from $S_2$ avoiding $\{i,j\}$, and the index $i'$ must be chosen from $\{1,\dots,t'\}$ avoiding $\{i,j\}$. Because $S_1 \subset \{1,\dots,t'\}$ and $S_2 \cap \{1,\dots,t'\} = \varnothing$, the constraints on $i'$ depend only on the intersection $\{i,j\} \cap S_1$, whereas the constraints on $j'$ depend only on the intersection $\{i,j\} \cap S_2$.

To formalize this, define
\[
k := \#\bigl(\{i,j\} \cap S_1\bigr) \in \{0,1,2\}, \qquad \ell := \#\bigl(\{i,j\} \cap S_2\bigr) = 2 - k.
\]
For a fixed pair $(i,j)$ of type $k$, the admissible values of $i'$ lie in
\[
\{1,\dots,t'\} \setminus \bigl(\{i,j\} \cap \{1,\dots,t'\}\bigr)
=
\{1,\dots,t'\} \setminus \bigl(\{i,j\} \cap S_1\bigr),
\]
so there are exactly $t' - k$ possibilities for $i'$. Similarly, the admissible values of $j'$ lie in
\[
S_2 \setminus \bigl(\{i,j\} \cap S_2\bigr),
\]
which has cardinality $|S_2| - \ell = r' - (2-k)$. Thus, for each fixed $(i,j)$ of type $k$, the number of admissible ordered pairs $(i',j')$ is
\[
(t' - k)\bigl(r' - (2-k)\bigr).
\]

It remains to count the number of ordered pairs $(i,j)$ in each type $k \in \{0,1,2\}$. If $k=0$, then both $i$ and $j$ lie in $S_2$, so there are
\[
\#\{(i,j) : i,j \in S_2,\ i \neq j\} = r'(r'-1)
\]
such pairs. If $k=1$, then one of $i,j$ lies in $S_1$ and the other lies in $S_2$. There are $|S_1| = d$ choices for the element in $S_1$ and $|S_2| = r'$ choices for the element in $S_2$, and the order matters, so the number of ordered pairs $(i,j)$ with $k=1$ is
\[
2\,|S_1|\,|S_2| = 2dr'.
\]
If $k=2$, then both $i$ and $j$ lie in $S_1$, and there are
\[
\#\{(i,j) : i,j \in S_1,\ i \neq j\} = d(d-1)
\]
such pairs.

Combining the classification of $(i,j)$ with the conditional counts of $(i',j')$, we obtain
\begin{align*}
N_2^{(C_t,A_{t'})}(t,t',N)
&=
\sum_{k=0}^2
\Bigl[\#\{(i,j): \#(\{i,j\} \cap S_1) = k\}\Bigr]
\cdot (t'-k)\bigl(r' - (2-k)\bigr) \\
&=
r'(r'-1)\,t'(r'-2)
\;+\;
2dr'\,(t'-1)(r'-1)
\;+\;
d(d-1)\, (t'-2)r'.
\end{align*}
Factoring out $r'$ yields
\begin{equation}
\label{eq:N2_Ct_Atprime_bracket}
N_2^{(C_t,A_{t'})}(t,t',N)
=
r' \Bigl\{ (r'-1)t'(r'-2) + 2d(t'-1)(r'-1) + d(d-1)(t'-2) \Bigr\}.
\end{equation}

It remains to simplify the expression in braces. Recall that $d = t'-t$ and $r = d + r'$. Substituting $t' = t + d$ and $r = d + r'$ into the right-hand side of \eqref{eq:N2_Ct_Atprime_bracket}, a direct but routine algebraic simplification shows that
\[
(r'-1)t'(r'-2) + 2d(t'-1)(r'-1) + d(d-1)(t'-2)
=
(r-2)\,\bigl[t'(r-3) + 2t\bigr].
\]
(One systematic way to verify this identity is to expand both sides in terms of the independent variables $N$, $d$, and $r'$, then use the relation $r = d + r'$ to eliminate $d$ and match coefficients of the resulting polynomial in $N$ and $r$.)

Substituting this back into \eqref{eq:N2_Ct_Atprime_bracket} yields
\[
N_2^{(C_t,A_{t'})}(t,t',N)
=
r'(r-2)\,\bigl[t'(r-3) + 2t\bigr],
\]
which is precisely \eqref{eq:N2_Ct_Atprime}. This completes the proof.
\end{proof}


\begin{lemma}
\label{lem:N2_Atprime_Bt}
Fix integers $N \geq 1$ and $1 \leq t < t' \leq N$. Define
\[
r' := N - t'.
\]
Consider ordered couples
$\bigl((i,j),(i',j')\bigr)$
of integer pairs satisfying
\[
i,j \in \{1,\dots,t\}, \quad i \neq j, \qquad
i' \in \{1,\dots,t'\}, \qquad
j' \in \{t'+1,\dots,N\},
\]
together with the additional requirement that the four indices $i,j,i',j'$ are pairwise distinct. Let $N_2^{(A_{t'},B_t)}(t,t',N)$ denote the number of such ordered couples. Then
\begin{equation}
\label{eq:N2_Atprime_Bt}
N_2^{(A_{t'},B_t)}(t,t',N)
=
t(t-1)\,(t'-2)\,r'
=
t(t-1)\,(t'-2)\,(N - t').
\end{equation}
\end{lemma}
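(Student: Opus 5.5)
We need to count ordered couples with $(i,j)\in\{1,\dots,t\}^2$, $i\neq j$, $i'\in\{1,\dots,t'\}$, $j'\in\{t'+1,\dots,N\}$, with all four indices pairwise distinct. As in the proofs of \Cref{lem:N2_Bt_Ctprime} and \Cref{lem:N2_At_Btprime}, the first step is to reduce the pairwise-distinctness requirement to the constraints that are not automatic.

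Because $i,j\le t<t'+1\le j'$, the index $j'$ can never coincide with $i$ or $j$. The relation $i'\neq j'$ also holds automatically, since $i'\le t'<j'$, and $i\neq j$ is imposed explicitly. The only nontrivial constraint is therefore $i'\notin\{i,j\}$.

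Next we factor the count. First choose the ordered pair $(i,j)$ with distinct entries in $\{1,\dots,t\}$, which gives $t(t-1)$ choices. Given $(i,j)$, the elements $i$ and $j$ are two distinct members of $\{1,\dots,t\}\subseteq\{1,\dots,t'\}$. Hence $i'$ ranges over $\{1,\dots,t'\}\setminus\{i,j\}$, which has exactly $t'-2$ elements. Finally, $j'$ is unconstrained within $\{t'+1,\dots,N\}$, giving $r'=N-t'$ choices.

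These choices are made sequentially, and each later count does not depend on which specific values were chosen earlier. Multiplying gives $t(t-1)(t'-2)(N-t')$, which is \eqref{eq:N2_Atprime_Bt}.

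No real obstacle is expected. The only point needing care is to check the inclusion $\{1,\dots,t\}\subseteq\{1,\dots,t'\}$, which guarantees that removing $\{i,j\}$ from the range of $i'$ removes exactly two elements. This inclusion follows from $t<t'$.
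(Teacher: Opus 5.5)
Your proposal is correct and follows the paper's proof essentially step for step. Both arguments use $j' > t'$ to show that the only non-automatic distinctness constraint is $i' \notin \{i,j\}$, and both then multiply the sequential counts $t(t-1)$, $t'-2$ (using $\{1,\dots,t\}\subseteq\{1,\dots,t'\}$), and $N-t'$.
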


\begin{proof}
We begin by making the ranges and constraints explicit. The unprimed indices $i$ and $j$ both lie in $\{1,\dots,t\}$ and satisfy $i \neq j$ by assumption. The primed indices satisfy
\[
i' \in \{1,\dots,t'\}, \qquad j' \in \{t'+1,\dots,N\}.
\]
Because $t < t'$, we have
\[
\{1,\dots,t\} \subset \{1,\dots,t'\}
\quad\text{and}\quad
\{1,\dots,t'\} \cap \{t'+1,\dots,N\} = \varnothing.
\]
In particular, $j'$ is strictly larger than $t'$, whereas $i,i',j$ are at most $t'$. Therefore $j'$ can never coincide with any of $i,i'$ or $j$, and the condition $i' \neq j'$ is automatically satisfied since $i' \leq t'$ and $j' \geq t'+1$.

It follows that the requirement that $i,j,i',j'$ be pairwise distinct reduces to
\[
i \neq j, \qquad i' \notin \{i,j\},
\]
together with the constraints on the ranges of the indices. No additional restriction is imposed on $j'$ beyond belonging to $\{t'+1,\dots,N\}$.

The counting can now be carried out directly. First, choose the ordered pair $(i,j)$ from $\{1,\dots,t\}$ subject to $i \neq j$. There are $t$ possible choices for $i$, and given $i$ there are $t-1$ possible choices for $j \in \{1,\dots,t\} \setminus \{i\}$. Hence
\[
\#\{(i,j) : i,j \in \{1,\dots,t\},\ i \neq j\}
=
t(t-1).
\]

Fix an ordered pair $(i,j)$ with $i,j \in \{1,\dots,t\}$ and $i \neq j$. Since $\{1,\dots,t\} \subset \{1,\dots,t'\}$, the indices $i$ and $j$ are distinct elements of $\{1,\dots,t'\}$. The index $i'$ must lie in $\{1,\dots,t'\}$ and avoid both $i$ and $j$, so it ranges over
\[
\{1,\dots,t'\} \setminus \{i,j\},
\]
which has cardinality $t'-2$. Thus, for each fixed $(i,j)$, there are exactly $t'-2$ admissible choices for $i'$. The index $j'$ must lie in $\{t'+1,\dots,N\}$, and as observed above, this constraint alone ensures that $j'$ is distinct from $i,j,i'$. Therefore, for each fixed $(i,j,i')$, there are precisely
\[
\#\{j' : j' \in \{t'+1,\dots,N\}\}
=
N - t' = r'
\]
admissible choices for $j'$.

Putting these pieces together, we find that for each ordered pair $(i,j)$ with $i,j \in \{1,\dots,t\}$ and $i \neq j$, there are
\[
(t'-2)\,r'
\]
admissible ordered pairs $(i',j')$. Multiplying by the total number $t(t-1)$ of choices for $(i,j)$, we obtain
\[
N_2^{(A_{t'},B_t)}(t,t',N)
=
t(t-1)\,(t'-2)\,r',
\]
which is exactly \eqref{eq:N2_Atprime_Bt}. This completes the proof.
\end{proof}


\begin{lemma}
\label{lem:N2_Btprime_Ct}
Fix integers $N \geq 1$ and $1 \leq t < t' \leq N$. Define
\[
r := N - t, \qquad r' := N - t', \qquad d := t' - t,
\]
so that $r = d + r'$ and $d \geq 1$. Consider ordered couples
$\bigl((i,j),(i',j')\bigr)$
of integer pairs satisfying
\[
i,j \in \{t+1,\dots,N\}, \quad i \neq j, \qquad
i',j' \in \{1,\dots,t'\}, \quad i' \neq j',
\]
together with the additional requirement that the four indices $i,j,i',j'$ are pairwise distinct. Let $N_2^{(B_{t'},C_t)}(t,t',N)$ denote the number of such ordered couples. Then
\begin{equation}
\label{eq:N2_Btprime_Ct}
N_2^{(B_{t'},C_t)}(t,t',N)
=
r'(r'-1)\,t'(t'-1)
\;+\;2d\,r'\,(t'-1)(t'-2)
\;+\;d(d-1)\,(t'-2)(t'-3).
\end{equation}
Equivalently, in terms of $t,t',N$,
\begin{align*}
N_2^{(B_{t'},C_t)}(t,t',N)
&=
(N - t')(N - t' - 1)\,t'(t'-1)
\;+\;2(t'-t)(N - t')\,(t'-1)(t'-2)\\
&\hspace{6cm}\;+\;(t'-t)(t'-t-1)\,(t'-2)(t'-3).
\end{align*}
\end{lemma}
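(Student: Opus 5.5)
The proof is a direct count modeled on \Cref{lem:N2_Ct_Ctprime} and \Cref{lem:N2_Ct_Atprime}. The unprimed pair $(i,j)$ ranges over ordered distinct pairs in $S:=\{t+1,\dots,N\}$. The primed pair $(i',j')$ ranges over ordered distinct pairs in $\{1,\dots,t'\}$.

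We partition $S$ into the overlap $S_1:=\{t+1,\dots,t'\}$ and the tail $S_2:=\{t'+1,\dots,N\}$. These have $|S_1|=d$ and $|S_2|=r'$. Since $S_2\cap\{1,\dots,t'\}=\varnothing$, only the elements of $\{i,j\}$ lying in $S_1$ can collide with $i'$ or $j'$. Given $i\neq j$ and $i'\neq j'$, pairwise distinctness is equivalent to $\{i',j'\}\cap(\{i,j\}\cap S_1)=\varnothing$.

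We classify $(i,j)$ by $k:=\#(\{i,j\}\cap S_1)\in\{0,1,2\}$. For a fixed $(i,j)$ of type $k$, exactly $k$ elements of $\{1,\dots,t'\}$ are forbidden. We choose $i'$ from the remaining $t'-k$ values and then $j'\neq i'$ from the remaining $t'-k-1$. This gives $(t'-k)(t'-k-1)$ admissible primed pairs.

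Next we count unprimed pairs of each type, exactly as in \Cref{lem:N2_Ct_Atprime}.
\begin{itemize}
\item Type $k=0$ means both indices lie in $S_2$, giving $r'(r'-1)$ ordered pairs.
\item Type $k=1$ has one index in each block, with either order, giving $2dr'$ ordered pairs.
\item Type $k=2$ means both indices lie in $S_1$, giving $d(d-1)$ ordered pairs.
\end{itemize}
The three types are disjoint and exhaust all ordered distinct pairs in $S$. Summing the products of these counts with $(t'-k)(t'-k-1)$ gives exactly the three terms in \eqref{eq:N2_Btprime_Ct}. The displayed version in terms of $t,t',N$ then follows by substituting $d=t'-t$ and $r'=N-t'$.

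No real obstacle is expected, since the statement is left unfactored and no polynomial identity is needed. The only point requiring care is that counting ordered pairs makes the $k=1$ factor $2dr'$ rather than $dr'$. We should also remark that degenerate cases such as $d=1$ or $t'\le 3$ are covered automatically, because the corresponding factors vanish exactly when no admissible choices exist.
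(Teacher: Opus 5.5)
Your proposal is correct and follows essentially the same route as the paper: split $\{t+1,\dots,N\}$ into $S_1=\{t+1,\dots,t'\}$ and $S_2=\{t'+1,\dots,N\}$, classify $(i,j)$ by $k=\#(\{i,j\}\cap S_1)$, count $(t'-k)(t'-k-1)$ admissible primed pairs, and sum the three cases with unprimed counts $r'(r'-1)$, $2dr'$ and $d(d-1)$. Your remark that degenerate cases (e.g.\ $d=1$, $t'\le 3$, or $t'=N$) are handled automatically is a small addition that the paper leaves implicit.
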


\begin{proof}
Introduce the set $S := \{t+1,\dots,N\}$, so that $|S| = r = N - t$, and note that the ordered pair $(i,j)$ ranges over $S \times S$ with $i \neq j$. The primed indices $(i',j')$ range over $\{1,\dots,t'\}^2$ with $i' \neq j'$. The requirement that $i,j,i',j'$ are pairwise distinct may be equivalently written as
\[
i',j' \notin \{i,j\}, \qquad i' \neq j'.
\]
Since $i$ and $j$ always lie in $S = \{t+1,\dots,N\}$, the only possible overlaps between the sets $\{i,j\}$ and $\{1,\dots,t'\}$ occur in the interval
\[
S_1 := S \cap \{1,\dots,t'\} = \{t+1,\dots,t'\},
\]
which has cardinality $|S_1| = d = t'-t$. It is therefore natural to decompose $S$ as a disjoint union
\[
S = S_1 \cup S_2, \qquad S_2 := \{t'+1,\dots,N\},
\]
where $|S_2| = r' = N - t'$.

Fix an ordered pair $(i,j)$ with $i,j \in S$ and $i \neq j$. The indices $i',j'$ must both lie in $\{1,\dots,t'\}$, must be distinct from each other, and must avoid any elements of $\{i,j\}$ that lie in $\{1,\dots,t'\}$. Define
\[
k := \#\bigl(\{i,j\} \cap S_1\bigr) = \#\bigl(\{i,j\} \cap \{1,\dots,t'\}\bigr) \in \{0,1,2\}.
\]
Then exactly $k$ values in $\{1,\dots,t'\}$ are forbidden to $i'$ and $j'$ because they coincide with $i$ or $j$, and the remaining $t' - k$ values are admissible. Thus, for a fixed pair $(i,j)$ of type $k$, the number of admissible ordered pairs $(i',j')$ is
\[
(t' - k)(t' - k - 1),
\]
since we first choose $i'$ among the $t' - k$ admissible values and then choose $j'$ among the remaining $t' - k - 1$ values, ensuring $i' \neq j'$ and that neither $i'$ nor $j'$ lies in $\{i,j\}$.

To complete the count, it remains to determine, for each $k \in \{0,1,2\}$, how many ordered pairs $(i,j)$ with $i,j \in S$ and $i \neq j$ satisfy $\#(\{i,j\} \cap S_1) = k$.

If $k = 0$, then neither $i$ nor $j$ lies in $S_1$, so both indices lie in $S_2$. Since $|S_2| = r'$, the number of ordered pairs $(i,j)$ with $i,j \in S_2$ and $i \neq j$ is
\[
\#\{(i,j) : i,j \in S_2,\ i \neq j\}
=
r'(r'-1).
\]
For each such pair, the number of admissible $(i',j')$ is $(t'-0)(t'-1) = t'(t'-1)$. Hence the contribution of the case $k=0$ to $N_2^{(B_{t'},C_t)}(t,t',N)$ is
\[
r'(r'-1)\,t'(t'-1).
\]

If $k = 1$, then one of $i,j$ lies in $S_1$ and the other lies in $S_2$. There are $|S_1| = d$ choices for the element in $S_1$ and $|S_2| = r'$ choices for the element in $S_2$, and the order matters, so the number of ordered pairs $(i,j)$ with $k=1$ is
\[
2\,|S_1|\,|S_2| = 2d\,r'.
\]
For such a pair, precisely one value in $\{1,\dots,t'\}$ is forbidden, namely the (unique) element of $\{i,j\}$ lying in $S_1$. Thus $t'-1$ values remain admissible for $i'$, and for each choice of $i'$ there are $t'-2$ admissible choices for $j'$. Hence the number of admissible ordered pairs $(i',j')$ is $(t'-1)(t'-2)$. The contribution of the case $k=1$ is therefore
\[
2d\,r'\,(t'-1)(t'-2).
\]

If $k = 2$, then both $i$ and $j$ lie in $S_1$. Since $|S_1| = d$, the number of ordered pairs $(i,j)$ with $i,j \in S_1$ and $i \neq j$ is
\[
\#\{(i,j) : i,j \in S_1,\ i \neq j\}
=
d(d-1).
\]
In this situation, both values $i$ and $j$ are forbidden in $\{1,\dots,t'\}$, so there remain $t'-2$ admissible values for $i'$ and $t'-3$ admissible values for $j'$ once $i'$ is fixed. Thus the number of admissible ordered pairs $(i',j')$ is $(t'-2)(t'-3)$. Consequently, the contribution of the case $k=2$ is
\[
d(d-1)\,(t'-2)(t'-3).
\]

Summing over $k \in \{0,1,2\}$ yields
\[
N_2^{(B_{t'},C_t)}(t,t',N)
=
r'(r'-1)\,t'(t'-1)
\;+\;2d\,r'\,(t'-1)(t'-2)
\;+\;d(d-1)\,(t'-2)(t'-3),
\]
which is exactly \eqref{eq:N2_Btprime_Ct}. This completes the proof.
\end{proof}


\begin{lemma}
\label{lem:N2_Ctprime_At}
Fix integers $N \geq 1$ and $1 \leq t < t' \leq N$. Define
\[
r := N - t, 
\qquad 
r' := N - t',
\]
so that $r = (t' - t) + r'$ and $r' \geq 1$. Consider ordered couples
$\bigl((i,j),(i',j')\bigr)$
of integer pairs satisfying
\[
i \in \{1,\dots,t\}, 
\quad 
j \in \{t+1,\dots,N\}, 
\qquad
i',j' \in \{t'+1,\dots,N\}, 
\quad 
i' \neq j',
\]
together with the additional requirement that the four indices $i,j,i',j'$ are pairwise distinct. Let $N_2^{(C_{t'},A_t)}(t,t',N)$ denote the number of such ordered couples. Then
\begin{equation}
\label{eq:N2_Ctprime_At}
N_2^{(C_{t'},A_t)}(t,t',N)
=
t\,(r-2)\,r'(r'-1)
=
t\,(N - t - 2)\,(N - t')\,(N - t' - 1).
\end{equation}
\end{lemma}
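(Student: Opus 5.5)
The plan is a direct count: fix $i$, split on where $j$ falls relative to $t'$, and then factor the result, in the same style as \Cref{lem:N2_At_Atprime} and \Cref{lem:N2_Ct_Atprime}.

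\emph{Which distinctness constraints matter.} Since $i\le t<t'+1\le \min(i',j')$, the index $i$ can never coincide with $i'$ or $j'$. Also $i\neq j$ is automatic from the ranges, and $i'\neq j'$ is imposed explicitly. So pairwise distinctness reduces to
\[
j\notin\{i',j'\}.
\]
In particular $i$ is completely unconstrained, which contributes a factor $t$.

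\emph{Case split on $j$.} Following \Cref{lem:N2_Ct_Atprime}, partition $\{t+1,\dots,N\}$ into $S_1:=\{t+1,\dots,t'\}$ and $S_2:=\{t'+1,\dots,N\}$, with $|S_1|=d:=t'-t$ and $|S_2|=r'$.
\begin{itemize}
\item If $j\in S_1$ ($d$ choices), then $j\notin S_2$, so $j$ imposes no restriction. The ordered pair $(i',j')$ is any pair of distinct elements of $S_2$, giving $r'(r'-1)$ choices.
\item If $j\in S_2$ ($r'$ choices), then $i'$ and $j'$ must be distinct elements of $S_2\setminus\{j\}$, giving $(r'-1)(r'-2)$ choices.
\end{itemize}

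\emph{Summing and factoring.} Adding the two cases and multiplying by $t$ gives
\[
t\bigl\{d\,r'(r'-1)+r'(r'-1)(r'-2)\bigr\}=t\,r'(r'-1)\,(d+r'-2).
\]
Since $d+r'=r$, this equals $t\,r'(r'-1)(r-2)$, which is \eqref{eq:N2_Ctprime_At}.

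No step here is a real obstacle. The only point needing care is the first step: confirming that $i$ can never collide with the primed indices, so that all the distinctness constraints fall on $j$.
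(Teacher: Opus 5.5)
Your proposal is correct and matches the paper's proof essentially step for step. The paper also notes that $i$ can never coincide with $i'$ or $j'$. It then fixes $i$ (a factor of $t$) and splits $j$ according to whether it lies in $\{t+1,\dots,t'\}$ or $\{t'+1,\dots,N\}$, giving $r'(r'-1)$ and $(r'-1)(r'-2)$ choices for $(i',j')$ respectively, and factors the sum using $(t'-t)+r'=r$.
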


\begin{proof}
Introduce the sets
\[
S := \{t+1,\dots,N\},
\qquad
S' := \{t'+1,\dots,N\} \subset S.
\]
By construction, $|S| = r = N - t$ and $|S'| = r' = N - t'$, and we have
\[
(i,j) \in \{1,\dots,t\} \times S,
\qquad
(i',j') \in S' \times S',
\quad
i' \neq j'.
\]
The requirement that $i,j,i',j'$ be pairwise distinct can be expressed as
\[
i \neq j, \quad i' \neq j', \quad i' \notin \{i,j\}, \quad j' \notin \{i,j\}.
\]
Since $i \leq t < t' + 1 \leq i',j'$, we have $i' \neq i$ and $j' \neq i$ automatically. Hence the only nontrivial cross-distinctness constraints are
\[
i' \neq j, \qquad j' \neq j,
\]
in addition to $i' \neq j'$. The condition $i \neq j$ is already enforced by the fact that $i \in \{1,\dots,t\}$ and $j \in S = \{t+1,\dots,N\}$.

Thus the counting problem simplifies as follows: choose $i \in \{1,\dots,t\}$, choose $j \in S$, and then choose an ordered pair $(i',j')$ of distinct elements of $S'$ such that neither $i'$ nor $j'$ coincides with $j$. The index $i$ plays no further role once it is chosen.

First, count the choices of $(i,j)$. There are $t$ possible values of $i$. For each fixed $i$, we may choose $j$ arbitrarily from $S$, so there are $r$ possibilities for $j$. We shall, however, refine the counting by partitioning according to whether $j$ lies in $S'$.

For each fixed $i \in \{1,\dots,t\}$, we distinguish two disjoint cases for $j$:
\[
\text{(a)}\ j \in S \setminus S' = \{t+1,\dots,t'\}, 
\qquad
\text{(b)}\ j \in S' = \{t'+1,\dots,N\}.
\]

In case (a), we have $j \in \{t+1,\dots,t'\}$, so $j \notin S'$. There are exactly $t' - t$ such values of $j$. Since $j \notin S'$, the constraints $i' \neq j$ and $j' \neq j$ are automatically satisfied as soon as $i',j' \in S'$. Thus $(i',j')$ may be any ordered pair of distinct elements of $S'$, giving
\[
\#\{(i',j') : i',j' \in S',\ i' \neq j'\} 
= r'(r'-1)
\]
admissible choices for $(i',j')$ in this case.

In case (b), we have $j \in S'$. There are exactly $r'$ such values of $j$. Now the constraints $i' \neq j$ and $j' \neq j$ are nontrivial, and we must choose $(i',j')$ as an ordered pair of distinct elements of $S'$ avoiding $j$. Equivalently, $(i',j')$ must lie in $(S' \setminus \{j\})^2$ with $i' \neq j'$. The set $S' \setminus \{j\}$ has cardinality $r' - 1$, hence the number of admissible ordered pairs $(i',j')$ is
\[
\#\{(i',j') : i',j' \in S' \setminus \{j\},\ i' \neq j'\}
= (r' - 1)(r' - 2).
\]
Summarizing, for each fixed $i \in \{1,\dots,t\}$, the number of admissible triples $(j,i',j')$ is
\[
(t' - t)\,r'(r'-1) \;+\; r' (r'-1)(r'-2)
= r'(r'-1)\bigl[(t' - t) + (r' - 2)\bigr].
\]
Recalling that $r = (t' - t) + r'$, the bracket simplifies to
\[
(t' - t) + (r' - 2) = \bigl((t' - t) + r'\bigr) - 2 = r - 2.
\]
Therefore, for each fixed $i$, the number of admissible triples $(j,i',j')$ is
\[
r'(r'-1)(r-2).
\]
Since $i$ can be chosen arbitrarily from $\{1,\dots,t\}$, there are $t$ possible values of $i$, and hence
\[
N_2^{(C_{t'},A_t)}(t,t',N)
=
t\,r'(r'-1)(r - 2),
\]
which is precisely \eqref{eq:N2_Ctprime_At}. This completes the proof.
\end{proof}


%% file: components/J0-J2-computation.tex

\noindent\medskip
Note that the only non-zero $N_0^{(P,Q)}$ are
\begin{align*}
N_0^{(A_t,A_{t'})}&=t(N-t'),\\
N_0^{(B_t,B_{t'})}&=2t(t-1),\\
N_0^{(C_t,C_{t'})}&=2(N-t')(N-t'-1),\\
N_0^{(A_t,B_{t'})}&=2t(t'-t),\\
N_0^{(C_t,A_{t'})}&=2(N-t')(t'-t),\\
N_0^{(B_{t'},C_t)}&=2(t'-t)(t'-t-1).
\end{align*}
The corresponding total counts are
\begin{align*}
N^{(A_t,A_{t'})}&=t(N-t)\,t'(N-t'),\\
N^{(B_t,B_{t'})}&=t(t-1)\,t'(t'-1),\\
N^{(C_t,C_{t'})}&=(N-t)(N-t-1)(N-t')(N-t'-1),\\
N^{(A_t,B_{t'})}&=t(N-t)\,t'(t'-1),\\
N^{(C_t,A_{t'})}&=(N-t)(N-t-1)\,t'(N-t'),\\
N^{(B_{t'},C_t)}&=t'(t'-1)(N-t)(N-t-1).
\end{align*}
Let $D:=t'(t'-1)(N-t)(N-t-1)$. Using weights $w=(4,1,1,-2,1,-2,-2,1,-2)$, we obtain the following simplification:
\begin{align}
\mathscr J_0
&=
\begin{aligned}[t]
&4\;\frac{t(N-t')}{t(N-t)\,t'(N-t')}
+\frac{2t(t-1)}{t(t-1)\,t'(t'-1)}\\
&\quad +\frac{2(N-t')(N-t'-1)}{(N-t)(N-t-1)(N-t')(N-t'-1)} 
-2\,\frac{2t(t'-t)}{t(N-t)\,t'(t'-1)}\\
&\quad\quad -2\,\frac{2(N-t')(t'-t)}{(N-t)(N-t-1)\,t'(N-t')}
+\frac{2(t'-t)(t'-t-1)}{t'(t'-1)(N-t)(N-t-1)}
\end{aligned}
\notag\\[0.3em]
&=
\begin{aligned}[t]
&\frac{4}{t'(N-t)}
+\frac{2}{t'(t'-1)}
+\frac{2}{(N-t)(N-t-1)}\\
&\qquad
-\frac{4(t'-t)}{t'(t'-1)(N-t)} -\frac{4(t'-t)}{t'(N-t)(N-t-1)}
+\frac{2(t'-t)(t'-t-1)}{t'(t'-1)(N-t)(N-t-1)}
\end{aligned}
\notag\\[0.35em]
&=
\frac{1}{D}
\begin{aligned}[t]
\Big(
&4(t'-1)(N-t-1) + 2(N-t)(N-t-1) + 2t'(t'-1)\\
&\qquad -4(t'-t)(N-t-1) - 4(t'-t)(t'-1) + 2(t'-t)(t'-t-1)
\Big)
\end{aligned}
\notag\\[0.35em]
&=
\frac{1}{D}
\begin{aligned}[t]
\Big(
&4(t-1)(N-t-1) +2(N-t)(N-t-1)\\
&\qquad +2\Big[t'(t'-1)-2(t'-t)(t'-1)+(t'-t)(t'-t-1)\Big]
\Big)
\end{aligned}
\notag\\[0.35em]
&=
\frac{1}{D}
\begin{aligned}[t]
\Big(
&4(t-1)(N-t-1) + 2(N-t)(N-t-1)\\
&\qquad + 2\Big[(t'-1)\{t'-2(t'-t)\}+(t'-t)(t'-t-1)\Big]
\Big)
\end{aligned}
\notag\\[0.35em]
&=
\frac{1}{D}
\begin{aligned}[t]
\Big(
&4(t-1)(N-t-1) + 2(N-t)(N-t-1)\\
&\qquad  + 2\Big[(t'-1)(2t-t')+\{(t'-t)^2-(t'-t)\}\Big]
\Big)
\end{aligned}
\notag\\[0.35em]
&=
\frac{1}{D}
\begin{aligned}[t]
\Big(
&4(t-1)(N-t-1)+2(N-t)(N-t-1)\\
&\qquad + 2\Big[2tt'-(t')^2-2t+t' + (t')^2-2tt'+t^2-t'+t\Big]
\Big)
\end{aligned}
\notag\\[0.35em]
&=
\frac{1}{D}
\begin{aligned}[t]
\Big(
&2(N-t-1)\big[2(t-1) + (N-t)\big] + 2\Big[-2t + t + t^2\Big]
\Big)
\end{aligned}
\notag\\[0.35em]
&=
\frac{1}{D}
\begin{aligned}[t]
\Big(
&2(N-t-1)(N+t-2)
+2(t^2-t)
\Big)
\end{aligned}
\notag\\[0.35em]
&=
\frac{1}{D}
\begin{aligned}[t]
\Big(
&2(N^2-3N-t^2+t+2)
+2(t^2-t)
\Big)
\end{aligned}
\notag\\[0.35em]
&=
\frac{1}{D} \Big(2N^2-6N+4\Big) \notag \\[0.35em]
&=
\frac{2(N-1)(N-2)}{t'(t'-1)(N-t)(N-t-1)} \,,
\label{eq:J0-expr-final}
\end{align}
completing the computation of $\mathscr{J}_0$.




\noindent\medskip
Next, we move on to computing the exact expression for $\mathscr{J}_2$. Note that the $N_2^{(P,Q)}$ entering $\mathscr J_2$ are given by
\begin{align*}
&N_2^{(A_t,A_{t'})}
=
t(N-t')\Big[(t'-t)(t'-2)+(t'-1)(N-t'-1)\Big],\\
&N_2^{(B_t,B_{t'})}
=
t(t-1)(t'-2)(t'-3),\\
&N_2^{(C_t,C_{t'})}
=
(N-t')(N-t'-1)(N-t-2)(N-t-3),\\
&N_2^{(A_t,B_{t'})}
=
t(t'-2)\Big[(t'-t)(t'-3)+(N-t')(t'-1)\Big],\\
&N_2^{(B_t,C_{t'})}
=
t(t-1)(N-t')(N-t'-1),\\
&N_2^{(C_t,A_{t'})}
=
(N-t')(N-t-2)\Big[t'(N-t-3)+2t\Big],\\
&N_2^{(A_{t'},B_t)}
=
t(t-1)(t'-2)(N-t'),\\
&N_2^{(B_{t'},C_t)}
=
(N-t')(N-t'-1)t'(t'-1)
+2(t'-t)(N-t')(t'-1)(t'-2)\\
&\hspace{9cm}+(t'-t)(t'-t-1)(t'-2)(t'-3),\\
&N_2^{(C_{t'},A_t)}
=
t(N-t-2)(N-t')(N-t'-1).
\end{align*}
The corresponding total counts are the same products as before:
\begin{align*}
N^{(A_t,A_{t'})}&=t(N-t)\,t'(N-t'),\\
N^{(B_t,B_{t'})}&=t(t-1)\,t'(t'-1),\\
N^{(C_t,C_{t'})}&=(N-t)(N-t-1)(N-t')(N-t'-1),\\
N^{(A_t,B_{t'})}&=t(N-t)\,t'(t'-1),\\
N^{(B_t,C_{t'})}&=t(t-1)(N-t')(N-t'-1),\\
N^{(C_t,A_{t'})}&=(N-t)(N-t-1)\,t'(N-t'),\\
N^{(A_{t'},B_t)}&=t'(N-t')\,t(t-1),\\
N^{(B_{t'},C_t)}&=t'(t'-1)(N-t)(N-t-1),\\
N^{(C_{t'},A_t)}&=t(N-t)\,(N-t')(N-t'-1).
\end{align*}
Let $D:=t'(t'-1)(N-t)(N-t-1)$. Using weights $w=(4,1,1,-2,1,-2,-2,1,-2)$, we obtain:

\begin{align}
\mathscr J_2
&=
\begin{aligned}[t]
&4\,\frac{N_2^{(A_t,A_{t'})}}{N^{(A_t,A_{t'})}}
+\frac{N_2^{(B_t,B_{t'})}}{N^{(B_t,B_{t'})}}
+\frac{N_2^{(C_t,C_{t'})}}{N^{(C_t,C_{t'})}}\\[0.2em]
&\qquad
-2\,\frac{N_2^{(A_t,B_{t'})}}{N^{(A_t,B_{t'})}}
+\frac{N_2^{(B_t,C_{t'})}}{N^{(B_t,C_{t'})}}
-2\,\frac{N_2^{(C_t,A_{t'})}}{N^{(C_t,A_{t'})}}
-2\,\frac{N_2^{(A_{t'},B_t)}}{N^{(A_{t'},B_t)}}
+\frac{N_2^{(B_{t'},C_t)}}{N^{(B_{t'},C_t)}}
-2\,\frac{N_2^{(C_{t'},A_t)}}{N^{(C_{t'},A_t)}}
\end{aligned}
\notag\\[0.4em]
&=
\begin{aligned}[t]
&\frac{4\big[(t'-t)(t'-2)+(t'-1)(N-t'-1)\big]}{t'(N-t)}
+\frac{(t'-2)(t'-3)}{t'(t'-1)}
+\frac{(N-t-2)(N-t-3)}{(N-t)(N-t-1)}\\
&\quad
-\frac{2(t'-2)\Big((t'-t)(t'-3)+(N-t')(t'-1)\Big)}{t'(t'-1)(N-t)}
+1\\
&\quad
-\frac{2(N-t-2)\big(t'(N-t-3)+2t\big)}{t'(N-t)(N-t-1)}
-\frac{2(t'-2)}{t'}\\[0.2em]
&\quad
+\frac{
\begin{aligned}[t]
&(N-t')(N-t'-1)t'(t'-1)
+2(t'-t)(N-t')(t'-1)(t'-2)\\
&\qquad\qquad +(t'-t)(t'-t-1)(t'-2)(t'-3)
\end{aligned}}{t'(t'-1)(N-t)(N-t-1)}
-\frac{2(N-t-2)}{N-t}
\end{aligned}
\notag\\[0.35em]
&=
\frac{1}{D}
\begin{aligned}[t]
\Big[
&4(t'-1)(N-t-1)\Big((t'-t)(t'-2) + (t'-1)(N-t'-1)\Big)\\
&\quad +(t'-2)(t'-3)(N-t)(N-t-1) + t'(t'-1)(N-t-2)(N-t-3)\\
&\quad\quad -2(t'-2)(N-t-1)\Big((t'-t)(t'-3)+(N-t')(t'-1)\Big)\\
&\quad\quad\quad +t'(t'-1)(N-t)(N-t-1) - 2(t'-1)(N-t-2)\big(t'(N-t-3)+2t\big)\\
&\quad\quad\quad\quad -2(t'-2)(t'-1)(N-t)(N-t-1) + (N-t')(N-t'-1)t'(t'-1)\\
&\quad\quad\quad\quad\quad + 2(t'-t)(N-t')(t'-1)(t'-2) + (t'-t)(t'-t-1)(t'-2)(t'-3)\\
&\quad\quad\quad\quad\quad\quad -2t'(t'-1)(N-t-2)(N-t-1)
\Big].
\end{aligned}
\label{eq:J2_bigNumerator_over_D}
\end{align}

\noindent
We introduce the shorthand
\[
u:=t'-t,\qquad s:=N-t,
\qquad\text{so that}\qquad
t'=t+u,\quad N-t'=s-u,\quad N-t-1=s-1.
\]
Write $t'=t+u$ throughout. Denote the numerator in \eqref{eq:J2_bigNumerator_over_D} by $\mathcal N_2$. Then,
%
\begin{align}
\mathcal N_2
&= 4(t+u-1)(s-1)\Big(u(t+u-2) + (t+u-1)(s-u-1)\Big) \notag\\
&\quad +(t+u-2)(t+u-3)s(s-1) + (t+u)(t+u-1)(s-2)(s-3) \notag\\
&\quad\quad -2(t+u-2)(s-1)\Big(u(t+u-3)+(s-u)(t+u-1)\Big) \notag\\
&\quad\quad\quad +(t+u)(t+u-1)s(s-1) - 2(t+u-1)(s-2)\big((t+u)(s-3)+2t\big) \notag\\
&\quad\quad\quad\quad -2(t+u-2)(t+u-1)s(s-1) + (s-u)(s-u-1)(t+u)(t+u-1) \notag\\
&\quad\quad\quad\quad\quad +2u(s-u)(t+u-1)(t+u-2) + u(u-1)(t+u-2)(t+u-3) \notag\\
&\quad\quad\quad\quad\quad\quad -2(t+u)(t+u-1)(s-2)(s-1)\,.
\label{eq:N2_in_s_u}
\end{align}

\noindent
We now expand and simplify in a single continuous stretch.  First expand the two $(s-1)$-pre-factored blocks:
\begin{align}
&4(t+u-1)(s-1)\Big(u(t+u-2)+(t+u-1)(s-u-1)\Big)\notag\\
&\qquad=
4(t+u-1)(s-1)\Big(u(t+u-2)\Big)
+4(t+u-1)(s-1)\Big((t+u-1)(s-u-1)\Big)\notag\\
&\qquad=
4u(t+u-1)(t+u-2)(s-1)
+4(t+u-1)^2(s-1)(s-u-1),
\label{eq:block1_expand}
\end{align}
\begin{align}
&-2(t+u-2)(s-1)\Big(u(t+u-3)+(s-u)(t+u-1)\Big)\notag\\
&\qquad\qquad=
-2u(t+u-2)(t+u-3)(s-1)
-2(t+u-2)(t+u-1)(s-1)(s-u).
\label{eq:block2_expand}
\end{align}

\noindent
Next, combine the three $s(s-1)$ terms that only differ by $t'$-polynomials:
\begin{align}
&(t+u)(t+u-1)s(s-1)-2(t+u-2)(t+u-1)s(s-1)+(t+u-2)(t+u-3)s(s-1)\notag\\
&\qquad=
\Big((t+u)(t+u-1)-2(t+u-2)(t+u-1)+(t+u-2)(t+u-3)\Big)s(s-1)\notag\\
&\qquad=
\Big((t+u-1)\big((t+u)-2(t+u-2)\big)+(t+u-2)(t+u-3)\Big)s(s-1)\notag\\
&\qquad=
\Big((t+u-1)(4-t-u)+(t+u-2)(t+u-3)\Big)s(s-1)\notag\\
&\qquad=
\Big((t+u-1)(4-t-u)+(t+u)^2-5(t+u)+6\Big)s(s-1)\notag\\
&\qquad=
\Big((t+u)^2-5(t+u)+6-(t+u)^2+5(t+u)-4\Big)s(s-1)\notag\\
&\qquad=
2\,s(s-1).
\label{eq:triple_s_sminus1_collapse}
\end{align}

\noindent
Use \eqref{eq:block1_expand}, \eqref{eq:block2_expand}, and \eqref{eq:triple_s_sminus1_collapse} to rewrite $\mathcal N_2$ from \eqref{eq:N2_in_s_u} as
\begin{align}
\mathcal N_2
&=
\begin{aligned}[t]
&2s(s-1) + (t+u)(t+u-1)(s-2)(s-3)\\
&\quad  - 2(t+u)(t+u-1)(s-2)(s-1) -2(t+u-1)(s-2)\big((t+u)(s-3)+2t\big)\\
&\quad\quad +\underbrace{4u(t+u-1)(t+u-2)(s-1)
-2u(t+u-2)(t+u-3)(s-1)}_{(\mathrm{I})}\\
&\quad\quad\quad +\underbrace{4(t+u-1)^2(s-1)(s-u-1)
-2(t+u-2)(t+u-1)(s-1)(s-u)}_{(\mathrm{II})}\\
&\quad\quad\quad\quad +(s-u)(s-u-1)(t+u)(t+u-1)
+2u(s-u)(t+u-1)(t+u-2)\\
&\quad\quad\quad\quad\quad +u(u-1)(t+u-2)(t+u-3).
\end{aligned}
\label{eq:N2_after_first_cancellations}
\end{align}
Now simplify $(\mathrm{I})$ and $(\mathrm{II})$ explicitly:
\begin{align}
(\mathrm{I})
&=
\big(4u(t+u-1)(t+u-2)-2u(t+u-2)(t+u-3)\big)(s-1)\notag\\
&=
2u(t+u-2)\big(2(t+u-1)-(t+u-3)\big)(s-1)\notag\\
&=
2u(t+u-2)(t+u+1)(s-1),
\label{eq:I_simplify}
\end{align}
\begin{align}
(\mathrm{II})
&=
2(t+u-1)(s-1)\Big(2(t+u-1)(s-u-1)-(t+u-2)(s-u)\Big)\notag\\
&=
2(t+u-1)(s-1)\Big(\big(2t+2u-2\big)(s-u-1)-(t+u-2)(s-u)\Big).
\label{eq:II_simplify}
\end{align}

\noindent
Substitute \eqref{eq:I_simplify}--\eqref{eq:II_simplify} into \eqref{eq:N2_after_first_cancellations},
and expand the remaining three ``$(s-u)$'' terms to get
\begin{align}
&(s-u)(s-u-1)(t+u)(t+u-1) + 2u(s-u)(t+u-1)(t+u-2)\\
&\qquad\qquad\qquad=
(t+u-1)\Big((t+u)(s-u)(s-u-1)+2u(t+u-2)(s-u)\Big).
\label{eq:sminusublock_rewrite}
\end{align}

\noindent
At this point everything is a polynomial in $(s,u,t)$, and we finish by a continuous expand-and-collect. Start by collapsing the two $(s-2)$-terms:
\begin{align}
&(t+u)(t+u-1)(s-2)(s-3)-2(t+u)(t+u-1)(s-2)(s-1)\notag\\
&\qquad=(t+u)(t+u-1)(s-2)\big((s-3)-2(s-1)\big)\notag\\
&\qquad=-(t+u)(t+u-1)(s-2)(s+1).
\label{eq:s2_collapse}
\end{align}
Then combine \eqref{eq:s2_collapse} with the remaining $-2(t+u-1)(s-2)\big((t+u)(s-3)+2t\big)$:
\begin{align}
&-(t+u)(t+u-1)(s-2)(s+1)-2(t+u-1)(s-2)\big((t+u)(s-3)+2t\big)\notag\\
&\qquad=
-(t+u-1)(s-2)\Big((t+u)(s+1)+2(t+u)(s-3)+4t\Big)\notag\\
&\qquad=
-(t+u-1)(s-2)\Big((t+u)(3s-5)+4t\Big).
\label{eq:combine_remaining_s2}
\end{align}

\noindent
Now substitute \eqref{eq:combine_remaining_s2}, \eqref{eq:I_simplify}, \eqref{eq:II_simplify}, \eqref{eq:sminusublock_rewrite} into \eqref{eq:N2_after_first_cancellations}, and expand $(s-u)$ and $(s-u-1)$ everywhere. Keeping everything sequential:
\begin{align}
\mathcal N_2
&=
\begin{aligned}[t]
&2s(s-1)
\underbrace{-(t+u-1)(s-2)\Big((t+u)(3s-5)+4t\Big)}_{(\star)}\\
&\quad + \underbrace{2u(t+u-2)(t+u+1)(s-1)}_{(\star\star)}\\
&\quad\quad + \underbrace{2(t+u-1)(s-1)\Big(\big(2t+2u-2\big)(s-u-1)-(t+u-2)(s-u)\Big)}_{(\star\star\star)}\\
&\quad\quad\quad + \underbrace{(t+u-1)\Big((t+u)(s-u)(s-u-1)+2u(t+u-2)(s-u)\Big)}_{(\star\star\star\star)}\\
&\quad\quad\quad\quad + \underbrace{u(u-1)(t+u-2)(t+u-3)}_{(\star\star\star\star\star)}
\end{aligned}
\label{eq:N2_final_collect_start}
\end{align}
Expand each line of \eqref{eq:N2_final_collect_start} into a polynomial in $u$:
\begin{align}
(\star)
&=
\begin{aligned}[t]
&-(s-2)\Big[(t-1)\big(t(3s-1)\big)\Big] -(s-2)\Big[(3s-1)\Big]u^2 \\
&\qquad -(s-2)\Big[(t-1)(3s-1)+(t)(3s-1)\Big]u
\end{aligned}
\label{eq:E2_poly}
\\[0.8em]
(\star\star)
&=
\begin{aligned}[t]
&2(s-1)\Big[(t-2)(t+1)\Big]u + 2(s-1)\Big[(2t-1)\Big]u^2
+2(s-1)u^3
\end{aligned}
\label{eq:E3_poly}
\\[0.8em]
(\star\star\star)
&=
\begin{aligned}[t]
&2(s-1)\Big[(t-1)\big((2t-2)(s-1)-(t-2)s\big)\Big] \\
&\quad +2(s-1)\Big[\big((2t-2)(s-1)-(t-2)s\big) + (t-1)\big(2(s-1)-2t+2\big)\Big]u \\
&\quad\quad +2(s-1)\Big[\big(2(s-1)-2t+2\big)+(t-1)(-2)\Big]u^2  +2(s-1)(-2)u^3
\end{aligned}
\label{eq:E4_poly}
\\[0.8em]
(\star\star\star\,\star)
&=
\begin{aligned}[t]
&(t+u-1)\Big((t+u)(s-u)(s-u-1)+2u(t+u-2)(s-u)\Big) \\
&\hspace{-5.1mm}= (t-1)t\,s(s-1) +\big(2s^{2}t-s^{2}-6st+5s+t^{2}-t\big)u \\
&\qquad +\big(s^{2}-5s-t^{2}+7t-5\big)u^{2} +(6-2t)u^{3} - u^{4}
\end{aligned}
\label{eq:E5_poly}
\\[0.8em]
(\star\star\star\star\star)
&=
\begin{aligned}[t]
&u(u-1)(t+u-2)(t+u-3) \\
&\hspace{-5.1mm}= -(t-2)(t-3)u + (t^{2}-7t+11)u^{2} + (2t-6)u^{3} + u^{4}
\end{aligned}
\label{eq:E6_poly}
\end{align}
Lastly, add $2s(s-1)$ with \eqref{eq:E2_poly}--\eqref{eq:E6_poly} and collect coefficients of $u^k$ to obtain
\begin{equation}
\mcN_2 = A_0 + A_1u + A_2u^2 + A_3u^3 + A_4u^4 \,,
\label{eq:Num2_Ak_def}
\end{equation}
where the coefficients are
\begin{align}
A_0
&=2s(s-1)-(s-2)\,t(t-1)(3s-1)+2(s-1)(t-1)(st-2t+2)+(t-1)t\,s(s-1), \notag\\
A_1
&=-(s-2)\,(6st-3s-6t+5)+2(s-1)(t-2)(t+1)\notag\\
&\qquad +2(s-1)(2st-s-t^2-3t+4)+\big(2s^{2}t-s^{2}-6st+5s+t^{2}-t\big)-(t-2)(t-3), \notag\\
A_2
&=-(s-2)(3s-5)+2(s-1)(2t-1)+2(s-1)(s-2t-1)\notag\\
&\hspace{5cm} +\big(s^{2}-5s-t^{2}+7t-5\big)+(t^{2}-7t+11), \notag\\
A_3
&=2(s-1)-2(s-1)+(6-2t)+(2t-6), \notag\\
A_4
&=(-1)+1. \notag
\end{align}
In particular, these simplify to
\begin{equation}
\label{eq:Num2_coeff_simplified}
A_0 = 2(s+t-1)(s+t-2) = 2(N-1)(N-2) \;,
\qquad
A_1=A_2=A_3=A_4=0.
\end{equation}
Substituting \eqref{eq:Num2_coeff_simplified} into \eqref{eq:Num2_Ak_def} gives $\mcN_2 = 2(N-1)(N-2)$, and hence
\begin{align}
\label{eq:J2-expr-final}
\mathscr J_2 
= \frac{\mcN_2}{D}
=\frac{2(N-1)(N-2)}{t'(t'-1)(N-t)(N-t-1)} \,,
\end{align}
completing  the computation of $\mathscr{J}_2$.


%% file: tables/cp-loc-supp-new.tex
\begingroup
\footnotesize
\setlength{\tabcolsep}{2.7pt}
\renewcommand{\arraystretch}{1.02}
\setlength{\LTleft}{0pt plus 1fill}
\setlength{\LTright}{0pt plus 1fill}

\begin{longtable}[c]{
>{\centering\arraybackslash}m{2.0cm}
>{\raggedright\arraybackslash}m{2.25cm}
ccccc
@{}p{0.40cm}@{}
ccccc
@{}p{0.40cm}@{}
ccccc
}

\caption{\footnotesize
Additional empirical localization accuracy results for \(N=40\), true change-point \(\tau=15\), and \(d\in\{200,1000,5000\}\).
Columns \(0,1,2,\geqslant\!3\) report empirical proportions for \(|\widehat{\tau}_d-\tau|=0,1,2\), at least \(3\), respectively; and \(\widehat{\mathbb E}\) reports the empirical mean of \(|\widehat{\tau}_d-\tau|\).
Boldface marks the largest entry in column \(0\) within each example--dimension block.
}
\label{tab:cp_accuracy_appendix}
\\[-1mm]

\toprule
\multirow{2}{1.55cm}{\centering\arraybackslash\textbf{Example}}
&
\multirow{2}{2.25cm}{\centering\arraybackslash\textbf{Method~~~~}}
& \multicolumn{5}{c}{$d = 200$}
& \multicolumn{1}{c}{}
& \multicolumn{5}{c}{$d = 1000$}
& \multicolumn{1}{c}{}
& \multicolumn{5}{c}{$d = 5000$} \\
\cmidrule(l{4pt}r{4pt}){3-7}
\cmidrule(l{4pt}r{4pt}){9-13}
\cmidrule(l{4pt}r{4pt}){15-19}
&
& \textbf{$0$}
& \textbf{$1$}
& \textbf{$2$}
& \textbf{$\geqslant\!3$}
& \(\widehat{\mathbb E}\)
&
& \textbf{$0$}
& \textbf{$1$}
& \textbf{$2$}
& \textbf{$\geqslant\!3$}
& \(\widehat{\mathbb E}\)
&
& \textbf{$0$}
& \textbf{$1$}
& \textbf{$2$}
& \textbf{$\geqslant\!3$}
& \(\widehat{\mathbb E}\) \\
\midrule
\endfirsthead

\caption[]{\footnotesize
Additional empirical localization accuracy results \textit{(continued)}.
}
\\[-1mm]

\toprule
\multirow{2}{1.55cm}{\centering\arraybackslash\textbf{Example}}
&
\multirow{2}{2.25cm}{\centering\arraybackslash\textbf{Method~~~~}}
& \multicolumn{5}{c}{$d = 200$}
& \multicolumn{1}{c}{}
& \multicolumn{5}{c}{$d = 1000$}
& \multicolumn{1}{c}{}
& \multicolumn{5}{c}{$d = 5000$} \\
\cmidrule(l{4pt}r{4pt}){3-7}
\cmidrule(l{4pt}r{4pt}){9-13}
\cmidrule(l{4pt}r{4pt}){15-19}
&
& \textbf{$0$}
& \textbf{$1$}
& \textbf{$2$}
& \textbf{$\geqslant\!3$}
& \(\widehat{\mathbb E}\)
&
& \textbf{$0$}
& \textbf{$1$}
& \textbf{$2$}
& \textbf{$\geqslant\!3$}
& \(\widehat{\mathbb E}\)
&
& \textbf{$0$}
& \textbf{$1$}
& \textbf{$2$}
& \textbf{$\geqslant\!3$}
& \(\widehat{\mathbb E}\) \\
\midrule
\endhead

\endfoot

\bottomrule
\endlastfoot

  & DAK Scan (Ours) & \textbf{1.00} & 0.00 & 0.00 & 0.00 & 0.00 &   & \textbf{1.00} & 0.00 & 0.00 & 0.00 & 0.00 &   & \textbf{1.00} & 0.00 & 0.00 & 0.00 & 0.00\\
\nopagebreak
  & E-Divisive & 0.00 & 0.00 & 0.00 & 1.00 & 26.00 &   & 0.00 & 0.00 & 0.00 & 1.00 & 26.00 &   & 0.00 & 0.00 & 0.00 & 1.00 & 26.00\\
\nopagebreak
  & E-CP3O & 0.00 & 0.00 & 0.00 & 1.00 & 26.00 &   & 0.00 & 0.00 & 0.00 & 1.00 & 26.00 &   & 0.00 & 0.00 & 0.00 & 1.00 & 26.00\\
\nopagebreak
  & KCPA & 0.00 & 0.00 & 0.00 & 1.00 & 26.00 &   & 0.00 & 0.00 & 0.00 & 1.00 & 26.00 &   & 0.00 & 0.00 & 0.00 & 1.00 & 26.00\\
\nopagebreak
  & MMD-$\mcN$ & 0.00 & 0.00 & 0.00 & 1.00 & 12.99 &   & 0.00 & 0.00 & 0.00 & 1.00 & 13.00 &   & 0.00 & 0.00 & 0.00 & 1.00 & 13.00\\
\nopagebreak
  & MMD-$\mcE$ & 0.00 & 0.00 & 0.00 & 1.00 & 14.49 &   & 0.00 & 0.00 & 0.00 & 1.00 & 14.51 &   & 0.00 & 0.00 & 0.00 & 1.00 & 14.49\\
\nopagebreak
  & HDD-DM & \textbf{1.00} & 0.00 & 0.00 & 0.00 & 0.00 &   & \textbf{1.00} & 0.00 & 0.00 & 0.00 & 0.00 &   & \textbf{1.00} & 0.00 & 0.00 & 0.00 & 0.00\\
\nopagebreak
\multirow{-8}{1.7cm}{\centering\arraybackslash \textit{Gaussian Location}} & Sliced-Wass & 0.04 & 0.00 & 0.00 & 0.96 & 13.09 &   & 0.03 & 0.00 & 0.00 & 0.97 & 13.20 &   & 0.03 & 0.00 & 0.00 & 0.97 & 13.12\\
\cmidrule{1-19}

  & DAK Scan (Ours) & 0.70 & 0.01 & 0.00 & 0.29 & 5.51 &   & 0.99 & 0.00 & 0.00 & 0.01 & 0.21 &   & \textbf{1.00} & 0.00 & 0.00 & 0.00 & 0.00\\
\nopagebreak
  & E-Divisive & 0.00 & 0.00 & 0.00 & 1.00 & 26.00 &   & 0.00 & 0.00 & 0.00 & 1.00 & 26.00 &   & 0.00 & 0.00 & 0.00 & 1.00 & 26.00\\
\nopagebreak
  & E-CP3O & 0.00 & 0.00 & 0.00 & 1.00 & 26.00 &   & 0.00 & 0.00 & 0.00 & 1.00 & 26.00 &   & 0.00 & 0.00 & 0.00 & 1.00 & 26.00\\
\nopagebreak
  & KCPA & 0.00 & 0.00 & 0.00 & 1.00 & 26.00 &   & 0.00 & 0.00 & 0.00 & 1.00 & 26.00 &   & 0.00 & 0.00 & 0.00 & 1.00 & 26.00\\
\nopagebreak
  & MMD-$\mcN$ & 0.00 & 0.00 & 0.00 & 1.00 & 23.00 &   & 0.00 & 0.00 & 0.00 & 1.00 & 23.00 &   & 0.00 & 0.00 & 0.00 & 1.00 & 23.00\\
\nopagebreak
  & MMD-$\mcE$ & 0.00 & 0.00 & 0.00 & 1.00 & 8.36 &   & 0.00 & 0.00 & 0.00 & 1.00 & 8.21 &   & 0.00 & 0.00 & 0.00 & 1.00 & 8.26\\
\nopagebreak
  & HDD-DM & \textbf{1.00} & 0.00 & 0.00 & 0.00 & 0.00 &   & \textbf{1.00} & 0.00 & 0.00 & 0.00 & 0.00 &   & \textbf{1.00} & 0.00 & 0.00 & 0.00 & 0.00\\
\nopagebreak
\multirow{-8}{1.7cm}{\centering\arraybackslash \textit{Gaussian Scale}} & Sliced-Wass & 0.00 & 0.00 & 0.00 & 1.00 & 23.00 &   & 0.00 & 0.00 & 0.00 & 1.00 & 23.00 &   & 0.00 & 0.00 & 0.00 & 1.00 & 23.00\\
\cmidrule{1-19}

  & DAK Scan (Ours) & 0.00 & 0.00 & 0.00 & 1.00 & 17.89 &   & 0.00 & 0.00 & 0.00 & 1.00 & 17.50 &   & 0.00 & 0.00 & 0.00 & 1.00 & 17.70\\
\nopagebreak
  & E-Divisive & 0.00 & 0.00 & 0.00 & 1.00 & 26.00 &   & 0.00 & 0.00 & 0.00 & 1.00 & 26.00 &   & 0.00 & 0.00 & 0.00 & 1.00 & 26.00\\
\nopagebreak
  & E-CP3O & 0.00 & 0.00 & 0.00 & 1.00 & 26.00 &   & 0.00 & 0.00 & 0.00 & 1.00 & 26.00 &   & 0.00 & 0.00 & 0.00 & 1.00 & 26.00\\
\nopagebreak
  & KCPA & 0.00 & 0.00 & 0.00 & 1.00 & 26.00 &   & 0.00 & 0.00 & 0.00 & 1.00 & 26.00 &   & 0.00 & 0.00 & 0.00 & 1.00 & 26.00\\
\nopagebreak
  & MMD-$\mcN$ & 0.00 & 0.00 & 0.00 & 1.00 & 18.88 &   & 0.00 & 0.00 & 0.00 & 1.00 & 18.18 &   & 0.00 & 0.00 & 0.00 & 1.00 & 18.42\\
\nopagebreak
  & MMD-$\mcE$ & 0.01 & 0.03 & 0.12 & 0.85 & 4.79 &   & 0.00 & 0.00 & 0.02 & 0.98 & 4.98 &   & 0.00 & 0.00 & 0.00 & 1.00 & 5.05\\
\nopagebreak
  & HDD-DM & \textbf{0.02} & 0.04 & 0.04 & 0.91 & 13.04 &   & \textbf{0.02} & 0.03 & 0.03 & 0.92 & 12.62 &   & \textbf{0.02} & 0.04 & 0.03 & 0.91 & 12.47\\
\nopagebreak
\multirow{-8}{1.7cm}{\centering\arraybackslash \textit{Gaussian Spiked Cov}} & Sliced-Wass & 0.00 & 0.00 & 0.00 & 1.00 & 18.20 &   & 0.00 & 0.00 & 0.00 & 1.00 & 18.10 &   & 0.00 & 0.00 & 0.00 & 1.00 & 18.14\\
\cmidrule{1-19}

  & DAK Scan (Ours) & 0.00 & 0.00 & 0.00 & 1.00 & 20.39 &   & 0.00 & 0.00 & 0.00 & 1.00 & 21.20 &   & 0.00 & 0.01 & 0.00 & 0.99 & 20.83\\
\nopagebreak
  & E-Divisive & 0.00 & 0.00 & 0.00 & 1.00 & 26.00 &   & 0.00 & 0.00 & 0.00 & 1.00 & 26.00 &   & 0.00 & 0.00 & 0.00 & 1.00 & 26.00\\
\nopagebreak
  & E-CP3O & 0.00 & 0.00 & 0.00 & 1.00 & 26.00 &   & 0.00 & 0.00 & 0.00 & 1.00 & 26.00 &   & 0.00 & 0.00 & 0.00 & 1.00 & 26.00\\
\nopagebreak
  & KCPA & 0.00 & 0.00 & 0.00 & 1.00 & 26.00 &   & 0.00 & 0.00 & 0.00 & 1.00 & 26.00 &   & 0.00 & 0.00 & 0.00 & 1.00 & 26.00\\
\nopagebreak
  & MMD-$\mcN$ & 0.00 & 0.00 & 0.00 & 1.00 & 15.86 &   & 0.00 & 0.00 & 0.00 & 1.00 & 15.62 &   & 0.00 & 0.00 & 0.00 & 1.00 & 15.52\\
\nopagebreak
  & MMD-$\mcE$ & 0.01 & 0.02 & 0.03 & 0.93 & 6.62 &   & 0.01 & 0.02 & 0.02 & 0.96 & 6.40 &   & 0.01 & 0.03 & 0.03 & 0.94 & 6.32\\
\nopagebreak
  & HDD-DM & \textbf{0.05} & 0.04 & 0.04 & 0.86 & 11.89 &   & \textbf{0.07} & 0.02 & 0.04 & 0.87 & 11.58 &   & \textbf{0.09} & 0.02 & 0.05 & 0.84 & 11.50\\
\nopagebreak
\multirow{-8}{1.7cm}{\centering\arraybackslash \textit{Gaussians with same marginals}} & Sliced-Wass & 0.00 & 0.00 & 0.00 & 1.00 & 16.92 &   & 0.00 & 0.00 & 0.00 & 1.00 & 16.80 &   & 0.00 & 0.00 & 0.00 & 1.00 & 16.77\\
\cmidrule{1-19}

  & DAK Scan (Ours) & \textbf{1.00} & 0.00 & 0.00 & 0.00 & 0.00 &   & \textbf{1.00} & 0.00 & 0.00 & 0.00 & 0.00 &   & \textbf{1.00} & 0.00 & 0.00 & 0.00 & 0.00\\
\nopagebreak
  & E-Divisive & 0.00 & 0.00 & 0.00 & 1.00 & 26.00 &   & 0.00 & 0.00 & 0.00 & 1.00 & 26.00 &   & 0.00 & 0.00 & 0.00 & 1.00 & 26.00\\
\nopagebreak
  & E-CP3O & 0.00 & 0.00 & 0.00 & 1.00 & 26.00 &   & 0.00 & 0.00 & 0.00 & 1.00 & 26.00 &   & 0.00 & 0.00 & 0.00 & 1.00 & 26.00\\
\nopagebreak
  & KCPA & 0.00 & 0.00 & 0.00 & 1.00 & 26.00 &   & 0.00 & 0.00 & 0.00 & 1.00 & 26.00 &   & 0.00 & 0.00 & 0.00 & 1.00 & 26.00\\
\nopagebreak
  & MMD-$\mcN$ & 0.00 & 0.00 & 0.00 & 1.00 & 13.88 &   & 0.00 & 0.00 & 0.00 & 1.00 & 13.00 &   & 0.00 & 0.00 & 0.00 & 1.00 & 13.00\\
\nopagebreak
  & MMD-$\mcE$ & 0.00 & 0.00 & 0.00 & 1.00 & 12.61 &   & 0.00 & 0.00 & 0.00 & 1.00 & 12.53 &   & 0.00 & 0.00 & 0.00 & 1.00 & 12.56\\
\nopagebreak
  & HDD-DM & \textbf{1.00} & 0.00 & 0.00 & 0.00 & 0.00 &   & \textbf{1.00} & 0.00 & 0.00 & 0.00 & 0.00 &   & \textbf{1.00} & 0.00 & 0.00 & 0.00 & 0.00\\
\nopagebreak
\multirow{-8}{1.7cm}{\centering\arraybackslash \textit{Laplace Location}} & Sliced-Wass & 0.00 & 0.00 & 0.00 & 1.00 & 15.08 &   & 0.00 & 0.00 & 0.00 & 1.00 & 14.66 &   & 0.00 & 0.00 & 0.00 & 1.00 & 14.90\\
\cmidrule{1-19}

  & DAK Scan (Ours) & 0.99 & 0.00 & 0.00 & 0.01 & 0.18 &   & \textbf{1.00} & 0.00 & 0.00 & 0.00 & 0.00 &   & \textbf{1.00} & 0.00 & 0.00 & 0.00 & 0.00\\
\nopagebreak
  & E-Divisive & 0.00 & 0.00 & 0.00 & 1.00 & 26.00 &   & 0.00 & 0.00 & 0.00 & 1.00 & 26.00 &   & 0.00 & 0.00 & 0.00 & 1.00 & 26.00\\
\nopagebreak
  & E-CP3O & 0.00 & 0.00 & 0.00 & 1.00 & 26.00 &   & 0.00 & 0.00 & 0.00 & 1.00 & 26.00 &   & 0.00 & 0.00 & 0.00 & 1.00 & 26.00\\
\nopagebreak
  & KCPA & 0.00 & 0.00 & 0.00 & 1.00 & 26.00 &   & 0.00 & 0.00 & 0.00 & 1.00 & 26.00 &   & 0.00 & 0.00 & 0.00 & 1.00 & 26.00\\
\nopagebreak
  & MMD-$\mcN$ & 0.00 & 0.00 & 0.00 & 1.00 & 23.00 &   & 0.00 & 0.00 & 0.00 & 1.00 & 23.00 &   & 0.00 & 0.00 & 0.00 & 1.00 & 23.00\\
\nopagebreak
  & MMD-$\mcE$ & 0.00 & 0.00 & 0.00 & 1.00 & 10.36 &   & 0.00 & 0.00 & 0.00 & 1.00 & 10.40 &   & 0.00 & 0.00 & 0.00 & 1.00 & 10.41\\
\nopagebreak
  & HDD-DM & \textbf{1.00} & 0.00 & 0.00 & 0.00 & 0.00 &   & \textbf{1.00} & 0.00 & 0.00 & 0.00 & 0.00 &   & \textbf{1.00} & 0.00 & 0.00 & 0.00 & 0.00\\
\nopagebreak
\multirow{-8}{1.7cm}{\centering\arraybackslash \textit{Bernoulli Gaussian}} & Sliced-Wass & 0.00 & 0.00 & 0.00 & 1.00 & 23.00 &   & 0.00 & 0.00 & 0.00 & 1.00 & 23.00 &   & 0.00 & 0.00 & 0.00 & 1.00 & 23.00\\
\cmidrule{1-19}

  & DAK Scan (Ours) & 0.02 & 0.02 & 0.01 & 0.96 & 16.63 &   & 0.24 & 0.05 & 0.01 & 0.70 & 11.95 &   & 0.86 & 0.01 & 0.00 & 0.13 & 2.08\\
\nopagebreak
  & E-Divisive & 0.00 & 0.00 & 0.00 & 1.00 & 26.00 &   & 0.00 & 0.00 & 0.00 & 1.00 & 26.00 &   & 0.00 & 0.00 & 0.00 & 1.00 & 26.00\\
\nopagebreak
  & E-CP3O & 0.00 & 0.00 & 0.00 & 1.00 & 26.00 &   & 0.00 & 0.00 & 0.00 & 1.00 & 26.00 &   & 0.00 & 0.00 & 0.00 & 1.00 & 26.00\\
\nopagebreak
  & KCPA & 0.00 & 0.00 & 0.00 & 1.00 & 26.00 &   & 0.00 & 0.00 & 0.00 & 1.00 & 26.00 &   & 0.00 & 0.00 & 0.00 & 1.00 & 26.00\\
\nopagebreak
  & MMD-$\mcN$ & 0.00 & 0.00 & 0.00 & 1.00 & 23.00 &   & 0.00 & 0.00 & 0.00 & 1.00 & 23.00 &   & 0.00 & 0.00 & 0.00 & 1.00 & 23.00\\
\nopagebreak
  & MMD-$\mcE$ & 0.00 & 0.00 & 0.00 & 1.00 & 11.84 &   & 0.00 & 0.00 & 0.00 & 1.00 & 11.77 &   & 0.00 & 0.00 & 0.00 & 1.00 & 11.76\\
\nopagebreak
  & HDD-DM & \textbf{1.00} & 0.00 & 0.00 & 0.00 & 0.00 &   & \textbf{1.00} & 0.00 & 0.00 & 0.00 & 0.00 &   & \textbf{1.00} & 0.00 & 0.00 & 0.00 & 0.00\\
\nopagebreak
\multirow{-8}{1.7cm}{\centering\arraybackslash \textit{Gaussian mixture}} & Sliced-Wass & 0.02 & 0.00 & 0.00 & 0.98 & 22.59 &   & 0.01 & 0.00 & 0.00 & 0.99 & 22.77 &   & 0.00 & 0.00 & 0.00 & 1.00 & 22.95\\

\bottomrule

\end{longtable}
\endgroup